\definecolor{stroke1}{HTML}{2574A9} % This color is used as the standard color to highlight things.
\newcommand{\forThesis}[1]{}%{{\color[rgb]{0.13,0.55,.13}\textbf{In Thesis:} #1}}
\newcommand{\Lovasz}{Lov\'asz}
\newcommand{\Nesetril}{Ne\v{s}et\v{r}il}
\DeclarePairedDelimiter\abs{\lvert}{\rvert}% absolute value
\DeclarePairedDelimiter\doublesquarebrackets{\llbracket}{\rrbracket}% equivalence class
\DeclarePairedDelimiter\sqBrackets{\lbrack}{\rbrack}% squared brackets
\DeclarePairedDelimiter\parenthesis{(}{)}% parenthesis
\providecommand\given{}
\newcommand\SetSymbol[1][]{%
%	\nonscript\:#1\vert %use this instead of the following for a vertical bar
	\nonscript\,#1\colon
	\allowbreak
	\nonscript\:
	\mathopen{}}
\DeclarePairedDelimiterX\set[1]\{\}{%
	\renewcommand\given{\SetSymbol[\delimsize]}
	#1
}
\DeclareMathOperator{\bip}{bip}
\DeclareMathOperator{\dom}{dom}
\DeclareMathOperator{\dist}{dist}
\DeclareMathOperator{\IN}{in}
\DeclareMathOperator{\OUT}{out}
\newcommand{\family}[1]{\ensuremath{\mathcal{#1}}}
\newcommand{\Z}{\ensuremath{\mathbb{Z}}}
\newcommand{\Q}{\ensuremath{\mathbb{Q}}}
\newcommand{\C}{\ensuremath{\mathbb{C}}}
\newcommand{\Zsp}{\ensuremath{\mathbb{Z}^*_p}}
\renewcommand{\L}{\ensuremath{L}}
\newcommand{\R}{\ensuremath{R}}
\DeclareDocumentCommand{\MetaMathOperator}{m O{} O{}}{ %Meta Operator: #1: operator name, #2 (optional) : subscript, #3 (optional) superscript
	\IfEmptyTF{#2}{
		\IfEmptyTF{#3}{
			\ensuremath{#1}
		}{
			\ensuremath{{#1}^{#3}}
		}
	}{
		\IfEmptyTF{#3}{
			\ensuremath{{#1}_{#2}}
		}{
			\ensuremath{{#1}_{#2}^{#3}}
		}
	}
}
\DeclareDocumentCommand{\eqrel}{O{}}{\ensuremath{\sim_{#1}}}
\DeclareDocumentCommand{\eqclass}{m O{}}{%macro for equivalence class: #1 : object, #2 (optional): subscript/index
	\ensuremath{\doublesquarebrackets{#1}_{#2}}
}
\DeclareDocumentCommand{\relArrow}{O{} O{}}{%macro for binary relation used to reduce graphs: #1 (optional): subscript, #2 (optional): superscript
	\ensuremath{\,\MetaMathOperator{\Rightarrow}[#1][#2]}\,
}
\newcommand{\cupdot}{\mathbin{\dot{\cup}}}
\DeclareDocumentCommand{\vector}{m}{\ensuremath{\bar{#1}}}
\newcommand{\GroupFont}[1]{\mathcal{#1}}
\DeclareDocumentCommand{\group}{O{G} O{}}{%macro for a group, #1(optional): group letter (default G), #2(optional): group operation
	\IfEmptyTF{#2}{\ensuremath{\GroupFont{#1}}}{\ensuremath{(\GroupFont{#1}, #2)}}
}
\DeclareDocumentCommand{\isomorphic}{O{}}{%macro for uniform notation of restricted isomorphic, #1(optional): restricted type
	\IfEmptyTF{#1}{isomorphic}{{#1}-isomorphic}%
}
\newcommand{\congbip}{\ensuremath{\cong_{\bip}}}
\newcommand{\congdist}{\ensuremath{\cong_{\dist}}}
\DeclareMathOperator{\Ord}{Ord}
\DeclareDocumentCommand{\Stab}{m O{}}{ %macro for stabilizer, #1(mandatory): element fixed, #2(optional): group, default: G in groupfont
	\IfEmptyTF{#2}{\MetaMathOperator{\GroupFont{G}}[#1][]}{\MetaMathOperator{#2}[#1][]}
}
\DeclareDocumentCommand{\Orb}{O{} O{} O{}}{ %macro for Orb: #1 (optional): object, #2 (optional): subscript, #3 (optional): superscript
	\IfEmptyTF{#1}{
		\MetaMathOperator{\operatorname{Orb}}[#2][#3]
	}{
		\ensuremath{ \MetaMathOperator{\operatorname{Orb}}[#2][#3] [#1]}
	}
}
\DeclareDocumentCommand{\OrbBip}{O{} O{}}{  %macro for bipartite Orb: #1 (optional): object, #2 (optional): subscript
	\Orb[#1][#2][\bip]
}
\DeclareDocumentCommand{\OrbDist}{O{} O{}}{  %macro for distinguished Orb: #1 (optional): object, #2 (optional): subscript
	\Orb[#1][#2][\dist]
}
\DeclareDocumentCommand {\vertexset} {O{}}{ %macro for vertexset of graph, default: V, #1 (optional): graph
	\IfEmptyTF{#1}{\ensuremath{V}}{\ensuremath{V(#1)}}
}
\DeclareDocumentCommand {\edgeset} {O{}}{ %macro for edgeset of graph, default: E, #1 (optional): graph
	\IfEmptyTF{#1}{\ensuremath{E}}{\ensuremath{E({#1})}}%{\ensuremath{E_{#1}}}
}
\newcommand{\distVertices}{\ensuremath{V^{\dist}}}
\newcommand{\distEdges}{\ensuremath{E^{\dist}}}
\DeclareDocumentCommand {\lpart} {O{G}}{ %macro for left part of bipartite graph, default: \L_G, #1 (optional): host graph
	{\ensuremath{\L(#1)}}
}
\DeclareDocumentCommand {\rpart} {O{G}}{ %macro for right part of bipartite graph, default: \R_G, #1 (optional): host graph
	{\ensuremath{\R(#1)}}
}
\DeclareDocumentCommand {\bipGraph} {O{}} {%macro for bipartite graph with fixed bipartition, default: G, #1 (optional): host graph
	\IfEmptyTF{#1}{
		\ensuremath{(\lpart, \rpart, \edgeset[G])}
	}{
		\ensuremath{(\lpart[#1], \rpart[#1], \edgeset[#1])}
	}
}
\DeclareDocumentCommand {\partof} {O{}}{ %part of a vertex in bip-graph, #1(optional): bip-graph
	\ensuremath{\MetaMathOperator{\textrm{part}}[#1][]}
}
\DeclareDocumentCommand {\npartof} {O{}}{ %part of bip-graph a vertex is not in, #1(optional): bip-graph
	\ensuremath{\MetaMathOperator{\overline{\textrm{part}}}[#1][]}
}
\DeclareDocumentCommand {\neigh} {m O{}}{ %neighbourhood of a vertex, #1 (mandatory): vertex, #2(optional): subgraph
	\ensuremath{\MetaMathOperator{\Gamma}[#2][] (#1) }
}
\newcommand{\twoneigh}[1]{\ensuremath{B_2(#1)}}
\DeclareDocumentCommand {\numWalks} {m m O{} O{}}{ %number of $k$-walks traversing neighbourhood of a path, #1 (mandatory): start, #2(mandatory): end, #3(optional) : length of the walk, #4(optional) : given path
	\ensuremath{\MetaMathOperator{\operatorname{W}}[#3][#4] (#1, #2) }
}
\newcommand{\quantum}[1]{\ensuremath{\bar{#1}}}
\newcommand{\forbiddenA}{\ensuremath{K_{3,3}\backslash\set{e}}}
\newcommand{\forbiddenB}{\ensuremath{{domino}}}
\newcommand{\graphclass}{($\forbiddenA{}$, $\forbiddenB{}$)-free}
\newcommand{\Graphclass}{\texorpdfstring{(\boldmath{\forbiddenA{}}, \boldmath{\forbiddenB{}})-Free}{(K3,3\textbackslash\string{e\string}, domino)-Free}}
\newcommand{\nice}{$p$-square-free}
\newcommand{\Nice}{\texorpdfstring{\boldmath{$p$}-Square-Free}{p-Square-Free}}
\newcommand{\redbipartites}{\ensuremath{\family{G}^{\ast p}_{\bip}}}
\DeclareDocumentCommand{\reducedForm}{m O{} O{p}}{ %macro for the reduced form: #1: object/graph, #2 (optional): subscript/restriction, #3 (optionak): superscript/modulus (deafult: p)
	\MetaMathOperator{#1}[#2][\ast #3]
}
\DeclareDocumentCommand{\normreduced}{m O{p}}{%macro for the normal reduced form: #1: object/graph, #2 (optionak): superscript/modulus (deafult: p)
	\reducedForm{#1}[][#2]
}
\DeclareDocumentCommand{\bipreduced}{m} {
	\reducedForm{#1}[\bip]
}
\newcommand{\gadget}[1]{(\ensuremath{#1)}-gadget}
\newcommand{\GadgetPart}[1]{\ensuremath{(J_{#1}, y_{#1})}}
\newcommand{\GadgetEdge}{\ensuremath{(J_{E}, y_{\L}, y_{\R})}}
\newcommand{\Gadget}{
	\ensuremath{(o_{\L}, o_{\R},\allowbreak \GadgetPart{\L}, \GadgetPart{\R},\allowbreak \GadgetEdge)}
}
\DeclareDocumentCommand{\selectSet}{O{} O{}}{%macro for the set of selected vertices: #1(optional): subscript, #2 (optionak): superscript
	\MetaMathOperator{\Omega}[#1][#2]
}
\newcommand{\lweight}{\ensuremath{\lambda_\ell}}
\newcommand{\rweight}{\ensuremath{\lambda_r}}
\newcommand{\lOUTweight}{\ensuremath{\kappa_\ell}}
\newcommand{\rOUTweight}{\ensuremath{\kappa_r}}
\DeclareDocumentCommand \numWeightBIS {m o o}{ %number weightes bipartite independent sets, #1 (mandatory): graph, #2(optional): in-weights, #3(optional) : out-weights
	\ensuremath{ \MetaMathOperator{Z}[#2][#3] (#1) }
}
\newcommand\restr[2]{{% Function Restrictions
		\left.\kern-\nulldelimiterspace % automatically resize the bar with \right
		#1 % the function
		\vphantom{\big|} % pretend it's a little taller at normal size
		\right|_{#2} % this is the delimiter
}}
\newcommand{\FontMorphism}[1] {\ensuremath{\operatorname{#1}}} %Note that \operatorname overwrites other fonts
\DeclareDocumentCommand {\METAMorphism} {m O{} O{}}{ %meta macro for morphisms, #1 type (hom, inj, surj,...) , #2: subscript, #3: superscript
	\MetaMathOperator{\FontMorphism{#1}}[#2][#3]
}
\DeclareDocumentCommand {\Morph} {m O{} O{} O{}}{%macro for morphisms with (pre)image, #1 type (hom, inj, surj,...) , #2: preimage and image, #3: subscript, #4: superscript
	\IfEmptyTF{#2}{
		\ensuremath{\METAMorphism{#1}[#3][#4]}
	}{
		\ensuremath{\METAMorphism{#1}[#3][#4][ #2 ]}
	}
}
\DeclareDocumentCommand {\Hom} {O{}O{}}{ %macro for Homs, #1 pre- and image, #2: restriction /superscript,
	\Morph{Hom}[#1][][#2]
}
\DeclareDocumentCommand {\numHom} {O{}O{}O{}}{ %macro for homs, #1 pre- and image, #2: modulus, #3:  restriction /superscript
	\Morph{hom}[#1][#2][#3]
}
\DeclareDocumentCommand {\HomBip} {O{}}{ %macro for Bipartite Homs, #1 pre- and image
	\Morph{Hom}[#1][][\bip]
}
\DeclareDocumentCommand {\numHomBip} {O{}O{}}{ %macro for number bipartite homs, #1 pre- and image, #2: modulus
	\Morph{hom}[#1][#2][\bip]
}
\DeclareDocumentCommand {\Surj} {O{}O{}}{ %macro for Surj, #1 pre- and image, #2: restriction /superscript,
	\Morph{Surj}[#1][][#2]
}
\DeclareDocumentCommand {\numSurj} {O{}O{}O{}}{ %macro for surj, #1 pre- and image, #2: modulus, #3:  restriction /superscript
	\Morph{surj}[#1][{#2}][#3]
}
\DeclareDocumentCommand {\SurjBip} {O{}}{ %macro for Bipartite Surj, #1 pre- and image,
	\Morph{Surj}[#1][][\bip]
}
\DeclareDocumentCommand {\numSurjBip} {O{}O{}}{ %macro for number bipartite surj, #1 pre- and image, #2: modulus
	\Morph{surj}[#1][{#2}][\bip]
}
\DeclareDocumentCommand {\Inj} {O{}O{}}{ %macro for Inj, #1 pre- and image, #2: restriction /superscript,
	\Morph{Inj}[#1][][#2]
}
\DeclareDocumentCommand {\numInj} {O{}O{}O{}}{ %macro for inj, #1 pre- and image, #2: modulus, #3:  restriction /superscript
	\Morph{inj}[#1][{#2}][#3]
}
\DeclareDocumentCommand {\InjBip} {O{}}{ %macro for Bipartite Inj, #1 pre- and image,
	\Morph{Inj}[#1][][\bip]
}
\DeclareDocumentCommand {\numInjBip} {O{}O{}}{ %macro for number bipartite inj, #1 pre- and image, #2: modulus
	\Morph{inj}[#1][{#2}][\bip]
}
\DeclareDocumentCommand {\Aut} {O{}O{}}{ %macro for Aut, #1 pre- and image, #2: restriction /superscript,
	\Morph{Aut}[#1][][#2]
}
\DeclareDocumentCommand {\numAut} {O{}O{}O{}}{ %macro for aut, #1 pre- and image, #2: modulus, #3:  restriction /superscript
	\Morph{aut}[#1][{#2}][#3]
}
\DeclareDocumentCommand {\AutBip} {O{}}{ %macro for bipartite Aut, #1: pre- and image
	\Morph{Aut}[#1][][\bip]
}
\DeclareDocumentCommand {\numAutBip} {O{}O{}}{ %macro for bipartite aut, #1 pre- and image, #2: modulus
	\Morph{aut}[#1][{#2}][\bip]
}
\DeclareDocumentCommand {\AutDist} {O{}}{ %macro for dist Aut, #1: pre- and image
	\Morph{Aut}[#1][][\dist]
}
\DeclareDocumentCommand {\PartSurj} {O{}O{} O{}}{ %macro for PartSurj, #1 pre- and image, #2: restriction /superscript,
	\Morph{PartSurj}[#1][][#2]
}
\DeclareDocumentCommand {\numPartSurj} {O{}O{}O{}}{ %macro for p-surj, #1 pre- and image, #2: modulus, #3:  restriction /superscript
	\Morph{p-surj}[#1][{#2}][#3]
}
\newcommand{\classfont}{\mathsf}
\newcommand{\class}[1]{\mbox{{\(\classfont{\,#1}\)}}}
\newcommand{\classNP}{\ensuremath{\class{NP}}}
\DeclareDocumentCommand {\classNumP} {o}{
	\IfNoValueTF{#1}{\ensuremath{\#\class{P}}}{\ensuremath{\#_{#1}\class{P}}}
}
\newcommand{\problemFont}[1]{\normalfont\textsc{#1}}
\DeclareDocumentCommand {\ProblemName} {m O{} O{}}{ %macro for problem name: #1: name (Hom, IS, BIS,...), #2: subscript, #3 superscript
	\MetaMathOperator{#1}[#2][#3]
}
\DeclareDocumentCommand {\ModularCountingProblem} {m O{} O{} O{} O{}} { %macro for counting problems: #1: name (Hom, IS, BIS,...) , #2: modulus, #3: parameter, #4: subscript, #5: superscript 
	\IfEmptyTF{#2}{
		\IfEmptyTF{#3}{
			\#\ProblemName{#1}[#4][#5]
		}{
			\#\ProblemName{#1}[#4][#5] {#3}
		}
	}{
		\IfEmptyTF{#3}{
			\#_{#2}\ProblemName{#1}[#4][#5]
		}{
			\#_{#2}\ProblemName{#1}[#4][#5] {#3}	
		}
	}
}
\DeclareDocumentCommand {\prob} {m m m m}{
	\IfEmptyTF{#2}{
		\begin{problem}
			\begin{description}
				\item \emph{Name.} #1
				\item \emph{Input.} #3
				\item \emph{Output.} #4
			\end{description}
		\end{problem}
	}{
		\begin{problem}
			\begin{description}
				\item \emph{Name.} #1
				\item \emph{Parameter.} #2
				\item \emph{Input.} #3
				\item \emph{Output.} #4	
			\end{description}
		\end{problem}
	}
}
\DeclareDocumentCommand {\probNumIS} {O{}}{
	\ensuremath{\ModularCountingProblem{\problemFont{IS}}[#1][][][] }
}
\DeclareDocumentCommand {\probNumBIS} {O{} O{} O{}}{ 	%number weighted BIS, #1 (optional): in-weights/subscript, #2 (optional): out-weights/superscript, #3 (optional): modulus
	\ensuremath{\ModularCountingProblem{\problemFont{BIS}}[#3][][#1][#2] }
}
\DeclareDocumentCommand{\probHom}{}{\ensuremath{\ProblemName{\problemFont{Hom}}}}
\DeclareDocumentCommand {\probNumHom} {m O{} O{} O{}}{%number homomorphisms, #1 parameter graph, #2(optional): modulus, #3(optional): subscript, #4 (optional): superscript
	\ensuremath{\ModularCountingProblem{\probHom}[#2][(#1)][#3][#4]}
}
\newcommand{\probPartLabHom}{\ensuremath{\problemFont{PartLabHom}}}
\DeclareDocumentCommand {\probNumPartLabHom} {m O{} O{} O{}}{%number partially labelled homomorphisms, #1 parameter graph, #2(optional): modulus, #3(optional): subscript, #4 (optional): superscript
	\ensuremath{\ModularCountingProblem{\probPartLabHom}[#2][(#1)][#3][#4]}
}
\DeclareDocumentCommand {\probNumBipHom} {m O{} O{}}{%number bipartite homomorphisms preserving order of bipartition, #1 parameter graph, #2(optional): modulus, #3(optional): subscript
	\ensuremath{\ModularCountingProblem{\probHom}[#2][(#1)][#3][\bip]}
}
\DeclareDocumentCommand {\probNumPartLabBipHom} {m O{} O{} }{%number partially labelled bipartite homomorphisms preserving order of bipartition, #1 parameter graph, #2(optional): modulus, #3(optional): subscript,
	\ensuremath{\ModularCountingProblem{\probPartLabHom}[#2][(#1)][#3][\bip]}
}
\newcommand{\probPartSurjHom}{\ensuremath{\problemFont{PartSurjHom}}}
\DeclareDocumentCommand {\probNumPartSurjHom} {m O{} O{} O{}}{%number partially surjective homomorphisms, #1 parameter graph, #2(optional): modulus, #3(optional): subscript, #4 (optional): superscript
	\ensuremath{\ModularCountingProblem{\probPartSurjHom}[#2][(#1)][#3][#4]}
}
\newcommand{\probVertSurjHom}{\ensuremath{\problemFont{VertSurjHom}}}
\DeclareDocumentCommand {\probNumVertSurjHom} {m O{} O{} O{}}{%number vertex surjective homomorphisms, #1 parameter graph, #2(optional): modulus, #3(optional): subscript, #4 (optional): superscript
	\ensuremath{\ModularCountingProblem{\probVertSurjHom}[#2][(#1)][#3][#4]}
}
\newcommand{\probComp}{\ensuremath{\problemFont{Comp}}}
\DeclareDocumentCommand {\probNumComp} {m O{} O{} O{}}{%number compactions, #1 parameter graph, #2(optional): modulus, #3(optional): subscript, #4 (optional): superscript
	\ensuremath{\ModularCountingProblem{\probComp}[#2][(#1)][#3][#4]}
}
\theoremstyle{plain}
\newtheorem{theorem}{Theorem}[section]
\newtheorem{corollary}[theorem]{Corollary}
\newtheorem{lemma}[theorem]{Lemma}
\newtheorem{proposition}[theorem]{Proposition}
\theoremstyle{definition}
\newtheorem{definition}[theorem]{Definition}
\newtheorem{problem}[theorem]{Problem}
\newtheorem{observation}[theorem]{Observation}
\newtheorem{example}[theorem]{Example}
\newtheorem{notation}[theorem]{Notation}
\newtheorem{remark}[theorem]{Remark}
\title{On Counting (Quantum-)Graph Homomorphisms\\in Finite Fields of Prime Order}
\author{J. A. Gregor Lagodzinski}
\author{Andreas G\"obel}
\author{\authorcr Katrin Casel}
\author{Tobias Friedrich\thanks{\{gregor.lagodzinski, andreas.goebel, katrin.casel, tobias.friedrich\}@hpi.de}}
\affil{\normalsize
	Hasso Plattner Institute\authorcr
	University of Potsdam\\
	Potsdam, Germany
}
\renewcommand\footnotemark{} %removes the thanks mark
\date{}
\begin{document}
\pagenumbering{gobble}% Remove page numbers (and reset to 1)
\clearpage
\thispagestyle{empty}

\maketitle
\begin{abstract}
	We study the problem of counting the number of homomorphisms from an input graph $G$ to a fixed (quantum) graph $\quantum{H}$ in any finite field of prime order $\Z_p$. The subproblem with graph $H$ was introduced by Faben and Jerrum~[ToC'15] and its complexity is subject to a growing series of research articles, e.g. the work of Focke, Goldberg, Roth, and Zivný~[SIDMA'21] and the work of Bulatov and Kazeminia~[STOC'22], subsequent to this article's conference version. Our contribution is threefold.
	
	First, we introduce the study of quantum graphs to the study of modular counting homomorphisms. We show that the complexity for a quantum graph $\quantum{H}$ collapses to the complexity criteria found at dimension 1: graphs.
	Second, in order to prove cases of intractability we establish a further reduction to the study of bipartite graphs.
	Lastly, we establish a dichotomy for all bipartite \graphclass{} graphs by a thorough structural study incorporating both local and global arguments. This result subsumes all previous results on bipartite graphs known for all prime moduli and extends them significantly. Even for the subproblem with $p$ equal to $2$, this establishes new results.\footnote{An extended abstract of this article appeared at ICALP 2021~\cite{Lagodzinski:21:On_Counting_Quantum-Graph_Homomorphisms}}
%	
%	Abstract version without macros and such:
%	
%	We study the problem of counting the number of homomorphisms from an input graph $G$ to a fixed (quantum) graph $\bar{H}$ in any finite field of prime order $\mathbb{Z}_p$. The subproblem with graph $H$ was introduced by Faben and Jerrum~[ToC'15] and its complexity is subject to a growing series of research articles, e.g. the work of Focke, Goldberg, Roth, and Zivný~[SIDMA'21] and the work of Bulatov and Kazeminia~[STOC'22], subsequent to this article's conference version. Our contribution is threefold.
%	
%	First, we introduce the study of quantum graphs to the study of modular counting homomorphisms. We show that the complexity for a quantum graph $\bar{H}$ collapses to the complexity criteria found at dimension 1: graphs.
%	Second, in order to prove cases of intractability we establish a further reduction to the study of bipartite graphs.
%	Lastly, we establish a dichotomy for all bipartite $(K_{3,3}\backslash\{e\},\, {domino})$-free graphs by a thorough structural study incorporating both local and global arguments. This result subsumes all results on bipartite graphs known for all prime moduli and extends them significantly. Even for the subproblem with $p$ equal to $2$, this establishes new results. 
\end{abstract}

\clearpage
\pagenumbering{arabic}% Arabic page numbers (and reset to 1)
\section{Introduction}
\label{sec:intro}
The study of graph homomorphisms represents one of the classic bodies of work in both discrete mathematics and computer science but remains a very active research area.
These homomorphisms play a crucial role in the study of graph limits and networks~\cite{Borgs:06:Graph_Limits_and_Parameter_Testing,  Elenberg:15:Beyond_Triangles, Elenberg:16:Distributed_Estimation_of_Graph_4-Profiles, Ugander:13:Subgraph_Frequencies}, in the study of databases~\cite{ Chandra:77:Optimal_Implementation_of_Conjunctive_Queries_in_Relational_Data_Bases, Grohe:01:When_is_the_Evaluation_of_Conjunctive_Queries_Tractable, Rossman:08:Homomorphism_Preservation_Theorems, Rossman:17:An_Improved_Homomorphism_Preservation_Theorem}, and in the study of spin-systems in statistical physics~\cite{Borgs:06:Counting_Graph_Homomorphisms,Brightwell:99:Graph_Homomorpisms_and_Phase_Transitions}.
Formally, a graph-homomorphism from $G$ to $H$ is a map from the vertex set of $G$ to the vertex set of $H$ that preserves edges.
Many classic problems studied in computer science can be expressed with graph homomorphisms. Examples range from the \emph{decision} problem of determining the chromatic number of a graph,
through the problem of \emph{counting} the number of independent sets, to the problem of \emph{counting} the number of $k$-colourings using all $k$ colours. The latter can be expressed by a \emph{linear combination} of the number of graph homomorphisms to a set of non-isomorphic graphs. 

Graph homomorphisms are a prime example of a very general class of problems that frequently yields complexity dichotomies with structural characterizations, where the properties of a graph implying (in)tractibility are easily computable. However, the dichotomy itself is hard to establish and by Ladner~\cite{Ladner:75:On_the_Structure} not obvious to exist. Hell and \Nesetril~studied the \emph{decision} problem $\probHom(H)$ with fixed image graph $H$, that asks whether there exists a homomorphism from an input graph $G$ to $H$. In~\cite{Hell:90:On_the_Complexity_of_H-Coloring} they showed that the problem $\probHom(H)$ can be solved in polynomial time if $H$ contains a loop or is bipartite; otherwise, it is $\classNP$-complete. Dyer and Greenhill introduced the \emph{counting} problem $\probNumHom{H}$ with fixed image graph $H$, that asks for the number of homomorphisms from an input graph $G$ to $H$. In their seminal work~\cite{Dyer:2000:Counting_Graph_Homs} they showed that $\probNumHom{H}$ can be solved in polynomial time if the connected components of $H$ are complete bipartite graphs or reflexive complete graphs; otherwise, it is $\classNumP$-complete. 

\Lovasz~\cite{Lovasz:12:book:Large_Networks_Graph_Limits} observed that many graph parameters can only be expressed by a linear combination of computational problems $\probNumHom{H}$ for a set of at least two graphs $H \in \family{H}$. Examples are the class of \emph{vertex surjective homomorphisms} and \emph{compactions} studied in this context by Focke, Goldberg, and Zivný~\cite{Focke:19:The_Complexity_of_Counting_Surjective_Homomorphisms_and_Compactions}. \Lovasz~\cite{Lovasz:12:book:Large_Networks_Graph_Limits} introduced the notion of a \emph{quantum graph} for a linear combination of finitely many graphs called its \emph{constituents}. We refer by the dimension of a quantum graph to its number of constituents and find the class of graphs at dimension $1$. With every increase of dimension, the class of graph parameters expressible by $\probNumHom{H}$ increases as well. For a quantum graph $\quantum{H}$, the counting problem $\probNumHom{\quantum{H}}$ denotes the \emph{linear combination} of problems $\probNumHom{H}$ for all constituents $H$ of $\quantum{H}$. Chen, Curticapean and Dell~\cite{Chen:19:The_Exponential-Time_Complexity} studied the complexity of $\probNumHom{\quantum{H}}$ and showed that the complexity is inherited from the complexity of $\probNumHom{H}$ for all constituents $H$ of $\quantum{H}$, which is given by the criterion of Dyer and Greenhill. Motivated by this strong connection, Chen et al. raised the question if techniques based on quantum graphs can advance the state of the art of open problems regarding modular counting homomorphisms. 

We study the complexity of the problem $\probNumHom{\quantum{H}}[p]$ for any prime $p$ and answer the question of Chen et al. in the affirmative, where the problem $\probNumHom{\quantum{H}}[p]$ asks for the value of $\probNumHom{\quantum{H}}$ in the finite field $\Z_p$. Our contribution is threefold. First, we obtain results for the whole class of quantum graphs by showing that the complexity of $\probNumHom{\quantum{H}}[p]$ is inherited from the complexity $\probNumHom{H}[p]$.  Second, we reduce the study of $\probNumHom{H}[p]$ to a study of bipartite graphs by establishing a reduction to a restricted homomorphism problem. Finally, we employ a structural analysis on the set of \graphclass{} graphs and establish a dichotomy for these.

The line of research on modular counting homomorphisms was initiated with the study of the problem $\probNumHom{H}[2]$ by Faben and Jerrum~\cite{Faben:15:Parity_Graph_Homs}.
The modulus implies additional cases of tractibility as structures in $H$ implying intractibility for $\probNumHom{H}$ get \enquote{cancelled} when counting in a finite field $\Z_p$. Faben and Jerrum~\cite{Faben:15:Parity_Graph_Homs} showed that automorphisms of order~$p$ capture a subset of these \enquote{cancellations} and reduced the study to a structural analysis of parameter graphs $H$ that do not admit such automorphisms. These graphs are called \emph{order~$p$ reduced}. In particular, for $p=2$ they conjectured that automorphisms of order $2$ capture all cancellations and that $\probNumHom{H}[2]$ for an order~$2$ reduced graph admits the same complexity criterion as the non-modular version $\probNumHom{H}$ given by Dyer and Greenhill.
Progress toward proving the conjecture has been made by Göbel, Goldberg, and Richerby~\cite{Goebel:14:Cactus, Goebel:16:Square-Free} and the recent work of Focke, Goldberg, Roth, and Zivný~\cite{Focke:21:Counting_Homomorphisms_to_K_4-Minor-Free_Graphs_Mod_2}. The body of work is dominated by a study of structures as the modulus commands incorporating not only the local but also global properties of the graph $H$.

The research on $\probNumHom{H}[p]$ for arbitrary primes $p$ was already suggested by Faben and Jerrum~\cite{Faben:15:Parity_Graph_Homs} as they showed that their results concerning automorphisms of order $p$ apply for any prime $p$. However, Valiant~\cite{Valiant:06:Accidental_Algorithms} showed the existence of computational counting problems that feature a change of complexity concerning different moduli. Therefore, a uniform complexity criterion for $\probNumHom{H}[p]$ would emphasize the special role of graph homomorphisms even more. The study of $\probNumHom{H}[p]$ was finally initiated by Göbel, Lagodzinski, and Seidel~\cite{Goebel:21:Counting_Homomorphisms_Trees} and followed by Kazeminia and Bulatov~\cite{Kazeminia:19:Count_Homs_Square_Free_Mod_Prime}. In light of the richer structure due to the higher moduli, less is known about the complexity of $\probNumHom{H}[p]$ compared to $\probNumHom{H}[2]$. 
Even though Faben and Jerrum~\cite{Faben:15:Parity_Graph_Homs} as well as Göbel et al.~\cite{Goebel:21:Counting_Homomorphisms_Trees} considered an extension of the conjecture to all prime moduli and the results suggested it, no one has gone that far before this article. 
We illustrate the individual contributions in Table~\ref{tab:comparison_chart}.
\begin{table}
	\begin{center}
		\small
		\begin{tabular}{ccccccc}
			\noalign{\hrule height 1pt}
			&\!\multirow{2}{*}{Mod}\!
			&\!\multirow{2}{*}{Trees}\!
			&\!\multirow{2}{*}{Cactus}\!
			&\!Square-\!
			&\!$K_4$-minor-\!
			&\!${(\forbiddenA,\forbiddenB)}$-\\
			&&&& free & free & free\\ 
			\noalign{\hrule height 1pt}
			Faben and Jerrum~\cite{Faben:15:Parity_Graph_Homs}\!
			& 2
			& $\bm\times$\\ 
			Göbel et al.~\cite{Goebel:14:Cactus}\!
			& 2
			& $\bm\times$	
			& $\bm\times$ \\ 
			Göbel et al.~\cite{Goebel:16:Square-Free}\!
			& 2
			& $\bm\times$
			&  
			& $\bm\times$ \\ 
			Focke et al.~\cite{Focke:21:Counting_Homomorphisms_to_K_4-Minor-Free_Graphs_Mod_2}\!
			& 2 
			& $\bm\times$ 
			& $\bm\times$ 
			& {\bfseries($\bm\times$)}
			& $\bm\times$ \\
			\noalign{\hrule height 1pt}
			Göbel et al.~\cite{Goebel:21:Counting_Homomorphisms_Trees}\!
			& $p$
			& $\bm\times$ \\
			Kazeminia and Bulatov~\cite{Kazeminia:19:Count_Homs_Square_Free_Mod_Prime}\!
			& $p$ 
			& $\bm\times$ 
			&& $\bm\times$ \\ 
			\noalign{\hrule height 1pt}
			\bfseries This paper 
			& $p$ 
			& $\bm\times$ 
			& $\bm\times$ 
			& $\bm\times$ 
			&& $\bm\times$ \\
			\noalign{\hrule height 1pt}
		\end{tabular} 
	\end{center}
	\caption{History of the study of $\probNumHom{H}[p]$ on \textbf{bipartite} graphs $H$. (Note that the complexity study can be restricted to bipartite graphs by the bipartization result of this paper.) Crosses denote that the result incorporates the dichotomy for the graph class and a $p$ denotes that the result holds for all primes. Parentheses denote that the result is not intrinsic but given by additional argumentation. }
	\label{tab:comparison_chart}
\end{table}
\subsection{Contribution}
We establish a plethora of technical results, which we believe to be a major asset to future works on the complexity of $\probNumHom{H}[p]$ and may be of independent interest to different lines of research. The main contributions are given in the following and discussed in more depth in the subsequent subsection.

\paragraph*{Quantum Homomorphisms}
We introduce the study of quantum graphs to the study of $\probNumHom{H}[p]$. For any quantum graph $\quantum{H}$, we find that $\probNumHom{\quantum{H}}[p]$ is congruent modulo $p$ to $\probNumHom{\quantum{H}'}[p]$, where $\quantum{H}'$ is a quantum graph whose constituents are order~$p$ reduced with coefficients in $\Z_p \setminus \set{0}$. We call such a quantum graph \emph{order~$p$ reduced}. Focusing on these quantum graphs, we obtain in a unified way the following inheritance theorem, which also incorporates the case of distinguished vertices. For a non-negative integer $k$, a graph has $k$ distinguished vertices if it features a sequence of $k$ vertices, and a quantum graph has $k$ distinguished vertices if every constituent has $k$ distinguished vertices. A homomorphism between graphs with $k$ distinguished vertices is restricted to map the sequences of distinguished vertices component-wise.
\begin{restatable}[]{theorem}{InheritenceQuantumHomsModp}
	\label{thm:homs_to_quantum_graph_mod_p}
	Let $p$ be a prime, $k$ be a non-negative integer, and $\quantum{H}$ be an order~$p$ reduced quantum graph with $k$ distinguished vertices.
	\begin{itemize}
		\item If there exists a constituent $H$ of $\quantum{H}$ such that the problem $\probNumHom{H}[p]$ is $\classNumP[p]$-hard, then $\probNumHom{\quantum{H}}[p]$ is $\classNumP[p]$-hard. 
		\item If for each constituent $H$ of $\quantum{H}$, the problem $\probNumHom{H}[p]$ is solvable in polynomial time, then $\probNumHom{\quantum{H}}[p]$ is also solvable in polynomial time.
	\end{itemize}
\end{restatable}

This shows that the complexity of $\probNumHom{\quantum{H}}[p]$ collapses to the complexity of $\probNumHom{H}[p]$. Even though the set of graph parameters expressible by $\probNumHom{\quantum{H}}$ is arbitrarily larger compared to the parameters expressible by $\probNumHom{H}$, the complexity behaviour is captured at dimension $1$, i.e. graphs. 

We show that the reduction technique applied to show Theorem~\ref{thm:homs_to_quantum_graph_mod_p} yields a universal technique that can be applied to obtain so-called \emph{pinning} in classes of graph-homomorphisms closed under composition. This technique is helpful for our study as we also obtain pinning for the restricted class of homomorphisms introduced in the following.

\paragraph*{Bipartization}
We restrict the study of $\probNumHom{H}[p]$ to the study of bipartite graphs by a restricted class of homomorphisms. We call a bipartite graph $G$ with fixed bipartition a \emph{bip-graph}.
For two bip-graphs $G$ and $H$, we say that a homomorphism from $G$ to $H$ is a \emph{bip-homomorphism} if it preserves the order of the fixed bipartition. The problem $\probNumBipHom{H}[p]$ with fixed bip-graph $H$ then asks for the number of bip-homomorphisms to $H$. It allows us to restrict the study of $\probNumHom{H}[p]$ to the study of bipartite graphs by the following theorem, where $\otimes$ denotes the tensor product of graphs. 
\begin{restatable}[]{theorem}{Bipartization}
	\label{thm:bipartization}
	Let $H$ be a graph and $\vertexset[K_2]$ consist of the two vertices $\set{u_\L,u_\R}$. If $H'$ is the bip-graph $H \otimes K_2$ with bipartition $(\vertexset[H] \times \set{u_\L}, \vertexset[H] \times \set{u_\R})$, then $\probNumBipHom{H'}$ reduces to $\probNumHom{H}$ under parsimonious reduction.
\end{restatable}

As observed by Faben~\cite[Theorem~3.1.17.]{Faben:12:thesis:Complexity_Modular_Counting_CSP}, for any positive integer $k$, any parsimonious reduction is parsimonious modulo $k$. This implies that a dichotomy for $\probNumBipHom{H'}[p]$ yields a dichotomy for $\probNumHom{H}[p]$. As we show later, the graph $H'$ is a collection of complete bipartite graphs if and only if $H$ satisfies the Dyer and Greenhill criterion. An additional feature of Theorem~\ref{thm:bipartization} is that it allows for the graph $H$ to contain loops whereas the bipartite graph $H'$ is always loop-less by definition. So far, no study of $\probNumHom{H}[p]$ allowed for loops. The structural implications of a bipartite graph $H$ are also heavily exploited in the following analysis.

\paragraph*{Hardness in Bipartite \Graphclass{} Graphs}
In the longest and most technically involved part of the paper, we study bipartite graphs $H$ not satisfying the Dyer and Greenhill criterion with the goal of finding enough structural information to establish hardness of $\probNumBipHom{H}[p]$.
We find that it suffices to study the class of bip-graphs without bip-automorphisms of order~$p$, where bip-automorphisms are automorphisms that are also bip-homomorphisms. Such a bip-graph is called \emph{order~$p$ bip-reduced}. We conduct a rigorous structural analysis of the class of bipartite graphs that contain no induced subgraph isomorphic to $\forbiddenA$ or $\forbiddenB$ (see Figure~\ref{fig:intro_example} for an illustration). Our insights into the structure of bipartite graphs allow us to establish the following theorem.
\begin{figure}[t]
	\centering
	\includegraphics[]{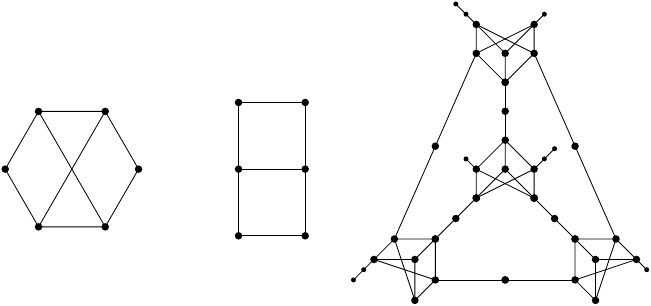}
	\caption{From left to right: $\forbiddenA$; $\forbiddenB$; Example of a bipartite \graphclass{} and asymmetric graph containing locally and globally $K_4$ as a minor.}
	\label{fig:intro_example}
\end{figure}
\begin{restatable}[]{theorem}{HardnessGraphclass}
	\label{thm:hardness_graphclass}
	Let $p$ be a prime and $H$ be an order~$p$ bip-reduced bip-graph that is \graphclass{}. If there exists a connected component of $H$ that is not a complete bipartite graph, then $\probNumBipHom{H}[p]$ is $\classNumP[p]$-hard.
\end{restatable}

In many cases, a $\forbiddenB$ as induced subgraph yields a pair of vertices $x$ and $y$ where $x$ \emph{dominates} $y$. The class of bipartite domination-free $\forbiddenA$-free graphs is one of the focal points of the seminal work by Feder and Vardi~\cite{Feder:98:The_Computational_Structure_of_Monotone_Monadic_SNP_and_Constraint_Satisfaction}. They showed that the class of \emph{graph retract} problems, a notion equivalent to a partially labelled graph homomorphism, to the class of bipartite domination-free $\forbiddenA$-free graphs contains as much computational power as the whole class of \emph{constraint satisfaction problem}s (CSPs), i.e. every CSP is polynomially equivalent to a partially labelled graph homomorphism problem, where the image is a bipartite domination-free $\forbiddenA$-free graph. 

Consider the graphs studied in the work of Brightwell and Winkler~\cite{Brightwell:99:Graph_Homomorpisms_and_Phase_Transitions} shown in Figure~\ref{fig:small_graphs_loops_example}. The set of graph homomorphisms to these graphs played a key role in their study of spin systems in statistical physics. Prior results incorporate only two out of the seven minimal fertile graphs: \enquote{the stick} and \enquote{the key}. 
Following the line of argumentation, our results incorporate the previous and three additional minimal fertile graphs. 
The only missing ones are \enquote{the hinge} and \enquote{the gun} as the construction used for bipartization yields graphs that are not $\forbiddenB$-free. 
The class of bipartite \graphclass{} graphs captures all the classes of bipartite graphs studied in previous works on $\probNumHom{H}[2]$ and $\probNumHom{H}[p]$ except for the recent work by Focke et al.~\cite{Focke:21:Counting_Homomorphisms_to_K_4-Minor-Free_Graphs_Mod_2} on $K_4$-minor-free graphs. Every biclique with at least $3$ vertices in each part contains a $K_4$ as minor, as is the case with $\forbiddenA$. A $\forbiddenB$ is $K_4$-minor-free. Hence, our result given by a local property is orthogonal to the result of Focke et al.~\cite{Focke:21:Counting_Homomorphisms_to_K_4-Minor-Free_Graphs_Mod_2} given by a global property. An example is depicted in Figure~\ref{fig:intro_example}.
\begin{figure}[t]
	\centering
	\includegraphics[]{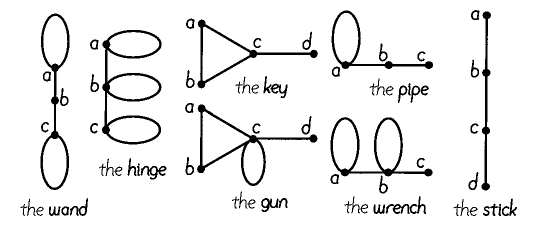}
	\caption{Depiction of the seven minimal fertile graphs as given in Brightwell and Winkler~\cite[Fig.~6.]{Brightwell:99:Graph_Homomorpisms_and_Phase_Transitions}.}
	\label{fig:small_graphs_loops_example}
\end{figure}

\subsection{Results Obtained after This Work}
Subsequent to the conference version~\cite{Lagodzinski:21:On_Counting_Quantum-Graph_Homomorphisms} and during the preparation of this journal version Bulatov and Kazeminia~\cite{Bulatov:22:Complexity_Classification_of_Homs_mod} proved the conjecture for all prime moduli. Utilizing the results from the \enquote{Bipartization}-part they considered the problem $\probNumBipHom{H}[p]$ in the context of CSPs. They proved hardness not by a structural analysis but by an extension of algebraic techniques regularly used when studying CSPs to the modular case; some of these techniques have only been shown to hold for graphs. In light of the proven conjecture for $\probNumHom{H}[p]$ Bulatov and Kazeminia have now rephrased and extended the conjecture to general CSPs.

With the proven complexity classification of $\probNumHom{H}[p]$ at hand, we obtain by Theorem~\ref{thm:homs_to_quantum_graph_mod_p} a dichotomy for quantum homomorphisms. Further, the results of Section~\ref{sec:surjective_homomorphisms}, previously stated and proved conditionally on the correctness of the conjectured dichotomy for $\probNumHom{H}[p]$, are now known to hold in general. This incorporates classifications for vertex surjective homomorphisms and compactions. For future studies, we adjusted these results in light of the proven conjecture, whereas the other parts of the paper were retained unchanged. Lastly, we note that a structural argumentation for the complexity dichotomy of $\probNumHom{H}[p]$ is not known yet, and it is unclear whether the arguments in~\cite{Bulatov:22:Complexity_Classification_of_Homs_mod} can be used to obtain more structural insight.

\subsection{Technical Overview}
In this work, hardness for modular counting problems is indicated by reducing from problems that are $\classNumP[p]$-hard. The class $\classNumP[p]$ contains functions of the form \enquote{$f \pmod p$}, where $f$ is in $\classNumP$. Notably, for the case that $p$ is $2$, the whole polynomial hierarchy reduces to problems in $\classNumP[2]$ by Toda~\cite{Toda:91:PP_is_as_Hard_as_the_Polynomial-Time_Hierarchy}.

We briefly discuss the insights by Faben and Jerrum~\cite{Faben:15:Parity_Graph_Homs}.
For a pair of graphs $G$ and $H$, we denote by $\Hom[G,H]$ the set of homomorphisms from $G$ to $H$. By $\numHom[{G},{H}]$ we denote the cardinality $\abs{\Hom[G,H]}$ and, for a modulus $p$, by $\numHom[{G},{H}][p]$ we denote $\numHom[{G},{H}]  \pmod p$. The computational problem $\probNumHom{H}$ with parameter $H$ then asks to compute $\numHom[{G},{H}]$ for an input $G$. Similarly, $\probNumHom{H}[p]$ asks to compute $\numHom[{G},{H}][p]$.
A central point in the study of $\probNumHom{H}[p]$ is the (non)-existence of automorphisms of order $p$, where for the case that $p$ is $2$, these automorphisms are called \emph{involutions}. 
Given an automorphism $\varrho$ of order $p$ acting as a derangement on the subset $V'$ of $\vertexset[H]$, Faben and Jerrum~\cite{Faben:15:Parity_Graph_Homs} showed that the number of homomorphisms $f$ from any input graph $G$ to $H$ is congruent modulo $p$ to $0$ if the image of $f$ intersects $V'$. They deduced that, for the subgraph $H^\varrho$ of $H$ induced by the fixed points $\vertexset[H] \setminus V'$, there exists a parsimonious reduction from $\probNumHom{H^\varrho}[p]$ to $\probNumHom{H}[p]$. Iteratively applying this reduction, one ends up with a subgraph $\normreduced{H}$ of $H$ that admits no automorphism of order $p$ called the \emph{order~$p$ reduced form} of $H$. This subgraph is unique up to isomorphism and thus well-defined by~\cite[Theorem~3.7.]{Faben:15:Parity_Graph_Homs}. The study of $\probNumHom{H}[p]$ focuses on graphs $H$ that do not admit automorphisms of order $p$, which are called \emph{order~$p$ reduced}.

We now discuss our technical contributions and argumentative routes in more detail.

\subsubsection{Quantum Homomorphisms}
It has been observed by Borgs, Chayes, Kahn, and \Lovasz~\cite{Borgs:13:Convergence_Graphs_Bounded_Degree} that the study of linear combinations of homomorphisms provides great insights, especially on the comparability of pairs of graphs, for instance, if one is a subgraph of the other. \Lovasz~\cite{Lovasz:12:book:Large_Networks_Graph_Limits} introduced the term \emph{quantum graph}, denoted $\quantum{H}$, for a linear combination of finitely many graphs. For a quantum graph $\quantum{H}$, the set of pairwise non-isomorphic graphs is denoted by $\family{H}$ and every graph $H$ in $\family{H}$ has an associated coefficient $\alpha_H$. If $\alpha_H\neq 0$, then $H$ is called a \emph{constituent} of $\quantum{H}$. The quantum graph $\quantum{H}$ is given by
\[
\quantum{H} = \sum_{H \in \family{H}} \alpha_H \cdot H .
\]
A computational problem on $\quantum{H}$ translates into the linear combination of computational problems on entities $H$ in $\family{H}$ with coefficient $\alpha_H$. By \Lovasz~\cite{Lovasz:12:book:Large_Networks_Graph_Limits} every graph parameter has -- if any -- a unique expression by a linear combination of finitely many graph homomorphisms up to isomorphisms.

Towards studying $\probNumHom{\quantum{H}}[p]$ we naturally assume that the coefficients of $\quantum{H}$ are integers. For a graph $G$, we denote by $\numHom[G, \quantum{H}][p]$ the value $\numHom[G, \quantum{H}] \pmod p$. Following the findings of Faben and Jerrum~\cite[Theorem~3.7.]{Faben:15:Parity_Graph_Homs}, it suffices to assume that $\quantum{H}$ consists of order~$p$ reduced constituents with coefficients in $\Z \setminus \set{0}$. Such a quantum graph is called \emph{order~$p$ reduced}. We establish a polynomial-time reduction from $\probNumHom{H}[p]$ to $\probNumHom{\quantum{H}}[p]$, for any order~$p$ reduced quantum graph $\quantum{H}$ and any constituent $H$ in $\quantum{H}$. 
Such a reduction is commonly referred to as a \emph{pinning}-reduction as it enables us to consider the subproblem where a partial mapping is already fixed. One of the main problems of reduction algorithms on modular counting problems is the loss of control of summations in a finite field because we cannot infer from non-zero summands that the sum is non-zero. 
For instance, let $p$ be $2$ and $\quantum{H}$ be the quantum graph consisting of the two graphs $H_1$ and $H_2$ with coefficients $\alpha_{H_1}$ and $\alpha_{H_2}$ equal to $1$, where $H_1$ is an asymmetric tree and $H_2$ is the disjoint union of a copy of $H_1$ and an isolated vertex. Let $G$ be a connected graph and input for $\probNumHom{\quantum{H}}$, then we obtain $\numHom[G,H_2] = \numHom[G,H_1] + \numHom[G,K_1]$. Consequently, when computing $\numHom[G,H_1] + \numHom[G,H_2]$ in $\Z_2$ the term referring to $H_1$ vanishes and this amounts to computing $\numHom[G,K_1]$, which is solvable in polynomial time. However, Theorem~\ref{thm:homs_to_quantum_graph_mod_p} yields that $\probNumHom{\quantum{H}}[2]$ is $\classNumP[2]$-hard. The reason is that the split into $\numHom[G,H_1] + \numHom[G,K_1]$ only works if $G$ is connected and by utilizing disconnected graphs the additional vertex of $H_2$ yields enough information to distinguish between $H_1$ and $H_2$. Therefore, we can extract $\numHom[G,H_1][p]$ from $\numHom[G,\quantum{H}][p]$.

In finite fields, reduction algorithms usually rely heavily on multiplication. We find that the beautiful insight on specific matrices defined on families $\family{F}$ of simple graphs provided by Borgs, Chayes, Kahn, and \Lovasz~\cite[Lemma~4.2.]{Borgs:13:Convergence_Graphs_Bounded_Degree}\footnote{The lemma is stated for possibly infinite families in~\cite[Proposition~5.43.]{Lovasz:12:book:Large_Networks_Graph_Limits}} is able to lift us above this hurdle. In order to adapt this result, we first show that the original proof extends naturally to graphs that feature loops or distinguished vertices. Then, we translate the result to counting in a finite field of prime order. A straightforward application of the modulo operator is not sufficient as the graphs in $\family{F}$ might contain a multiple of $p$ automorphisms. We restrict to order~$p$ reduced graphs and argue why this allows for an application of the modulo operator. In this way, we show the following.
\begin{restatable}[]{corollary}{HomMatrixNonSingularModp}
	\label{cor:hom_matrix_nonsingular_mod_p}
	Let $p$ be a prime, $k$ be a non-negative integer, and $\family{F}$ be a family $\set{F_i}_{i \in I}$ with index set $I$, where $\family{F}$ consists of pairwise non-\isomorphic{} order~$p$ reduced graphs with $k$ distinguished vertices and without multi-edges. If $\family{F}$ is closed under surjective homomorphic image, then the matrix
	\[
		M_{\numHom[][p]} = \sqBrackets[\big]{\numHom[F_i,F_j][p]}_{i,j \in I}
	\]
	is nonsingular.
\end{restatable}

The strength of this result for our purposes is twofold. First, it allows us to show Theorem~\ref{thm:homs_to_quantum_graph_mod_p} in a concise manner. Given an order~$p$ reduced quantum graph $\bar{F}$ with set of constituents $\family{F}$ closed under surjective homomorphic image, we obtain by Corollary~\ref{cor:hom_matrix_nonsingular_mod_p} that any system of linear equations of the form $\bar{x} \cdot M_{\numHom} = \bar{v}$ has a unique solution in the field $\Z_p$. Therefore, for any vector $\bar{v}$, there exists a unique linear combination of entities in $\family{F}$ with coefficients $\alpha_F$ that yield the vector $\bar{v}$. In fact, we observe that this corresponds to a quantum graph $\bar{F}'$ with $\numHom[\bar{F}',F_i][p] = v_i$ that \emph{implements} the vector $\bar{v}$, where $F_i$ and $v_i$ is the $i$-th entry of $\family{F}$ and $\vector{v}$, respectively. In particular, there exists a quantum graph $\bar{F}'$ implementing the $i$-th standard vector allowing us to \enquote{pick} the $i$-th entry of $\family{F}$, i.e. $\numHom[\bar{F}',F_j][p]$ is equal to $1$ if $j$ is equal to $i$, otherwise $\numHom[\bar{F}',F_j][p]$ is equal to $0$. Given an input graph $G$ for $\numHom[G,\bar{F}][p]$, we construct a quantum graph $\bar{F}^\ast$ from $G$ and $\bar{F'}$ such that $\numHom[\bar{F}^\ast, \bar{F}][p]$ is equal to $\numHom[G, F_i][p]$. The main problem for this application is that the set $\family{F}$ of constituents might not be closed under surjective homomorphic image. Given any quantum graph $\quantum{H}$ with set of constituents $\family{H}$, we need to define a suitable family $\family{F}$ that contains all the image graphs needed. We find that the subgraphs of the maximal closure are sufficient for this purpose and obtain Theorem~\ref{thm:homs_to_quantum_graph_mod_p}.

The second strength is the adaptability to subproblems of homomorphisms. The main property needed is that the subclass of homomorphisms has to be closed under composition provided the composition exists. Examples are \emph{vertex surjective homomorphisms} and \emph{compactions} as studied by Focke et al.~\cite{Focke:19:The_Complexity_of_Counting_Surjective_Homomorphisms_and_Compactions}. A homomorphism $f$ in $\Hom[G, H]$ is \emph{vertex surjective} if the image-set of $f$ is the whole set $\vertexset[H]$. The homomorphism $f$ is a \emph{compaction} if it is vertex surjective and every non-loop edge $e$ is in the image of $f$. 
A closely related example is the problem of counting \emph{partially labelled} homomorphisms $\probNumPartLabHom{H}$, that are homomorphisms from an input graph $G$ to $H$ that have to respect a given mapping from a subset $V_G$ of $\vertexset[G]$ to a subset $V_H$ of $\vertexset[H]$. By Feder and Hell~\cite[Theorem~4.1.]{Feder:98:List_Homomorphisms_to_Reflexive_Graphs}, these are equivalent to \emph{retractions}. The reduction from $\probNumPartLabHom{H}[p]$ to $\probNumHom{H}[p]$ is a building stone of every paper in the study of $\probNumHom{H}[p]$ and can be obtained swiftly due to the strength of Corollary~\ref{cor:hom_matrix_nonsingular_mod_p}. A third example will be discussed in the next subsection.

\subsubsection{Bipartization}
Chen et al.~\cite{Chen:19:The_Exponential-Time_Complexity} employed the tensor product to construct the graph $H'$ by $H'=H \otimes K_2$. Subsequently, they reduced $\probNumHom{H'}$ to $\probNumHom{H}$. Let the vertex set $\vertexset[K_2]$ be $\set{u_\L, u_\R}$. The graph $H'$ is bipartite with parts $\lpart[H']$ and $\rpart[H']$ given by $\vertexset[H] \times \set{u_\L}$ and $\vertexset[H] \times \set{u_\R}$, respectively. 
The main problem when adapting this construction to modular counting $\probNumHom{H}[p]$ is that, for every graph $G$, the number of homomorphisms $\numHom[G, K_2][2]$ is $0$. Thus, the tensor product with $K_2$ seemingly annihilates any structure that might imply hardness. Instead of branching the study of $\probNumHom{H}[p]$ into one for the case that $p$ is $2$ and one for odd primes, we solve this issue in a uniform way for all prime moduli.

The key insight is that for an involution-free graph $H$, the tensor product $H \otimes K_2$ only yields involutions on $H \otimes K_2$ that exchange the parts of the bipartition. An important example is the graph $H$ consisting of a single edge with one loop, for which it is known that $\probNumHom{H}$ is equivalent to counting the number of independent sets. Then, the graph $H'$ given by $H \otimes K_2$ is the path with $4$ vertices (see Figure~\ref{fig:small_graphs_loops_example}), that admits only the reflection across the middle edge as a non-trivial automorphism. It is known that $\probNumHom{H'}$ is equivalent to counting the number of bipartite independent sets $\probNumBIS$ and also that $\probNumHom{H'}[4]$ is $\classNumP[2]$-hard (see~\cite{Goebel:21:Counting_Homomorphisms_Trees}) whereas $\probNumHom{H'}[2]$ is polynomial-time solvable. In order to evade the artificial involutions yielded by the tensor product with $K_2$ we introduce the study on the problem of counting bip-homomorphisms between bip-graphs denoted $\probNumBipHom{H}$. For example, if $H$ is the path with $4$ vertices then $\probNumBipHom{H}[2]$ is equivalent to $\probNumHom{H}[4]$. We note that the graph $H \otimes K_2$ is a collection of complete bipartite graphs if and only if $H$ satisfies the Dyer and Greenhill criterion. For such bip-graphs $H'$, the problem $\probNumBipHom{H'}[p]$ is solvable in polynomial time.

The reduction from $\probNumBipHom{H}[p]$ in~Theorem~\ref{thm:bipartization} has the downside that the machinery developed over the course of multiple papers on $\probNumHom{H}[p]$ is not stated for the subclass of homomorphisms counted by $\probNumBipHom{H}[p]$. We remedy this. First, by the strong adaptability of Corollary~\ref{cor:hom_matrix_nonsingular_mod_p} and the subsequent reduction algorithm we obtain pinning for the problem $\probNumBipHom{H}[p]$. Second, analogue to the order~$p$ reduced form $\normreduced{H}$, a bip-graph $H$ is \emph{order~$p$ bip-reduced} if it does not admit a bip-automorphisms of order $p$. We reduce the bip-graph $H$ to its \emph{order~$p$ bip-reduced form} $\bipreduced{H}$. The goal toward a dichotomy for $\probNumHom{H}$ is captured in the following corollary; the applied chain of reductions is displayed in Figure~\ref{fig:reduction_chain_BIP}.
\begin{figure}[t]
	\centering
	\includegraphics[width=\textwidth]{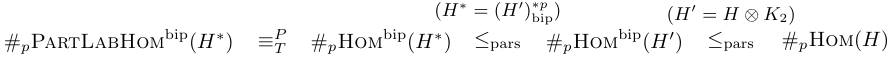}
	\caption{For $p$ a prime and $H$ a graph, chain of reductions employed for bipartization.}
	\label{fig:reduction_chain_BIP}
\end{figure}
\begin{corollary}\label{cor:bip_homs_non-bip_graph}
	Let $p$ be a prime.
	If, for every order~$p$ bip-reduced bip-graph $H'$ that is not a collection of complete bipartite graphs, $\probNumBipHom{H'}[p]$ is $\classNumP[p]$-hard, then, for every order~$p$ reduced graph $H$ that is not a collection of reflexive complete and complete bipartite graphs, $\probNumHom{H}[p]$ is $\classNumP[p]$-hard.
\end{corollary}

We employ a gadget that establishes a reduction to $\probNumBipHom{H}[p]$ from a variant of $\probNumBIS$ with weights on both types of vertices. We call such a gadget yielding hardness a \emph{$p$-hardness gadget}. By an adaptation of the dichotomy for $\probNumBIS$ with weights on the vertices \textit{in} the independent set given in~\cite{Goebel:21:Counting_Homomorphisms_Trees} this reduction establishes hardness when counting in $\Z_p$ if and only if none of the weights is congruent modulo $p$ to $0$. 
The problem of $\probNumBIS$ with weights on the vertices in the independent set is established as the terminal problem yielding hardness in the study of $\probNumHom{H}[p]$~\cite{Goebel:21:Counting_Homomorphisms_Trees, Kazeminia:19:Count_Homs_Square_Free_Mod_Prime} as the bigger modulus implies a richer structure compared to the study of $\probNumHom{H}[2]$ that traditionally focusses on counting the number of independent sets.

\subsubsection{Hardness in Bipartite \Graphclass{} Graphs}

A central argument in the work of Chen et al.~\cite{Chen:19:The_Exponential-Time_Complexity} is that, for a bipartite graph $H$, there exists a simple reduction from $\probNumHom{B}$ to  $\probNumHom{H}$. Here, $B$ is the ball of radius $2$ around a vertex $v$ of $H$ denoted by $\twoneigh{v}$, i.e. vertices of distance at most $2$ from $v$. By an iterative application of this argument, they establish a reduction from $\probNumHom{P}$, where $P$ is a generalization of the path with $4$ vertices. Even though the reduction argument can be made valid for $\probNumHom{H}[p]$ and $\probNumBipHom{H}[p]$ the restriction to a substructure might destroy the properties that yield hardness already for trees $H$, a class of graphs for which the dichotomy is proved (see \cite{Faben:15:Parity_Graph_Homs, Goebel:21:Counting_Homomorphisms_Trees}). An example is depicted in Figure~\ref{fig:example_failure_2-neighbourhood}.
\begin{figure}[t]
	\centering
	\includegraphics[]{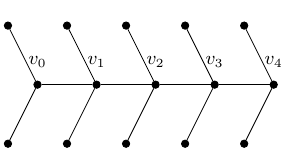}
	\caption{For $p$ equal to $3$, the tree $H$ contains no ball of radius $2$ around any vertex with enough structure to yield hardness even though $\probNumHom{H}[3]$ is $\classNumP[{3}]$-hard.}
	\label{fig:example_failure_2-neighbourhood}
\end{figure}

In a nutshell, the induced subgraphs of radius at most $2$ can admit too many automorphisms of order $p$. In Figure~\ref{fig:example_failure_2-neighbourhood} we observe that the problem originates from too many instances of complete bipartite graphs $K_{1,b}$, where $b$ is congruent modulo $3$ to $0$ or $1$. The way to overcome this is to also consider the global structure. In the case displayed in Figure~\ref{fig:example_failure_2-neighbourhood}, the number of walks of length $4$ from $v_4$ to a vertex $v$ in the neighbourhood of $v_1$ is congruent modulo $3$ to $0$ if $v$ is equal to $v_2$ and equal to $1$ otherwise. The goal is then a reduction restricting the study to $\twoneigh{v_0} \setminus \set{v_2}$, a graph that yields hardness. We do this in a general form by the second type of gadgetry called \emph{$(B,p)$-gadget} that reduces $\probNumBipHom{H}[p]$ from $\probNumBipHom{B}[p]$, where $B$ is an induced sub-bip-graph of $H$, an induced subgraph inheriting the fixed bipartition.

As we have argued, some of the main obstacles in the study of $\probNumHom{H}[p]$ are complete bipartite graphs. The graph \forbiddenA{} denotes the graph obtained from $K_{3,3}$ by deleting an edge, and the graph $\forbiddenB$ denotes the graph obtained from $K_{3,3}$ by deleting two edges without introducing a cut-vertex (see Figure~\ref{fig:intro_example}). By the restriction to \graphclass{} bipartite graphs, we study exactly the case of a great many complete bipartite induced subgraphs. To this end, we observe that, for every vertex $v$ of $H$, the induced subgraph $\twoneigh{v}$ splits into connected components obtained by deleting $v$. The \emph{split} of $\twoneigh{v}$ at $v$ is given by the set of these connected components, where every component contains a copy of $v$. By the absence of induced subgraphs isomorphic to $\forbiddenA$ or $\forbiddenB$, we deduce that the blocks that contain $v$ in these components are complete bipartite.

The overall line of argumentations toward Theorem~\ref{thm:hardness_graphclass} is the following. First, we establish the dichotomy for all \graphclass{} bipartite graphs $H$ of radius at most $2$ by a combination of $(B,p)$- and $p$-hardness gadgets. This is done by a careful structural study of the split of $H$ at a central vertex $v$. An important first result is that, for any order~$p$ bip-reduced bip-graph $H$, if $H$ contains a vertex $v$ where $\twoneigh{v}$ falls into the hard cases of the dichotomy, then $\probNumBipHom{H}[p]$ is $\classNumP[{p}]$-hard. Second, we study graphs $H$ of radius larger than $2$ in order to establish enough structural information of $H$ allowing us to construct either a $p$-hardness gadget for $H$ or a $(B,p)$-gadget such that $\probNumBipHom{B}[p]$ is $\classNumP[{p}]$-hard. This second step is very long and technically involved because the class of \graphclass{} graphs allows for many cases commanding us to explore the global structure of $H$. The chain of reduction arguments is displayed in Figure~\ref{fig:reduction_chain_hard}.
\begin{figure}[t]
	\centering
	\includegraphics[]{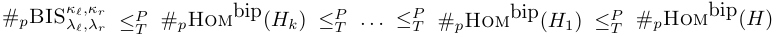}
	\caption{For $p$ a prime and $H$ a bip-graph, chain of reduction arguments. The intermediate bip-graphs $H_i$ refer to $H$ itself or an induced sub-bip-graph obtained by an $(H_i,p)$-gadget.}
	\label{fig:reduction_chain_hard}
\end{figure}

Informally, we split the study of bipartite \graphclass{} graphs of radius larger than $2$ into two broad cases: graphs with no pair of vertices that have a multiple of $p$ common neighbours, and graphs with such a pair of vertices. We recall Figure~\ref{fig:example_failure_2-neighbourhood} and the challenges imposed by subgraphs isomorphic to $K_{a, p}$ or $K_{a, p+1}$. The first case is dedicated to the study of only instances of $K_{a,p+1}$, and the second case adds the possibility of instances of $K_{a,p}$. In both cases, the challenging situations arise from closed walks and paths that yield hardness depending on the endvertices.

We initialize with a study of closed walks, generalizing the situation of cycles frequently encountered (see~\cite{Kazeminia:19:Count_Homs_Square_Free_Mod_Prime, Focke:21:Counting_Homomorphisms_to_K_4-Minor-Free_Graphs_Mod_2}). A walk $W$ given by $W=(w_0, w_1, \dots, w_k)$ is square-free if, for every index $i \in \sqBrackets{k-1}$, the common neighbourhood $\neigh{w_{i-1}} \cap \neigh{w_{i+1}}$ contains only $w_i$. We find that a closed square-free walk in $H$ with more than $2$ different vertices is sufficient to show hardness. Moreover, a \emph{thick} walk is obtained from $W$ by replacing every vertex $w_i$ with $b_i$ copies, where $b_i$ is a positive integer. If for every index $i \in \sqBrackets{k}$, the number of copies $b_i$ is not congruent modulo $p$ to $0$ and the thick walk is closed, then we call the thick walk \emph{$p$-hard} and show that it also yields hardness. 

Focusing on the first case, we show that it suffices to study a path $P$ in $H$ constructed from concatenating instances of $K_{a,k \cdot p+1}$ with $a$ not congruent modulo $p$ to $0$. We call such a path $P$ a \emph{$p$-hardness path}; an example is depicted in Figure~\ref{fig:generalized_hardness_path_intro_example}.
The path $P$ is constructed such that it allows us to establish hardness depending only on the local properties of its endvertices. In the case that one endvertex does not satisfy any of the properties allowing to establish hardness, we show that there exists a way to extend $P$. By the finiteness of $H$, we derive a closed walk or the existence of a $p$-hardness path $P$ whose endvertices are such that $P$ yields hardness. A closed walk constructed in this way has to be $p$-hard, which concludes the case.

Turning toward the second broad case, we traverse the graph $H$ again along a path $Q$. This path $Q$ is constructed by concatenating instances of $K_{a,k \cdot p}$ carefully such that it evades a set of common neighbours of cardinality congruent modulo $p$ to $0$. This leads to a structure we call \emph{$p$-mosaic path}; an example is shown in Figure~\ref{fig:generalized_hardness_path_intro_example}. Similar to $p$-hardness paths, a $p$-mosaic path provides enough structural information toward establishing hardness depending only on the local properties of its endvertices. Also, similar to $p$-hardness paths, we show that if an endvertex $v$ of $Q$ does not satisfy such property, then one option is that $Q$ can be extended. However, by the finiteness of $H$, this yields a $p$-hard walk. The only remaining option for $v$ is to also be the endvertex of a $p$-hardness path. We conclude that $H$ has to contain a concatenation of $p$-hardness paths and $p$-mosaic paths. By the finiteness of $H$, we show that this concatenation is a $p$-hard walk.
\begin{figure}[t]
	\centering
	\includegraphics[]{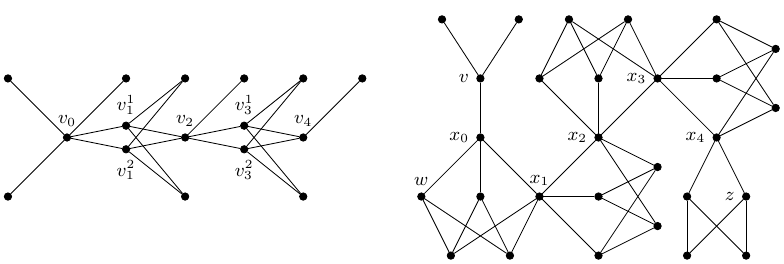}
	\caption{For $p$ equal to $3$, the left figure depicts an example of a $p$-hardness path and the right figure depicts an example of a $p$-mosaic path.}
	\label{fig:generalized_hardness_path_intro_example}
\end{figure}

Before we proceed, we note that many of our structural findings and gadget constructions are given for general graphs $H$, not only \graphclass{} ones. Examples are $p$-hard walks and $p$-hardness paths. This is done to support future studies of $\probNumHom{H}[p]$. Despite these more general findings, we only give complexity results for \graphclass{} graphs as, in particular, our arguments do not yield a straightforward extension to general graphs $H$ of radius $2$ as explained in the following.

\subsubsection{Beyond \Graphclass{} Graphs}
In light of our findings we conjecture that, for an order~$p$ bip-reduced bip-graph $H$, the problem $\probNumBipHom{H}[p]$ is $\classNumP[p]$-hard if $H$ is not a collection of complete bipartite graphs. This extends toward a conjecture on $\probNumHom{H}[p]$ and also incorporates the conjecture of Faben and Jerrum.
\begin{restatable}[]{conjecture}{FabenJerrump}
	\label{conj:faben-jerrum_p}
	Let $p$ be a prime and $H$ be a graph with order~$p$ reduced form $\normreduced{H}$. If the connected components of $\normreduced{H}$ are complete bipartite or reflexive complete, then $\probNumHom{H}[p]$ is solvable in polynomial time. Otherwise, $\probNumHom{H}[p]$ is $\classNumP[p]$-complete.
\end{restatable}

We highlight the implications of this conjecture by a study of the set of \emph{partially surjective homomorphisms} from a graph $G$ to a graph $H$ denoted $\PartSurj[G, H]$. Partially surjective homomorphisms have to be surjective on a set of distinguished vertices $\distVertices$ and a set of distinguished edges $\distEdges$, where $\distVertices$ and $\distEdges$ are contained in $\edgeset[H]$ and $\vertexset[H]$, respectively. We deduce that it suffices to study graphs $H$ without automorphisms of order $p$ acting bijectively on $\distVertices$ and $\distEdges$. However, this reduction does not capture all cancellations because the graph $H$ might still admit too many general automorphisms.

By our results on quantum graphs and an application of the inclusion-exclusion principle, we find that the dichotomy presented in Conjecture~\ref{conj:faben-jerrum_p} extends to a dichotomy on the whole class of partially surjective homomorphisms. Contrary to the non-modular version established in~\cite{Chen:19:The_Exponential-Time_Complexity}, this dichotomy
does not state a clear structural characterization of the hard instances with respect to $H$. Due to the mentioned possibility of additional cancellations, the complexity criterion is of algorithmic type. For the special cases in which the parameter graph $H$ is order~$p$ reduced, we amplify the dichotomy such that it states clear structural characterizations. Two examples for this case are the problems $\probNumVertSurjHom{H}[p]$ and $\probNumComp{H}[p]$ of counting in $\Z_p$ the number of vertex surjective homomorphisms and compactions, respectively. We obtain the following criteria analogous to the criteria in the non-modular setting given by Focke et al.~\cite{Focke:19:The_Complexity_of_Counting_Surjective_Homomorphisms_and_Compactions}.
\begin{restatable}[]{corollary}{SurjCompDichotomyModp} 
	\label{cor:surj_comp_dichotomy_mod_p}
	Let $p$ be a prime and $H$ be a graph. If $H$ is a collection of complete bipartite graphs and reflexive complete graphs or $H$ admits an automorphism of order $p$, then $\probNumVertSurjHom{H}[p]$ is solvable in polynomial time. Otherwise, $\probNumVertSurjHom{H}[p]$ is $\classNumP[p]$-hard.
	
	If $H$ is a collection of irreflexive stars and reflexive complete graphs of size at most two or $H$ admits an automorphism of order $p$, then $\probNumComp{H}[p]$ is solvable in polynomial time. Otherwise, $\probNumComp{H}[p]$ is $\classNumP[p]$-hard.	
\end{restatable}

In order to prove Conjecture~\ref{conj:faben-jerrum_p}, we need to study bipartite graphs that contain $\forbiddenA$ or $\forbiddenB$ as an induced subgraph. 
The strong structural restrictions on the graphs under study are a double-edged sword. On the one hand, it is more plausible to find enough structure that yields hardness. On the other hand, it is more difficult to pin the structural analysis down to a handful of cases. We illustrate an especially difficult example in Figure~\ref{fig:5_catherine_wheel}.

\begin{figure}[t]
	\centering
	\includegraphics[]{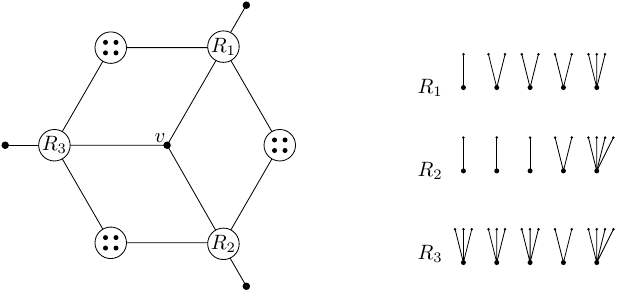}
	\caption{Illustration of a $5$-Catherine wheel. Edges to encircled sets illustrate edges to every vertex in the set. The smaller substructure in the sets $R_1, \dots, R_5$ is illustrated to the right, where the vertices in the sets $R_i$ are the more prominent ones at the bottom of the row.}
	\label{fig:5_catherine_wheel}
\end{figure}

We call such a graph as illustrated in Figure~\ref{fig:5_catherine_wheel} a \emph{$p$-Catherine wheel}. Even though these graphs are $2$-connected and of radius $2$, their highly symmetric global structure together with the lack of small structure in the sets $R_i$ makes it difficult to identify sources for hardness. Here, the case displayed in Figure~\ref{fig:5_catherine_wheel} where the sets $R_i$ only yield a collection of trees and the sum of degrees of the vertices in the sets is congruent modulo $p$ to $0$ is especially difficult. We note that such a graph cannot be order~$p$ bip-reduced for $p$ equal to $2$ or $3$, which highlights the gain of complexity due to higher moduli.

\subsection{Related Literature}
Before we conclude the introduction, we mention related bodies of work. The study of homomorphisms under the point of view of \emph{parameterized algorithms} has been long-established (see Diaz, Serna, and Thilikos~\cite{Diaz:02:Counting_H-Colorings_of_Partial_k-Trees}) but enriched by the work of Amini, Fomin, and Saurabh~\cite{Amini:12:Counting_Subgraphs_via_Homomorphisms} and by Curticapean, Dell, and Marx~\cite{Curticapean:17:Homomorphisms_Are_a_Good_Basis_for_Counting}, who also introduced linear combinations of graph homomorphisms to the study and motivated subsequent works, for instance, Roth and Wellnitz~\cite{Roth:20:Counting_and_Finding_Homomorphisms_is_Universal_for_Parameterized_Complexity}.

The study of homomorphisms from the point of view of \emph{extremal combinatorics} incorporates important conjectures like \emph{Sidorenko's conjecture}~\cite{Sidorenko:91:Inequalities_for_Functionals, Sidorenko:93:A_Correlation_Inequality}, which states a universal lower bound on the number of homomorphisms from a bipartite graph and, in a weaker version, can be found in the work of Simonovits~\cite{Simonovits:84:Extremal_Graph_Problems}. Until today the conjecture remains open but still enjoys new contributions like the recent article by Shams, Ruozzi, and Csikvári~\cite{Shams:19:Counting_Homomorphisms_in_Bipartite_Graphs}.

This leads to the body of work studying \emph{approximation algorithms} including the work of Goldberg and Jerrum~\cite{Goldberg:14:The_Complexity_of_Approximately_Counting_Tree_Homomorphisms} on tree homomorphisms and the work of Galanis, Goldberg, and Jerrum~\cite{Galanis:16:Approximately_Counting_H-Colorings}, who showed that approximating the number of homomorphisms to a fixed graph $H$ is $\probNumBIS$-hard, a notorious complexity class in this body of work. These findings yield an interesting connection to ours in the form of the reduction from (versions of) $\probNumBIS$.

The body of studies concerning different versions of homomorphism problems is vast. It contains dichotomies for the affiliated problem, where the pre-image is from a fixed class of graphs, given by Dalmau and Jonsson~\cite{Dalmau:04:The_Complexity_of_Counting_Homomorphisms_Seen_from_the_Other_Side} and Grohe~\cite{Grohe:07:The_Complexity_of_Homomorphisms_and_Constraint_Satisfaction_Problems}. Peyerimhoff, Roth, Schmitt, Stix, and Vdovina~\cite{Peyerimhoff:21:Parameterized_Modular_Counting_Cayley_Graph_Expanders} gave a dichotomy for the modular version with modulus $p$. Turning toward versions of the problem with fixed image, Focke, Goldberg, and Zivný~\cite{Focke:19:The_Complexity_of_Counting_Surjective_Homomorphisms_and_Compactions} gave a dichotomy for surjective homomorphisms and compactions, and Dyer, Goldberg, and Paterson~\cite{Dyer:07:Counting_Homs_Directed_Acyclic_Graphs} gave a dichotomy for directed homomorphisms if the target is acyclic. The line of research toward the dichotomy for the generalization of $\probNumHom{H}$ allowing weights by Cai, Chen, and Lu~\cite{Cai:13:Graph_Homomorphisms_with_Complex_Values} incorporates works by Bulatov and Grohe~\cite{Bulatov:05:The_Complexity_of_Partition_Functions} and Goldberg, Grohe, Jerrum, and Thurley~\cite{Goldberg:10:A_Complexity_Dichotomy_for_Partition_Functions_with_Mixed_Signs}. Recently, Govorov, Cai, and Dyer~\cite{Govorov:20:Dichotomy_Bounded_Degree_Homomorpisms} extended this research body to bounded degree graphs, for which Cai and Govorov~\cite{Cai:20:Graph_Homs_Bounded_Degree_Complex} gave a dichotomy allowing for complex weights.

The connection of homomorphisms and CSPs was already shown by Feder and Vardi~\cite{Feder:98:The_Computational_Structure_of_Monotone_Monadic_SNP_and_Constraint_Satisfaction}. Bulatov~\cite{Bulatov:13:The_Complexity_of_Counting_CSP} showed that the problem of counting satisfying assignments to a CSP features a dichotomy theorem, a result on which Dyer and Richerby~\cite{Dyer:13:An_Effective_Dichotomy_for_CSP} shed more light. Furthermore, a complete dichotomy for directed homomorphism can be found in the dichotomy on counting weighted versions of CSPs by Cai and Chen~\cite{Cai:17:Counting_CSP_Complex_Weights}. Guo, Huang, Lu, and Xia~\cite{Guo:11:Weighted_Bolean_CSP_Modulo} gave a dichotomy for the associated modular problem.

Finally, the recent work by Cai and Govorov~\cite{Cai:22:Perfect_Matchings} studied the power of expression of the class of homomorphisms. By studying algebras of quantum graphs, they provide a general technique and showed, for instance, that the problem of counting perfect matchings cannot be expressed by counting homomorphisms to a fixed graph $H$ regardless of possible weights in $\C$.

\subsection{Outline}
The paper focuses on counting (quantum-)homomorphisms in a finite field $\Z_p$ of prime order and is structured as follows. After stating the needed preliminaries, we study quantum graphs under the context of homomorphisms and restricted homomorphisms in Section~\ref{sec:quantum_graphs}. Section~\ref{sec:bipartization} presents the reduction to the study of bip-graphs. Subsequently, Section~\ref{sec:gadgets} is devoted to the gadgetry used in order to obtain hardness. The structural analysis on \graphclass{} graphs is conducted in Section~\ref{sec:hardness_graphclass}, where we establish the respective dichotomy. Finally, in Section~\ref{sec:surjective_homomorphisms} we show that Conjecture~\ref{conj:faben-jerrum_p} yields dichotomies for the whole class of partially surjective homomorphisms.

\section{Preliminaries}
\label{sec:prelims}
We denote the disjoint union of two sets $S_1$ and $S_2$ by $S_1 \cupdot S_2$. For a positive integer $k$, we denote by $\sqBrackets{k}$ the set $\set{1, \dots, k}$. Similarly, for two non-negative integers $a$ and $b$ with $a \leq b$, we denote by $\sqBrackets{a;b}$ the set of integers in the closed interval $\sqBrackets{a,b}$.

\subsection{Graphs}
For a graph $G$, we denote its vertex set by $\vertexset[G]$ and its edge set by $\edgeset[G]$. We write $G= \bipGraph$ for the bipartite graph $G$ with fixed bipartition of the vertex set $\lpart \cupdot \rpart$ and edge set $\edgeset[G]$. We recall that a triple $\bipGraph$ is called a \emph{bip-graph}. A \emph{sequence of vertices} $v_1,\dots, v_r$ may be abbreviated by $\vector{v}$. Such a sequence can contain multiple entries of the same vertex.
We call a graph with all loops present a \emph{reflexive} graph. A reflexive complete graph of size $q$ with $q \geq 1$ is denoted by $K^{\circ}_q$. In the same way, a graph is called \emph{irreflexive} if it contains no loop. We note that bipartite graphs are irreflexive by definition.
For a graph $G$ with a subgraph $H$ and a vertex $v$ in $\vertexset[H]$, we denote by $\neigh{v}[H]$ the \emph{neighbourhood of $v$ in $H$} given by the set $\set{u\in \vertexset[H] \given (u,v)\in \edgeset[H]}$ that contains all vertices in $\vertexset[H]$ adjacent to $v$ under edges in $H$, and we denote by $\deg_H(v)$ the size of $\neigh{v}[H]$. We also extend the definition of neighbourhood to sets, that is for the subset $V'$ of $\vertexset[H]$,  $\neigh{V'}[H]  = \set{u\in \vertexset[H] \given \text{there exists $v\in V'$ such that } (u,v)\in \edgeset[H]}$. When we observe the neighbourhood in the graph $G$ not restricted to a subgraph, we omit to state $G$ in the subscript.
Paths in graphs do not repeat vertices; walks may repeat both vertices and edges. The \emph{distance of two connected vertices} $u$ and $v$ in $G$, denoted by $\dist_G(u,v)$, is the length of a shortest path in $G$ connecting $u$ and $v$.

Furthermore, we use the following operations for graphs. Let $G$ and $H$ be graphs on disjoint vertex sets. The \emph{disjoint union} $G\cupdot H$ is the graph with vertex set $\vertexset[G] \cupdot \vertexset[H]$ and edge set $\edgeset[G] \cupdot \edgeset[H]$. For the subset $V'$ of $\vertexset[G]$, the \emph{subgraph of $G$ induced by $V'$} is denoted by $G[V']$, formally $G[V']$ is given by $(V',E')$ with $E'=\set{(v,w)\in \edgeset[G] \given v,w\in V'}$. For two subgraphs $G_1$ and $G_2$ of $G$, we use the shorthand $G_1\cap G_2$ to denote the graph with vertex set $\vertexset[G_1]\cap \vertexset[G_2]$ and edge set $\edgeset[G_1]\cap \edgeset[G_2]$.

\paragraph{Graph homomorphisms}
Let $G$ and $H$ be graphs.
A \emph{homomorphism from $G$ to $H$} is a function $f\colon \vertexset[G]\to \vertexset[H]$, such that $(v_1,v_2)\in \edgeset[G]$ implies $(f(v_1),f(v_2))\in \edgeset[H]$.
We denote by $\Hom[G,H]$ the set of homomorphisms from $G$ to $H$ and by $\numHom[G,H]$ the cardinality of $\Hom[G,H]$. Furthermore, for an integer $k$, $\numHom[G,H][k]$ is the integer in $\sqBrackets{k-1}$ such that $\numHom[G,H]$ is congruent modulo $k$ to $\numHom[G,H][k]$.
An \emph{isomorphism between $G$ and $H$} is a bijective function $\varrho \colon \vertexset[G]\to \vertexset[H]$ preserving the edge relation in both directions, meaning $(v_1,v_2)\in \edgeset[G]$ if and only if $(\varrho(v_1),\varrho(v_2))\in \edgeset[H]$.
If such an isomorphism exists, then we say that $G$ is \emph{isomorphic to} $H$ and denote this by $G\cong H$. 
An \emph{automorphism of $G$} is an isomorphism from the graph $G$ to itself. The \emph{automorphism group of $G$} is denoted by $\Aut[G]$.
An automorphism $\varrho$ is an \emph{automorphism of order~$k$} in case $k$ is the smallest positive integer such that $\varrho^{k}$ is the identity. 

\paragraph{Partially labelled graphs}
Let $H$ be a graph.
A \emph{partially $H$-labelled graph} $J$ consists of an \emph{underlying graph} $G$ denoted by $G(J)$ and a \emph{partial labelling} function $\tau \colon \vertexset[G]\to \vertexset[H]$ denoted by $\tau(J)$. Frequently, we apply for this the shorthand $J=(G,\tau)$.
Every vertex $v$ in the domain of $\tau$ is said to be \emph{$H$-pinned to $\tau(v)$}.
In case it is immediate from the context we omit $H$.
We denote a partial function $\tau$ with finite domain $\set{v_1,\dots,v_r}$ also in the form $\tau = \set{v_1\mapsto \tau(v_1),\dots,v_r\mapsto \tau(v_r)}$.
A \emph{homomorphism from a partially $H$-labelled graph~$J$ to a graph~$H$} is a homomorphism from $G(J)$ to $H$ that respects $\tau$, i.e. if $v$ is in the domain of $\tau$, then $f(v)$ is equal to $\tau(v)$.
We denote the set of homomorphisms from $J$ to $H$ that respect the labelling $\tau(J)$ by $\Hom[J,H]$ and the cardinality of $\Hom[J,H]$ by $\numHom[J,H]$. The value $\numHom[J,H] \pmod p$ is denoted by $\numHom[J,H][p]$.

\paragraph{Graphs with distinguished vertices}  
Let $G$ and $H$ be graphs.
It is often convenient to regard a graph with a sequence $v_1, \dots, v_k$ of not necessarily distinct distinguished vertices, which we denote by $(G, v_1, \dots, v_k)$. For two graphs with $k$ distinguished vertices $(G,\vector{x})$ to $(H,\vector{v})$, a \emph{homomorphism from $(G,\vector{x})$ to $(H,\vector{v})$} is a homomorphism $f$ from $G$ to $H$ that, for each $i\in \sqBrackets{k}$, satisfies $f(x_i)=v_i$. By definition, there does not exist a homomorphism between graphs with an unequal number of distinguished vertices.
We note that a homomorphism from $(G,\vector{x})$ to $(H,\vector{v})$ is equivalently a homomorphism from the partially $H$-labelled graph $(G, \set{x_1\mapsto v_1, \dots, x_k\mapsto v_k})$ to $H$ and vice versa.
For a partially $H$-labelled graph $J$ and vertices $x_1,\dots,x_k$ not in the domain of $\tau(J)$, we identify a homomorphism from $(J, \vector{x})$ to $(H,\vector{v})$ with the corresponding homomorphism from $(G(J), \tau(J) \cupdot \set{x_1\mapsto v_1, \dots, x_k\mapsto v_k})$ to $H$.
The two graphs with $k$ distinguished vertices $(G, \vector{x})$ and $(H, \vector{v})$ are \emph{\isomorphic{}} if there is an isomorphism $\phi$ from $G$ to $H$ such that, for each $i\in \sqBrackets{k}$, it holds $\phi(x_i)=v_i$. We denote this by $(G, \vector{x}) \cong (H, \vector{v})$.
An \emph{automorphism of $(G, \vector{x})$} is an isomorphism from $(G, \vector{x})$ to itself. The \emph{automorphism group of $(G,\vector{x})$} is denoted by $\Aut[(G,\vector{x})]$.

\paragraph{Graph algebra}
For a reference on the used operations, the reader is referred to \Lovasz~\cite{Lovasz:12:book:Large_Networks_Graph_Limits}. Let $G_1$ and $G_2$ be two graphs. The \emph{tensor product} of $G_1$ and $G_2$ is the graph on the vertex set $\vertexset[G_1] \times \vertexset[G_2]$, a pair of vertices $(u,v)$ and $(u',v')$ is adjacent if and only if $(u,u') \in \edgeset[G_1]$ and $(v,v') \in \edgeset[G_2]$. The tensor product of $G_1$ and $G_2$ is denoted by $G_1 \otimes G_2$, and we note that it has many alternative names, e.g. the \emph{direct product}, the \emph{weak graph product}, and the \emph{Kronecker product}.

Let $(G_1, \vector{x}_1)$ and $(G_2, \vector{x}_2)$ be two graphs with $k$ distinguished vertices. The \emph{dot product} $(G_1, \vector{x}_1) \odot (G_2, \vector{x}_2)$ is the graph obtained from taking the disjoint union $(G_1, \vector{x}_1) \cupdot (G_2, \vector{x}_2)$ and identifying $\vector{x}_1$ with $\vector{x}_2$ component-wise. In particular, if $k$ is $0$, then the dot product is identical to the disjoint union.

We obtain the following identities for all vertex-disjoint graphs $G$, $H_1$, and $H_2$:
\begin{align}
	\label{eq:homs_tensor_product_target}
	\numHom[G,(H_1\otimes H_2)] &= \numHom[G, H_1] \cdot \numHom[G, H_2],
	\shortintertext{if $G$ is connected, then we have also}
	\label{eq:homs_disjoint_union_target}
	\numHom[G, (H_1 \cupdot H_2)] &= \numHom[G, H_1] + \numHom[G, H_2].
	\intertext{Additionally, the following identity holds for all vertex-disjoint graphs with $k$ distinguished vertices $(G_1, \vector{x}_1)$, $(G_2, \vector{x}_2)$, and $(H, \vector{v})$:}
	\label{eq:homs_disjoint_union_origin}
	\numHom[((G_1, \vector{x}_1) \odot (G_2, \vector{x}_2)), (H, \vector{v})] &= \numHom[(G_1, \vector{x}_1), (H, \vector{v})] \cdot \numHom[(G_2, \vector{x}_2), (H, \vector{v})].
\end{align}

\paragraph{Graph constructions}
We frequently describe graph constructions with the dot product. For two graphs with $k$ distinguished vertices $(G_1, \vector{x})$ and $(G_2, \vector{y})$, the dot product $(G_1, \vector{x}) \odot (G_2, \vector{y})$ is by definition a graph with distinguished vertices $x_1 = y_1, \dots, x_k = y_k$. For any graph with $k$ distinguished vertices $(H, \vector{v})$, any homomorphism in $\Hom[((G_1, \vector{x}) \odot (G_2, \vector{y})), (H, \vector{v})]$ maps the sequence $x_1 = y_1, \dots, x_k = y_k$ coordinate-wise to the sequence $\vector{v}$.

By the equivalence between a partially labelled homomorphism and a homomorphism for graphs with distinguished vertices, this construction applies to partially labelled graphs, such that \eqref{eq:homs_disjoint_union_origin} holds also if $G_1$ and $G_2$ are partially $H$-labelled. We provide a formal argumentation how the definitions and \eqref{eq:homs_disjoint_union_origin} extend to labelled graphs.

Let $H$ be a graph and $J_1$ and $J_2$ be two partially $H$-labelled graphs with partial $H$-labelling $\tau_1 = \tau(J_1)$ and $\tau_2 = \tau(J_2)$. The disjoint union of $J_1$ and $J_2$ is the graph $(G(J_1) \cupdot G(J_2), \tau_1 \cupdot \tau_2)$ denoted by $J_1 \cupdot J_2$. Here, the disjoint union of labelling functions is given by the disjoint union of single vertex mappings, i.e., for a vertex $x$ in $\dom(\tau_1)$, the image is $(\tau_1 \cupdot \tau_2) (x) = \tau_1 (x)$, and for a vertex $x$ in $\dom(\tau_2)$, the image is $(\tau_1 \cupdot \tau_2) (x) = \tau_2 (x)$. 

We are not allowed to apply \eqref{eq:homs_disjoint_union_origin} now as $\tau_1$ and $\tau_2$ might have different range in $H$. However, we circumvent this issue by a slight alteration on the graphs. For $J_1$, we add the essentially void $H$-labelling $\tau_2$ by taking the disjoint union $G(J_1) \cupdot \dom(\tau_2)$ of the underlying graph $G(J_1)$ and the collection of isolated vertices $\dom(\tau_2)$. We denote this operation by $J_1 \cupdot \tau_2$. Since the labelling $\tau_2$ precisely states the mapping for any vertex in $\dom(\tau_2)$, we deduce $\numHom[(J_1 \cupdot \tau_2), H] = \numHom[J_1, H]$. Repeating the operation, we find the same $H$-labelling in $J_2 \cupdot \tau_1$ and deduce by \eqref{eq:homs_disjoint_union_origin}
\[
	\numHom[J_1, H] \cdot \numHom[J_2, H] = \numHom[J_1 \cupdot \tau_2, H] \cdot \numHom[J_2 \cupdot \tau_1, H]= \numHom[J_1 \cupdot J_2, H].
\]

With the disjoint union at hand, we obtain a dot product of partially $H$-labelled graphs as before for unlabelled graphs. A partially $H$-labelled graph with $k$ distinguished vertices consists of a partially $H$-labelled graph $J$ and a tuple $\vector{x}$ of $k$ distinguished vertices in $\vertexset[G(J)])$. We denote this by $(J, \vector{x})$. Let $(J_1, \vector{x})$ and $(J_2, \vector{y})$ be two partially $H$-labelled graphs with $k$ distinguished vertices. The dot product of $(J_1, \vector{x})$ and $(J_2, \vector{y})$ is constructed from the disjoint union of $(J_1, \vector{x})$ and $(J_2, \vector{y})$ by identifying $\vector{x}$ with $\vector{y}$ component-wise.
Following our argumentation for the disjoint union, \eqref{eq:homs_disjoint_union_origin} yields, for every $k$-tuple $\vector{v}$ of distinguished vertices in $\vertexset[H]$,
\begin{align*}
	\numHom[(J_1, \vector{x}), (H, \vector{v})] \cdot \numHom[(J_2, \vector{y}), (H, \vector{v})] =& \numHom[(J_1 \cupdot \tau_2, \vector{x}), (H, \vector{v})] \cdot \numHom[(J_2 \cupdot \tau_1, \vector{y}), (H, \vector{v})] \\
	=& \numHom[(J_1 \cupdot \tau_2, \vector{x}) \odot (J_2 \cupdot \tau_1, \vector{y}), (H, \vector{v}) ] \\
	=& \numHom[(J_1, \vector{x}) \odot (J_2, \vector{y}), (H, \vector{v}) ].
\end{align*}
As we have seen, we are allowed to safely apply the dot product and \eqref{eq:homs_disjoint_union_origin} also for pairs of partially $H$-labelled graphs. We do so frequently when we construct partially $H$-labelled graphs in the later sections of the paper.

\subsection{Algebra}
An introduction to the fundamentals of algebra is provided in~\cite{Artin:11:book:Algebra}.
Algebraic concepts important for our study are \emph{groups} $\group$ with \emph{subgroups} $\group[H]$. We apply the property that the intersection of a pair of subgroups yields a subgroup. The \emph{order} of a group $\group$ is denoted by $\Ord(\group)$ or $\abs{\group}$ and the order of an element $g$ in $\group$ is denoted by $\Ord(g)$ or $\abs{g}$. For a prime $p$, the cyclic group $\Z_p$ is a \emph{finite field} and the set $\sqBrackets{p-1}$ of \emph{units} is denoted by $\Zsp$. For a subgroup $\group[H]$ of $\group$, the \emph{index} of $\group[H]$ is denoted by $\abs{\group \colon \group[H]}$.

\begin{theorem}[Lagrange]\label{thm:Lagrange}
	If $\group$ is a group of finite order with subgroup $\group[H]$, then
	\[
		\abs{\group} = \abs{\group \colon \group[H]} \cdot \abs{\group[H]}.
	\]
\end{theorem}

The groups of prime order are especially important for us. For these, Cauchy and Fermat gave the following set of fundamental theorems.
\begin{theorem}[Cauchy]\label{thm:Cauchy}
	Let $\group$ be a group of finite order and $p$ be a prime. If $p$ divides the order of $\group$, then there exists an element $g$ in $\group$ of order $p$.
\end{theorem}
\begin{theorem}[Fermat]\label{thm:Fermat}
	Let $p$ be a prime. If the interger $a$ is not congruent modulo $p$ to $0$, then $a^{p-1}$ is congruent modulo $p$ to $1$.
\end{theorem}

For a graph $H$, the set of automorphisms of $H$ denoted by $\Aut[H]$ is in fact a \emph{group action} acting on the set of vertices $\vertexset[H]$. Let a general group $\group$ act on a set $X$ by a group action $\alpha \colon \group \times X \to X$ shortened to $g(x)$, where $g$ is in $\group$ and $x$ is in $X$. The \emph{stabilizer} of an element $x \in X$ is the subgroup of $\group$ given by the set $\set{g \in \group \given g (x) = x}$. It is denoted by $\Stab{x}$. The \emph{orbit} of $x$ is the subset $\set{g (x) \given g \in \group}$ of $X$ and denoted by $\Orb[x][\group]$. When the group $\group$ is obvious from context, we omit it. 

Consequently, for a graph $H$, a positive integer $k$, and a tuple of vertices $\vector{v}$ in $\vertexset[H]^k$, the orbit of $\vector{v}$ is the set of images of $\vector{v}$ under automorphisms of $H$. Equivalently, a tuple $\vector{w}$ in $\vertexset[H]^k$ is in $\Orb[\vector{v}]$ if and only if $(H, \vector{v}) \cong (H, \vector{w})$. The set of tuples $\vertexset[H]^k$ decomposes into pairwise disjoint orbits. This yields the following well-known lemma frequently utilized for proving pinning in homomorphism problems (see e.g.~\cite[Lemma~5.12.]{Goebel:21:Counting_Homomorphisms_Trees}). For the sake of completeness, we state a short proof.
\begin{lemma}
	\label{lem:pinning_split}
	Let $H$ be a graph, $k$ be a positive integer, and $\nu$ be the number of disjoint orbits in $\vertexset[H]^k$. If $\vector{v}_1, \dots, \vector{v}_\nu$ are representatives of the orbits in $\vertexset[H]^k$, then, for every graph $G$ and every tuple $\vector{x}$ in $\vertexset[G]^k$,
	\[
	\numHom[G,H] = \sum_{i \in \sqBrackets{\nu}} \abs[\big]{\Orb[\vector{v}_i]} \cdot \numHom[(G, \vector{x}), (H, \vector{v}_i)].
	\]
\end{lemma}
\begin{proof}
	Every homomorphism in $\Hom[G,H]$ has to map the given tuple $\vector{x}$ to some tuple $\vector{v}$ in $\vertexset[H]^k$. By definition, there exists a unique representative $\vector{v}_i$ such that $(H, \vector{v} ) \cong (H, \vector{v}_i)$. The summation on the right-hand side accounts for all tuples in $\vertexset[H]^k$ exactly once, which concludes the proof.
\end{proof}

Although the orbit of a tuple of vertices is not a subgroup of the group of automorphisms, the cardinality of an orbit is still a divisor of the group's order by the Orbit-Stabilizer theorem.
\begin{theorem}[Orbit-Stabilizer]\label{thm:orbit_stabilizer}
	If $\group$ is a group acting on a set $X$ and $x$ is an element in $X$, then
	\[
	\abs[\big]{\group} = \abs[\big]{\Orb[x]} \cdot \abs[\big]{\Stab{x}} .
	\]
\end{theorem}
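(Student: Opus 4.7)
The plan is to establish a bijection between the left cosets of $\Stab{x}$ in $G$ and the elements of $\Orb(x)$, then apply Lagrange's theorem (Theorem~\ref{thm:Lagrange}) to conclude. This is the standard approach: once we know $|\Orb(x)| = |G : \Stab{x}|$, the formula $|G| = |\Orb(x)| \cdot |\Stab{x}|$ follows immediately from Lagrange.

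First, I would define the map $\varphi : G / \Stab{x} \to \Orb(x)$ sending a coset $g \Stab{x}$ to $g \cdot x$. The key verification is that $\varphi$ is well-defined and injective, and both of these reduce to the same equivalence. If $g, g' \in G$, then $g \Stab{x} = g' \Stab{x}$ holds if and only if $g^{-1} g' \in \Stab{x}$, which in turn holds if and only if $(g^{-1} g') \cdot x = x$, equivalently $g' \cdot x = g \cdot x$. Reading the implication in one direction shows $\varphi$ is well-defined; reading it in the other shows $\varphi$ is injective. Surjectivity of $\varphi$ is immediate from the definition $\Orb(x) = \{ g \cdot x \mid g \in G \}$.

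Having established the bijection, we obtain $|\Orb(x)| = |G : \Stab{x}|$. Since $G$ is finite, Theorem~\ref{thm:Lagrange} applied to the subgroup $\Stab{x} \leq G$ gives $|G| = |G : \Stab{x}| \cdot |\Stab{x}| = |\Orb(x)| \cdot |\Stab{x}|$, as required.

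There is no real obstacle here: the only thing to be careful about is using that $\Stab{x}$ is genuinely a subgroup of $G$ (closure under composition follows from $(gh)\cdot x = g\cdot(h\cdot x) = g \cdot x = x$, and closure under inverses from $g^{-1} \cdot x = g^{-1} \cdot (g \cdot x) = x$), so that Lagrange's theorem applies. Everything else is the routine coset-to-orbit correspondence.
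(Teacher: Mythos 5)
Your proof is correct and is the standard textbook argument for the Orbit-Stabilizer theorem; the paper itself states this result without proof as a classical fact from group theory (citing Artin's \emph{Algebra} for background), so there is no paper-internal proof to compare against. The coset-to-orbit bijection followed by Lagrange's theorem is exactly the right route, and you correctly handle the subsidiary point that $\Stab{x}$ is a subgroup so that Lagrange applies.
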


For a graph $H$, a positive integer $k$, an automorphism $\varrho$ of $H$, and a tuple $\vector{y}$ in $\vertexset[H]^k$, we denote by $\Orb[\vector{y}][\varrho]$ the set $\set{\varrho^i(\vector{y})}_{i \in [\Ord(\varrho)]}$. In this way, we obtain the following consequence.
\begin{lemma}\label{lem:tuple-orbit}
	Let $p$ be a prime, $H$ be a graph, $k$ be a positive integer, and $\varrho$ be an automorphism of $H$. Further, let there exists a tuple $\vector{y}$ in $\vertexset[H]^k$ with cardinality $\abs{\Orb[\vector{y}][\varrho]}$ denoted by $r$. If $r$ is larger than $1$ and there exists $j$ in $\sqBrackets{r-1}$ such that $\varrho^{p\cdot j}(\vector{y})$ is equal to $\vector{y}$, then $\Aut[H]$ contains an element of order $p$.
\end{lemma}
\begin{proof}
	The cyclic group $\group_\varrho$ generated by $\varrho$ acts on $H$ and is a subgroup of $\Aut[H]$. Due to Lagrange's Theorem~\ref{thm:Lagrange}, the order of $\group_\varrho$ divides the order of $\Aut[H]$. 
	By the Orbit-Stabilizer Theorem~\ref{thm:orbit_stabilizer}, $\abs{\Orb[\vector{y}][\varrho]}$ divides the order of $\group_\varrho$. Since $\varrho^{p \cdot j}(\vector{y})=\vector{y}$, we obtain that $\abs{\Orb[\vector{y}][\varrho]}$ divides $p$ or $j$. The first has to hold by $j\in \sqBrackets{r-1}$ and thus $\abs{\Aut[H]}\equiv 0 \pmod p$. By Cauchy's Theorem~\ref{thm:Cauchy}, there exists an automorphism $\psi \in \Aut[H]$ with $\Ord(\psi)=p$.
\end{proof}

\subsection{Reductions and Tractability}
We briefly discuss the tractability results for the following problem; the focus of this paper.
\prob
{$\probNumHom{H}[p]$.}
{Prime $p$ and graph $H$.}
{Graph $G$.}
{$\numHom[G,H][p]$.}

We identify the classes of graphs $H$ for which $\probNumHom{H}[p]$ can be solved in polynomial time. When counting graph homomorphisms modulo a prime $p$, the automorphisms of order $p$ of a target graph $H$ help us identify groups of homomorphisms that cancel out. More specifically, let the target graph $H$ admit an automorphism $\varrho$ of order $p$. For any 
homomorphism $f$ from the input graph $G$ to $H$, $f\circ\varrho$ is also a homomorphism from $G$ to $H$. Intuitively, if $f \circ \varrho$ is not equal to $f$, then the set $\set{f\circ\varrho^{j} \given j \in \sqBrackets{p}}$ has cardinality of a multiple of $p$ and cancel outs. This intuition is captured by the theorem of Faben and Jerrum~\cite[Theorem~3.4.]{Faben:15:Parity_Graph_Homs}. Before we formally state their theorem, we need the following definition.
\begin{definition}
	Let $H$ be a graph and $\varrho$ an automorphism of $H$. We denote by $H^\varrho$ the subgraph of $H$ induced by the fixed points of $\varrho$.
\end{definition}

\begin{theorem}[Faben and Jerrum]\label{thm:reduction_cancel_order_p_auto}
	Let $p$ be a prime and $G$ and $H$ be graphs. If $\varrho$ is an automorphism of $H$ of order $p$, then $\numHom[G,H]$ is congruent modulo $p$ to $\numHom[G, H^\varrho]$.
\end{theorem}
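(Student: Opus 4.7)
The plan is to realize the claim as a Burnside-style orbit-counting argument applied to the natural action of $\varrho$ on the set $\Hom[G][H]$.

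First, I would verify that left-composition with $\varrho$ gives a well-defined action of the cyclic group $\langle \varrho \rangle \leq \Aut(H)$ on $\Hom[G][H]$: since $\varrho$ preserves the edge relation of $H$ and $\sigma\in\Hom[G][H]$ preserves edges from $G$ into $H$, the map $\varrho\circ\sigma$ is again a homomorphism from $G$ to $H$; associativity and the identity law are immediate, so this is a group action. Because $\varrho$ has prime order $p$, the group $\langle\varrho\rangle$ has order $p$.

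Next, I would apply the Orbit--Stabilizer Theorem (Theorem~\ref{thm:orbit_stabilizer}) to this action: for every $\sigma\in\Hom[G][H]$, the size of its orbit divides $p$, hence equals $1$ or $p$. The key step is then to characterize the fixed points. A homomorphism $\sigma$ satisfies $\varrho\circ\sigma=\sigma$ if and only if $\varrho(\sigma(v))=\sigma(v)$ for every $v\in V(G)$, equivalently $\sigma(V(G))\subseteq V(H^\varrho)$. Since $H^\varrho$ is the subgraph of $H$ induced by the vertices fixed by $\varrho$, this condition is equivalent to $\sigma$ being a homomorphism from $G$ to $H^\varrho$. In other words, the set of fixed points of the action is exactly $\Hom[G][H^\varrho]$.

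Finally, I would partition $\Hom[G][H]$ into orbits and count: the fixed points contribute $|\Hom[G][H^\varrho]|=\numHom[G][H^\varrho]$, and every other orbit has cardinality exactly $p$. Writing $N$ for the number of non-trivial orbits gives
\[
\numHom[G][H]=\numHom[G][H^\varrho]+p\cdot N,
\]
which immediately yields $\numHom[G][H]\equiv\numHom[G][H^\varrho]\pmod p$. I do not foresee a genuine obstacle here; the only subtlety worth being careful about is the direction of composition, i.e.\ making sure that the action is $\sigma\mapsto\varrho\circ\sigma$ (so that fixedness translates into a condition on the image of $\sigma$, not on its preimages), and checking that $\varrho\circ\sigma$ is indeed a homomorphism into $H$ rather than $H^\varrho$ in general.
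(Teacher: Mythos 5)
Your proof is correct and is precisely the orbit-counting argument that the paper sketches informally just before citing this result from Faben and Jerrum: the action $\sigma\mapsto\varrho\circ\sigma$ of $\langle\varrho\rangle\cong\mathbb{Z}_p$ on $\Hom[G][H]$ has orbits of size $1$ or $p$ (by primality of $p$ and Orbit--Stabilizer), and the fixed points are exactly the homomorphisms whose image lies in the induced subgraph $H^\varrho$, i.e.\ $\Hom[G][H^\varrho]$. The only cosmetic discrepancy is that the paper's informal remark writes the composition as $\sigma\circ\varrho^{(j)}$, which is dimensionally inconsistent for $G\neq H$; your explicit care about the direction of composition ($\varrho\circ\sigma$, acting on the codomain) is the right reading.
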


We repeat the above reduction of $H$ recursively in the following way.
\begin{definition}
	Let $p$ be a prime and $H$ and $H'$ be graphs. We write $H \relArrow[][p] H'$ if there exists an automorphism~$\varrho$ of~$H$ of order~$p$ such that $H^\varrho=H'$.	Additionally, we write $H \relArrow[][\ast p] H'$ if   there exists a sequence of graphs $H_1, \dots, H_k$ such that $H \relArrow[][p] H_1 \relArrow[][p] \cdots \relArrow[][p] H_k = H'$.
\end{definition}

We note that, contrary to previous work on $\probNumHom{H}[p]$, we adjusted notation and moved the \enquote{$p$} in the superscript. This allows us to denote restricted versions of the same reduction in a similar and uniform way.
Faben and Jerrum~\cite[Theorem~3.7.]{Faben:15:Parity_Graph_Homs} show that, for any choice of intermediate automorphisms of order~$p$, the reduction 
$H \relArrow[][\ast p]\!$ results in a unique graph up to isomorphism. 
\begin{theorem}[Faben and Jerrum]\label{thm:reduced_form_unique}
	Given a graph $H$ and a prime $p$, there is (up to isomorphism) exactly one graph $\normreduced{H}$ such that $\normreduced{H}$ has no automorphism of 
	order~$p$ and $H \relArrow[][\ast p] \normreduced{H}$.
\end{theorem}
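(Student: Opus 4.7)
The plan is to prove existence first, and then uniqueness, with the bulk of the work being the latter. Existence is immediate: each reduction $H \rightarrow_p H^\varrho$ strictly decreases the vertex count (an order-$p$ automorphism cannot fix every vertex, since at least one $p$-cycle must exist), so any maximal chain of reductions terminates in a graph admitting no automorphism of order $p$.

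For uniqueness, I would appeal to Newman's Lemma from abstract rewriting theory: since $\rightarrow_p$ is terminating, it suffices to prove local confluence, i.e., whenever $H \rightarrow_p H_1$ via $\varrho_1$ and $H \rightarrow_p H_2$ via $\varrho_2$, there exists a graph $H^\star$ with $H_1 \rightarrow_p^* H^\star$ and $H_2 \rightarrow_p^* H^\star$ (equality here is up to isomorphism, as built into the definition of $\rightarrow_p^*$).

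Local confluence is established via the following Sylow-theoretic auxiliary claim: for every $p$-subgroup $P \leq \Aut(H)$, the induced subgraph $H^P$ on the vertices fixed by every element of $P$ satisfies $H \rightarrow_p^* H^P$. I would prove this by induction on $|P|$. For the inductive step, since every non-trivial $p$-group has a non-trivial center, one may pick an order-$p$ element $\varrho \in Z(P)$ (apply Theorem~\ref{thm:Cauchy} inside the center); because $\varrho$ commutes with every element of $P$, the set $V(H^\varrho)$ is $P$-invariant, while $\langle \varrho \rangle$ acts trivially on it. Thus $P/\langle \varrho \rangle$ descends to a $p$-group of automorphisms of $H^\varrho$ whose joint fixed-point set is exactly $H^P$. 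Applying the induction hypothesis to $P/\langle \varrho \rangle \leq \Aut(H^\varrho)$ yields $H^\varrho \rightarrow_p^* H^P$, and prepending the single step $H \rightarrow_p H^\varrho$ closes the induction.

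Granted this auxiliary claim, local confluence follows in a few lines. Let $P_1$ be any Sylow $p$-subgroup of $\Aut(H)$ containing $\varrho_1$ (exists since $\langle \varrho_1 \rangle$ is itself a $p$-subgroup), and similarly $P_2 \supseteq \langle \varrho_2 \rangle$. By the claim, $H^{\varrho_1} \rightarrow_p^* H^{P_1}$ and $H^{\varrho_2} \rightarrow_p^* H^{P_2}$. Sylow's theorem supplies $g \in \Aut(H)$ with $P_2 = g P_1 g^{-1}$, and $g$ restricts to a graph isomorphism $H^{P_1} \cong H^{P_2}$ (a vertex $v$ is fixed by $P_1$ iff $g(v)$ is fixed by $P_2$), so both reductions reach a common reduct up to isomorphism. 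The main obstacle I anticipate is the careful bookkeeping that these Sylow-theoretic moves respect the isomorphism quotient built into $\rightarrow_p^*$: the subgraphs $H^{P_1}$ and $H^{P_2}$ generally differ as subsets of $V(H)$, and one must verify that the conjugating automorphism $g$ delivers the abstract graph isomorphism that the definition of $\rightarrow_p^*$ demands, together with checking that restriction-to-fixed-points commutes with the quotient $P/\langle \varrho\rangle$ used inside the induction.
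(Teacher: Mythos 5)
The paper itself does not prove this theorem --- it is imported verbatim as Faben and Jerrum's Theorem~3.7 --- so the only question is whether your argument is correct. The Sylow-theoretic framing is sound, termination is handled correctly, and the Sylow-conjugacy step at the end is fine, but the passage from your auxiliary claim to local confluence has a genuine gap. Your claim asserts $H \rightarrow_p^* H^{P_1}$. Newman's lemma, however, requires you to join the two \emph{one-step reducts}: you must exhibit $H^{\varrho_1} \rightarrow_p^* H^{P_1}$, not merely $H \rightarrow_p^* H^{P_1}$. The inductive proof of your claim produces a chain whose first step is $H \rightarrow_p H^{\varrho}$ for some order-$p$ element $\varrho \in Z(P_1)$. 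Since $\varrho_1$ need not lie in $Z(P_1)$ (a Sylow $p$-subgroup of $\Aut(H)$ can be non-abelian, and then it contains order-$p$ elements outside its centre), you cannot force $\varrho = \varrho_1$, and the chain your claim yields need not pass through $H^{\varrho_1}$ at all. Citing the claim as written therefore does not establish the reduction you need.

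The repair is to strengthen the auxiliary claim: for every $p$-subgroup $P \leq \Aut(H)$ and every subgroup $Q \leq P$ one has $H^Q \rightarrow_p^* H^P$. Prove this by induction on $[P:Q]$, using the standard fact that normalizers grow in $p$-groups: if $Q < P$ then $Q < N_P(Q)$, so by Cauchy (Theorem~\ref{thm:Cauchy} applied to the nontrivial $p$-group $N_P(Q)/Q$) there exists $Q \trianglelefteq Q' \leq N_P(Q) \leq P$ with $[Q' : Q] = p$. Any $\tau \in Q' \setminus Q$ normalizes $Q$, hence stabilizes $V(H^Q)$ setwise, and its restriction $\bar\tau$ to $H^Q$ satisfies $\bar\tau^{\,p} = \mathrm{id}$ because $\tau^p \in Q$. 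Either $\bar\tau = \mathrm{id}$, in which case $H^Q = H^{Q'}$, or $\bar\tau$ has order $p$ and $(H^Q)^{\bar\tau} = H^{Q'}$, giving $H^Q \rightarrow_p H^{Q'}$ in one step; either way $H^Q \rightarrow_p^* H^{Q'}$, and the induction hypothesis on $[P:Q'] = [P:Q]/p$ finishes. Applying this with $Q = \langle \varrho_i \rangle$ yields the needed $H^{\varrho_i} \rightarrow_p^* H^{P_i}$, and your Sylow-conjugacy argument then closes local confluence. Your original claim is the $Q$-trivial special case, and in its original form it is not actually used anywhere in the argument.
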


The latter suggests the following definition.
\begin{definition}
\label{defn:reduced-form}
	Let $p$ be a prime and $H$ be a graph. We call the unique (up to isomorphism) graph $\normreduced{H}$ with $H \relArrow[][\ast p] \normreduced{H}$ the \emph{order~$p$ reduced form} of $H$.
\end{definition}

We observe that the same reduction is applicable to graphs $(H, \vector{v})$ with $k$ distinguished vertices by restricting our attention to automorphisms in $\Aut[(H, \vector{v})]$, that is automorphisms of $H$ that fix the sequence $\vector{v}$. The order~$p$ reduced form of $(H, \vector{v})$ is thus well-defined; a technicality we added for pedantic reasons that allows our argumentation in the next section to be more uniform.

In order to compute the number of homomorphisms from a graph $G$ to a graph $H$ modulo $p$ it suffices to compute the number of homomorphisms from $G$ to $\normreduced{H}$ modulo $p$. We refer to the dichotomy theorem by Dyer and Greenhill~\cite[Theorem~1.1.]{Dyer:2000:Counting_Graph_Homs} to obtain the graphs for which $\probNumHom{H}[p]$ is computable in polynomial time. 
\begin{theorem}[Dyer and Greenhill \cite{Dyer:2000:Counting_Graph_Homs}]\label{thm:dyer-greenhil}
	Let $H$ be a graph. If every connected component of $H$ is a complete bipartite graph or a reflexive complete graph, then $\probNumHom{H}$ can be solved in polynomial time. Otherwise, $\probNumHom{H}$ is $\classNumP$-complete.
\end{theorem}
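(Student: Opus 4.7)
The plan is to prove both directions of the dichotomy separately. For the tractable direction, I would argue that the homomorphism count factorises cleanly along connected components: if $G$ has connected components $C_1, \dots, C_s$ and $H$ has connected components $H_1, \dots, H_t$, then since each connected $C_j$ must map entirely into a single $H_i$, we have $\numHom[G][H] = \prod_j \sum_i \numHom[C_j][H_i]$. For each $H_i = K^\circ_q$, every function is a homomorphism, so the inner count equals $q^{|V(C_j)|}$. For each $H_i = K_{a,b}$, a connected non-bipartite $C_j$ contributes $0$, while a connected bipartite $C_j$ with (a choice of) bipartition $(X, Y)$ contributes $a^{|X|} b^{|Y|} + a^{|Y|} b^{|X|}$, the two summands corresponding to the two ways of assigning the sides of $C_j$ to the sides of $K_{a,b}$ (collapsing to $a+b$ in the degenerate single-vertex case). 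All of these ingredients are polynomial-time computable, and so is the overall product-of-sums.

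For the hardness direction, assume that some connected component $H_0$ of $H$ is neither a complete bipartite graph nor a reflexive complete graph. I would reduce from a canonical $\ClassNumP$-hard counting problem such as $\numIS$, which can itself be cast as $\numHomsTo{H_\ast}$ where $H_\ast$ is the two-vertex graph with one edge and exactly one loop. The reduction would proceed through a gadget together with polynomial interpolation: construct a family of ``thickened'' input graphs $G^{(t)}$, parameterised by a positive integer $t$ (for instance, by replacing each edge of $G$ by $t$ parallel length-two paths, or by iterated edge products), so that $\numHom[G^{(t)}][H]$ is a polynomial in $t$ whose coefficients are weighted counts of partitions of $V(G)$ into pre-images of different vertex-classes of $H$. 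Evaluating this polynomial at polynomially many values of $t$ and applying a Vandermonde-style inversion recovers the individual coefficients, one of which can be identified as encoding the number of independent sets of $G$ (or a closely related hard quantity).

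The main obstacle would be the structural case analysis needed to construct a suitable gadget for every possible shape of $H_0$. Several qualitatively different situations arise: $H_0$ bipartite but not complete bipartite (so it contains an induced $P_4$ or a similar obstruction); $H_0$ irreflexive and non-bipartite but not complete; $H_0$ carrying some but not all loops. In every case one must exhibit a local substructure, such as a pair of vertices with incomparable neighbourhoods, an induced path of length three, or a mixed loop/non-loop pattern, that can be leveraged inside the gadget to produce a polynomial in $t$ with sufficiently distinct eigenvalues for Vandermonde interpolation to go through, and to ensure that a recovered coefficient really encodes a $\ClassNumP$-hard quantity rather than collapsing to a trivial value. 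Establishing that such an obstruction always exists once $H_0$ falls outside the tractable classification, and packaging it into a uniform gadget construction, is the combinatorial heart of the argument.
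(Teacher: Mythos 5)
The paper does not prove this theorem; it imports it by citation from Dyer and Greenhill~\cite{Dyer:2000:CountingGraphHoms}, so there is no in-paper argument to compare against. Judging your sketch on its own merits, the tractability direction is correct and complete: the component factorisation $\numHom[G][H] = \prod_j \sum_i \numHom[C_j][H_i]$, the count $q^{|V(C_j)|}$ for a reflexive complete component, and the count $a^{|X|}b^{|Y|} + a^{|Y|}b^{|X|}$ for a connected bipartite $C_j$ mapped into a complete bipartite component (with the one-vertex degeneracy resolving to $a+b$ as you note) are all right, and each is computable in polynomial time.

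The hardness direction, however, is only an outline. You name the right general strategy --- stretch or thicken $G$, observe that $\numHom[G^{(t)}][H]$ is a polynomial in the parameter $t$, recover coefficients by Vandermonde interpolation, and structurally argue that an obstruction can be exploited once some component of $H$ escapes the tractable classes --- and this is indeed the approach Dyer and Greenhill take. But everything that makes the argument go through is left unfilled: the reduction to connected $H$, the exhaustive case analysis of the possible obstruction patterns (bipartite but not complete, non-bipartite irreflexive but not complete, mixed loop/non-loop), the design of the stretching/thickening gadget so that the resulting interpolation matrix is nonsingular (equivalently, that the relevant spectrum of the adjacency matrix of $H$ yields distinct evaluation points), and the verification that a recovered coefficient really encodes a $\ClassNumP$-hard count rather than collapsing to something trivial. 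Your closing paragraph correctly identifies this as the combinatorial heart of the argument, but the heart is not supplied; as written this half is a plan for a proof, not a proof.
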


We notice that a polynomial-time algorithm for $\probNumHom{H}$ gives a polynomial-time algorithm for $\probNumHom{H}[p]$ by applying the modulo~$p$ operation. This gives the following characterization for instances of $\probNumHom{H}[p]$ computable in polynomial time.
\begin{corollary}\label{cor:polyt-graphs}
	Let $p$ be a prime and $H$ be a graph. If every connected component of $\normreduced{H}$ is a complete bipartite graph or a reflexive complete graph, then $\probNumHom{H}[p]$ is computable in polynomial time.
\end{corollary}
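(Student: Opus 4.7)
The plan is to chain the two main ingredients already stated in the preliminaries: the iterated Faben--Jerrum reduction (Theorems~\ref{thm:reduction_cancel_order_p_auto} and~\ref{thm:reduced_form_unique}) and the Dyer--Greenhill dichotomy (Theorem~\ref{thm:dyer-greenhil}). Since the parameter graph $H$ is fixed, the order~$p$ reduced form $\normreduced{H}$ is also a fixed graph and can be computed once, independently of the input $G$.

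First I would argue that $\normreduced{H}$ is a constant with respect to input size. By Theorem~\ref{thm:reduced_form_unique}, $\normreduced{H}$ exists and is unique up to isomorphism, so we may fix a particular representative and hard-code it into the algorithm; its size is bounded in terms of $|V(H)|$, which is a constant. Second, iterating Theorem~\ref{thm:reduction_cancel_order_p_auto} along any chain $H \rightarrow_p H_1 \rightarrow_p \cdots \rightarrow_p \normreduced{H}$ yields
\[
\numHom[G][H] \equiv \numHom[G][\normreduced{H}] \pmod p
\]
for every input graph $G$. Hence, modulo $p$, it suffices to count homomorphisms from $G$ to $\normreduced{H}$.

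Next, I would invoke the hypothesis: every connected component of $\normreduced{H}$ is either a complete bipartite graph (with no loops) or a reflexive complete graph. This is exactly the tractable side of the Dyer--Greenhill dichotomy (Theorem~\ref{thm:dyer-greenhil}), so there is a polynomial-time algorithm that, on input $G$, computes $\numHom[G][\normreduced{H}]$ exactly in $\N$. Composing: on input $G$, the algorithm (i) retrieves the precomputed $\normreduced{H}$, (ii) runs the Dyer--Greenhill algorithm to obtain $\numHom[G][\normreduced{H}]$, and (iii) reduces the result modulo $p$ and outputs the residue. Each step runs in polynomial time in $|V(G)|$, proving the claim.

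There is really no obstacle here; the only point worth checking is that reducing $\numHom[G][\normreduced{H}]$ modulo $p$ is legitimate and gives $\numHom[G][H][p]$, which is precisely what the congruence above guarantees. I would also remark, as the paragraph preceding the statement does, that applying ``$\rightarrow_p$'' never introduces loops, so if $H$ is loopless then $\normreduced{H}$ is loopless too, which is consistent with the hypothesis phrased as ``complete bipartite graph with no loops or a reflexive complete graph.''
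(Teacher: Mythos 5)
Your proof is correct and follows exactly the same route the paper intends: replace $H$ by its order~$p$ reduced form $\normreduced{H}$ via iterated application of Theorem~\ref{thm:reduction_cancel_order_p_auto}, apply the tractable side of the Dyer--Greenhill dichotomy (Theorem~\ref{thm:dyer-greenhil}) to $\normreduced{H}$, and reduce the result modulo~$p$. The remark about constancy of $\normreduced{H}$ and the loop-preservation observation are both consistent with the paper's surrounding discussion.
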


\section{Quantum Graphs}
\label{sec:quantum_graphs}
We recall that a \emph{quantum graph} $\quantum{F}$ is a finite linear combination of pairwise non-\isomorphic{} graphs $\family{F}$ given by a set of coefficients $\set{\alpha_F}_{F \in \family{F}}$ in $\C$, i.e.
\[
	\quantum{F} = \sum_{F \in \family{F}} \alpha_F \cdot F.
\]
We follow \Lovasz~\cite{Lovasz:12:book:Large_Networks_Graph_Limits} and call the graphs $F$ in $\family{F}$ with non-zero coefficient $\alpha_F$ the \emph{constituents} of $\quantum{F}$. Further, we refer by \emph{the coefficients of $\quantum{G}$} to the set of coefficients $\set{\alpha_F}_{F \in \family{F}}$.

A graph parameter $f\colon \family{G} \to \C$ that maps a class of graphs $\family{G}$ to $\C$ can be extended to quantum graphs by the linear combination of function-values. Thus, for a graph $G$, the linear combination $\sum_{H \in \family{F}} \alpha_F \cdot \numHom[G, F]$ is equal to $\numHom[G, \quantum{F}]$. This also leads to the notion of a \emph{quantum graph with distinguished vertices}: Let $k$ be a non-negative integer, we say that a quantum graph $\quantum{F}$ has \emph{$k$ distinguished vertices} if every constituent of $\quantum{F}$ is a graph with $k$ distinguished vertices. When $k$ is $0$ this is equivalent to having no distinguished vertices, but it helps to state results uniformly. We omit the distinguished vertices and instead identify a constituent $(F_i, \vector{v}_i)$ with the short $F_i$ when stating the distinguished vertices explicitly is not necessary. In principle, the notion of a quantum graph allows for a linear combination of constituents with a different number of distinguished vertices, i.e. non-uniform number of distinguished vertices. However, since there are no homomorphisms between graphs with a different number of distinguished vertices, the number of homomorphisms between quantum graphs with a non-uniform number of distinguished vertices translates into the sum of homomorphisms between sub-quantum graphs with a uniform number of vertices. For the sake of this paper, such generality is not necessary.

\prob%
{$\probNumHom{\quantum{H}}$.}
{Non-negative integer $k$ and quantum graph $\quantum{H}$ with $k$ distinguished vertices and coefficients in $\C$.}
{Graph with $k$ distinguished vertices $G$.}
{$\numHom[G, \quantum{H}]$.}

In the same spirit, we extend graph homomorphisms to allow for pre-image quantum graphs $\quantum{G}$. For two quantum graphs $\quantum{G}$ and $\quantum{H}$ given by $\quantum{G} = \sum_{G \in \family{G}} \beta_G \cdot G$ and $\quantum{H}=\sum_{H \in \family{H}} \alpha_H \cdot H $, the number of homomorphisms $\numHom[\quantum{G}, \quantum{H}]$ is given by 
\[
	\numHom[\quantum{G}, \quantum{H}] = \sum_{G \in \family{G}} \sum_{H \in \family{H}} \alpha_G \cdot \alpha_H \cdot \numHom[G, H].
\]
For the case of graphs with distinguished vertices, we state the expression once in detail. Let $k$ and $k'$ be positive integers and let $\quantum{G}$ and $\quantum{H}$ be a quantum graph with $k$ and $k'$ distinguished vertices, respectively. The number of homomorphisms is given by
\[
	\numHom[\quantum{G},\quantum{H}] = \sum_{(G, \vector{x}) \in \family{G}} \sum_{(H, \vector{v}) \in \family{H}} \beta_{(G, \vector{x})} \cdot \alpha_{(H, \vector{v})} \cdot  \numHom[(G,\vector{x}),(H,\vector{v})].
\]
We note that this number of homomorphisms is $0$ if $k$ is unequal to $k'$, and hence we only study the case that $k$ is equal to $k'$ in the following.

Many combinatorial situations can be expressed as a quantum graph; for instance, the number of $3$-colourings of a graph $G$ that use all given colours is equal to
\[
	\numHom[G, \quantum{F}] = \numHom[G, K_3] - 3\cdot \numHom[G, K_2] - 3\cdot \numHom[G, K_1] .
\]
\forThesis{Another, more evolved quantum graph as discussed in \cite{Lovasz:12:Book_LargeNetworksGraphLimits} is given by, for any graph $F$ with signed edges $\edgeset[F] = E_+ \cupdot E_-$,
\[
	\quantum{F}=\sum_{F'} (-1)^{\abs{\edgeset[F']} - \abs{E_+}} \cdot F',
\]
where the summation extends over all simple graphs, such that $\vertexset[F']=\vertexset[F]$ and $E_+ \subseteq \edgeset[F'] \subseteq  E_+ \cupdot E_-$. This summation refers to the inclusion-exclusion principle. For any simple graph $G$,  $\numHom[\quantum{F}, G]$ is the number of mappings $\vertexset[F] \to \vertexset[G]$ that map positive edges to edges and negative edges to non-edges. Hence, we are allowed to regard $\quantum{F}$ by just its ground-graph $F$
}
A second example is given by the equation in Lemma~\ref{lem:pinning_split}, where the right side $\sum_{i \in \sqBrackets{\nu}} \abs{\Orb[\vector{v}_i]} \cdot \numHom[(G, \vector{x}), (H, \vector{v}_i)]$ is a quantum graph with distinguished vertices.

When studying graph homomorphisms, the notion of a quantum graph is of great value as it not only allows to study graph properties that are only expressible by a linear combination of homomorphisms and not just by a single homomorphism but also allows transforming the given graphs for a homomorphism problem in a controlled and concise manner using graph products.

Due to the definition of a quantum graph, the dot product as well as the tensor product straightforwardly extend to quantum graphs. We give the construction explicitly for the dot product. Let $k$ be a non-negative integer and $\quantum{F}$ and $\quantum{G}$ be two quantum graphs with $k$ distinguished vertices given by $\quantum{F}=\sum_{F \in \family{F}} \alpha_F \cdot F$ and $\quantum{G} = \sum_{G \in \family{G}} \beta_G \cdot G$. We derive
\[
	\quantum{F} \odot \quantum{G} = \sum_{F \in \family{F}} \sum_{G \in \family{G}} \alpha_F \cdot \beta_G \cdot F \odot G.
\]
Replacing the dot product with the tensor product yields the construction for the tensor product of quantum graphs. We obtain the following implication of \eqref{eq:homs_tensor_product_target} and \eqref{eq:homs_disjoint_union_origin}.
\begin{corollary}
	\label{cor:dot_product}
	Let $k$ be a non-negative integer. Given three quantum graphs with $k$ distinguished vertices $\quantum{F}$, $\quantum{G}$, and $\quantum{H}$, it follows
	\begin{align*}
	\numHom[\quantum{F} \odot \quantum{G}, \quantum{H}] &= \numHom[\quantum{F}, \quantum{H}] \cdot \numHom[\quantum{G}, \quantum{H}], \\
	\numHom[\quantum{G}, \quantum{F} \otimes \quantum{H}] &= \numHom[\quantum{G}, \quantum{F}] \cdot \numHom[\quantum{G}, \quantum{F}].
	\end{align*}
\end{corollary}

Chen et al.~\cite{Chen:19:The_Exponential-Time_Complexity} studied the problem $\probNumHom{\quantum{H}}$, where the quantum graph $\quantum{H}$ is given by $\quantum{H} = \sum_{H \in \family{H}} \alpha_H \cdot H$ and, for every constituent $H$ in $\family{H}$, the coefficient $\alpha_H$ is in $\Q_{\neq 0}$. The key insight of their work is that the problem $\probNumHom{\quantum{H}}$ inherits, for every constituent $H\in \family{H}$, the complexity of the problems $\probNumHom{H}$. They proved this via the existence of an invertible matrix $M$ that allows computing the number of homomorphism to a graph $H$ in $\family{H}$ with an oracle for $\probNumHom{\quantum{H}}$.

When we study the answers to the problem $\probNumHom{\quantum{H}}$ in $\Z_p$ we arrive at the problem $\probNumHom{\quantum{H}}[p]$, which is the focal point of this section. For this, we focus naturally on the case that the coefficients of $\quantum{H}$ are not congruent modulo $p$ to $0$, i.e. every coefficient $\alpha_H$ is in $\Zsp$. Unless otherwise specified, we assume the quantum graphs to have integer coefficients in the following. For $k$ a non-negative integer and two quantum graphs $\quantum{G}$ and $\quantum{H}$ with $k$ distinguished vertices, we denote by $\numHom[\quantum{G}, \quantum{H}][p]$ the value $\numHom[\quantum{G}, \quantum{H}] \pmod p$.

\prob%
{$\probNumHom{\quantum{H}}[p]$.}
{Non-negative integer $k$ and quantum graph $\quantum{H}$ with $k$ distinguished vertices and integer coefficients.}
{Graph with $k$ distinguished vertices $G$.}
{$\numHom[G, \quantum{H}][p]$.}

Let $\quantum{H}$ be a quantum graph $k$ distinguished vertices, constituents $\family{H}$, and set of coefficients $\set{\alpha_H}_{H \in \family{H}}$. In order to study $\probNumHom{\quantum{H}}[p]$ we follow the insight by Faben and Jerrum~\cite{Faben:15:Parity_Graph_Homs} and study the equivalent problem $\probNumHom{\normreduced{\quantum{H}}}[p]$. Here, $\normreduced{\quantum{H}}$ is the quantum graph obtained from $\quantum{H}$ in the following manner. First, for all constituents $H$ in $\family{H}$, we collect the order~$p$ reduced form $\normreduced{H}$ of $H$ in $\family{H}'$ and set $\alpha_{\normreduced{H}}$ equal to $\alpha_H$. Second, we collect in $\normreduced{\family{H}}$ the isomorphism classes of graphs in $\family{H}'$, where the coefficient of the representative $\normreduced{H}$ in $\normreduced{\family{H}}$ is given by the sum of the coefficients of the graphs $F$ in $\family{H}'$ that are \isomorphic{} to $H$, i.e.
\[
	\alpha_{\normreduced{H}} = \sum_{ {F} \in \family{H} \colon \normreduced{F} \cong \normreduced{H}} \alpha_{F}.
\]
By this reduction, there might be cases of constituents $H$ in $\family{H}$ whose coefficient satisfies $\alpha_H \not\equiv 0 \pmod p$ but after the reduction $\alpha_{\normreduced{H}} \equiv 0 \pmod p$. In a nutshell, these constituents \enquote{cancel} out. Therefore, we assume from this point onward, unless otherwise specified, that the parameter quantum graph $\quantum{H}$ for $\probNumHom{\quantum{H}}[p]$ satisfies that, for all constituents $H$ in $\family{H}$, the graph $H$ is order~$p$ reduced and the coefficient $\alpha_H$ is in $\Zsp$. In the same spirit as before, we call such a quantum graph \emph{order~$p$ reduced}.

We are going to show that the cancellations resulting from studying the order~$p$ reduced constituents are the only cancellations affecting a quantum graph. Additional cancellations have to be on a smaller scale, i.e. on individual constituents. We obtain the following dichotomy that also allows for disconnected constituents containing loops.

\InheritenceQuantumHomsModp*

The second case in Theorem~\ref{thm:homs_to_quantum_graph_mod_p} follows directly from the definition of $\probNumHom{\quantum{H}}$. In order to prove the first case, we show that a polynomial-time reduction exists by establishing the following proposition. 
\begin{proposition}\label{prop:reduction_quantum_homs}
	Let $p$ be a prime, $k$ be a non-negative integer, and $\quantum{H}$ be an order~$p$ reduced quantum graph with $k$ distinguished vertices. If we have access to an oracle for $\probNumHom{\quantum{H}}[p]$, then, for any input graph $G$ and each constituent $H$ of $\quantum{H}$, we can compute $\numHom[G, H][p]$ in polynomial time. 
\end{proposition}

With Proposition~\ref{prop:reduction_quantum_homs} at hand we obtain a straightforward proof of Theorem~\ref{thm:homs_to_quantum_graph_mod_p}.
It remains to prove Proposition~\ref{prop:reduction_quantum_homs}, to which we dedicate the following subsection.

\subsection{Pinning in Quantum Graphs}
\label{subsec:pinning_quantum_graphs}

The concepts and argumentations used in this subsection are unaffected by adding distinguished vertices if we assume the number of distinguished vertices to be the same. Therefore, we tacitly omit stating the tuples of vertices for graphs with distinguished vertices in the same way we did previously.

The key ingredient to our reduction is the insight provided by~\cite[Lemma~4.2.]{Borgs:13:Convergence_Graphs_Bounded_Degree}. Since the quantum graphs under study are allowed to contain graphs with loops and possible distinguished vertices, we are first going to extend~\cite[Lemma~4.2.]{Borgs:13:Convergence_Graphs_Bounded_Degree} to the case of such graphs but still without multi-edges. For two graphs $G$ and $H$ with $k$ distinguished vertices, we denote the set of injective and surjective homomorphisms by $\Inj[G, H]$ and $\Surj[G, H]$, respectively. We note that, as in \cite{Borgs:13:Convergence_Graphs_Bounded_Degree}, $\Surj[G, H]$ refers to homomorphisms surjective on vertices \emph{and} edges. Additionally, we abbreviate notation and denote the value of $\abs{\Inj[G, H]}$ and $\abs{\Surj[G, H]}$ by $\numInj[G, H]$ and $\numSurj[G, H]$, respectively. In the same way, we denote the values of $\numInj[G, H]$ and the value $\numSurj[G, H]$ in $\Z_p$ by $\numInj[G, H][p]$ and $\numSurj[G, H][p]$, respectively. Similarly, for a graph $G$, the value $\abs{\Aut[G]}$ is denoted by $\numAut[G]$ and the value $\numAut[G]$ in $\Z_p$ is denoted by $\numAut[G][p]$.

A family of graphs $\family{F}$ is said to be \emph{closed under the image of surjective homomorphic image} if, for every graph $F$ in $\family{F}$ and every graph $G$ with $\numSurj[F, G]>0$, then $G$ is in $\family{F}$. We obtain that the argumentation of Borgs et al.~\cite[Lemma~4.2.]{Borgs:13:Convergence_Graphs_Bounded_Degree} is still valid if we allow loops and distinguished vertices. For the sake of completeness, we provide the full proof. 
\begin{lemma}\label{lem:hom_matrix_nonsingular_loops}
	Let $k$ be a non-negative integer and let $\family{F}$ be a family $\set{F_i}_{i \in I}$ with index set $I$, where $\family{F}$ consists of pairwise non-\isomorphic{} graphs with $k$ distinguished vertices and without multi-edges. If $\family{F}$ is closed under surjective homomorphic image, then the matrix
	\[
	M_{\numHom} = \sqBrackets[\big]{\numHom[F_i,F_j] }_{i,j \in I}
	\]
	is nonsingular.
\end{lemma}
\begin{proof}
	All graphs in this proof are assumed to have $k$ distinguished vertices.
	For two indices $i$ and $j$ in $I$, let $f$ be a homomorphism in $\Hom[F_i, F_j]$. There exists a graph $J$ such that $f$ decomposes into a surjective homomorphism $g$ in $\Surj[F_i, J]$ followed by an injective homomorphism $h$ in $\Inj[J, F_j]$, i.e. $f = h \circ g$. In particular, $J$ is the image of $f$, $g$ is given by $f$, and $h$ is the embedding of $J$ into $H$. By assumption, $\family{F}$ is closed under surjective homomorphic image, and thus $J$ is an element in $\family{F}$. For each automorphism $\varrho$ in $\Aut[J]$, we find another pair of homomorphism $g'$ and $h'$ by $g'= \varrho \circ g$ and $h' = h \circ \varrho^{-1}$. Similar to $g$ and $h$, the homomorphism $g'$ is in $\Surj[F_i, J]$, the homomorphism $h'$ is in $\Inj[J, F_j]$, and $f$ is equal to $h' \circ g'$. In order to avoid double-counting, we divide by the $\numAut[J]$ options for $\varrho$ and obtain
	\begin{equation}\label{eq:decomposition_homs_matrix}
		\numHom[F_i,F_j] = \sum_{J \in \family{F}} \frac{\numSurj[F_i, J] \cdot \numInj[J, F_j] }{\numAut[J]} ,
	\end{equation}
	from which we obtain the decomposition
	\[
	M_{\numHom} = 	M_{\numSurj} \cdot \left(D_{\numAut}\right)^{-1} \cdot M_{\numInj} ,
	\]
	where the matrices $M_{\numSurj}$ and $M_{\numInj}$ are given by
	\begin{align*}
		M_{\numSurj} &= \sqBrackets[\big]{\numSurj[F_i, F_j] }_{i,j \in I}, \quad
		M_{\numInj} = \sqBrackets[\big]{\numInj[F_i, F_j] }_{i,j \in I},
	\end{align*}	
	and $D_{\numAut}$ is the $\abs{I} \times \abs{I}$ diagonal matrix with the values $\numAut[F_i]$ on the diagonal. The matrix $D_{\numAut}$ is invertible because $\Aut[F_i]$ contains at least the identity. If $M_{\numSurj}$ is lower-triangular and $M_{\numInj}$, then the lemma follows.
	
	We are going to construct a total ordering of $\family{F}$ such that, for two indices $i$ and $j$ in $I$ with $i < j$,
	\begin{itemize}
		\item $\numSurj[F_i, F_j] = 0$;
		\item $\numInj[F_j, F_i] = 0$.
	\end{itemize}
	We note that the diagonal entries of $M_{\numInj}$ and $M_{\numSurj}$ are given by the number of automorphisms and thus are non-zero. In this way, $M_{\numInj}$ is upper-triangular and $M_{\numSurj}$ is lower-triangular for a total ordering satisfying both properties.
	
	We order the family $\family{F}$ in the following way.
	First, we construct a partial ordering. For two indices $i$ and $j$ in $I$ with $i < j$, we assume a partial ordering of $\family{F}$ satisfying 
	\begin{enumerate}
		\item $\abs{\vertexset[F_i]} \leq \abs{\vertexset[F_j]}$;
		\item if $\abs{\vertexset[F_i]} = \abs{\vertexset[F_j]}$, then $\abs{\edgeset[F_i]} \leq \abs{\edgeset[F_j]}$.
	\end{enumerate}
	Second, we extend this partial ordering to a total ordering by taking an arbitrary order for every pair of graphs $F_i$ and $F_j$ with the same number of edges and vertices.
	
	Let $F_i$ and $F_j$ be a pair of graphs in $\family{F}$ with $F_i < F_j$ by the total ordering. If $\abs{\vertexset[F_i]} < \abs{\vertexset[F_j]}$ or $\abs{\edgeset[F_i]}<\abs{\edgeset[F_j]}$, then both $\Surj[F_i, F_j]$ and $\Inj[F_j, F_i]$ are empty.	
	It remains to argue on the case that $F_i$ and $F_j$ have the same number of vertices and edges.
	We recall that $F_i$ and $F_j$ are non-\isomorphic{}. However, assuming $\numInj[F_i, F_j]>0$ yields an isomorphism from $F_i$ to $F_j$, a contradiction. Similarly, assuming $\numSurj[F_i, F_j]>0$ yields an isomorphism. This concludes the proof.
\end{proof}

We note the subtlety of the statement. The family $\family{F}$ is allowed to contain two graphs $(F, \vector{x}_1)$ and $(F, \vector{x}_2)$ originating from the same graph $F$ but with different distinguished vertices $\vector{x}_1$ and $\vector{x}_2$. By assumption $(F, \vector{x}_1)$ is not \isomorphic{} to $(F, \vector{x}_2)$, i.e. $\vector{x}_1$ is not in $\Orb[\vector{x}_2]$. The non-existing isomorphism between $(F, \vector{x}_1)$ and $(F, \vector{x}_2)$ is what we utilized in the proof.

By the distributivity of the modulo operator, we adapt the proof of Lemma~\ref{lem:hom_matrix_nonsingular_loops} to modular counting. Important is the notion \emph{closed under surjective homomorphic image}, which played a crucial role due to the decomposition~\eqref{eq:decomposition_homs_matrix}. Towards an equivalent statement when counting modulo $p$, let us assume, for a pair of graphs $F_i$ and $F_j$, that there is a graph $J$ with both $\numSurj[F_i, J]$ and $\numInj[F_i, J]$ non-zero, but $J$ contains an automorphism of order $p$. By Lagrange's Theorem~\ref{thm:Lagrange}, $\numAut[J][p]$ is equal to $0$ and each of these automorphisms yields a family of \isomorphic{} homomorphisms from $F_i$ to $F_j$ via $J$. Therefore, these homomorphisms cancel out, and it suffices to account for order~$p$ reduced graphs $J$ satisfying that $\numSurj[F_i, J]$ is non-zero.

\HomMatrixNonSingularModp*
\begin{proof}
	All graphs in this proof are assumed to have $k$ distinguished vertices.
	By assumption, we have for all graphs $F_i$ in $\family{F}$ that $\numAut[F_i][p]$ is non-zero. Following the proof of Lemma~\ref{lem:hom_matrix_nonsingular_loops}, we deduce that the matrix $M_{\numInj[][p]}$ with
	\[
	M_{\numInj[][p]} = \sqBrackets[\big]{\numInj[F_i, F_j][p]}_{i,j \in I}
	\]
	is upper triangular and the matrix $M_{\numSurj[][p]}$ with
	\[
	M_{\numSurj[][p]} = \sqBrackets[\big]{\numSurj[F_i, F_j][p] }_{i,j \in I}
	\]		
	is lower triangular. Additionally, we deduce by the distributivity of the modulo operator 
	\[
		\numHom[F_i, F_j][p] = \sum_{J \in \family{F}} \frac{\numSurj[F_i, J][p] \cdot \numInj[J, F_j][p]  }{\numAut[J][p]} ,
	\]
	from which we obtain the decomposition 
	$M_{\numHom[][p]} = 	M_{\numSurj[][p]} \cdot \left(D_{\numAut[][p]}\right)^{-1} \cdot M_{\numInj[][p]}$,
	where $D_{\numAut[][p]}$ is the $\abs{I} \times \abs{I}$ diagonal matrix with the values $\numAut[F_i][p]$ on the diagonal. Since every graph in $\family{F}$ is order~$p$ reduced, we have that $\numAut[F_i][p]$ is non-zero. Since $\Z_p$ is a field, it does not contain zero-divisors and  $M_{\numHom[][p]}$ being non-singular over $\Z_p$ is equivalent to $\det(M_{\numHom[][p]}) \not\equiv 0 \pmod p$. The corollary follows.
\end{proof}

We note here the generality of Corollary~\ref{cor:hom_matrix_nonsingular_mod_p}, which is not restricted to the case of fixed image graphs. This follows remarks made by \Lovasz{}~\cite[Section~5.5.]{Lovasz:12:book:Large_Networks_Graph_Limits}. By taking for $\mathcal{F}$ the class of all order~$p$ reduced graphs with $k$ distinguished vertices, one immediately obtains that in $M_{\numHom[][p]}$ no pair of rows and no pair of columns is identical. This gives a large series of results that have been proved independently but with similar arguments, e.g. Peyerimhoff et al.~\cite[Lemma~1.9.]{Peyerimhoff:21:Parameterized_Modular_Counting_Cayley_Graph_Expanders} or Göbel et al.~\cite[Lemma~5.3.]{Goebel:21:Counting_Homomorphisms_Trees}. 

We note that our approach for Theorem~\ref{thm:homs_to_quantum_graph_mod_p} is similar to establishing pinning in $\probNumHom{H}[p]$ as we \enquote{pin} to the constituent. Instead of the approach in~\cite{Goebel:21:Counting_Homomorphisms_Trees} for pinning in $\probNumHom{H}[p]$, we are going to apply Corollary~\ref{cor:hom_matrix_nonsingular_mod_p} and the notion of a quantum graph also for the preimage of a homomorphism. Nonetheless, the approaches are strongly intertwined as we elaborate in Subsection~\ref{subsec:quantum_graph_part_lab_homs}. The following definition is derived from~\cite{Goebel:21:Counting_Homomorphisms_Trees} and proves to be useful. 
\begin{definition}
	Let $p$ be a prime, $k$ be a non-negative integer, and $\family{H}$ be a set $\set{H_1, \dots, H_r}$ consisting of pairwise non-\isomorphic{} order-$p$ reduced graphs with $k$ distinguished vertices. Let $\vector{v}$ be a $1 \times r$ vector $\sqBrackets{v_1, \dots, v_r}$ with entries in $\Z_p$. We call $\vector{v}$ \emph{$\family{H}$-implementable} if there exists a quantum graph $\quantum{F}$ with $k$ distinguished vertices and number of constituents depending only on $k$ and $\max_{H \in \family{H}} \abs{\vertexset[H]}$, such that for, all $i \in \sqBrackets{r}$,
	\[
	v_i = \numHom[\quantum{F},H_i][p] .
	\]
	In this case, we say that $\quantum{F}$ \emph{$\family{H}$-implements} $\vector{v}$.
\end{definition}

In fact, we show that the set of $\family{H}$-implementable vectors is the whole vector-space. Consequently, the quantum-graph that $\family{H}$-implements the $i$-th standard basis vector has to exist, which is going to yield Proposition~\ref{prop:reduction_quantum_homs}.

\begin{lemma}\label{lem:everything_implementable}
	Let $p$ be a prime, $k$ be a non-negative integer, and $\family{H}$ be a set $\set{H_1, \dots, H_r}$ of pairwise non-\isomorphic{} graphs with $k$ distinguished vertices. If every graph in $\family{H}$ is order-$p$ reduced, then every vector in $(\Z_p)^r$ is $\family{H}$-implementable.
\end{lemma}
\begin{proof}
	Let $\vector{v}$ be a given vector in $(\Z_p)^r$, and we denote $\vector{v} = \sqBrackets{v_i}_{i \in \family{F}}$, where we index over the graphs in $\family{H}$. This is equivalent to indices in $\sqBrackets{r}$  by the given enumeration of $\family{H}$. Let $H$ be a graph in $\family{H}$ such that $H$ contains the maximal number $c$ of vertices over graphs in $\family{H}$. Every graph in $\family{H}$ is equal to a graph with $k$ distinguished vertices obtained from a subgraph of the reflexive complete graph $K^\circ_c$. Let $\family{F}$ denote the set of pairwise non-isomorphic subgraphs of $K^\circ_c$ that are order-$p$ reduced and have $k$ distinguished vertices. In particular, every graph in $\family{H}$ is in $\family{F}$. We observe that $\family{F}$ is closed under surjective homomorphic image and the quadratic matrix $M_{\family{F}}$ with entries $\sqBrackets{\numHom[F , F'][p]}_{F,F' \in \family{F}}$ is non-singular in $\Z_p$ due to Corollary~\ref{cor:hom_matrix_nonsingular_mod_p}. Let $\family{F}$ be of cardinality $r'$, for which we know that $r \leq r'$. We define the vector $\vector{v}'$ in $(\Z_p)^{r'}$ with $\vector{v}' = \sqBrackets{v'_i}_{i \in \family{F}} $ by restricting $\vector{v}$ to entries in $\family{H}$, i.e.
	\[
	v'_i = \begin{cases}
		v_i &,\text{ if } i \in \family{H}, \\
		0 &,\text{ else } .
	\end{cases}
	\]
	The vector $\vector{v}'$ represents $\vector{v}$ in $(\Z_p)^{r'}$. In search of a quantum graph $\quantum{G}$ that $\family{H}$-implements $\vector{v}$, we observe the system of linear equations in $\Z_p$ given by
	\[
	\vector{v}' \equiv \vector{x} \cdot M_\family{H} \pmod p,
	\]
	which has a unique solution $\vector{x}^\ast$ with  $\vector{x}^\ast = \sqBrackets{x^\ast_i}_{i \in \family{F}}$ and entries in $\Z_p$ because $M_\family{F}$ is non-singular. We note that this yields, for a graph $J \in \family{F}$,
	\begin{equation}\label{eq:everything_implementable}
		v'_J \equiv \sum_{G \in \family{F} } x^\ast_G \cdot \numHom[G,J][p] \pmod p.
	\end{equation}
	By taking as constituents $\family{G}$ the set of graphs $\set{G \in \family{F} \given x^\ast_G \not\equiv 0 \pmod p}$ with set of coefficients $\set{x^\ast_G}_{G \in \family{G}}$ this gives a quantum graph $\quantum{G}$ implementing $\vector{v}'$, i.e. 
	\[
	\quantum{G} = \sum_{G \in \family{G}} x^\ast_G \cdot G .
	\]
	The equivalence~\eqref{eq:everything_implementable} remains true if we restrict the indices $i$ in $\family{F}$ to elements in $\family{H}$. 
	
	Finally, we confirm that $\abs{\family{G}}$ depends only on $c$ and $k$ because $\family{G}$ is a subset of $\family{F}$, which contains at most all possible subgraphs of $K^\circ_c$ with $k$ distinguished vertices. Therefore, $\quantum{G}$ $\family{H}$-implements $\vector{v}$.
\end{proof}

\begin{proof}[Proof of Proposition~\ref{prop:reduction_quantum_homs}]
	All graphs are assumed to have $k$ distinguished vertices.
	The assumptions for Lemma~\ref{lem:everything_implementable} are satisfied, which yields a set of quantum graphs $\set{\quantum{F}_1, \dots , \quantum{F}_r}$ such that $\quantum{F}_i$ $\family{H}$-implements the $i$-th standard basis vector. Let $H_i$ be a fixed constituent of $\quantum{H}$ and $\quantum{F}_i$ have constituents $\family{F}_i$ and set of coefficients $\set{\beta_{F}}_{F \in \family{F}_i}$. Utilizing $\quantum{F}_i$ and the dot product we construct $\quantum{F}_i \odot G$, which yields
	\begin{align*}
		\numHom[\quantum{F}_i \odot G, \quantum{H}][p] &\equiv \sum_{j \in \sqBrackets{r}} \alpha_{H_j} \cdot \numHom[\quantum{F}_i \odot G,H_j][p] \pmod p.
		\shortintertext{%	
			By Corollary~\ref{cor:dot_product} we have%
		}
		&\equiv \sum_{j \in \sqBrackets{r}} \alpha_{H_j} \cdot \numHom[\quantum{F}_i,H_j][p] \cdot \numHom[G,H_j][p] \pmod p.
		\shortintertext{%
			Due to the construction of $\quantum{F}_i$, every term except for $j$ equal to $i$ vanishes, and we deduce%
		}
		\numHom[\quantum{F}_i \odot G,\quantum{H}][p] &\equiv \alpha_{H_i} \cdot \numHom[G,H_i][p] \pmod p.
	\end{align*}
	We can solve this linear congruence by $\abs{\family{F}_i}$ calls to the oracle for $\probNumHom{\quantum{H}}[p]$, one for every graph in $\family{F}_i$.	Finally, the cardinality of the set of constituents $\family{F}_i$ depends only on $k$ and $\abs{\vertexset[H_j]}$ for some constituent $H_j$ in $\family{H}$. 
\end{proof}

We note that the arguments are almost unaffected by counting in $\Z_p$. The crucial point for this was the assumption that the graphs under study are order~$p$ reduced. For the case of non-modular counting, the line of arguments yields the equivalent statement straightforwardly. This gives the slight generalization of \cite[Theorem~5]{Chen:19:The_Exponential-Time_Complexity} allowing for graphs with distinguished vertices. We state the result but omit the proof as it follows the same line of argumentation employed for Theorem~\ref{thm:homs_to_quantum_graph_mod_p} except for the assumption of order~$p$ reduced graphs.
\begin{theorem}\label{thm:quantum_homs_non-modular}
	Let $k$ be a non-negative integer and $\quantum{H}$ be a quantum graph with $k$ distinguished vertices that is given by $\quantum{H} = \sum_{H \in \family{H}} \alpha_H \cdot H$ and coefficients $\set{\alpha_H}_{H \in \family{H}}$ in $\Q_{\neq 0}$.
	\begin{itemize}
		\item If there exists a graph $H$ in $\family{H}$ such that the problem $\probNumHom{H}$ is $\classNumP$-hard, then $\probNumHom{\quantum{H}}$ is $\classNumP$-hard; 
		\item If for all graphs $H$ in $\family{H}$, the problem $\probNumHom{H}$ is solvable in polynomial time, then $\probNumHom{\quantum{H}}$ is also solvable in polynomial time.
	\end{itemize}
\end{theorem}

\subsection{Quantum Graphs and Partially Labelled Homomorphisms}
\label{subsec:quantum_graph_part_lab_homs}

The main reasons we allowed graphs with $k$ distinguished vertices in the previous subsection is that, first, the proofs are almost unaffected by it and, second, it allows us to show the strength of the technique by reproving pinning. 

We study the following set of problems.

\prob%
{$\probNumPartLabHom{H}$.}
{Positive integer $k$ and graph $H$.}
{Graph $G$ and partial mapping $\tau \colon \vertexset[G]^k \to \vertexset[H]^k$.}
{$\numHom[(G,\tau),H]$.}

\prob%
{$\probNumPartLabHom{H}[p]$.}
{Prime $p$, positive integer $k$, and graph $H$.}
{Graph $G$ and partial mapping $\tau \colon \vertexset[G]^k \to \vertexset[H]^k$.}
{$\numHom[(G,\tau),H][p]$.}

We note that the following result establishes \enquote{pinning} for the problem $\probNumHom{H}[p]$ and has been proved in~\cite{Goebel:21:Counting_Homomorphisms_Trees} in a different manner.

\begin{lemma}\label{lem:homs_and_partlab_homs_equivalent}
	Let $p$ be a prime and $H$ be a graph. If $H$ is order~$p$ reduced, then $\probNumHom{H}[p]$ and $\probNumPartLabHom{H}[p]$ are interreducible under polynomial-time Turing reductions.
\end{lemma}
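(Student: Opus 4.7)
The plan is to handle the two directions separately; one is essentially trivial, and the other is the real content, namely \emph{pinning}.

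For $\numHomsTo{H}[p]$ reducing to $\numPartLabHomsTo{H}[p]$, the reduction is parsimonious: forward the input $G$ as $(G,\bar{u})$ with $r=0$ and the empty tuple. A single oracle call suffices, and this direction does not use that $H$ is order-$p$ reduced.

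For the reverse direction, I would mimic the proof of Proposition~\ref{prop:reduction_quantum_homs}, lifting the quantum-graph machinery of Subsection~\ref{subsec:pinning_quantum_graphs} from plain graphs to \emph{pointed} graphs with $r$ distinguished vertices. Let $\bar{w}_1, \ldots, \bar{w}_s$ be representatives of the orbits of $V(H)^r$ under $\Aut(H)$, and suppose the parameter $\bar{w}$ lies in the orbit of $\bar{w}_j$. For any pointed graph $(F, \bar{x})$ with $r$ distinguished vertices, construct $G \cdot F$ by identifying $\bar{u}$ with $\bar{x}$. Then
\[
\numHom[G \cdot F][H] \;=\; \sum_{\bar{y} \in V(H)^r} \numHom[(G, \bar{u})][(H, \bar{y})] \cdot \numHom[(F, \bar{x})][(H, \bar{y})].
\]
Grouping $\bar{y}$ by its $\Aut(H)$-orbit and using that both factors depend only on this orbit yields, modulo $p$,
\[
\numHom[G \cdot F][H][p] \;\equiv\; \sum_{i=1}^{s} |\Orb(\bar{w}_i)| \cdot \numHom[(G, \bar{u})][(H, \bar{w}_i)][p] \cdot \numHom[(F, \bar{x})][(H, \bar{w}_i)][p] \pmod{p}.
\]
Since $H$ is order-$p$ reduced, Cauchy's Theorem~\ref{thm:Cauchy} gives $p \nmid |\Aut(H)|$, so each $|\Orb(\bar{w}_i)|$ divides $|\Aut(H)|$ and is invertible in $\Z_p$. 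If one can exhibit, for each $j$, a quantum pointed graph $\bar{F}_j$ satisfying $\numHom[\bar{F}_j][(H, \bar{w}_i)][p] = \delta_{ij}$, then evaluating $\numHom[G \cdot \bar{F}_j][H][p]$ constituent by constituent via the oracle and dividing by $|\Orb(\bar{w}_j)|$ recovers $\numHom[(G, \bar{u})][(H, \bar{w})][p]$.

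The real technical step is the pointed analogue of Corollary~\ref{cor:hom_matrix_nonsingular_mod_p}; once this is in hand, the existence of the $\bar{F}_j$ follows by exactly the argument of Lemma~\ref{lem:everything_implementable} and Theorem~\ref{thm:quantum_pinning_mod_p}. I would work with the finite family $\mathcal{F}$ of pointed graphs $(F, \bar{x})$, up to pointed isomorphism, in which $F$ is an order-$p$ reduced graph with $|V(F)| \leq |V(H)|$ and $\bar{x} \in V(F)^r$, taking the closure under pointed surjective homomorphic image \emph{followed by} order-$p$ reduction, exactly as in the paragraph preceding Corollary~\ref{cor:hom_matrix_nonsingular_mod_p}. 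I would order $\mathcal{F}$ by the ordering of underlying graphs from Lemma~\ref{lem:hom_matrix_nonsingular_loops}, breaking ties among pointed copies of a common underlying graph by any fixed ordering of its $\Aut(F)$-orbits on $V(F)^r$. With this ordering, the matrices of pointed surjection and injection counts are lower- and upper-triangular, respectively, and the diagonal $D^{\mathrm{pt}}_{\mathrm{aut}}$ has entries $\numAut{(F_i, \bar{x}_i)}[p] = |\Aut(F_i, \bar{x}_i)|$; these are orders of subgroups of $\Aut(F_i)$ and hence divide $|\Aut(F_i)|$, which is coprime to $p$ because $F_i$ is order-$p$ reduced. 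Thus $D^{\mathrm{pt}}_{\mathrm{aut}}$ is invertible over $\Z_p$, and the decomposition $M^{\mathrm{pt}}_{\mathrm{hom}} = M^{\mathrm{pt}}_{\mathrm{surj}} (D^{\mathrm{pt}}_{\mathrm{aut}})^{-1} M^{\mathrm{pt}}_{\mathrm{inj}}$ yields non-singularity.

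The main obstacle is precisely the bookkeeping around closure under surjective homomorphic image in the pointed setting: a surjective pointed homomorphism out of an order-$p$ reduced $(F, \bar{x})$ need not land in an order-$p$ reduced pointed graph, so the closure operator must interleave order-$p$ reduction exactly as the original proof does; the key point is that reduction preserves the property of having $|\Aut(\cdot)|$ coprime to $p$, so the stabilizer argument on the diagonal still goes through. Given the pointed non-singular matrix, the remainder of the argument is algorithmic: solve the appropriate linear system over $\Z_p$ to obtain each $\bar{F}_j$ as an explicit quantum pointed graph whose constituent count is bounded in terms of the parameter, and execute the reduction by a parameter-bounded number of oracle calls per input.
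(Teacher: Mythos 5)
Your proposal follows essentially the same route as the paper's proof of this lemma. For the substantive direction (pinning), both decompose $V(H)^r$ into $\Aut(H)$-orbits, build quantum pointed graphs $\bar{F}_j$ implementing indicator vectors via a pointed analogue of Corollary~\ref{cor:hom_matrix_nonsingular_mod_p} (the paper's Corollary~\ref{cor:hom_labelled_matrix_nonsingular_mod_p}), attach $\bar{F}_j$ to $(G,\bar u)$ at the distinguished tuple, and divide by the orbit size; your direct appeal to Cauchy's theorem to invert the orbit sizes is exactly the content of the paper's Lemma~\ref{lem:tuple-orbit}, and your display for $\numHom[G\cdot F][H]$ is the paper's construction of the graphs $(G_j,\bar u)$ in compact form.

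Two minor points. First, the easy direction is slightly off as stated, since $\numPartLabHomsTo{H}[p]$ is defined with $r>0$, so $r=0$ is disallowed; adding $r$ fresh isolated vertices to $G$ and pinning them to the parameter tuple $\bar w$ fixes this at the cost of a single oracle call (the paper itself instead runs the orbit-decomposition argument again for this direction). Second, your closure condition on $\mathcal{F}$ — requiring the underlying graph $F$ to be order-$p$ reduced — does not quite give closure under pointed surjective image: a surjective pointed image $(J,\bar y)$ of such an $(F,\bar x)$ can have $p\mid|\Aut(J)|$ while $p\nmid|\Aut(J,\bar y)|$, and the corresponding term in the decomposition of $M^{\mathrm{pt}}_{\mathrm{hom}}$ then need not vanish modulo $p$ even though $(J,\bar y)\notin\mathcal{F}$. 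The clean fix is to take $\mathcal{F}$ to be all pointed graphs $(J,\bar y)$ on at most $|V(H)|$ vertices with $p\nmid|\Aut(J,\bar y)|$; terms outside this $\mathcal{F}$ are divisible by $|\Aut(J,\bar y)|$ (via the free action of $\Aut(J,\bar y)$ on the relevant surjection and injection sets) and hence vanish, and invertibility of $D^{\mathrm{pt}}_{\mathrm{aut}}$ holds by the defining condition directly rather than through your subgroup-of-$\Aut(F_i)$ argument. The paper's own construction of $\mathcal{F}$ in this proof shares the imprecision, so this is a shared observation rather than a gap unique to your write-up.
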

\begin{proof}
	Let $k$ be the parameter for $\probNumPartLabHom{H}[p]$ that denotes the size of the domain for partial mappings under study. We recall Lemma~\ref{lem:pinning_split}. For every graph $G$ and every tuple $\vector{x}$ in $\vertexset[H]^k$, this yields
	\begin{align}
		\nonumber
		\numHom[G,H] &= \sum_{i \in \sqBrackets{\nu}} \abs[\big]{\Orb[\vector{v}_i]} \cdot \numHom[(G, \vector{x}), (H, \vector{v}_i)],
	\intertext{%
	where $\vector{v}_1, \dots, \vector{v}_\nu$ are representatives of the orbits in $\vertexset[H]^k$. We deduce%
	}
		\label{eq:pinning_split}
		\numHom[G,H][p] &= \sum_{i \in \sqBrackets{\nu}} \abs[\big]{\Orb[\vector{v}_i]} \cdot \numHom[(G, \vector{x}), (H, \vector{v}_i)][p] \pmod p.
	\end{align}
	By assumption, $H$ is order~$p$ reduced and by Lemma~\ref{lem:tuple-orbit}, for all $i \in \sqBrackets{\nu}$, the cardinality $\abs{\Orb[\vector{v}_i]}$ is larger than $0$.
	It suffices to establish a reduction from $\probNumPartLabHom{H}[p]$ to $\probNumHom{H}[p]$.
	
	Let $(G, \tau)$ be the input of an instance of $\probNumPartLabHom{H}[p]$, let $\vector{x}$ be the domain of $\tau$, and $\vector{v}$ be the range of $\tau$. Since $\vector{v}$ is in $\vertexset[H]^k$, there exists a representative $\vector{v}_i$ such that $\vector{v}$ is in $\Orb[\vector{v}_i]$. We are equivalently tasked with computing $\numHom[(G, \vector{x}), (H, \vector{v}_i)]$.
	
	Let $\quantum{H}$ be the quantum graph given by~$\eqref{eq:pinning_split}$, that is $\quantum{H} = \sum_{i \in \sqBrackets{\nu}} \abs{\Orb[\vector{v}_i]} \cdot (H, \vector{v}_i)$. We know by Proposition~\ref{prop:reduction_quantum_homs} that $\probNumHom{(H, \vector{v}_i)}[p]$ reduces to $\probNumHom{\quantum{H}}[p]$. It remains to translate the oracle calls for $\probNumHom{\quantum{H}}[p]$ into polynomial-many oracle calls for $\probNumHom{H}[p]$, where we emphasize that the constituents of $\quantum{H}$ have distinguished vertices.
	
	By assumption, $H$ is order~$p$ reduced. It follows from Lemma~\ref{lem:tuple-orbit} that $\quantum{H}$ is order~$p$ reduced as well. Let $\family{H}$ be the set $\set{(H, \vector{v}_i)}_{i \in \sqBrackets{\nu}}$ of constituents of $\quantum{H}$ and let $\quantum{F}$ be the order~$p$ reduced quantum graph with $k$ distinguished vertices given by Lemma~\ref{lem:everything_implementable} that $\family{H}$-implements the $i$-th standard basis vector. By definition, the number of constituents of $\quantum{F}$ depends only on $\abs{\vertexset[H]}$ and $k$. Let this number be $\mu$ and let $\quantum{F}= \sum_{j \in \sqBrackets{\mu}} \beta_j \cdot (F_j, \vector{y}_j)$. The quantum graph $\quantum{F}$ is the same we used in the proof of Proposition~\ref{prop:reduction_quantum_homs} for the reduction from $\probNumHom{\quantum{H}}[p]$ to $\probNumHom{(H, \vector{v}_i)}[p]$. On one hand, we have
	\begin{align*}
		\numHom[(G, \vector{x}) \odot \quantum{F}, \quantum{H}][p] &= \abs[\big]{\Orb[\vector{v}_i]} \cdot \numHom[(G, \vector{x}), (H, \vector{v}_i)][p].
	\intertext{%
	On the other, for each $j \in \sqBrackets{\mu}$, let $(G_j, \vector{x}_j)$ be the graph $(G, \vector{x}) \odot (F_j, \vector{y}_j)$. We apply \eqref{eq:pinning_split} on each $G_j$ and deduce
	}
		\numHom[(G, \vector{x}) \odot \quantum{F}, \quantum{H}][p] &= \sum_{i \in \sqBrackets{\nu}} \sum_{j \in \sqBrackets{\mu}} \abs[\big]{\Orb[\vector{v}_i]} \cdot \beta_j \cdot \numHom[(G, \vector{x}) \odot (F_j, \vector{y}_j), (H, \vector{v}_i)][p] \\
		&= \sum_{j \in \sqBrackets{\mu}} \beta_j \cdot \numHom[G_j, H][p].
	\end{align*}
	We can compute $\abs{\Orb[\bar{v}_i]}$ in constant time because $H$ is fixed. Since $\abs{\Orb[\vector{v}_i]}$ is larger than $0$ we can compute $\numHom[(G, \vector{x}), (H, \vector{v}_i)][p]$ using $\mu$ calls to an oracle for $\probNumHom{H}[p]$.
\end{proof}

\subsection{Investigation of Restricted Homomorphisms}
\label{subsec:restricted_homs}

In this subsection, we study a restricted class of homomorphisms that will be of high importance for restricting our study of $\probNumHom{H}[p]$ to bipartite graphs. A \emph{bip-graph} $G$ is given by a bipartite graph with a fixed bipartition, which we denote by the triple $\bipGraph$. Here, $\lpart[G]$ denotes the \enquote{left part} of $G$ and $\rpart[G]$ denotes the \enquote{right part} of $G$. A \emph{sub-bip-graph} of a bip-graph $G$ is a subgraph of $G$ that inherits the bipartition of $G$. Similarly, since a bip-graph is also a graph, the usual graph-theoretic notions apply to bip-graphs. We study the following set of homomorphisms. 
\begin{definition}\label{def:part_wise_homs}
	Given a pair of bip-graphs $G$ and $H$, we call a homomorphism $f$ in $\Hom[G ,H]$ a \emph{bip-homomorphism} if $f$ maps vertices in $\lpart[G]$ to vertices in $\lpart[H]$ and vertices in $\rpart[G]$ to vertices in $\rpart[H]$. The set $\HomBip[G, H]$ consists of all bip-homomorphisms in $\Hom[G ,H]$.
\end{definition}

The respective notion of isomorphism is called \emph{bip-isomorphism}, and we call two bip-graphs $H,H'$ \emph{\isomorphic[bip]} if there exists an isomorphism $\phi \colon H \to H'$ that is in $\HomBip[G, H]$. This is denoted by $H \congbip H'$. For instance, the simple path on $5$ vertices $P_5$ with 3 vertices in the left part is not {\isomorphic[bip]} to $P_5$ with 3 vertices in the right part.

Bip-homomorphisms allow naturally for graphs with $k$ distinguished vertices. Similar to our previous study of unrestricted homomorphisms, we tacitly omit the distinguished vertices when the arguments are unaffected by the existence of distinguished vertices.

Following the used notation for unrestricted homomorphisms, we denote the cardinality of $\HomBip[G, H]$ by $\numHomBip[G, H]$. Additionally, we denote by $\numHomBip[G, H][p]$ the value $\numHomBip[G, H] \pmod p$. The computational problems $\probNumHom{\quantum{H}}$ and $\probNumPartLabHom{H}$ extend naturally to the set of bip-homomorphisms. We study the following set of problems.
\prob%
{$\probNumBipHom{H}$.}
{Bip-graph $H$.}
{Bip-graph $G$.}
{$\numHomBip[G, H]$.}

\prob%
{$\probNumBipHom{H}[p]$.}
{Prime $p$ and bip-graph $H$.}
{Bip-graph $G$.}
{$\numHomBip[G, H][p]$.}

Our goal is to obtain similar technical results for $\probNumBipHom{H}[p]$ as we obtained for $\probNumHom{H}[p]$. In particular, we aim to establish both a notion equivalent to \enquote{order~$p$ reduced graph $H$} and pinning for the problem $\probNumBipHom{H}[p]$. By the adaptability of Corollary~\ref{cor:hom_matrix_nonsingular_mod_p} to restricted subclasses of homomorphisms, we solve both tasks in the same manner as for $\probNumHom{H}[p]$. The main task is thus to adjust the notions and concepts used in the previous part of this section to bip-graphs. For the sake of completeness, we provide all details.

\paragraph{Confluent Reduction} 
Regarding the first task, we apply almost the same argumentation as Faben and Jerrum~\cite{Faben:15:Parity_Graph_Homs} for their \enquote{confluent reduction} given by the order~$p$ reduction for $\probNumHom{H}[p]$. Let $H$ be a bip-graph. We denote by $\AutBip[H]$ the subgroup of automorphisms in $\Aut[H]$ that are also in $\HomBip[H, H]$, i.e. $\AutBip[H]$ is the intersection $\Aut[H] \cap \HomBip[H, H]$. For any prime $p$, any bip-graph $G$, and any automorphism $\varrho$ in $\AutBip[H]$ of order~$p$, the value $\numHomBip[G, H][p]$ is equal to the  value $\numHomBip[G, H^\varrho][p]$. Here, $H^\varrho$ has to be a bip-graph and so we let $H^\varrho$ denote the sub-bip-graph of $H$ induced by the vertices fixed under $\varrho$. In this way, for two bip-graphs $H$ and $H'$, we adjust the binary relation of Faben and Jerrum and denote $H \relArrow[\bip][p] H'$ if there exists an automorphism $\varrho \in \AutBip[H]$ of order $p$ such that $H'$ is equal to $H^\varrho$. Following a fixed enumeration of automorphisms $\varrho$ in $\AutBip[H]$ and since any graph $H$, parameter for $\probNumBipHom{H}[p]$, is of finite size, every iterative application of the reduction $H \relArrow[\bip][p] H^\varphi$ has to terminate.

We recall that $\normreduced{H}$ denotes the order~$p$ reduced form of $H$. By slightly modifying this notation, we denote by $\bipreduced{H}$ a terminal graph of the relation {\!$\relArrow[\bip][p]$\!}. It remains to show that $\bipreduced{H}$ is unique up to $\bip$-isomorphisms. For this purpose, we adapt Corollary~\ref{cor:hom_matrix_nonsingular_mod_p} to bip-homomorphisms.

For a bip-graph $G$, the subgroup $\AutBip[G]$ induces a respective notion of orbits. Furthermore, for a positive integer $k$ and a tuple $\vector{v}\in \vertexset[G]^k$, the set $\OrbBip[\vector{v}]$ denotes the orbit of $\vector{v}$ under automorphisms in $\AutBip[G]$. Following our general notation, for two bip-graphs $G$ and $H$, we denote by $\InjBip[G, H]$ and $\SurjBip[G, H]$ the restriction of homomorphisms in $\HomBip[G, H]$ to be injective and surjective on edges and vertices, respectively. We denote by $\numAutBip[G]$, $\numSurjBip[G, H]$, and $\numInjBip[G, H]$ the cardinality of the set $\AutBip[G]$, $\SurjBip[G, H]$, and $\InjBip[G, H]$, respectively. Sticking to the same order and taking the cardinalities modulo a prime $p$ yields the notations $\numAutBip[G][p]$, $\numSurjBip[G, H][p]$, and $\numInjBip[G, H][p]$. We call $G$ \emph{order~$p$ bip-reduced} if there exists no automorphism of order~$p$ in $\AutBip[G]$. Analogue to unrestricted homomorphisms, the same notation is also used for graphs with $k$ distinguished vertices.

%This motivates the following definition used frequently in the paper.
%\begin{definition}\label{def:part_wise_reduced}
%	Let $p$ be a prime. We denote by $\redbipartites$ the class of all bip-graphs $H$ that are order~$p$ bip-reduced.
%\end{definition}
%
The notion that a family of graphs is closed under surjective homomorphic image translates naturally to the notation that a family of bip-graphs is closed under surjective bip-homomorphic image. We now state the analogue to Corollary~\ref{cor:hom_matrix_nonsingular_mod_p}. For the sake of completeness, we provide a proof, although the arguments do not differ significantly from the arguments for Corollary~\ref{cor:hom_matrix_nonsingular_mod_p}. 
\begin{lemma}\label{lem:hom_matrix_nonsingular_bip2}
	Let $k$ be a non-negative integer and $\family{F}$ be a family $\set{F_i}_{i \in I}$ with index set $I$, where $\family{F}$ consists of pairwise non-\isomorphic[bip] bip-graphs with $k$ distinguished vertices and without multi-edges. If $\family{F}$ is closed under surjective bip-homomorphic image, then the matrix
	\[
	M_{\numHomBip} = \sqBrackets[\big]{\numHomBip[F_i,F_j]}_{i,j \in I}
	\]
	is nonsingular.
\end{lemma}
\begin{proof}
	All bip-graphs are assumed to have $k$ distinguished vertices. Analogue to the case of unrestricted homomorphisms, we observe that every homomorphism $f$ in $\HomBip[F_i,F_j]$ factors into a surjective homomorphism $g$ in $\SurjBip[F_i, J]$, where $J$ is the sub-bip-graph of $F_j$ given by the image of $f$ and $g$ is given by $f$, followed by an injective homomorphism $h$ in $\InjBip[J, F_j]$ given by the embedding of $J$ into $F_j$. By assumption, $J$ is an element of $\family{F}$. In order to avoid double counting, we divide by the number of elements in $\AutBip[J]$ and obtain
	\[
		\numHomBip[F_i, F_j] = \sum_{J \in \family{F}} \frac{\numSurjBip[F_i, J] \cdot \numInjBip[J, F_j]} {\numAutBip[J]}.
	\]
	The remainder of the proof follows one-to-one the proof of Lemma~\ref{lem:hom_matrix_nonsingular_loops}.
\end{proof}

\begin{corollary}\label{cor:hom_matrix_nonsingular_bip}
	Let $p$ be a prime, $k$ be a non-negative integer, and $\family{F}$ be a family $\set{F_i}_{i \in I}$ with index set $I$, where $\family{F}$ consists of pairwise non-\isomorphic[bip] order~$p$ bip-reduced bip-graphs with $k$ distinguished vertices and without multi-edges. If $\family{F}$ is closed under surjective bip-homomorphic image, then the matrix
	\[
		M_{\numHomBip[][p]} = \sqBrackets[\big]{\numHomBip[F_i,F_j][p] }_{i,j \in I}
	\]
	is nonsingular.
\end{corollary}
\begin{proof}
	All bip-graphs are assumed to have $k$ distinguished vertices. The proof follows the argumentation for Corollary~\ref{cor:hom_matrix_nonsingular_mod_p}. By assumption, for every bip-graph $F$ in $\family{F}$, the value $\numAutBip[F][p]$ is unequal to $0$. Hence, the matrices $M_{\numInjBip[][p]}$ and $M_{\numSurjBip[][p]}$ given by
	\[
		M_{\numInjBip[][p]} = \sqBrackets[\big]{\numInjBip[F_i,F_j][p] }_{i,j \in I}
		\quad \text{and} \quad
		M_{\numSurjBip[][p]} = \sqBrackets[\big]{\numSurjBip[F_i,F_j][p] }_{i,j \in I}
	\]
	are have non-zero entries on their diagonal, which yields that they are triangular. The $\abs{I} \times \abs{I}$ diagonal matrix $D_{\numAutBip[][p]}$ with diagonal entries $\numAutBip[F][p]$, for $F$ $\family{F}$ is invertible as are $M_{\numInjBip[][p]}$ and $M_{\numSurjBip[][p]}$. We derive
	\[
		M_{\numHomBip[][p]} = M_{\numSurjBip[][p]} \cdot \parenthesis[\big]{ D_{\numAutBip[][p]} }^{-1} \cdot M_{\numInjBip[][p]}.
	\]
	Since $Z_p$ is a field, we obtain that $M_{\numHomBip[][p]}$ is non-singular.
\end{proof}

An immediate consequence is the desired uniqueness of the order~$p$ bip-reduced form up to bip-isomorphisms. Let $p$ be a prime. We denote by $\redbipartites$ the class of all bip-graphs that are order~$p$ bip-reduced.
\begin{corollary}\label{cor:p-wise_reduced_unique}
	Let $p$ be a prime and let the two bip-graphs $H$ and $H'$ be in $\redbipartites$, where we assume a fixed enumeration of $\redbipartites$. The two bip-graphs $H$ and $H'$ are \isomorphic[bip] if and only if the vector $\sqBrackets[]{\numHomBip[G, H][p]}_{G \in \redbipartites}$ is equal to the vector $\sqBrackets[]{\numHomBip[G, H'][p]}_{G \in \redbipartites}$.
\end{corollary}
\begin{proof}
	If $H$ and $H'$ are \isomorphic[bip], then the result holds. We assume toward contradiction that $H$ and $H'$ are not \isomorphic[bip] but $\sqBrackets[]{\numHomBip[G, H][p]}_{G \in \redbipartites}$ is equal to $\sqBrackets[]{\numHomBip[G, H'][p]}_{G \in \redbipartites}$. Let $\family{G}$ be the subclass of $\redbipartites$ containing all pairwise non-\isomorphic[bip] graphs. The fixed enumeration on $\redbipartites$ yields an index set $I$ for $\family{G}$. Let $\family{G} = \set{G_i}_{i \in I}$. We observe that $\family{G}$ satisfies the prerequisites of Corollary~\ref{cor:hom_matrix_nonsingular_bip}. Thus, the matrix $M_{\numHomBip}$ given by $M_{\numHomBip}=\sqBrackets[]{\numHomBip[G_i, G_j][p]}_{i,j \in I}$ is non-singular.
	The bip-graphs $H$ and $H'$ correspond to two distinct rows in $M_{\numHomBip}$. However, by assumption we have that the row-vector $\sqBrackets[]{\numHomBip[G_i, H][p]}_{i \in I}$ is equal to the row-vector $\sqBrackets[]{\numHomBip[G_i, H'][p]}_{i \in I}$. Therefore, the matrix is singular, a contradiction.
\end{proof}
In a nutshell, Corollary~\ref{cor:p-wise_reduced_unique} establishes that, for each order~$p$ bip-reduced bip-graph $G$, the \enquote{type of bip-isomorphism is determined by the vector $\sqBrackets[]{\numHomBip[F, G][p]}_{F \in \redbipartites}$}.

For any prime $p$, any bip-graph $H$, and any automorphism $\varrho \in \AutBip[H]$ of order~$p$, the reduction $H \relArrow[\bip][p] H^\varrho$ preserves the vector
$\sqBrackets[]{\numHomBip[F, H][p]}_{F \in \redbipartites}$, i.e. $\sqBrackets[]{\numHomBip[F, H][p]}_{F \in \redbipartites}$ is equal to $\sqBrackets[]{\numHomBip[F, H^\varphi][p]}_{F \in \redbipartites}$. Assuming a fixed enumeration of $\redbipartites$, any pair of bip-graphs $H'$ and $H''$ that are terminal graphs obtained from $H$ under iterative application of the reduction {$\relArrow[\bip][p]$\!} have an identical vector, that is $\sqBrackets[]{\numHomBip[F, H'][p]}_{F \in \redbipartites}$ is equal to $\sqBrackets[]{\numHomBip[F, H''][p]}_{F \in \redbipartites}$. By Corollary~\ref{cor:p-wise_reduced_unique}, the bip-graphs $H$ and $H'$ have to be \isomorphic[bip].
\begin{observation}\label{obs:p-wise_reduced_unique}
	Let $p$ be a prime and $H$ be a bip-graph. The order~$p$ $\bip$-reduced form $\bipreduced{H}$ is unique up to $\bip$-isomorphisms. Further, for every bip-graph $G$, the value $\numHomBip[G,H][p]$ is equal to the value $\numHomBip[G,\bipreduced{H}][p]$.
\end{observation}

Before the solution of the second task, we note that the uniqueness of $\bipreduced{H}$ will be crucial for iterative reduction techniques applied for the study of $\probNumBipHom{H}[p]$ in the later sections of this paper.

\paragraph{Pinning}
Since a bip-homomorphism is a restricted homomorphism between bipartite graphs, the notion of a partially labelled bip-graph is naturally derived from the corresponding notion for graphs. In particular, for a positive integer $k$, two bip-graphs $G$ and $H$, and a partial mapping $\tau: \vertexset[G]^k \to \vertexset[H]^k$, we denote by $\HomBip[(G, \tau), H]$ the intersection $\Hom[(G, \tau), H] \cap \HomBip[G, H]$. In the usual way, we arrive at the notions $\numHomBip[(G, \tau), H]$ and $\numHomBip[(G, \tau), H][p]$.

Establishing pinning for the problem $\probNumBipHom{H}[p]$
translates to showing a polynomial-time Turing interreducibility with the problem $\probNumPartLabBipHom{H}[p]$, where the latter is defined in the following. 
\prob%
{$\probNumPartLabBipHom{H}$.}
{Positive integer $k$ and bip-graph $H$.}
{Bip-graph $G$ and partial mapping $\tau \colon \vertexset[G]^k \to \vertexset[H]^k$.}
{$\numHomBip[(G, \tau), H]$.}

The associated modular problem is then the following.
\prob%
{$\probNumPartLabBipHom{H}[p]$.}
{Prime $p$, positive integer $k$ and bip-graph $H$.}
{Bip-graph $G$ and partial mapping $\tau \colon \vertexset[G]^k \to \vertexset[H]^k$.}
{$\numHomBip[(G, \tau), H][p]$.}

Analogue to the previous subsection, we employ quantum graphs for the reduction. A quantum graph $\quantum{G}$ is bipartite if each constituent $G$ of $\quantum{G}$ is bipartite. Hence, $\quantum{G}$ is a \emph{quantum bip-graph} if every constituent of $\quantum{G}$ is a bip-graph. Given two quantum bip-graphs $\quantum{G}$ and $\quantum{H}$ with $k$ distinguished vertices given by $\quantum{H}=\sum_{H \in \family{H}} \alpha_H \cdot H$ and $\quantum{G}= \sum_{G \in \family{G}} \beta_G \cdot G$, we denote by $\numHomBip[\quantum{G}, \quantum{H}]$ the value of the linear combination
$
	\sum_{G \in \family{G}} \sum_{H \in \family{H}} \beta_G \cdot \alpha_H \cdot \numHomBip[G, H]
$
and by $\numHomBip[\quantum{G}, \quantum{H}][p]$ the value $\numHomBip[\quantum{G}, \quantum{H}] \pmod p$. 

We follow a similar path as in Section~\ref{subsec:pinning_quantum_graphs}.
Since $\AutBip[G]$ is a subgroup of $\Aut[G]$, the following is an immediate consequence of Lemma~\ref{lem:tuple-orbit}.
\begin{corollary}\label{cor:tuple-orbit_bip}
	Let $p$ be a prime, $H$ be a bip-graph, $k$ be a positive integer, and $\varrho$ be an automorphism in $\AutBip[H]$. Further, let there exist a tuple $\vector{y}$ in $\vertexset[H]^k$ with cardinality $\abs{\Orb[\vector{y}][\varrho]}$ denoted by $r$. If $r$ is larger than $1$ and there exists $j$ in $\sqBrackets{r-1}$ such that $\varrho^{p \cdot j}(\vector{y})$ is equal to $\vector{y}$, then $\AutBip[H]$ contains an element of order $p$.
\end{corollary}

The class of bipartite graphs is closed under both the dot product and the tensor product. This is still true when restricting the graphs to come with a fixed bipartition as the product graphs inherit the bipartition of their factor graphs.
This gives the following consequence of Corollary~\ref{cor:dot_product}.
\begin{corollary}
	\label{cor:dot_product_bip}
	Let $k$ be a non-negative integer. Given three quantum bip-graphs with $k$ distinguished vertices $\quantum{F}$, $\quantum{G}$, and $\quantum{H}$, it follows
	\begin{align*}
		\numHomBip[\quantum{F} \odot \quantum{G}, \quantum{H}] &= \numHomBip[\quantum{F}, \quantum{H}] \cdot \numHomBip[\quantum{G}, \quantum{H}], \\
		\numHomBip[\quantum{G}, \quantum{F} \otimes \quantum{H}] &= \numHomBip[\quantum{G}, \quantum{F}] \cdot \numHomBip[\quantum{G}, \quantum{F}].
	\end{align*}
\end{corollary}

The notion analogue to an $\family{H}$-implementable vector is the following.
\begin{definition}\label{def:bip_labelled_implementable}
	Let $p$ be a prime, $k$ be a non-negative integer, and $\family{H}$ be a set $\set{H_1, \dots, H_r}$ of pairwise non-\isomorphic[bip] order~$p$ bip-reduced bip-graphs with $k$ distinguished vertices. Let $\vector{v}$ be a $1 \times r$  vector $\sqBrackets[]{v_1, \dots, v_r}$ with entries in $\Z_p$. We call $\vector{v}$ \emph{$\family{H}$-$\bip$-implementable} if there exists a quantum bip-graph $\quantum{F}$ with $k$ distinguished vertices and number of constituents depending only on $k$ and $\max_{H \in \family{H}} \abs{\vertexset[H]}$, such that for all, $i \in \sqBrackets{r}$,
	\[
	v_i = \numHomBip[\quantum{F}, H_i][p] . 
	\]
	In this case, we say that $\quantum{F}$ \emph{bip-implements $\vector{v}$}.
\end{definition}

We recall Lemma~\ref{lem:everything_implementable}, which states for unrestricted homomorphisms that every vector is $\family{H}$-implementable. 
The key result used in the proof of Lemma~\ref{lem:everything_implementable} is the existence of a solution to a system of linear equations provided by Corollary~\ref{cor:hom_matrix_nonsingular_mod_p}. With Corollary~\ref{cor:hom_matrix_nonsingular_bip} at hand, the same argumentation also works for bip-homomorphisms.
\begin{corollary}\label{cor:everything_part-wise_implementable}
	Let $p$ be a prime, $k$ be a non-negative integer, and $\family{H}$ be a set $\set{H_1, \dots, H_r}$ of pairwise non-\isomorphic[bip] bip-graphs with $k$ distinguished vertices. If every bip-graph in $\family{H}$ is order~$p$ bip-reduced, then every vector in $(\Z_p)^r$ is $\family{H}$-$\bip$-implementable.
\end{corollary}
\begin{proof}
	We provide a less-detailed proof because we follow the line of arguments utilized for Lemma~\ref{lem:everything_implementable}. All bip-graphs are assumed to have $k$ distinguished vertices. Let $c$ be the maximal number of vertices of a bip-graph in $\family{H}$. Every graph in $\family{H}$ is equal to a bip-graph with $k$ distinguished vertices obtained from a bipartite subgraph of the complete reflexive graph $K_c^\circ$ by fixing a bipartition and distinguishing $k$ vertices. Let $\family{F}$ be the set of pairwise non-\isomorphic[bip] bip-graphs obtained in this manner from $K_c^\circ$. We restrict $\family{F}$ to only the entries that are order~$p$ $\bip$-reduced. Then, $\family{F}$ satisfies the prerequisites of Corollary~\ref{cor:hom_matrix_nonsingular_bip}.
	
	Let $\family{F}$ be of cardinality $r'$, $\vector{v}'$ be an arbitrary vector in $(\Z_p)^{r'}$, and $M_{\numHomBip[][p]}$ be the matrix with entries $\sqBrackets[]{\numHomBip[F, F'][p]}_{F,F' \in \family{F}}$. By Corollary~\ref{cor:hom_matrix_nonsingular_bip}, the matrix $M_{\numHomBip[][p]}$ is non-singular. Thus, the system of linear equations in $\Z_p$ given by
	\[
		\vector{v}' \equiv \vector{x} \cdot M_{\numHomBip[][p]} \pmod p
	\]
	has a unique solution $\vector{x}^\ast$; a vector in $(\Z_p)^{r'}$ with indices represented by the graphs in $\family{F}$. The subvector $\vector{v}$ that is obtained from $\vector{v}'$ by restricting the indices to elements in $\family{H}$ yields a quantum graph $\quantum{G}$ that $\family{H}$-$\bip$-implements $\vector{v}$ in just the same way as in the proof of Lemma~\ref{lem:everything_implementable}. Every vector in $(\Z_p)^r$ can be extended to a vector in $(\Z_p)^{r'}$. Finally, $r'$ depends only on $k$ and $c$, which concludes the proof.
\end{proof}

Now, pinning is proven in the same way as we did for unrestricted homomorphisms in Section~\ref{subsec:quantum_graph_part_lab_homs}.
\begin{lemma}\label{lem:pinning_bip}
	Let $p$ be a prime and $H$ be a bip-graph. If $H$ is is order~$p$ $\bip$-reduced, then $\probNumPartLabBipHom{H}[p]$ and $\probNumBipHom{H}[p]$ are interreducible under polynomial-time Turing reductions.
\end{lemma}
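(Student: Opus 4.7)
The plan is to mirror the proof of Lemma~\ref{lem:homs_and_partlab_homs_equivalent} step by step, substituting each piece of general machinery with its partition-wise analog that has already been developed in this subsection: Corollary~\ref{cor:tuple-orbit_bip} replaces Lemma~\ref{lem:tuple-orbit}, and Corollary~\ref{cor:hom_labelled_matrix_nonsingular_bip} replaces Corollary~\ref{cor:hom_labelled_matrix_nonsingular_mod_p}. Throughout, every disjoint-union / vertex-identification construction must preserve a consistent side assignment so that the resulting graph is genuinely bipartite and the counted homomorphisms remain in $\mathrm{Hom}^{\mathrm{bip}}$.

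For the direction $\numBipHomsTo{H}[p]\leq_T\numBipPartLabHomsTo{H}[p]$, I would fix an integer $r\geq 1$ and, for an input $G=\bipG$, pick an arbitrary $r$-tuple $\bar{u}\in V(G)^r$ with each coordinate's side declared. Any $\sigma\in\HomId[G][H]$ sends $\bar{u}$ to some side-respecting tuple in $V(H)^r$; enumerating representatives $\bar{w}_1,\ldots,\bar{w}_\mu$ of the partition-wise orbit equivalence classes of such tuples (i.e., under the action of $\AutId(H)$) gives
\[
\numHomId[G][H]=\sum_{i\in[\mu]}|\OrbId(\bar{w}_i)|\cdot\numHomId[(G,\bar{u})][(H,\bar{w}_i)].
\]
Because $H$ is partition-wise order~$p$ reduced, Corollary~\ref{cor:tuple-orbit_bip} forces $|\OrbId(\bar{w}_i)|\not\equiv 0\pmod p$ for every $i$, so the congruence obtained by reducing mod $p$ has all nonzero coefficients and recovers $\numHomId[G][H][p]$ from $\mu$ oracle calls (with $\mu$ depending only on the fixed $H$).

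For the converse direction $\numBipPartLabHomsTo{H}[p]\leq_T\numBipHomsTo{H}[p]$, given an instance $((G,\bar{u}),(H,\bar{w}))$ I would argue that the same identity \emph{read in the input coordinate} shows that $\numHomId[G'][H][p]$ for \emph{any} bipartite $G'$ is a fixed $\Z_p$-linear combination of entries $\numHomId[(G',\bar{u}')][(H,\bar{w}_i)][p]$ across the representatives $\bar{w}_i$; so it suffices to isolate the $i$-th coordinate for the unique $i$ with $(H,\bar{w})\cong_{\mathrm{bip}}(H,\bar{w}_i)$. I construct the required quantum graph $\bar{F}_i$ by replaying the argument of Lemma~\ref{lem:everything_implementable} in the partition-wise setting: take the family $\mathcal{F}$ of all partition-wise order~$p$ reduced bipartite subgraphs of $K_{n,n}$ with $n=|V(H)|$, each equipped with every side-respecting $r$-tuple of distinguished vertices. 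This $\mathcal{F}$ is closed under surjective homomorphic image, so Corollary~\ref{cor:hom_labelled_matrix_nonsingular_bip} makes the matrix $[\numHomId[F][F'][p]]_{F,F'\in\mathcal{F}}$ non-singular over $\Z_p$; hence every vector in $(\Z_p)^{|\mathcal{F}|}$ is partition-wise $\mathcal{F}$-implementable in the sense of Definition~\ref{def:bip_labelled_implementable}, including the standard basis vector corresponding to $(H,\bar{w}_i)$. Let $\bar{F}_i$ have $p$-constituents $\{(F_j,\bar{u}_j)\}_{j\in[t]}$ with coefficients $\{\beta_j\}$. For each $j$, form $(G_j,\bar{u})$ by taking the disjoint union of $(G,\bar{u})$ and $(F_j,\bar{u}_j)$ and identifying the coordinates of $\bar{u}$ with those of $\bar{u}_j$; the side assignments match by construction, so $G_j$ is bipartite and
\[
\numHomId[(G_j,\bar{u})][(H,\bar{w}')] = \numHomId[(G,\bar{u})][(H,\bar{w}')]\cdot\numHomId[(F_j,\bar{u}_j)][(H,\bar{w}')]
\]
for every admissible $\bar{w}'$. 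Weighting this identity by $\beta_j$ and $|\OrbId(\bar{w}_s)|$, summing over $j$ and $s$, and using both the implementation property of $\bar{F}_i$ and the decomposition from the first direction applied to $G_j$, yields
\[
|\OrbId(\bar{w})|\cdot\numHomId[(G,\bar{u})][(H,\bar{w})][p] \equiv \sum_{j\in[t]}\beta_j\cdot\numHomId[G_j][H][p]\pmod p,
\]
which is computable by $t$ calls to $\numBipHomsTo{H}[p]$. Since $|\OrbId(\bar{w})|\not\equiv 0\pmod p$ by Corollary~\ref{cor:tuple-orbit_bip}, dividing in $\Z_p$ finishes the reduction.

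The only genuine obstacle compared with the unrestricted case is the bookkeeping for the bipartition: one has to make sure that the enumeration of tuples $\bar{w}_i$, the family $\mathcal{F}$, and the disjoint-union-with-identification construction all respect a single consistent left/right assignment, so that every homomorphism being counted indeed lies in $\HomId[\cdot][\cdot]$ rather than merely in $\Hom[\cdot][\cdot]$. Once this side-tracking is fixed, every remaining step is a direct transcription of the unrestricted argument, because the partition-wise analogs of the algebraic and combinatorial tools (non-singularity of the homomorphism matrix, non-triviality of orbit sizes, existence of implementing quantum graphs) have already been established above.
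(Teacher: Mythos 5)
Your proof is correct and mirrors the paper's argument essentially step for step: the first direction via orbit-representative decomposition and Corollary~\ref{cor:tuple-orbit_bip}, and the second direction via an implementing quantum graph over the family of partition-wise order~$p$ reduced subgraphs of $K_{n,n}$ together with the identify-on-$\bar{u}$ construction and Corollary~\ref{cor:hom_labelled_matrix_nonsingular_bip}. One small point in your favour: you correctly carry the coefficients $\beta_j$ of the implementing quantum graph $\bar{F}_i$ through to the final congruence
\[
|\OrbId(\bar{w})|\cdot\numHomId[(G,\bar{u})][(H,\bar{w})][p] \equiv \sum_{j\in[t]}\beta_j\cdot\numHomId[G_j][H][p]\pmod p,
\]
whereas the paper's displayed congruences omit the $\beta_j$ weights, which are needed for the ``swap summations and apply the implementation property'' step to go through (the constituents $(F_j,\bar{u}_j)$ of $\bar{F}_i$ generally carry nonunit coefficients from the linear-system solution in the analogue of Lemma~\ref{lem:everything_implementable}). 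This is evidently just an omission in the paper's exposition rather than a substantive gap, and your version is the one that makes the arithmetic close cleanly.
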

\begin{proof}
	Let $k$ be the number of distinguished vertices, parameter for $\probNumPartLabBipHom{H}[p]$. We recall from the proof of Lemma~\ref{lem:homs_and_partlab_homs_equivalent} that, for every pair of graphs $\hat{G}$ and $\hat{H}$ and every tuple $\vector{x}$ in $\vertexset[\hat{G}]^k$,
	\[
	\numHom[\hat{G},\hat{H}] = \sum_{i \in \sqBrackets{\kappa}} \abs[\big]{\Orb[\vector{w}_i]} \cdot \numHom[(\hat{G}, \vector{x}), (\hat{H}, \vector{w}_i)],
	\]
	where $\vector{w}_1, \dots, \vector{w}_\kappa$ are representatives of the orbits in $\vertexset[\hat{H}]^k$. The same holds also when we restrict to bip-graphs as we briefly argue.
	
	The set of tuples $\vertexset[H]^k$ is partitioned by the equivalence relation \enquote{$\eqrel$} defined as $\vector{v} \eqrel \vector{v}'$ if and only if $(H, \vector{v})$ and $(H, \vector{v}')$ are \isomorphic[bip]. We note that the tuples equivalent to $\vector{v}$ are the elements in the orbit $\OrbBip[\vector{v}]$. Let $\vector{v}_1, \dots, \vector{v}_\nu$ be representatives of bip-orbits in $\vertexset[H]^k$ and $\mathbf{v}$ be an enumeration of $\vertexset[H]^k$, which yields 
	\begin{equation}\label{eq:decomp_of_tuples_bip}
		\mathbf{v}=\bigcup_{i\in\sqBrackets{\nu}} \eqclass{\vector{v}_i} \quad \text{with} \quad \abs[\big]{\eqclass{\vector{v}_i} }= \abs[\big]{\OrbBip[\vector{v}_i]}.
	\end{equation}
	Let $G$ be an arbitrary bip-graph and let $\vector{x}$ be an arbitrary tuple in $\vertexset[G]^k$. Any homomorphism $f$ in $\HomBip[G, H]$ has to map $\vector{x}$ to a $k$-tuple in $\vertexset[H]^k$, i.e. to an element in $\mathbf{v}$. Thus, we derive by the decomposition~\eqref{eq:decomp_of_tuples_bip} of $\mathbf{v}$
	\begin{align}
		\nonumber
		\numHomBip[G, H] &= \sum_{i \in \sqBrackets{\nu}} \abs[\big]{\OrbBip[\vector{v}_i]} \cdot  \numHomBip[(G,\vector{x}), (H, \vector{v}_i)] .
		\intertext{%
			By Corollary~\ref{cor:tuple-orbit_bip}, it follows that, for all $i\in\sqBrackets{\nu}$, the cardinality $\abs{\OrbBip[\vector{v}_i]}$ is not congruent modulo $p$ to $0$. By the decomposition%
		}
		\label{eq:part_lab_homs_are_quantum_graphs_bip}
		\numHomBip[G, H][p] &\equiv \sum_{i \in \sqBrackets{\nu}} \abs[\big]{\OrbBip[\vector{v}_i]} \cdot  \numHomBip[(G,\vector{x}), (H, \vector{v}_i)][p] \pmod p
	\end{align}
	it suffices to show that $\probNumPartLabBipHom{H}[p]$ reduces to $\probNumBipHom{H}[p]$.
	
	The remainder of the proof is almost identical to the proof of Lemma~\ref{lem:homs_and_partlab_homs_equivalent}. Let $(G, \tau)$ be the input of an instance of $\probNumPartLabBipHom{H}[p]$, where $G$ is a bip-graph and $\tau \colon \vertexset[G]^k \to \vertexset[H]^k$ is a partial mapping. Further, let $\vector{x}$ be the domain of $\tau$ and $\vector{v}$ be the range of $\tau$. There has to exist a representative $\vector{v}_i$ such that $\vector{v}$ is in the bip-orbit $\OrbBip[\vector{v}_i]$. Thus, we are equivalently tasked with computing $\numHomBip[(G, \vector{x}), (H, \vector{v}_i)][p]$.
	
	The equivalence~$\eqref{eq:part_lab_homs_are_quantum_graphs_bip}$ yields the quantum bip-graph $\quantum{H}$ given by $\quantum{H} = \sum_{i \in \sqBrackets{\nu}} \abs{\OrbBip[\vector{v}_i]} \cdot (H, \vector{v}_i)$. The consituents $\family{H}$ are the set $\set{(H, \vector{v}_i)}_{i \in \sqBrackets{\mu}}$ of order~$p$ bip-reduced bip-graphs with $k$ distinguished vertices. Let $\quantum{F}$ be the quantum bip-graph that $\family{H}$-bip-implements the $i$-th standard basis vector that is given by Corollary~\ref{cor:everything_part-wise_implementable}. By definition, the number of constituents in $\quantum{F}$ depends only on $k$ and $\max_{H \in \family{H}} \abs{\vertexset{H}}$. Let this number be $\mu$ and let $\quantum{F}$ be $\sum_{j \in \sqBrackets{\mu}} \beta_j \cdot (F_j, \vector{y}_j)$.
	
	On one hand, we have by Corollary~\ref{cor:dot_product_bip}
	\begin{align*}
		\numHomBip[(G, \vector{x}) \odot \quantum{F}, \quantum{H}][p] & = \abs[\big]{\OrbBip[\vector{v}_i]} \cdot \numHomBip[(G, \vector{x}), (H, \vector{v}_i)][p].
	\intertext{%
		On the other, for each $j \in \sqBrackets{\mu}$, let $(G_j, \vector{x}_j)$ be the graph $(G, \vector{x}) \odot (F_j, \vector{y}_j)$. We have by \eqref{eq:part_lab_homs_are_quantum_graphs_bip}%
	}
		\numHomBip[(G, \vector{x}) \odot \quantum{F}, \quantum{H}][p] &= \sum_{i \in \sqBrackets{\nu}} \sum_{j \in \sqBrackets{\mu}} \abs[\big]{\OrbBip[\vector{v}_i]} \cdot \beta_j \cdot \numHomBip[(G, \vector{x}) \odot (F_j, \vector{y}_j), (H, \vector{v}_i)][p] \\
		&= \sum_{j \in \sqBrackets{\mu}} \beta_j \cdot \numHomBip[G_j, H][p].
	\end{align*}
	Computing $\abs{\OrbBip[\bar{w}_i]}$ can be done in constant time because $H$ is fixed and $\abs{\OrbBip[\vector{v}_i]}$ is not congruent modulo $p$ to $0$. Therefore, we can compute $\numHomBip[(G, \vector{x}), (H, \vector{v}_i)][p]$ using $\mu$ calls of an oracle to $\probNumBipHom{H}[p]$.
\end{proof}

\paragraph{Restricted Quantum Homomorphisms}
We state the following immediate results as they might be of interest in future studies. For the purposes of this paper, the reader can safely skip it.

Of course, bip-homomorphisms allow also for an extension of $\probNumBipHom{H}$ to quantum graphs.
\prob%
{$\probNumBipHom{\quantum{H}}$.}
{Non-negative integer $k$ and quantum bip-graph $\quantum{H}$ with $k$ distinguished vertices and coefficients in $\C$.}
{Bip-graph $G$ with $k$ distinguished vertices.}
{$\numHomBip[G, \quantum{H}]$.}

Restricting the coefficients in a quantum graph $\quantum{H}$ to be integers we can also state the modular counting variant.
\prob%
{$\probNumBipHom{\quantum{H}}[p]$.}
{Prime $p$, non-negative integer $k$, and quantum bip-graph $\quantum{H}$ with integer coefficients and $k$ distinguished vertices.}
{Bip-graph $G$ with $k$ distinguished vertices.}
{$\numHomBip[G, \quantum{H}][p]$.}

A quantum bip-graph is order~$p$ $\bip$-reduced if its constituents are order~$p$ bip-reduced and have coefficients in $\Zsp$.
Following our insights we know that the techniques employed to obtain the reductions for quantum homomorphisms in Theorem~\ref{thm:homs_to_quantum_graph_mod_p} and Theorem~\ref{thm:quantum_homs_non-modular} carry over to quantum homomorphisms counted by $\probNumBipHom{\quantum{H}}[p]$ and $\probNumBipHom{\quantum{H}}$.

For the non-modular version we have.
\begin{theorem}
	Let $k$ be a non-negative integer and $\quantum{H}$ be a quantum bip-graph with $k$ distinguished vertices and coefficients in $\Q$.
	\begin{itemize}
		\item If there exists a constituent $H$ of $\quantum{H}$ such that the problem $\probNumBipHom{H}$ is $\classNumP$-hard, then $\probNumBipHom{\quantum{H}}$ is $\classNumP$-hard; 
		\item If for every constituent $H$ of $\quantum{H}$, the problem $\probNumBipHom{H}$ is solvable in polynomial time, then $\probNumBipHom{\quantum{H}}$ is also solvable in polynomial time.
	\end{itemize}
\end{theorem}

Finally, for the modular version we have.
\begin{theorem}
	Let $p$ be a prime, $k$ be a non-negative integer, and $\quantum{H}$ be an order~$p$ bip-reduced quantum graph with $k$ distinguished vertices.
	\begin{itemize}
		\item If there exists a constituent $H$ of $\quantum{H}$ such that the problem $\probNumBipHom{H}[p]$ is $\classNumP[p]$-hard, then $\probNumBipHom{\quantum{H}}[p]$ is $\classNumP[p]$-hard; 
		\item If for every constituent $H$ of $\quantum{H}$, the problem $\probNumBipHom{H}[p]$ is solvable in polynomial time, then $\probNumBipHom{\quantum{H}}[p]$ is also solvable in polynomial time.
	\end{itemize}
\end{theorem} 

\section{Graph Bipartization and Bipartite Homomorphisms}
\label{sec:bipartization}
Let $H$ be an arbitrary graph, where we explicitly highlight the possibility of loops. Instead of studying the problem $\probNumHom{H}[p]$ directly on $H$, we  study the problem $\probNumBipHom{H'}[p]$, where $H'$ is the bip-graph obtained from $H\otimes K_2$ with an intrinsic bipartition obtained from a fixed bipartition of $K_2$. In particular, if $M_H$ is the $n\times n$ adjacency matrix of $H$, then the adjacency matrix~$M_{H'}$ of $H \times K_2$ is a $2n\times 2n$ matrix of the following form:
\[
	\begin{BMAT}{cc}{cc}
		&
		\begin{BMAT}(b){cc}{c}
 			\lpart[H'] & \rpart[H']
		\end{BMAT}
		\\
		\begin{BMAT}(b){c}{cc}
			\lpart[H'] \\ \rpart[H']
		\end{BMAT}
		&
		\left(\begin{BMAT}(b){cc}{cc}
 			\mathbf{0} & M_H\\  M_H & \mathbf{0}
		\end{BMAT}\right) ,
	\end{BMAT}
\]
where $\mathbf{0}$ denotes the $n\times n$ matrix containing only entries equal to $0$. From the structure of $M_{H'}$, we deduce that $H'$ is a bip-graph, where $\lpart[H']$ and $\rpart[H']$ indicate the bipartition. This bipartition is given by the projection $q: H \otimes K_2 \to K_2$, given by $q(v,w) = w$, and a fixed bipartition of $K_2$. In particular, the projection $q$ maps $\lpart[H']$ to $\lpart[K_2]$ and $\rpart[H']$ to $\rpart[K_2]$. Further, the graph $H'$ is a collection of complete bipartite graphs if and only if every connected component of $H$ is complete bipartite or reflexive complete.  %\GL{Talk about Weichel's Theorem here?}
We always assume a fixed bipartition of $K_2$ and that the fixed bipartition of $H'$ is $(\lpart[H'], \rpart[H'])$ as given above. 

In order to utilize the pinning results in Section~\ref{subsec:restricted_homs}, we need the parameter bip-graph $H$ for $\probNumBipHom{H}[p]$ to be order~$p$ bip-reduced. This is a weaker restriction than the one for the study of $\probNumHom{H}[p]$, where the graph $H$ is assumed to be order~$p$ reduced.

\begin{lemma}\label{lem:automorphism_bipartisation}
Let $p$ be a prime, let $H$ be an arbitrary graph, and let $H'$ be the bip-graph obtained from $H\otimes K_2$ by fixing a bipartition of $K_2$. If $H$ is order~$p$ reduced, then $H'$ is order~$p$ bip-reduced.
\end{lemma}
\begin{proof}
	We assume toward contradiction that there is an automorphism $\varrho$ in $\AutBip[H']$ of order~$p$. Since $\varrho$ preserves the order of the bipartition, the permutation matrix $P_\varrho$ of $\varrho$ has the following form
	\[
		\begin{BMAT}{cc}{cc}
			&
			\begin{BMAT}(b){cc}{c}
				\lpart[H'] & \rpart[H']
			\end{BMAT}
			\\
			\begin{BMAT}(b){c}{cc}
				\lpart[H'] \\ \rpart[H'] 
			\end{BMAT}
			&
			\left(\begin{BMAT}(b){cc}{cc}
				~P~ & ~\mathbf{0} ~\\  ~\mathbf{0}~ & ~Q ~
			\end{BMAT}\right) .
		\end{BMAT}
	\]
	Let $M_{H'}$ be the adjacency matrix of $H'$. We recall the form of $M_{H'}$ given as a $2\times2$ block matrix consisting of $\mathbf{0}$ and $M_H$, where $M_H$ is the adjacency matrix of $H$. Since $\varrho$ is an automorphism of order~$p$, we have that $(P_\varrho)^p \cdot M_{H'}$ is equal to $M_{H'}$, that is
	\[
		\left(\begin{array}{cc}
			P & \mathbf{0}\\
			\mathbf{0} & Q
		\end{array}\right)^p \cdot
		\left(\begin{array}{cc}
			\mathbf{0} & M_H\\
			M_H & \mathbf{0}
		\end{array}\right)
		=
		\left(\begin{array}{cc}
			\mathbf{0} & P^p \cdot M_H\\
			Q^p \cdot M_H & \mathbf{0}
		\end{array}\right).
	\]
	The latter implies that $P^p \cdot M_H$ is equal to $M_H$ because $\varrho$ is of order~$p$. Since $P$ is also a permutation matrix, this contradicts the assumption that $H$ has no automorphism of order~$p$.
\end{proof}

We note that by Lemma~\ref{lem:automorphism_bipartisation} and the reduction of Faben and Jerrum~\cite{Faben:15:Parity_Graph_Homs} the initial parameter graph for $\probNumBipHom{H}[p]$ can be assumed to be order~$p$ bip-reduced even without Observation~\ref{obs:p-wise_reduced_unique}.
Observation~\ref{obs:p-wise_reduced_unique} is specifically needed later for an iterative reduction argument on $\probNumBipHom{H}[p]$, where we might lose the property of having an order~$p$ bip-reduced parameter graph already after the first reduction.

Lemma~\ref{lem:automorphism_bipartisation} is of particular interest when $p$ is $2$. In this case, the bip-graph $H'$ given by $H \otimes K_2$ has an involution exchanging the partitions of the graph. However, the lemma shows that there is no involution in $\AutBip[H']$. This enables us to analyse the complexity of $\probNumHom{H}[2]$ by analysing the complexity of $\probNumBipHom{H'}[2]$. In contrast, the problem $\probNumHom{H'}[2]$ yields as output a $0$ for every input. This would not have been an issue for any odd prime $p$, but we present a unified framework in our technique rather than distinguishing the two cases. To this end, we show the following, where we recall that a bip-graph is also a graph by essentially \enquote{forgetting} the fixed bipartition.

\begin{lemma}\label{lem:bipartite_homs_partitioning}
	Let $H$ be a graph and let $H'$ be the bip-graph obtained from $H\otimes K_2$ by fixing a bipartition of $K_2$. If $G$ is a connected bip-graph, then $\numHom[G, H]$ is equal to $\numHomBip[G, H']$.
\end{lemma}
\begin{proof}

Let $f'$ be a bip-homomorphism in $\numHomBip[G, H']$. For every vertex $v$ of $G$, the image $f'(v)$ is given by the image $f(v)$ of $v$ in $H$ and by the image $g(v)$ of $v$ in $K_2$. Hence, the image $f'(v)$ is equal to the pair of images $(f(v), g(v))$ and, for the functions $f \colon \vertexset[G] \to \vertexset[H]$ and $g \colon \vertexset[G] \to \vertexset[K_2]$ given in this way, $f$ is equal to $f \times g$. Since $f'$ is a homomorphism, it follows that both $f$ and $g$ are also homomorphisms. Therefore, $f$ is in $\numHomBip[G, H]$. Further, $g$ is the unique bip-homomorphism in $\numHomBip[G, K_2]$ and the homomorphism $f$ is uniquely given by $f'$. Conversely, every homomorphism $f$ in $\numHomBip[G, H]$ yields a unique bip-homomorphism $f'$ in $\numHomBip[G, H']$ by $f' = f \times g$.
The lemma follows.
\end{proof}

We recall equation \eqref{eq:homs_tensor_product_target} from Section~\ref{sec:quantum_graphs} regarding the tensor product and Corollary~\ref{cor:dot_product_bip} with its equation regarding the dot product. Additionally, we recall that the dot product of two graphs without distinguished vertices is the disjoint union of these two graphs.

We now prove the following main result, where we recall that by Faben~\cite[Theorem~3.1.17.]{Faben:12:thesis:Complexity_Modular_Counting_CSP}, for any positive integer $k$, any parsimonious reduction is parsimonious modulo $k$.
\Bipartization*
\begin{proof}
	Let $G$ be a bip-graph and input for $\probNumBipHom{H'}[p]$. Further, let $G$ contain $r$ connected components $G_1,\dots,G_r$, all of these are bip-graphs. By Corollary~\ref{cor:dot_product_bip}, we obtain
	\begin{align*}
		\numHomBip[G, H'] &= \prod_{i\in\sqBrackets{r}}\numHomBip[G_i, H'] .
		\intertext{Since every $G_i$ is connected, Lemma~\ref{lem:bipartite_homs_partitioning} yields}
		&=\prod_{i\in\sqBrackets{r}}\numHom[G_i, H].
	\end{align*}
	The latter is equal to $\numHom[G, H]$ due to \eqref{eq:homs_disjoint_union_origin}. The theorem follows.
\end{proof}

Faben and Jerrum~\cite[Theorem~8.6.]{Faben:15:Parity_Graph_Homs} show that, in order to obtain hardness for $\probNumHom{H}[2]$, it suffices to show that the problem is hard for any connected component of $H$. We note that their argumentation also applies to our setting.

\begin{lemma}\label{lem:bip_components}
	Let $p$ be a prime and let $H$ be an order~$p$ bip-reduced bip-graph. If the bip-graph $H_1$ is a connected component of $H$, then $\probNumBipHom{H_1}[p]$ reduces to $\probNumBipHom{H}[p]$ under polynomial-time Turing reduction.
\end{lemma}
\begin{proof}
	We show that $\probNumBipHom{H_1}[p]$ reduces to $\probNumPartLabBipHom{H}[p]$ via polynomial-time Turing reduction, then the proof follows from Lemma~\ref{lem:pinning_bip}. Let the bip-graph $G$ be the input for $\probNumBipHom{H_1}[p]$ and let $G$ consist of $r$ connected components $G_1,\dots,G_r$. We note that for each $i\in\sqBrackets{r}$, the connected component $G_i$ is an order~$p$ bip-reduced bip-graph. For each connected component $G_i$, we choose a vertex $x_i$ of $G_i$ in the left part $\lpart[G]$, and derive
	\begin{align*}
		\numHomBip[G, H_1]&= \prod_{i\in\sqBrackets{r}}\numHomBip[G_i, H_1] \\
		 &=\prod_{i\in\sqBrackets{r}}\sum_{v\in \lpart[H_1]}\numHomBip[(G_i,x_i), (H_1,v)],
	\end{align*}
	where the first equality follows from Corollary~\ref{cor:dot_product_bip} and the second equality comes from the fact that the vertex $x_i$ must be mapped to some vertex in $\lpart[H_1]$. Since $G_i$ and $H_1$ are connected, we observe that for each vertex $v$ in the left part $\lpart[H_1]$, the set of bip-homomorphisms $\HomBip[(G_i,x_i), (H,v)]$ is equal to $\HomBip[(G_i,x_i), (H_1,v)]$. By the second equality, this implies that we can compute $\numHomBip[G_i, H_1]$ with at most $\abs{\lpart[H_1]}$ oracle calls to $\probNumPartLabBipHom{H}[p]$. There are at most $\abs{\vertexset[G]}$ connected components in $G$. Therefore, we can compute $\numHomBip[G, H_1]$ with at most $\abs{\vertexset[G]} \cdot \abs{\lpart[H_1]}$ oracle calls to $\probNumPartLabBipHom{H}[p]$.
\end{proof}

We collect the gained insights.
For studying the complexity of $\probNumHom{H}[p]$ for a graph $H$, we bipartize $H$ by constructing the bip-graph $H'$ from $H \otimes K_2$ and a fixed bipartition of $K_2$. By Theorem~\ref{thm:bipartization}, it suffices to study $\probNumBipHom{H'}[p]$. We restrict our attention to the order~$p$ bip-reduced form $\bipreduced{(H')}$ due to Observation~\ref{obs:p-wise_reduced_unique}. We have by Lemma~\ref{lem:automorphism_bipartisation} that $\bipreduced{(H')}$ is a collection of complete bipartite graphs if and only if $\normreduced{H}$ is a collection of complete bipartite graphs and reflexive complete graphs. For this case, Corollary~\ref{cor:polyt-graphs} shows tractability. Towards proving intractability in the remaining cases, Lemma~\ref{lem:pinning_bip} states that  $\probNumBipHom{H'}[p]$ and $\probNumPartLabBipHom{H'}[p]$ are interreducible. This chain of arguments is displayed in Figure~\ref{fig:reduction_chain_BIP}.
Finally, lemma~\ref{lem:bip_components} suggests that we only have to show hardness for any connected component of $H'$.

\section{Gadgets and Hardness}
\label{sec:gadgets}
Toward a dichotomy for $\probNumHom{H}[p]$, we established in the previous section that it suffices to show that, for any connected order~$p$ bip-reduced bip-graph $H$, the problem $\probNumBipHom{H}[p]$ is $\classNumP[p]$-hard. Since $H$ is connected, by fixing for a single vertex $v$ of $H$ that $v$ belongs to one part we fix the bipartition of $H$. Therefore, when we give a partially $H$-labelled bipartite graph $J$ with $J=(G, \tau)$ such that $G$ is connected and $\tau$ is not empty, we obtain a fixed bipartition of $G$ due to $\tau$ and the fixed bipartition of $H$, i.e. $J$ is a partially $H$-labelled bip-graph. This observation allows us to avoid stating the bipartition explicitly.

In the spirit of gadgetry used for the study of $\probNumHom{H}[2]$ (see e.g.~\cite{Goebel:16:Square-Free}), we define the following gadget as a structural criterion whose existence yields hardness. 

\begin{definition}\label{def:hardness_gadget}
	Let $p$ be a prime and $H$ be a bip-graph. Let there exist a triple of partially $H$-labelled bip-graphs with distinguished vertices $\GadgetPart{\L}$, $\GadgetPart{\R}$, and $\GadgetEdge$. If, for the sets
	\begin{alignat*}{4}
		&\selectSet[\L] &&= \{v \in \vertexset[H] &&\SetSymbol \numHomBip[\GadgetPart{\L}, (H, v)][p] &&\neq 0 \} , \\
		&\selectSet[\R] &&= \{v \in \vertexset[H] &&\SetSymbol \numHomBip[\GadgetPart{\R}, (H, v)][p] &&\neq 0 \} , \\
		&\selectSet[E] &&= \{(u, v) \in \vertexset[H]^2 &&\SetSymbol \numHomBip[\GadgetEdge, (H, u, v)][p] &&\neq 0 \},	
	\end{alignat*}
	there exists a partitioning $\selectSet[\L] = i_\L \cupdot o_\L$ and $\selectSet[\R] = i_\R \cupdot o_\R$ such that 
\begin{enumerate}
	\item $\abs{i_\L}, \abs{i_\R}, \abs{o_\L}, \abs{o_\R} \not \equiv 0 \pmod p$,
	\item for $u \in \selectSet[\L]$ and $v \in \selectSet[\R]$ it holds $(u,v) \in \selectSet[E]$ if and only if $(u,v) \not\in i_\L \times i_\R$,
\end{enumerate}		
	then we call the triple $\GadgetPart{\L}$, $\GadgetPart{\R}$, and $\GadgetEdge$ along with the partitioning $\selectSet[\L] = i_\L \cupdot o_\L$ and $\selectSet[\R] = i_\R \cupdot o_\R$ a \emph{$p$-hardness gadget}, and we denote it by $\Gadget$. Moreover, in this case, we say that \emph{$H$ has a $p$-hardness gadget}.
\end{definition}

\begin{example}\label{ex:path_has_hardness_gadget}
	Let $p$ be a prime and $P$ be the simple path on $4$ vertices $(x_1, x_2, x_3, x_4)$ with bipartition $\lpart[P]$ and $\rpart[P]$ given by $\lpart[P]=\set{x_1, x_3}$ and $\rpart[P]=\set{x_2, x_4}$. Let $P_2$ be the single edge graph $(x,y)$. The bip-graph $P$ has a $p$-hardness gadget given by
	\begin{itemize}
		\item the partially $P$-labelled bip-graph $\GadgetPart{\L}$ consisting of the simple edge bip-graph $G(J_\L) = P_2$ with distinguished vertex $y_\L = x$ and partial labelling $\tau(J_\L) \colon y_\L \to x_2$, which gives $\selectSet[\L] = \set{x_1, x_3}$;
		\item the partially $P$-labelled bip-graph $\GadgetPart{\R}$ consisting of the simple edge bip-graph $G(J_\R) = P_2$ with distinguished vertex $y_\R = y$ and partial labelling $\tau (J_\R) \colon y_\R \to x_3$, which gives $\selectSet[\R] = \set{x_2, x_4}$;
		\item the partially $P$-labelled bip-graph $\GadgetEdge$ consisting of the simple edge bip-graph $G(J_E) = P_2$ with distinguished vertices $y_\L = x$ and $y_\R = y$ and partial labelling $\tau (J_E) = \tau(J_\L) \cupdot \tau (J_\R)$;
		\item the partitioning $i_\L =\set{x_1}$, $o_\L = \set{x_3}$, $i_\R= \set{x_4}$, and $o_\R = \set{x_2}$.
	\end{itemize}
\end{example}

In many following proofs, we identify $p$-hardness gadgets. To this end, we simplify notation using the following definition.
\begin{definition}\label{def:selection}
	Let $p$ be a prime, $k$ be a positive integer, $H$ be a bip-graph, and $(J,\vector{y})$ be a partially $H$-labelled bip-graph with $k$ distinguished vertices. We define the set
	\[
		\selectSet = \set{\vector{v} \in \vertexset[H]^k \given \numHomBip[(J,\vector{y}), (H, \vector{v})][p] \neq 0 },
	\]
	and say that $(J,\vector{y})$ \emph{$(H,p)$-selects} the set $\selectSet$. When $H$ and $p$ are clear from context, we  say that $(J,\vector{y})$ \emph{selects} $\selectSet$.
\end{definition}

Many of our gadgets will be constructed by combining graphs with the dot product. In order to utilize the dot product iteratively over the set of vertices or edges of a graph, it is useful to disregard distinguished vertices. For a graph with $k$ distinguished vertices $(G, \vector{x})$, we call the graph $G$ obtained by \enquote{annulling the distinguishing} the \emph{host graph of $(G, \vector{x})$}. 

When we say that a partially labelled bip-graph $(H,p)$-selects a set $\selectSet$, we only consider the vertices for which the number of homomorphisms is not congruent modulo $p$ to $0$. Different values of the number of homomorphisms yield different weights for each vertex in $\selectSet$. In order to make the argumentation easier, the following lemma shows that if we can select a set $\selectSet$, then we can select it such that the number of homomorphisms from our graph with distinguished vertices is congruent modulo $p$ to $1$. Intuitively, this is similar to the $0$-$1$ question of whether a vertex can be mapped to or not.
\begin{lemma}\label{lem:p_copies}
	Let $p$ be a prime, $k$ be a non-negative integer, $H$ be a bip-graph, and $(J,\vector{x})$ be a partially $H$-labelled bip-graph with $k$ distinguished vertices. There exists a partially $H$-labelled bip-graph with $k$ distinguished vertices $(J',\vector{x}')$, such that, for all tuples $\vector{x}$ in $\vertexset[H]^k$,
	\begin{itemize}
		\item
		$\numHomBip[(J,\vector{y}), (H,\vector{x})][p]\neq 0$, if and only if $\numHomBip[(J',\vector{y}'), (H,\vector{x})][p]= 1$;
		\item
		$\numHomBip[(J,\vector{y}), (H,\vector{x})][p]= 0$, if and only if
		$\numHomBip[(J',\vector{y}'), (H,\vector{x})][p]=  0$.
	\end{itemize}
\end{lemma}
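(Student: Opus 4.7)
The plan is to exploit Fermat's Little Theorem~\ref{thm:Fermat} by replicating the gadget $p-1$ times and gluing along the distinguished vertices. Concretely, I would construct $(J',\bar{y}')$ from $p-1$ disjoint copies $(J_1,\bar{y}_1),\dots,(J_{p-1},\bar{y}_{p-1})$ of $(J,\bar{y})$ as follows: for each coordinate $i$ of the tuple $\bar{y}$, identify all $p-1$ copies of the $i$-th distinguished vertex into a single new vertex $y'_i$, and set $\bar{y}'=(y'_1,\dots,y'_r)$. The pinning function $\tau'$ is inherited coordinate-wise from the $\tau_j$ of each copy; since by convention the distinguished vertices $y_i$ lie outside $\dom(\tau)$, the identifications never conflict with the labels.

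The core computation is the tuple-version of Observation~\ref{obs:product}, which yields
\[
\numHom[(J',\bar{y}')][(H,\bar{x})] \;=\; \prod_{j=1}^{p-1}\numHom[(J_j,\bar{y}_j)][(H,\bar{x})] \;=\; \bigl(\numHom[(J,\bar{y})][(H,\bar{x})]\bigr)^{p-1}
\]
for every tuple $\bar{x}$ of vertices of $H$. This extension follows by straightforward induction on $r$: each distinguished vertex $y'_i$ is mapped to the fixed image $x_i$, so the homomorphisms from the glued graph split into independent choices on each copy, exactly as in the single-vertex case of Observation~\ref{obs:product}.

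To finish, set $c:=\numHom[(J,\bar{y})][(H,\bar{x})]$. By Fermat's Little Theorem~\ref{thm:Fermat}, $c^{p-1}\equiv 1\pmod p$ whenever $p\nmid c$, while trivially $c^{p-1}\equiv 0 \pmod p$ when $p\mid c$. Combined with the product formula above, this establishes both bi-implications in the statement of the lemma.

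The only bookkeeping I expect is checking that the partially labelled graph $(J',\tau')$ is well-defined under the identifications — which reduces to the observation that $y_i\notin\dom(\tau)$ — and verifying the componentwise extension of Observation~\ref{obs:product}, which is a routine induction. I do not anticipate any genuinely difficult step.
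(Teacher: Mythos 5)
Your proof is correct and follows essentially the same approach as the paper: take $p-1$ disjoint copies of $(J,\bar{y})$, glue along the distinguished vertices, use (the tuple version of) Observation~\ref{obs:product} to get the power $\big(\numHom[(J,\bar{y})][(H,\bar{x})]\big)^{p-1}$, and conclude by Fermat's Little Theorem. You are slightly more explicit than the paper about the coordinate-wise identification and the well-definedness of the inherited pinning function, but the argument is the same.
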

\begin{proof}
	We take the dot product of $p-1$ disjoint copies of $(J,\vector{y})$. By Corollary~\ref{cor:dot_product_bip}, we obtain for all tuples $\vector{x}$ in $\vertexset[H]^k$
	\begin{align*}
		\numHomBip[(\bigodot_{i \in \sqBrackets{p-1}} (J,\vector{y}) ), (H, \vector{x})] = \parenthesis[\big]{\numHomBip[(J,\vector{y}), (H, \vector{x})] }^{p-1}.
	\end{align*}
	The lemma follows by Fermat's theorem (Theorem~\ref{thm:Fermat}). 
\end{proof}

Triples of partially $H$-labelled bip-graphs $\GadgetPart{\L}$, $\GadgetPart{R}$, and $\GadgetEdge$ allow us to restrict the problem $\probNumBipHom{H}[p]$ to the subproblem $\probNumBipHom{U}[p]$, where $U$ is a sub-bip-graph of $H$.

\begin{lemma}\label{lem:gadget_subgraph_reduction}
	Let $p$ be a prime and let $H$ be a bip-graph. Let there exist a triple of partially $H$-labelled bip-graphs $\GadgetPart{\L}$, $\GadgetPart{R}$, and $\GadgetEdge$ such that
	\begin{itemize}
		\item $\GadgetPart{\L}$ selects the subset $\selectSet[\L]$ of $\lpart[H]$;
		\item $\GadgetPart{\R}$ selects the subset $\selectSet[\R]$ of $\rpart[H]$;
		\item $\GadgetEdge$ selects the subset $\selectSet[E]$ of $\lpart[H] \times \rpart[H]$.
	\end{itemize}
	If $U$ is the bip-graph given by $U=(\selectSet[\L], \selectSet[\R], \selectSet[E] \cap (\selectSet[\L] \times \selectSet[\R]))$, 
	then, for every bip-graph $G$, there exists a partially $H$-labelled bip-graph $J$ such that $\numHomBip[J, H][p]$ is equal to $\numHomBip[G, U][p]$. Furthermore, $J$ is constructible in time $\mathcal{O}(\abs{\lpart[H]} + \abs{\rpart[H]} + \abs{\edgeset[H]})$.
\end{lemma}
\begin{proof}
	We construct the partially $H$-labelled bip-graph $J$ from $G$ and the triple of graphs $\GadgetPart{\L}$, $\GadgetPart{R}$, and $\GadgetEdge$. By Lemma~\ref{lem:p_copies}, we assume without loss of generality that, for every vertex $u$ in $\selectSet[\L]$ and every vertex $v$ in $\selectSet[\R]$, the values $\numHomBip[\GadgetPart{\L}, (H,u)][p]$ and $\numHomBip[\GadgetPart{\R}, (H,v)][p]$ are both equal to $1$. Similar, for every edge $(u,v)$ in the intersection $\selectSet[E] \cap (\lpart[H] \times \rpart[H])$, the value $\numHomBip[\GadgetEdge, (H,u, v)][p]$ is also equal to $1$.
	
	Utilizing the partially $H$-labelled bip-graphs, we construct the partially $H$-labelled bip-graph $J$ from $G$ iteratively by the following three steps. Let $J$ be given by the tuple $(G', \theta)$, where we initially set $G'$ to $G$ and $\theta$ to the empty labelling.
	
	First, for every vertex $x$ in the left part $\lpart$, we replace $G'$ with the host graph of $(G',x) \odot \GadgetPart{L}$ and add the $H$-labelling of the constructed copy of $J_\L$ to $\theta$. We recall the equivalence of $\HomBip[(G',\theta) , H]$ with $\HomBip[(G', \dom(\theta)), (H, \operatorname{range}(\theta))]$, where we assumed a one to one correspondence in the enumerations of $\dom(\theta)$ and $\operatorname{range}(\theta)$.
	By Corollary~\ref{cor:dot_product_bip} and $\numHomBip[\GadgetPart{\L}, (H, u)][p] =1$, 
	\begin{align*}%\label{eq:reduction_from_BIS_1}
		\nonumber
		\numHomBip[(G', \theta), H][p] &\equiv \sum_{u \in \vertexset[H]} 	\numHomBip[(G', \theta, x), (H, u)][p] \pmod p \\
		&\equiv  \sum_{u \in \selectSet[\L]} \numHomBip[(G, x), (H, u)][p] \pmod p.
	\end{align*}	
	
	Second, for every vertex $y$ in the right part $\rpart$, we replace $G'$ with the host graph of $(G',y) \odot \GadgetPart{R}$ and add the $H$-labelling of the constructed copy of $J_\R$ to $\theta$. By $\numHomBip[\GadgetPart{\R}, (H, u)][p] =1$, we obtain for any $x$ in the left part $\lpart$
	\begin{align*}%\label{eq:reduction_from_BIS_2}
		\numHomBip[(G', \theta), H][p] \equiv  \sum_{u \in \selectSet[\L]} \sum_{v \in \selectSet[\R]}	\numHomBip[(G, x, y), (H, u, v)][p] \pmod p.
	\end{align*}			
	Third, for every edge $(x,y)$ in $\edgeset[G]$ assuming without loss of generality that $x$ is in the left part $\lpart[G]$ and $y$ is in the right part $\rpart[G]$, we replace $G'$ with the host graph of $(G',x,y) \odot \GadgetEdge$ and add the partial $H$-labelling of the constructed copy of $J_E$ to $\theta$. By $\numHomBip[\GadgetEdge, (H, u, v)][p] =1$, we deduce
	\begin{align*}%\label{eq:reduction_from_BIS_3}
		\numHomBip[(G', \theta), H][p] \equiv  \sum_{(u,v) \in \selectSet[E] \cap (\selectSet[\L] \times \selectSet[\R])} 	\numHomBip[(G, x,y), (H, u,v)][p] \pmod p.
	\end{align*}	
	
	We conclude that $\numHomBip[(G', \theta), H][p]$ is given by the homomorphisms that map every vertex $x$ in the left part $\lpart$ to $\selectSet[\L]$, every vertex $y$ in the right part $\rpart$ to $\selectSet[\R]$, and every edge $(x,y)$ in $\edgeset[G]$ to $\selectSet[E] \cap (\selectSet[\L] \times \selectSet[\R])$. This gives exactly the bip-graph $U$.
\end{proof}

We are going to show that if an order~$p$ bip-reduced bip-graph $H$ has a $p$-hardness gadget, then $\probNumBipHom{H}[p]$ is $\classNumP[p]$-hard. To this end, we reduce from a weighted bipartite independent sets problem. 
An \emph{independent set} $I$ of a graph $G$ is a set of vertices in $\vertexset[G]$, such that no pair of vertices in $I$ is adjacent in $G$. We denote the \emph{set of independent sets of $G$} by $\mathcal{I} (G)$. For a bip-graph $G$ and integer weights $\lweight, \rweight, \lOUTweight, \rOUTweight$, we denote by $\numWeightBIS{G}[\lweight, \rweight][\lOUTweight, \rOUTweight]$ the weighted number of bipartite independent sets $\sum_{I\in\mathcal{I}(G)}\lweight^{ \abs{\lpart \cap I}} \cdot \lOUTweight^{\abs{\lpart \setminus (\lpart \cap I)}} \cdot \rweight^{\abs{\rpart \cap I}} \cdot \rOUTweight^{\abs{\rpart \setminus (\rpart \cap I)}}$.

\prob%
{$\probNumBIS[\lweight, \rweight][\lOUTweight, \rOUTweight][p]$\label{prob:weighted_bis}.}
{Prime $p$ and weights $\lweight, \rweight, \lOUTweight, \rOUTweight$ in $\Z_p$.}
{Bip-graph $G$.}
{$\numWeightBIS{G}[\lweight, \rweight][\lOUTweight, \rOUTweight] \pmod p$.}

This problem was first considered by Göbel et al. They showed the following hardness result \cite[Theorem~1.6.]{Goebel:21:Counting_Homomorphisms_Trees}.
\begin{theorem}\label{thm:BIS_hardness}
 	Let $p$ be a prime and let $\lweight$ and $\rweight$ be weights in $\Z_p$.
	If at least one of $\lweight$ and $\rweight$ is congruent modulo $p$ to $0$, then $\probNumBIS[\lweight, \rweight][1,1][p]$ is computable in polynomial time.
	Otherwise, $\probNumBIS[\lweight, \rweight][1,1][p]$ is $\classNumP[p]$-complete.
\end{theorem}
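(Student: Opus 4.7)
The plan is to split the argument into the tractable and hard regimes. For the tractable direction, suppose $\lweight \equiv 0 \pmod p$. In the partition function $\numWeightBIS{G}[\lweight,\rweight][1,1]$, every independent set $I$ with $|\lpart \cap I| \geq 1$ contributes zero in $\Z_p$, so only $I \subseteq \rpart$ matter. Since $G$ is bipartite, every subset of $\rpart$ is independent, and hence
\[
\numWeightBIS{G}[\lweight,\rweight][1,1] \equiv \sum_{I \subseteq \rpart}\rweight^{|I|} \equiv (1+\rweight)^{|\rpart|} \pmod p,
\]
which is computable in polynomial time by fast exponentiation in $\Z_p$. The case $\rweight \equiv 0 \pmod p$ is symmetric.

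For the hard direction, assume $\lweight, \rweight \in \Zsp$. I would reduce from the unweighted problem $\ProbNumBIS[1,1][1,1][p]$, whose $\ClassNumP[p]$-hardness is the standard starting point — it counts bipartite independent sets modulo $p$, which is $\ClassNumP[p]$-complete, for instance as an immediate consequence of hardness of $\numHomsTo{H}[p]$ where $H$ is the single edge with one looped endpoint. Given an input bipartite graph $G = \bipG$, construct $G'$ via a vertex-cloning gadget: replace each $v \in \lpart$ by $k_\ell$ twins sharing $\Gamma(v)$ and each $u \in \rpart$ by $k_r$ twins sharing $\Gamma(u)$. Summing over the state of each twin-cluster yields an identity of the form
\[
\numWeightBIS{G'}[\lweight,\rweight][1,1] \equiv \numWeightBIS{G}[\lweight',\rweight'][1,1] \pmod p,
\]
with effective weights $\lweight' = (1+\lweight)^{k_\ell} - 1$ and $\rweight' = (1+\rweight)^{k_r} - 1$. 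A further family of gadgets — attaching $j$ right-pendants to each left vertex (or symmetrically on the right) — multiplies the ``active'' versus ``inactive'' weight at each cluster by a factor of $(1+\rweight)^j$, which after dividing out a known global constant $(1+\rweight)^{j|\lpart|}$ (a unit in $\Zsp$) realises the transformation $(\lweight,\rweight) \mapsto (\lweight \cdot (1+\rweight)^{-j}, \rweight)$, and symmetrically on the other side.

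The main obstacle is showing that, by composing cloning and pendant gadgets, the orbit of achievable effective weights $(\lweight',\rweight') \in \Zsp \times \Zsp$ contains $(1,1)$, or at least a pair from which $\numWeightBIS{G}[1,1][1,1] \pmod p$ can be recovered by a single oracle call and a multiplication by a known invertible constant. Cloning alone only reaches a coset of the cyclic subgroup $\langle 1+\lweight\rangle$ of $\Zsp$, which is typically a proper subgroup, so it is the \emph{interaction} of cloning with the pendant-type transformations that is needed. Using Fermat's Little Theorem and the cyclicity of $\Zsp$ (of order $p-1$), one argues by a generation-of-the-group calculation that this combined family of moves covers every $(\lweight,\rweight) \in \Zsp \times \Zsp$. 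The degenerate sub-cases $\lweight \equiv -1$ or $\rweight \equiv -1 \pmod p$ (where $1+\lweight$ or $1+\rweight$ vanishes and the multipliers collapse) will need a substitute gadget, for example two-edge path attachments whose multiplier is of the form $1 + \lweight(1+\rweight)^{\,\cdot}$ chosen to remain in $\Zsp$. Checking that these substitutions cover the residual cases and that all constants dividing out along the way are units is the technically delicate step of the proof.
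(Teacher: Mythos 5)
The paper does not prove this theorem: it is imported as Theorem~1.5 of G\"obel, Lagodzinski, and Seidel~\cite{Goebel:18:HomsMod_p_Trees}, so there is no in-paper argument to compare against. Evaluated on its own terms, your tractable direction is correct: when $\lweight \equiv 0 \pmod p$ only independent sets contained in $\rpart$ survive, every subset of $\rpart$ is independent, and the partition function collapses to $(1+\rweight)^{|\rpart|}$, computable in polynomial time; the $\rweight \equiv 0$ case is symmetric.

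The hard direction has a genuine gap at its anchor. You reduce from the unweighted problem $\ProbNumBIS[1,1][1,1][p]$ and assert its $\ClassNumP[p]$-hardness as ``an immediate consequence of hardness of $\numHomsTo{H}[p]$ where $H$ is the single edge with one looped endpoint.'' That inference does not go through: $\ProbNumBIS[1,1][1,1][p]$ is the \emph{restriction} of $\numHomsTo{H}[p]$ to bipartite inputs, and hardness of the unrestricted problem does not transfer to a restriction without a further reduction that maps arbitrary inputs into bipartite ones while controlling the count modulo~$p$ --- which is precisely where the work lies. Moreover, in this line of research, hardness of $\numHomsTo{H}[p]$ for loop-containing targets such as this $H$ is itself typically \emph{derived from} the weighted-BIS terminal problem, so your argument threatens circularity. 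The gap is starkest at $p=2$: there $\Zsp=\{1\}$, so your cloning and pendant gadgets are vacuous --- the only nonzero weight pair is $(1,1)$, which is simultaneously your supposed starting point and the degenerate corner $\lweight\equiv -1$ you flagged as needing a substitute gadget. For $p>2$ the gadget-orbit sketch is plausible in outline, but you leave the transitivity of the combined moves on $\Zsp\times\Zsp$ and the unit-ness of the divided-out constants unchecked, and for the $\lweight\equiv -1$ or $\rweight\equiv -1$ cases you have no completed gadget.
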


For our purposes, we need the extension of this result in the form of the following corollary of Theorem~\ref{thm:BIS_hardness}.
\begin{corollary}\label{cor:BIS_hardness}
	Let $P$ be a prime and let $\lweight$, $\rweight$, $\lOUTweight$, and $\rOUTweight$ be weights in $\Z_p$. If at least one of $\lweight$, $\rweight$, $\lOUTweight$, and $\rOUTweight$ is congruent modulo $p$ to $0$, then $\probNumBIS[\lweight, \rweight][\lOUTweight, \rOUTweight][p]$ is computable in polynomial time.
	Otherwise, $\probNumBIS[\lweight, \rweight][\lOUTweight, \rOUTweight][p]$ is $\classNumP[p]$-complete.
\end{corollary}
\begin{proof}
	The statement holds if at least one of the weights is congruent modulo $p$ to $0$ because the output will always be $0$. We assume that none of the weights $\lweight$, $\rweight$, $\lOUTweight$, and $\rOUTweight$ is congruent modulo $p$ to $0$ and reduce from $\probNumBIS[\lweight^*, \rweight^*][1, 1][p]$, where $\lweight^*$ is $\lweight \cdot \lOUTweight^{-1}\pmod p$ and $\rweight^*$ is $\rweight \cdot \rOUTweight^{-1}\pmod p$. Theorem~\ref{thm:BIS_hardness} yields that $\probNumBIS[\lweight^*, \rweight^*][1, 1][p]$ is \classNumP{}-hard. Let $G$ be the input for $\probNumBIS[\lweight^*, \rweight^*][1, 1][p]$. We have 
	\begin{align*}
	\numWeightBIS{G}[\lweight^*, \rweight^*][1, 1] 
	& \equiv \sum_{I\in\mathcal{I}(G)} ( \lweight^*  )^{\abs{\lpart \cap I}} \cdot ( \rweight^*)^{\abs{\rpart \cap I}} \pmod p
	\\
	& \equiv \sum_{I\in\mathcal{I}(G)} \left( \frac{\lweight}{\lOUTweight} \right )^{\abs{\lpart \cap I}} \cdot \left( \frac{\rweight}{\rOUTweight} \right)^{\abs{\rpart \cap I}} \pmod p
	\\
	&\equiv \lOUTweight^{\abs{\lpart}} \cdot \rOUTweight^{\abs{\rpart}} \cdot \sum_{I\in\mathcal{I}(G)}\lweight^{\abs{\lpart \cap I}}\cdot \lOUTweight^{\abs{\lpart \setminus (\lpart \cap I)}}  \cdot \rweight^{\abs{\rpart \cap I}} \cdot \rOUTweight^{\abs{\rpart \setminus (\rpart \cap I)}} \pmod p
	\\
	& \equiv \lOUTweight^{\abs{\lpart}} \cdot \rOUTweight^{\abs{\rpart}} \cdot \numWeightBIS{G}[\lweight, \rweight][\lOUTweight, \rOUTweight] \pmod p.
	\end{align*}
	The latter can be computed using an oracle for $\probNumBIS[\lweight, \rweight][\lOUTweight, \rOUTweight][p]$. 
\end{proof}

Similar to the problem $\probNumIS$ of counting the independent sets in a graph, the problem $\probNumBIS$ is also characterized as a homomorphism problem to a specific graph. This graph is the simple path $P$ on $4$ vertices. We already know from Example~\ref{ex:path_has_hardness_gadget} that $P$ has a $p$-hardness gadget. This holds for both ways to bipartition $P$. 
\begin{observation}\label{obs:BIS_graph}
	If $P$ is a bip-graph obtained from the simple path on 4 vertices by choosing any bipartition, then $\probNumBipHom{P}$ and $\probNumBIS$ are interreducible under polynomial-time Turing reduction.
\end{observation}
\begin{proof}
	Let $P$ consist of the vertices $\set{u^{\IN}, v^{\OUT}, u^{\OUT}, v^{\IN}}$,  where we assume without loss of generality the bipartition by setting $\lpart[P]$ equal to $\set{u^{\IN}, u^{\OUT}}$ and $\rpart[P]$ equal to $\set{v^{\IN}, v^{\OUT}}$.
	
	Let $G$ be an input bip-graph. Every bip-homomorphism $f$ in $\HomBip[G,P]$ yields an independent set $I$ in $\mathcal{I}(G)$ given by the pre-images of $u^{\IN}$ and $v^{\IN}$. Conversely, every independent set $I$ in $\mathcal{I}(G)$ yields a bip-homomorphism $f$ in $\HomBip[G,P]$ defined by 
	\begin{equation} \label{eq:BIS_graph_mapping}
		f(x) = 
		\begin{cases}
			u^{\IN} &,\text{if } x \in \lpart[G] \cap I \\
			u^{\OUT} &,\text{if } x \in \lpart[G] \setminus I \\
			v^{\IN} &,\text{if } x \in \rpart[G] \cap I \\
			v^{\OUT} &,\text{if } x \in \rpart[G] \setminus I.			
		\end{cases} 
	\end{equation}
	This shows that $\numHomBip[G,P]$ is equal to $\abs{\mathcal{I}(G)}$.
\end{proof}

We construct graphs such that also the weighted versions of $\probNumBIS$ are characterized as homomorphism problems. Two vertices $u$ and $v$ in a graph $H$ are called \emph{twins} if they have the same neighbourhood but are not adjacent. By definition, if $H$ is bipartite, then an equal neighbourhood is sufficient. A tuple of vertices is said to be \emph{a tuple of twins} if every pair is a twin.

Given a graph $H$ and a vertex $v$ in $\vertexset[H]$, the \emph{cloning} of $v$ produces the graph $H'$, where $\vertexset[H']$ is $\vertexset[H]\cupdot \set{v'}$ and $\edgeset[H']$ is $\edgeset[H]\cupdot \set{(v',u) \given (v,u) \in \edgeset[H]}$. The new vertex $v'$ is a twin of $v$ and $v$ is a \emph{representative} of all its twins. 
The class of graphs characterizing the weighted versions of $\probNumBIS$ is then the following.
\begin{definition}
	Let $b_0$, $b_1$, $b_2$, and $b_3$ be positive integers and $P_4$ be the bip-graph obtained from the path on four vertices $(v_0, v_1, v_2, v_3)$ with left part $\lpart[P_4]$ equal to $\set{v_0, v_2}$ and right part $\rpart[P_4]$ equal to $\set{v_1, v_3}$. The bip-graph $P$ is obtained from $P_4$ by, for all indices $i \in \sqBrackets{0;3}$ in ascending order, repeatedly cloning the vertex $v_i$ for $b_i$ times yielding the set $v_i^{b_i}$ of twins $\set{v^j_i}_{j \in \sqBrackets{b_i}}$. We call $P$ a \emph{thick $4$-vertex path} and denote it by $(v_0^{b_0}, v_1^{b_1}, v_2^{b_2}, v_3^{b_3})$.
\end{definition}
An example of a thick $4$-vertex path is given in Figure~\ref{fig:terminal_tree_gen_path}. We emphasize that a thick $4$-vertex path is by definition a bip-graph with bipartition intrinsic from the fixed bipartition of $P_4$. The latter is arbitrary since $P_4$ is connected. The characterization of $\probNumBIS[b_1, b_4][b_2, b_3]$ by a homomorphism problem is as follows.

\begin{figure}[t]
	\centering
	\includegraphics[]{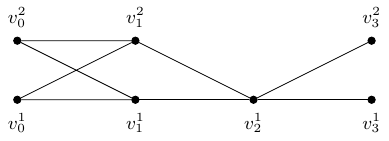}
	\caption{The figure depicts the thick $4$-vertex path $(v_0^{2}, v_1^{2}, v_2^{1}, v_3^{2})$.}
	\label{fig:terminal_tree_gen_path}
\end{figure}

\begin{lemma}\label{lem:BIS_graph}
	Let $b_0$, $b_1$, $b_2$, and $b_3$ be positive integers and let $P$ be the thick $4$-vertex path $(v_0^{b_0}, v_1^{b_1}, v_2^{b_2}, v_3^{b_3})$. The two problems $\probNumBipHom{P}$ and $\probNumBIS[b_0, b_3][b_1, b_2]$ are equal.
\end{lemma}
\begin{proof}
	Let $G$ be an input bipartite graph for the two problems. The bip-graph $P$ was obtained from the simple path on $4$ vertices $P_4 = (v_0, v_1, v_2, v_3)$ by cloning the vertices. Therefore, in $P$ all twins of $v_i$ are in the same orbit. Let $\set{v_0, v_1, v_2, v_3}$ be the representatives of these orbits. Every bip-homomorphism $g$ in $\HomBip[G,P]$ yields a homomorphism $f_g$ in $\HomBip[G, P_4]$ by identifying the twins in $P$, i.e., for each vertex $x$ in $\vertexset[G]$, if $g(x)$ is in $v^{b_i}_i$, then we assign $f_g(x)$ to be $v_i$. This gives an equivalence relation on $\HomBip[G,P]$ by $g \eqrel g'$ if and only if the bip-homomorphisms $f_g$ and $f_{g'}$ are equal. We denote by $\eqclass{f_g}$ the equivalence class in $\HomBip[G, P]$ of $g$ given by the representative $f_g \in \HomBip[G, P_4]$ and obtain
	\begin{align*}
	\numHomBip[G,P] & = \sum_{f \in \HomBip[G, P_4]} \abs[\big]{\eqclass{f}}.
	\intertext{%
		For any bip-homomorphism $f$ in $\HomBip[G, P_4]$ and $i \in \sqBrackets{0;3}$, we denote by $f^{-1}(v_i)$ the set of vertices $x$ in $\vertexset[G]$ with $f(x)$ equal to $v_i$. This yields%
	}
	&= \sum_{f \in \HomBip[G, P_4]} b_0^{\abs{f^{-1}(v_0)} }\cdot  b_1^{\abs{f^{-1}(v_1)}} \cdot   b_2^{\abs{f^{-1}(v_2)}} \cdot   b_3^{\abs{f^{-1}(v_3)}} .
	\intertext{%
		We recall Observation~\ref{obs:BIS_graph}. In particular, we recall that $\numHomBip[G,P_4]$ is equal to $\abs{\mathcal{I}(G)}$ and the corresponding map given in~\eqref{eq:BIS_graph_mapping}, where $(u^{\IN}, v^{\OUT}, u^{\OUT}, v^{\IN})$ is $(v_0, v_1, v_2, v_3)$. Therefore,%
	}
	\numHomBip[G,P] & = \sum_{I \in \mathcal{I}(G)} b_0^{\abs{I \cap \lpart}} \cdot b_1^{ \abs{\lpart \setminus I}}  \cdot b_2^{\abs{\rpart \setminus I}} \cdot b_3^{\abs{I \cap \rpart}} = \numWeightBIS{G}[b_0, b_3][b_1, b_2]. \qedhere
	\end{align*}
\end{proof}

We note that, for a prime $p$ and positive integers $b_0$, $b_1$, $b_2$, and $b_3$ that are not congruent modulo $p$ to $0$, the same gadget as employed in Example~\ref{ex:path_has_hardness_gadget} yields a $p$-hardness gadget for the thick $4$-vertex path $(v_0^{b_0}, v_1^{b_1}, v_2^{b_2}, v_3^{b_3})$. Indeed, in this case we obtain hardness.

\begin{corollary}\label{cor:gen_path_reduction_BIS}
	Let $p$ be a prime, $b_0$, $b_1$, $b_2$, and $b_3$ be positive integers, and $P$ be the thick $4$-vertex path $(v_0^{b_0}, v_1^{b_1}, v_2^{b_2}, v_3^{b_3})$. If, for each index $i\in\sqBrackets{0;3}$, the weight $b_i$ is not congruent modulo $p$ to $0$, then $\probNumBipHom{P}[p]$ is $\classNumP[p]$-hard.
\end{corollary}
\begin{proof}
	The order~$p$ bip-reduced form of $P$ is the thick $4$-vertex path $\bipreduced{P}$ equal to $(v_1^{c_1}, v_2^{c_2}, v_3^{c_3}, v_4^{c_4})$, where, for each index $i \in \sqBrackets{0;3}$, the weight $c_i$ is given by $c_i \equiv b_i \pmod p$. By Lemma~\ref{lem:BIS_graph}, the problem $\probNumBipHom{\bipreduced{P}}[p]$ is equal to $\probNumBIS[c_1, c_4][c_2, c_3]$. The latter is $\classNumP[p]$-hard by Corollary~\ref{cor:BIS_hardness}. Lastly, for every input bip-graph $G$, the output of $\probNumBipHom{\bipreduced{P}}[p]$ is equal to the output of $\probNumBipHom{P}[p]$ due to Observation~\ref{obs:p-wise_reduced_unique}.
\end{proof}

We are now ready to show that the existence of a $p$-hardness gadget yields indeed hardness. 
\begin{lemma}\label{lem:hardness_gadget}
	Let $p$ be a prime and $H$ be an order~$p$ bip-reduced bip-graph. If $H$ has a $p$-hardness gadget, then $\probNumBipHom{H}[p]$ is $\classNumP[p]$-hard.
\end{lemma}
\begin{proof}
	We are going to show that there exist four positive integers $b_0$, $b_1$, $b_2$, and $b_3$ satisfying the following. First, for each index $i \in \sqBrackets{0;3}$, the weight $b_i$ is not congruend modulo $p$ to $0$. Second, for the thick $4$ vertex path $P$ equal to $(v_0^{b_0}, v_1^{b_1}, v_2^{b_2}, v_3^{b_3})$, the problem $\probNumBipHom{H}[p]$ reduces to $\probNumBipHom{P}[p]$. The result follows then from Corollary~\ref{cor:gen_path_reduction_BIS}.
	
	Let $\Gadget$ be the $p$-hardness gadget in $H$ and $G$ be a bip-graph. We apply Lemma~\ref{lem:gadget_subgraph_reduction} and obtain a partially $H$-labelled bip-graph $(G', \theta))$ such that the values $\numHomBip[(G', \theta), H][p]$ and $\numHomBip[G, U][p]$ are equal, where $U$ is the bip-graph given by $U=(\selectSet[\L], \selectSet[\R], \selectSet[E] \cap (\selectSet[\L] \times \selectSet[\R]))$. 
	
	We recall from the properties of a $p$-hardness gadget that $\selectSet[\L]$ is equal to $i_\L \cupdot o_\L$ and $\selectSet[\R]$ is equal to $i_\R \cupdot o_\R$, where none of the parts has cardinality congruent modulo $p$ to $0$. We define the weights by $b_0 = \abs{i_\L}$, $b_1 = \abs{o_\R}$, $b_2 = \abs{o_\L}$, and $b_3 =\abs{i_\R}$. It suffices to show that $U$ is \isomorphic[bip] to $P$, the thick $4$-vertex path $(v_0^{b_0}, v_1^{b_1}, v_2^{b_2}, v_3^{b_3})$.
	
	Let $(u,v)$ be an edge in $\edgeset[H]$ with $(u,v)$ in $\selectSet[\L] \times \selectSet[\R]$. By property~2 of a $p$-hardness gadget, $(u,v))$ in in $\selectSet[E]$ if and only if $(u,v)$ is not in $i_\L \times i_\R$. The same characterization of edges holds for the thick $4$-vertex path $P$.
\end{proof}

\subsection{Recursive Gadgets}

Constructing $p$-hardness gadgets for a bip-graph $H$ is a challenging task. Frequently, it is easier to argue that $H$ contains an induced sub-bip-graph~$B$ for which it has been established that $\probNumBipHom{B}[p]$ is $\classNumP[p]$-hard. We define the following.

\begin{definition}
	Let $p$ be a prime, let $H$ be a bip-graph, and let $B$ be an induced sub-bip-graph of $H$. We say that $H$ \emph{has a \gadget{B,p}} if there exists a pair of partially $H$-labelled bip-graphs, each with a distinguished vertex, $(J_\L, y_\L)$ and $(J_\R, y_\R)$ such that the following hold:
	\begin{itemize}
	 \item $(J_\L, y_\L)$ selects the set $\selectSet[\L]$, that contains $\lpart[B]$;
	 \item $(J_\R, y_\R)$ selects the set $\selectSet[\R]$, that contains $\rpart[B]$;
	 \item the set of adjacent pairs $(u,v)$ in $\selectSet[\L] \times \selectSet[\R]$ is the set of edges $\edgeset[B]$.
	\end{itemize}
\end{definition}

A \gadget{B,p} allows us to restrict our study from the larger bip-graph $H$ to the induced sub-bip-graph $B$ by an application of Lemma~\ref{lem:gadget_subgraph_reduction}.
\begin{corollary}\label{cor:(B,p)_gadget_subgraph_reduction}
	Let $p$ be a prime, let $H$ be a bip-graph, and let $B$ be an induced sub-bip-graph of $H$. If $H$ has a \gadget{B,p}, then, for all bip-graphs $G$, there exists a partially $H$-labelled bip-graph $J$, such that $J$ is constructible from $G$ in time $\mathcal{O}(\abs{\lpart[H]} + \abs{\rpart[H]} + \abs{\edgeset[H]})$ and $\numHomBip[J,H][p]$ is equal to $\numHomBip[G,B][p]$.
\end{corollary}
\begin{proof}
	Let $\GadgetPart{\L}$ and $\GadgetPart{\R}$ be the given \gadget{B,p}. We have that $\GadgetPart{\L}$ selects the set $\selectSet[\L]$ containing the left part $\lpart[B]$ and $\GadgetPart{\L}$ selects the set $\selectSet[\R]$ containing the right part $\rpart[B]$. In order to apply Lemma~\ref{lem:gadget_subgraph_reduction}, we supplement the \gadget{B,p} with a partially $H$-labelled bip-graph $\GadgetEdge$. Since we aim toward an induced subgraph, the graph $\GadgetEdge$ has to select every edge in $\edgeset[H]$. To this end, the simple edge graph without labelling is sufficient, i.e. $\GadgetEdge$ is defined to be the bip-graph with left part $\lpart[G(J_E)]$ equal to $\set{y_\L}$, right part $\rpart[G(J_E)]$ equal to $\set{y_\R}$, and edge set $\edgeset[G(J_E)]$ equal to ${(y_\L, y_\R)}$. By the definition of a \gadget{B,p}, $\GadgetEdge$ selects exactly the edges in $\edgeset[B]$ from the pairs in $\selectSet[\L] \times \selectSet[\R]$. Lemma~\ref{lem:gadget_subgraph_reduction} yields a partially $H$-labelled bip-graph $J$ that satisfies the corollary.
\end{proof}

Let $p$ be a prime and $H$ be a bip-graph as in Corollary~\ref{cor:(B,p)_gadget_subgraph_reduction}.
\forThesis{%
This corollary has two implications important for this paper. First, we obtain that $H$ has a $p$-hardness gadget provided an induced sub-bip-graph $B$ of $H$ has one and $H$ has a \gadget{B,p}. This stems from an application of the dot product used to combine the gadgets. To be precise, let $B$ have the $p$-hardness gadget $\Gadget$ and $H$ have the \gadget{B,p} given by $(J'_\L, y'_\L)$ and $(J'_\R, y'_\R)$. In the proof of Corollary~\ref{cor:(B,p)_gadget_subgraph_reduction} we supplemented the \gadget{B,p} with the partially $H$-labelled bip-graph $(J'_E, y'_\L, y'_\R)$ given by the single edge bip-graph consisting of $(y'_\L, y'_\R)$ with $y'_\L$ in the left part. Utilizing the dot product, we combine the two triples of partially $H$-labelled bip-graphs that have the same subscript, and we claim that the three products together with $o_\L$ and $o_\R$ are a $p$-hardness gadget for $H$. We recall that $\selectSet[\L]$ is the set of vertices of $B$ selected by $\GadgetPart{\L}$ and let $\selectSet[\L][']$ be the set of vertices of $H$ selected by $(y'_\L, y'_\R)$. For every vertex $v$ of $H$, we obtain by Corollary~\ref{cor:dot_product_bip}
\begin{align*}
	\numHomBip[\GadgetPart{\L} \odot (J'_\L, y'_\L), (H,v)][p] = \numHomBip[\GadgetPart{\L}, (H,v)][p] \cdot \numHomBip[(J'_\L, y'_\L), (H,v)][p].
\end{align*}
Therefore, the partially $H$-labelled bip-graph $\GadgetPart{\L} \odot (J'_\L, y'_\L)$ selects the vertices of $H$ in the intersection $\selectSet[\L] \cap \selectSet[\L][']$. In particular, $\selectSet[\L] \cap \selectSet[\L][']$ is equal to $\selectSet[\L]$ because $\selectSet[\L][']$ is equal to $\lpart[B]$. The analogous arguments for the products $\GadgetPart{\R} \odot (J'_\R, y'_\R)$ and $\GadgetPart{E} \odot (J'_E, y'_\L, y'_\R)$ yield the claim. 

\begin{corollary}\label{cor:gadgetry_combination}
	Let $p$ be a prime and $H$ be a bip-graph. Further, let $H$ contain an induced sub-bip-graph $B$ such that $H$ has a \gadget{B,p}. If $B$ has a $p$-hardness gadget, then $H$ has a $p$-hardness gadget.
\end{corollary}

This observation allows us to project hardness for an induced sub-bip-graph $B$ of $H$ into hardness for $H$. Frequently, $B$ is not order~$p$ bip-reduced even though $H$ is; an example was discussed with Figure~\ref{fig:example_failure_2-neighbourhood}. For the projection of hardness, due to Lemma~\ref{lem:hardness_gadget} it suffices that $H$ is order~$p$ bip-reduced. In general, however, we study order~$p$ bip-reduced bip-graphs. Thus, such a bip-graph $B$ is not explicitly dealt with. We apply Corollary~\ref{cor:gadgetry_combination} very rarely; only when a $p$-hardness gadget in $B$ is obtained as a side result.

The second implication of Corollary~\ref{cor:(B,p)_gadget_subgraph_reduction} is exactly what allows us to mainly consider the case that $B$ is order~$p$ bip-reduced. }%
If $H$ is order~$p$ bip-reduced, then $\probNumPartLabBipHom{H}[p]$ reduces to $\probNumBipHom{H}[p]$ due to Lemma~\ref{lem:pinning_bip}. By Corollary~\ref{cor:(B,p)_gadget_subgraph_reduction}, $\probNumBipHom{B}[p]$ reduces to $\probNumPartLabBipHom{H}[p]$.
\begin{corollary}\label{cor:gadget_reduction}
	Let $p$ be a prime, let $H$ be an order~$p$ bip-reduced bip-grap, and let $B$ be an induced sub-bip-graph of $H$. If $H$ has a \gadget{B,p}, then $\probNumBipHom{B}[p]$ reduces to $\probNumBipHom{H}[p]$ via polynomial-time Turing reduction.
\end{corollary}

Corollary~\ref{cor:gadget_reduction} gives us a recursive generalization of a $p$-hardness gadget. Here, the \gadget{B,p} is the advertized means for an iterative reduction argument and also the main reason for Observation~\ref{obs:p-wise_reduced_unique}.
\begin{definition}\label{def:p-gadget}
	Let $p$ be a prime and $H$ be an order~$p$ bip-reduced bip-graph. We say that $H$ \emph{admits a $p$-hardness gadget} if $H$ has a $p$-hardness gadget or there exists an induced sub-bip-graph $B$ of $H$, such that $H$ has a \gadget{B,p} and the order~$p$ bip-reduced form $\bipreduced{B}$ admits a $p$-hardness gadget.
\end{definition}

Now we conclude with the main result of this section.
\begin{corollary}\label{cor:hardness gadget}
	Let $p$ be a prime and $H$ be an order~$p$ bip-reduced bip-graph. If $H$ admits a $p$-hardness gadget, then $\probNumBipHom{H}[p]$ is $\classNumP[p]$-hard.
\end{corollary}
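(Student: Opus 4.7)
The plan is to proceed by induction on the recursive structure in Definition~\ref{def:p-gadget}, which is well-founded because each application of the recursion passes to an induced subgraph $B$ of $H$, and $H$ is a finite graph. I will prove, by induction on $|V(H)|$, that whenever $H\in\redbipartites$ admits a $p$-hardness gadget, $\numBipHomsTo{H}[p]$ is $\ClassNumP[p]$-hard.

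For the base case, suppose that $H$ itself has a $p$-hardness gadget $\Gadget$ with associated partitions $\O{\L}=i_\L\cupdot o_\L$ and $\O{\R}=i_\R\cupdot o_\R$. By Definition~\ref{def:hardness_gadget} each of the four cardinalities $|i_\L|,|i_\R|,|o_\L|,|o_\R|$ is nonzero in $\Z_p$, so Corollary~\ref{cor:BIS_hardness} tells us that $\ProbNumBIS[|i_\L|,|i_\R|][|o_\L|,|o_\R|][p]$ is $\ClassNumP[p]$-hard. Lemma~\ref{lem:hardness_gadget}, which directly applies since $H\in\redbipartites$, gives a polynomial time Turing reduction from this weighted bipartite independent set problem to $\numBipHomsTo{H}[p]$, so the target problem is $\ClassNumP[p]$-hard.

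For the inductive step, suppose that $H$ admits a $p$-hardness gadget via the second clause of Definition~\ref{def:p-gadget}: $H$ has a $(B,p)$-gadget for some induced subgraph $B$ of $H$, and $B$ admits a $p$-hardness gadget. If $B$ already lies in $\redbipartites$ and is strictly smaller than $H$ we invoke the induction hypothesis directly; if $B$ is not partition-wise order~$p$ reduced, we replace it by its reduced form $\reduced{B}$, which by Observation~\ref{obs:p-wise_reduced_unique} satisfies $\numBipHomsTo{B}[p]=\numBipHomsTo{\reduced{B}}[p]$, and apply the induction hypothesis to $\reduced{B}$. Either way, $\numBipHomsTo{B}[p]$ is $\ClassNumP[p]$-hard. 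Now Lemma~\ref{lem:gadget_reduction}, whose hypotheses are met because $H\in\redbipartites$ and $H$ has a $(B,p)$-gadget, produces a polynomial time Turing reduction from $\numBipHomsTo{B}[p]$ to $\numBipHomsTo{H}[p]$. Composing the two reductions yields $\ClassNumP[p]$-hardness of $\numBipHomsTo{H}[p]$, completing the induction.

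No serious obstacle arises: the statement is essentially the formal assembly of the ingredients already built in this section. The only minor care required is well-foundedness of the recursion, which is handled by the observation that the intermediate graph $B$ (or its partition-wise order~$p$ reduced form) is strictly smaller than $H$ whenever the second clause of Definition~\ref{def:p-gadget} is invoked; otherwise we are already in the base case.
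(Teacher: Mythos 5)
Your proof is correct and is essentially the same argument as the paper's, just phrased as a formal induction where the paper unrolls the recursive definition into an explicit chain $H = H_k, H_{k-1}, \dots, H_1$ with each $H_{i+1}$ having an $(H_i,p)$-gadget and $H_1$ having a $p$-hardness gadget, then iterating Lemma~\ref{lem:gadget_reduction} and finishing with Lemma~\ref{lem:hardness_gadget} and Corollary~\ref{cor:BIS_hardness}. The one small misstep is your choice of $|V(H)|$ as the induction measure: the definition of a $(B,p)$-gadget does not force $B$ to have fewer vertices than $H$ (it could, for instance, have the same vertex set with a different fixed bipartition), so the claim in your final paragraph that $\reduced{B}$ is strictly smaller than $H$ "whenever the second clause is invoked, otherwise we are already in the base case" is not quite justified. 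The robust choice is to induct on the length (or height) of the derivation witnessing that $H$ admits a $p$-hardness gadget — which exists and is finite because Definition~\ref{def:p-gadget} is an inductive definition — and this is exactly what the paper's finite chain $H_1,\dots,H_k$ encodes. With that substitution your argument is airtight and matches the paper's proof step for step.
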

\begin{proof}
	We recall Definition~\ref{def:p-gadget} and Observation~\ref{obs:p-wise_reduced_unique}, the latter allows us to assume without loss of generality that every intermediate bip-graph is order~$p$ bip-reduced.
	There exist $k$ bip-graphs $H_1, \dots, H_k$, where $H_k$ is equal to $H$, such that the following hold. First, for every index $i\in \sqBrackets{k-1}$, the bip-graph $H_{i+1}$ has an induced sub-bip-graph $H_i$ and an $(H_i,p)$-gadget. Second, $H_1$ has a $p$-hardness gadget.
	We iteratively apply Corollary~\ref{cor:gadget_reduction} on the graphs $H_{i+1}$, which gives us a reduction from $\probNumBipHom{H_{i+1}}[p]$ to $\probNumBipHom{H_i}[p]$. It suffices to establish that $\probNumBipHom{H_1}[p]$ is $\classNumP[p]$-hard. Since $H_1$ admits a $p$-hardness gadget, this follows from Lemma~\ref{lem:hardness_gadget}.
\end{proof}

We collect our insights.
Towards a dichotomy for $\probNumHom{H}[p]$ we aim to show that the problem $\probNumPartLabBipHom{H'}[p]$ is $\classNumP[p]$-hard for every order~$p$ bip-reduced bip-graph $H'$ that is not complete bipartite. By Corollary~\ref{cor:hardness gadget} it suffices to establish that every such graph $H'$ admits a $p$-hardness gadget. Lemma~\ref{lem:bip_components} allows us to project hardness for connected components of $H$ to hardness for $H$ itself.

\section{Hardness for \Graphclass{} Graphs}
\label{sec:hardness_graphclass}
In the longest section of this paper, we are going to show that, for all primes $p$ and every connected \graphclass{} bip-graph $H$ that is order~$p$ bip-reduced and not complete bipartite, 
the problem $\probNumBipHom{H}[p]$ is $\classNumP[p]$-hard. To this end, we are going to establish that $H$ admits a $p$-hardness gadget via a careful structural analysis on the target graph~$H$. For this reason, we require the following definitions.

\begin{definition}
	Let $H$ be a graph and $v$ be a vertex of $H$. The \emph{2-neighbourhood} of $v$ is formally defined as the induced subgraph $H[\neigh{v} \cup \neigh{\neigh{v}}]$ of $H$ and denoted by $B_2(v)$. 
\end{definition}

\begin{definition}\label{def:split}
	Let $H$ be a graph and let $v$ be a vertex of $H$. Consider the set $\family{C}$ of connected subgraphs that result from deleting $v$ in $H$. Each subgraph $C$ in $\family{C}$ corresponds to an induced subgraph $H[V(C) \cupdot \set{v}]$, and we denote the set $\set{H[V(C)\cupdot\set{v}]\given C\in \family{C}}$ by $\family{U}$. 

	The set $\mathcal U$ decomposes into $\sigma$ equivalence classes $(\eqclass{U_i})_{i \in \sqBrackets{\sigma}}$ under the equivalence relation $(U_1,v)\cong (U_2,v)$, where $U_1$ and $U_2$ are in $\family{U}$. For each equivalence class $\eqclass{U_i}$, we refer by the cardinality $\abs{\eqclass{U_i}}$ to the \emph{count} of $\eqclass{U_i}$, which we denote by $\alpha_i$. The \emph{split of $H$ at $v$} is the set $\set{(\eqclass{U_i},\alpha_i)\given i \in \sqBrackets{\sigma}}$ of equivalence classes together with their counts $\alpha_i$. The set $\family{U}$ consists of \emph{components in the split of $H$ at $v$}.
\end{definition}

\begin{figure}[t]
	\centering
	\includegraphics[]{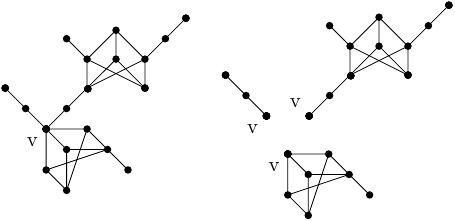}
	\caption{Example of a split. The left figure depicts the graph $H$. The right figure depicts the split of $H$ at the vertex $v$.}
	\label{fig:split_example}
\end{figure}
An example of a split is depicted in Figure~\ref{fig:split_example}.
The split of a graph $H$ is given by induced subgraphs of $H$ as is the $2$-neighbourhood of any vertex of $H$. These are sub-bip-graphs of $H$ provided that $H$ is a bip-graph, which is how we are applying these two notions. The definition of a split can also be applied for the $2$-neighbourhood $\twoneigh{v}$ of any vertex $v$ of $H$. In this case, the notion of the split of $\twoneigh{v}$ at $v$ is properly defined by Definition~\ref{def:split}. This is how we are going to apply the split.

The following definition allows us to study the splits of $2$-neighbourhoods concisely. It lies at the heart of this section and its second part already hints toward the property of \graphclass{} graphs we are going to prove immediately afterwards and employ frequently.
\begin{definition}\label{def:complete_core}
	Let $H$ be a bipartite graph with a pair of adjacent vertices $v$ and $u$ in $\vertexset[H]$. We denote the component $U$ in the split of $\twoneigh{v}$ at $v$ that contains $u$ by $U^{v,u}$. Moreover, we employ the following notation:
	\begin{itemize}
		\item if $\abs{\neigh{v}\cap U}$ is larger than $1$, then we call the maximal $2$-connected subgraph $K$ of $U^{v,u}$ the \emph{complete core of $U^{v,u}$} and denote it by $K^{v,u}$;
		\item if $\abs{\neigh{v}\cap U}$ is equal to $1$, then we call $U^{v,u}$ the \emph{complete core of $U^{v,u}$} and denote it by $K^{v,u}$.
	\end{itemize}
\end{definition}
The second point is justified because in the case that $\abs{\neigh{v}\cap U}$ is equal to $1$, the component $U^{v,u}$ itself is a star.  Figure~\ref{fig:complete_core_example} illustrates the definition.
We note that from the notation $K^{v,u}$ and $U^{v,u}$ it follows that both belong to the $2$-neighbourhood of $v$. Further, for a bip-graph $H$, both $U^{v,u}$ and $K^{v,u}$ are induced sub-bip-graphs of $H$, and thus inherit the bipartition from $H$. We note that the definition for components in the split of $H$ at $v$ as well as for complete core only refers to induced subgraphs contrary to sub-bip-graphs. In the case of a bip-graph $H$, we assume that these components and complete cores are indeed bip-graphs, i.e. they have a fixed bipartition inherited from $H$. Apart from the following lemma, we apply these concepts only for bip-graphs.

\begin{figure}[t]
	\centering
	\includegraphics[]{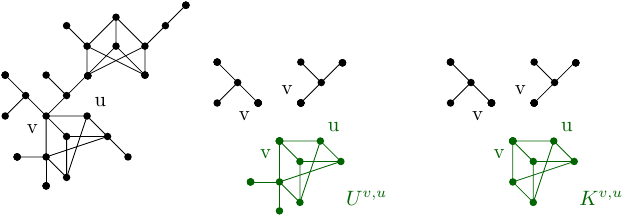}
	\caption{Example of a complete core. The first figure depicts the graph $H$. The second figure depicts the split of $\twoneigh{v}$ at the vertex $v$ with highlighted component $U^{v,u}$. The third figure depicts the complete cores in $\twoneigh{v}$ with highlighted $K^{v,u}$.}
	\label{fig:complete_core_example}
\end{figure}

We now identify the following structural property of \graphclass{} graphs, which is the key property used for our analysis and justifies the term \emph{complete core} in the case of \graphclass{} graphs. 

\begin{lemma}\label{lem:weirdness}
	If $H$ is a bipartite graph, then $H$ is \graphclass{} if and only if, for every pair of adjacent vertices $u$ and $v$ of $H$, the complete core $K^{v,u}$ is complete bipartite.  
\end{lemma}
\begin{proof}
	We observe that both $\forbiddenA$ and $\forbiddenB$ as an induced subgraph of $H$ yield a complete core in $H$ that is not complete bipartite. It remains to show the other direction.
	
	Let $H$ be \graphclass{}. We assume toward contradiction that there exists a pair of adjacent vertices $u$ and $v$ in $\vertexset[H]$ such that the complete core $K^{v,u}$ is not complete bipartite. By the definition of complete core, we deduce that $K^{v,u}$ is $2$-connected. Therefore, there exists a cycle $C$ in $K^{v,u}$ that contains $v$ such that the induced subgraph $H[C]$ is not complete bipartite. Let $C$ be $(w_0,w_1, w_2, \dots, w_0)$. Since $H$ is bipartite, $C$ is of even length at least $6$. Further, since $K^{v,u}$ is of radius $2$, every vertex $w_i$ of $C$ with even index is connected to $v$ or identical to $v$.	
	Without loss of generality, let $w_0$ be $v$ and $v$ be in the left part $\lpart[K^{v,u}]$. We deduce by $C$ that there exist three vertices $u_1$, $u_2$, and $u_3$ in the right part $\rpart[K^{v,u}]$ and adjacent to $v$. Additionally, there exist two vertices $v_1$ and $v_2$ in the left part $\lpart[K^{v,u}]$ and  unequal to $v$, such that the induced subgraph $H[\set{v,u_1, u_2, u_3, v_1,v_2}]$ is $2$-connected but not complete bipartite. We denote this induced subgraph by $U$, and claim that $U$ is isomorphic to either $\forbiddenA$ or $\forbiddenB$.
	
	If $v_1$ is adjacent to all three vertices $u_1$, $u_2$, and $u_3$, then $v_2$ has to be adjacent to exactly two vertices in the set $\set{u_1, u_2, u_3}$ because otherwise $U'$ is not $2$-connected. This yields that $U$ is isomorphic to $\forbiddenA$. The case that $v_2$ is adjacent to all three vertices follows from the analogue argument. It remains the case that both $v_1$ and $v_2$ are adjacent to exactly two vertices in the set $\set{u_1, u_2, u_3}$. Since $U$ is $2$-connected, the vertices $v_1$ and $v_2$ cannot be adjacent to the same two vertices in the set $\set{u_1, u_2, u_3}$ because this would yield a leaf in $U'$. Therefore, $U$ is isomorphic to $\forbiddenB$.
\end{proof}

In light of Lemma~\ref{lem:weirdness}, when working with \graphclass{} graphs $H$ we automatically assume that every complete core $K^{v,u}$, given by two adjacent vertices $u$ and $v$ of $H$, is a complete bipartite graph. By the definition of a component, this implies that $K^{v,u}$ is neither empty nor an isolated vertex. Further, $K^{v,u}$ is $2$-connected if it is not a star. The remainder $\neigh{u} \setminus \vertexset[K^{v,u}])$ in $U^{v,u}$  consists exclusively of leaves if it is not empty.

For a bip-graph $H$ and a vertex $v$ of $H$, we introduce the following two indicator functions that grant us access to the two parts of $H$ with respect to $v$. The function $\partof[H] \colon \vertexset[H] \to \lpart[H] \cupdot \rpart[H]$ is defined by $\partof[H] (v) = \lpart[H]$ if and only if $v$ is in $\lpart[H]$, otherwise $\partof[H] (v) = \rpart[H]$. The inverse is made available by the function $\npartof[H] \colon \vertexset[H] \to \lpart[H] \cupdot \rpart[H]$ defined by $\npartof[H] (v) = \rpart[H]$ if and only if $\partof[H] (v) = \lpart[H]$, otherwise $\npartof[H] (v) = \lpart[H]$. For two bip-graphs $H$ and $G$, a vertex $x$ of $G$, and a vertex $v$ of $H$, we say that \emph{$\partof[G](x)$ and $\partof[H](v)$ agree} if both vertices are in the same respective part of their bip-graph. 

We recall the definition of a bip-isomorphism. If $H$ is a bip-graph, then $K^{v,u}$ is \isomorphic[bip] to the complete bipartite graph $K_{a,b}$, where $a$ is the number of vertices in $\partof[K^{v,u}] (v)$ and $b$ is the number of vertices in $\npartof[K^{v,u}] (v)$, and the bipartition is fixed accordingly. This is denoted by $K^{v,u} \congbip K_{a,b}$.

\subsection{Hardness for \Graphclass{} Graphs of Small Radius}
In this subsection, we study the following subclass of \graphclass{} graphs.

\begin{definition}
	We say that a graph $H$ has \emph{radius at most 2} if there exists a vertex $v$ of $H$ such that $H$ is equal to the $2$-neighbourhood of $v$.
\end{definition}

By definition, a graph of radius at most $2$ is connected.
We have already seen examples of \graphclass{} graphs of radius at most 2 in the form of thick $4$-vertex paths. A thick $4$-vertex path $(v_0^{b_0}, v_1^{b_1}, v_2^{b_2}, v_3^{b_3})$ is order~$p$ bip-reduced if, for every index $i \in \sqBrackets{0;3}$, $b_i$ is smaller than $p$. In the case that, additionally, for all indices $b_i>0$, Corollary~\ref{cor:gen_path_reduction_BIS} applies and, as we have argued in the previous section, we find a $p$-hardness gadget.

Graphs of radius at most $2$ arise naturally from our applications of a \gadget{B,p}. At multiple points of our search for gadgets, the following observation will be useful.
\begin{observation}\label{obs:hard_vertex}
	Let $p$ be a prime and $H$ be a bip-graph. If $v$ is a vertex in the left part $\lpart[H]$, then we construct a \gadget{B,p} by
	\begin{itemize}
		\item $\GadgetPart{\L}$ is the single distinguished vertex $\lpart[G(J_\L)] = \set{y_\L}$ with empty labelling;
		\item  $\GadgetPart{\R}$ consists of the edge $(y,y_\R)$ with $\rpart[G(J_\R)] = \set{y_\R}$ and the labelling $\tau_R \colon y \mapsto v $.
	\end{itemize}  
	We observe that $\GadgetPart{\L}$ selects all vertices in $\lpart[H]$ and $\GadgetPart{\R}$ selects all vertices in $\neigh{v}$. This gives that the induced sub-bip-graph $B$ consists of the disjoint union of $\twoneigh{v}$ and the set $\lpart[H]\setminus \twoneigh{v}$ of isolated vertices. If $v$ is in the right part $\rpart[H]$, then we obtain a \gadget{B,p} by swapping $\GadgetPart{\L}$ and $\GadgetPart{\R}$ against each other.
	
	Iteratively applying this observation, we obtain a \gadget{B,p} for $H$, where $B$ is the disjoint union of a possibly empty set of isolated vertices and the intersection of 2-neighbourhoods of any number of vertices.
\end{observation}

For our purposes, we can safely disregard the isolated vertices. First, we show that a $p$-hardness gadget is void of isolated vertices.
\begin{lemma}\label{lem:no_isolated_vertices_hardness_gadget}
	Let $p$ be a prime and $H$ be a bip-graph. If $H$ has a $p$-hardness gadget $\Gadget$, then neither $\GadgetPart{\L}$ nor $\GadgetPart{\R}$ select isolated vertices of $H$.
\end{lemma}
\begin{proof}
	Without loss of generality, let $\GadgetPart{\L}$ select an isolated vertex $u^\ast$. There exists a vertex $v$ selected by $\GadgetPart{\R}$ such that $\numHomBip[\GadgetEdge, (H, u^\ast, v)][p]$ is not $0$. We consider the connected component $C_{y_\L}$ of $G(J_E)$ that contains $y_\L$. Since $u^\ast$ is an isolated vertex, we deduce that $C_{y_\L}$ is the isolated vertex $y_\L$. Let $v'$ be a vertex in $i_\R$, and we consider a pair of vertices $u^{\mathrm{in}}$ and $u^{\mathrm{out}}$, where $u^{\mathrm{in}}$ is in $i_\R$ and $u^{\mathrm{out}}$ is in $o_\R$. Since $C_{y_\L}$ is the isolated vertex $y_\L$, the numbers of homomorphisms $\numHomBip[\GadgetEdge, (H, u^{\mathrm{in}}, v')]$ and $\numHomBip[\GadgetEdge, (H, u^{\mathrm{out}}, v')]$ are equal, a contradiction to the definition of a $p$-hardness gadget.
\end{proof}

Second, we show that adding isolated vertices does not affect the hardness.
	
\begin{lemma}\label{lem:2-neighbourhood_isolated_vertices_gadgetry}
	Let $p$ be a prime and $H$ be an order~$p$ bip-reduced bip-graph. If there exists a set $V$ of vertices of $H$, such that for $B$ the intersection $\bigcap_{v \in V}\twoneigh{v}$, \forThesis{the bip-graph $B$ has a $p$-hardness gadget or }the order~$p$ bip-reduced form $\bipreduced{B}$ admits a $p$-hardness gadget, then $H$ admits a $p$-hardness gadget.
\end{lemma}
\begin{proof}		
	We consider the \gadget{B',p} given by Observation~\ref{obs:hard_vertex} for $B$. The induced sub-bip-graph $B'$ consists of $B$ and a possibly empty collection $C$ of isolated vertices in the same part of $B'$. \forThesis{If $B$ has a $p$-hardness gadget $\Gadget$, then any partially $B$-labelled bip-graph is also partially $B'$-labelled since $B$ is a sub-bip-graph of $B'$. Neither $\GadgetPart{\L}$ nor $\GadgetPart{\R}$ select isolated vertices by Lemma~\ref{lem:no_isolated_vertices_hardness_gadget}. Therefore, $\Gadget$ is a $p$-hardness gadget for $B'$ as well.
		
	The second part of the lemma demands a more evolved argument. }First, we show that removing isolated vertices of a bip-graph $G$ does not affect a partially $G$-labelled graph $J$ with $G$-labelling $\tau$ -- except for the case that a distinguished vertex $y$ is in the domain of $\tau$. This is necessary because we aim to use a gadget for $G$ on the subgraph $G'$ obtained by removing all isolated vertices of $G$ in one part.

	Let $\tau$ be the partial $G$-labelling of $J$ and, for a vertex $y$ of $G(J)$ not in the domain of $\tau$, let $(J,y)$ select the non-empty set $\selectSet$. Let $V$ be the set of isolated vertices of $G$ in the range of $\tau$ and let $X$ be the domain of $V$ under $\tau$. The set $X$ might contain vertices in $\selectSet$. There is at least one vertex $v$ in $\selectSet$ and hence at least one bip-homomorphism $f$ from $(J,y)$ to $(G,v)$. If $v$ is an isolated vertex, then it follows that $y$ is an isolated vertex in $G(J)$. Since $y$ is not in the domain of $\tau$, the unlabelled bip-graph $(J',y)$ consisting of the isolated vertex $y$ selects $\selectSet$ too. Otherwise, the partially $G$-labelled bip-graph $J'$ obtained by removing $X$ from $G(J)$ and removing $\set{x \mapsto \tau(x) \given x \in X}$ from $\tau$ is partially $G'$-labelled. We recall that $y$ is not in $X$. The partially $G'$-labelled bip-graph $(J',y)$ selects $\selectSet$ too, that is $\numHomBip[(J,y), (G,v)]$ is equal to $\numHomBip[(J',y), (G',v)]$ by Corollary~\ref{cor:dot_product}, which also holds for partially labelled graphs by Section~\ref{sec:prelims}, because $J$ is obtained from $J'$ by the disjoint union. The statement still holds when we distinguish multiple vertices of $J$.
		
	Second, we show the lemma by an inductive argument toward the following claim. Let $G$ be a bip-graph and $\bipreduced{G}$ admit a $p$-hardness gadget, where $G_1, \dots, G_k$ is a resulting series of minimal length $k$ of bip-graphs, that is, $G_k$ is equal to $\bipreduced{G}$, the order~$p$ reduced form $\bipreduced{(G_1)}$ has the $p$-hardness gadget $\Gadget$, and, for every $i \in \sqBrackets{k-1}$, the order~$p$ reduced bip-graph $\bipreduced{(G_{i+1})}$ has $G_{i}$ as induced sub-bip-graph and the \gadget{G_i,p} given by the pair of partially $\bipreduced{(G_{i+1})}$-labelled bip-graphs $(J^{i+1}_\L,y^{i+1}_\L)$ and $(J^{i+1}_\R,y^{i+1}_\R)$. We refer to $k$ by the \emph{length of the gadget-series}. Given the bip-graph $C$ that consists of only isolated vertex in one part, the bip-graph $\bipreduced{(\bipreduced{G} \cupdot C)}$ admits a $p$-hardness gadget.
		
	Without loss of generality, let all vertices of $C$ be in $\lpart[C]$. Before we prove the claim, we provide more insight into the structure of $\bipreduced{(\bipreduced{G} \cupdot C)}$. Since by Observation~\ref{obs:p-wise_reduced_unique} the order~$p$ bip-reduced form is unaffected by the order of the used automorphisms, we observe that $\bipreduced{(\bipreduced{G} \cupdot C)}$ is \isomorphic[bip] to the bip-graph obtained by first taking the order~$p$ bip-reduced form $\bipreduced{G}$ and then reducing by automorphisms that do not fix $C$. Let $\hat{C}$ be the possibly empty set of isolated vertices in $\bipreduced{G}$ that are in $\lpart[\bipreduced{G}]$, it follows
	\begin{equation}\label{eq:split_get_rid_of_isolated_vertices}
		\bipreduced{(\bipreduced{G} \cupdot C)} = \bipreduced{\parenthesis[\big]{\bipreduced{G} \setminus \hat{C} \cupdot (\hat{C} \cupdot C)}} \congbip \bipreduced{G}\setminus \hat{C} \cupdot \bipreduced{(\hat{C} \cupdot C)}.
	\end{equation}
		
	Now, we prove the claim via induction on $k$. If the length of the gadget-series $k$ is $1$, then $\bipreduced{G}$ has the $p$-hardness gadget $\Gadget$. As we have seen, we assume without loss of generality that all bip-graphs in $\Gadget$ are partially $(\bipreduced{G}\setminus C)$-labelled, and thus by \eqref{eq:split_get_rid_of_isolated_vertices} they are partially $\bipreduced{(\bipreduced{G} \cupdot C)}$-labelled. By Lemma~\ref{lem:no_isolated_vertices_hardness_gadget}, neither $\GadgetPart{\L}$ nor $\GadgetPart{\R}$ select isolated vertices. Therefore, $\Gadget$ is a $p$-hardness gadget for $\bipreduced{(\bipreduced{G} \cupdot C)}$.
		
	Assuming the length of the gadget-series $k$ is at least $2$ and the claim holds for all graphs $G'$ with length of the gadget-series at most $k-1$, we take the \gadget{G_{k-1}, p} of $\bipreduced{G}$ in the form of the pair of partially $\bipreduced{G}$-labelled bip-graphs $(J^{k}_\L,y^{k}_\L)$ and $(J^{k}_\R,y^{k}_\R)$. The bip-graph $G_{k-1}$ has length of the gadget-series $k-1$ and cannot be a collection of isolated vertices due to Lemma~\ref{lem:no_isolated_vertices_hardness_gadget}. Therefore, we assume without loss of generality that $(J^{k}_\L,y^{k}_\L)$ and $(J^{k}_\R,y^{k}_\R)$ are partially $\bipreduced{(\bipreduced{G} \cupdot C)}$-labelled. The statement will follow from the induced sub-bip-graph $G'_{k-1}$ of $\bipreduced{(\bipreduced{G} \cupdot C)}$ obtained by the \gadget{G'_{k-1}, p} given by $(J^{k}_\L,y^{k}_\L)$ and $(J^{k}_\R,y^{k}_\R)$. Since by \eqref{eq:split_get_rid_of_isolated_vertices} adding $C$ does not affect the right part, we have that $(J^{k}_\R,y^{k}_\R)$ selects from $\bipreduced{(\bipreduced{G} \cupdot C)}$ the same set $\selectSet[\R]$ it selects from $\bipreduced{G}$. Let $\selectSet[\L]$ and $\selectSet[\L][']$ be the sets of vertices selected by $(J^{k}_\L,y^{k}_\L)$ from $\bipreduced{G}$ and $\bipreduced{(\bipreduced{G} \cupdot C)}$, respectively. If $\selectSet[\L]$ contains no isolated vertex, then by \eqref{eq:split_get_rid_of_isolated_vertices} adding $C$ has no effect on the vertices selected, that is $\selectSet[\L]$ and $\selectSet[\L][']$ are equal. In this case, $G'_{k-1}$ is equal to $G_{k-1}$ and the statement follows. Otherwise, $(J^{k}_\L,y^{k}_\L)$ can be assumed to consist only of the isolated vertex $y^k_\L$. Hence, $(J^{k}_\L,y^{k}_\L)$ selects all vertices in the left part, that is $\selectSet[\L]$ is equal to $\lpart[\bipreduced{G}]$ and $\selectSet[\L][']$ is equal to $\lpart[\bipreduced{G}\setminus \hat{C} ] \cupdot \bipreduced{(\hat{C} \cupdot C)}$. We conclude that $G_{k-1}$ is the induced sub-bip-graph of $\bipreduced{G}$ given by $\lpart[\bipreduced{G}]$ and $\selectSet[\R]$. Further, $G'_{k-1}$ is the induced sub-bip-graph of $\bipreduced{G\cupdot C}$ given by $\rpart[G_{k-1}]$ and $\lpart[\bipreduced{(\bipreduced{(G_{k-1})} \cupdot C)} ]$. Since $C$ consists of only isolated vertices in the left part, adding $C$ to $G_{k-1}$ has no effect on the bip-orbits of vertices in $\selectSet[\R]$. It follows that $\bipreduced{(G'_{k-1})}$ is \isomorphic[bip] to $\bipreduced{(\bipreduced{(G_{k-1})} \cupdot C)}$. The lemma follows from the induction hypothesis on $G_{k-1}$.
\end{proof}	
Even though the \gadget{B,p} given in Observation~\ref{obs:hard_vertex} does not satisfy that $B$ is exactly the $2$-neighbourhood of a given vertex $v$, we still refer to it as the \gadget{\twoneigh{v}, p} despite the isolated vertices. Similar, for $B$ the intersection of $2$-neighbourhoods of any number of vertices.

We recall Definition~\ref{def:selection}. For a graph $H$ and a set of vertices $V'$ of $H$, we frequently construct a partially $H$-labelled graph $(J,y)$ that selects the common neighbourhood of the vertices in $V'$. 
\begin{lemma}\label{lem:complete_core_gadget}
	Let $H$ be a bip-graph. If $\set{u_1, \dots, u_k}$ is a set of vertices of $H$, then there exists a partially $H$-labelled bip-graph $(J, y)$ that selects the set of common neighbours of $\set{u_1, \dots, u_k}$, that is
	\begin{align*}
		\selectSet = \set{v \in \vertexset[H] \given \numHomBip[(J,y), (H,v)]=1 } = \bigcap_{i \in \sqBrackets{k}} \neigh{u_i}.
	\end{align*}
\end{lemma}
\begin{proof}
	For every index $i \in \sqBrackets{k}$, we take the partially $H$-labelled bip-graph $(J_i, y_i)$ that consists of the simple edge $(x_i, y_i)$ and the partial labelling $\tau_i \colon x_i \mapsto u_i$, which fixes the bipartition of $J_i$.
	The bip-graph $(J_i, y_i)$ selects the set of neighbours $\neigh{u_i}$, where, for every neighbour $v$ in $\neigh{u_i}$, there is exactly one homomorphism in $\HomBip[(J_i, y_i), (H,v)]$.	
	The partially $H$-labelled bip-graph $(J, y)$ is then given by the dot product 
	\[
		(J, y) = \bigodot_{i \in \sqBrackets{k}} (J_i, y_i),
	\]
	and the result follows from Corollary~\ref{cor:dot_product_bip}.
\end{proof}

We recall that, for a graph $H$ and a tuple of vertices $\vector{v}$ in $\vertexset[H]$, the stabilizer $\Stab{\vector{v}}[\Aut]$ of $\vector{v}$ is  the subgroup of automorphisms $\varrho \in \Aut[H]$ that fix $\vector{v}$, i.e. $\varrho(\vector{v})= \vector{v}$.

\begin{lemma}\label{lem:bound_tuple_same_neighbourhood}
	Let $p$ be a prime and $H$ be a bip-graph. If $H$ is order~$p$ bip-reduced, then there exists no tuple of at least $p$ twins in $\vertexset[H]$.
\end{lemma}
\begin{proof}
	We assume toward a contradiction that there exists a tuple $V'$ of $p$ twins in $\vertexset[H]$. Let $V'$ be $\set{v_1, \dots, v_k}$. The automorphism $\varrho$, that shifts the elements of $V'$ by one and fixes every vertex $v'$ in the complement $\vertexset[H] \setminus V'$ is an automorphism of order $p$. Moreover, since $\varrho$ is in the stabilizer of the complement $\vertexset[H] \setminus V'$, it preserves the order of the bipartition of $H$. Hence, $\varrho$ is in $\AutBip[H]$, a contradiction.
\end{proof}

For an order~$p$ bip-reduced graph, Lemma~\ref{lem:bound_tuple_same_neighbourhood} directly yields the following result to find vertices $w$ whose neighbourhood is not contained in the complete core under study. 

\begin{observation}\label{obs:nb_outside_core}
	Let $p$ be a prime and let $H$ be an order~$p$ reduced bip-graph that is \graphclass{}. If $H$ contains a pair of adjacent vertices $v$ and $u$ such that $K^{v,u} \congbip K_{a,b}$ and $b\geq p$, then there exists a vertex $w$ in $\partof[K^{v,u}](u)$ with $\deg (w) > a$.
\end{observation}

In the same way, for a connected bip-graph $H$, any automorphism $\varrho$ of $H$ that stabilizes at least one vertex of $H$ has to be in $\AutBip[H]$. This motivates the following general observation.
\begin{observation}\label{obs:AutId_and_stabilizer}
	Let $p$ be a prime and let $H$ be an order~$p$ bip-reduced bip-graph. If $H$ is connected and $v$ is a vertex of $H$, then there exists no automorphism $\varrho$  in $\Aut[H]$ of order~$p$ such that $\varrho$ is in $\Stab{v}[\Aut]$.
\end{observation}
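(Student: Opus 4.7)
The plan is to observe that this statement is essentially immediate from the definition of $\redbipartites$, combined with the remark stated just before the observation. I would present it as a short direct-consequence argument.

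First I would unpack the definitions. By Definition~\ref{def:part_wise_reduced}, $H\in\redbipartites$ means that the group $\AutId(H)=\Aut(H)\cap\HomId[H][H]$ contains no element of order $p$. The set $\Stab{v}=\{\varphi\in\Aut(H)\mid\varphi(v)=v\}$ is a subgroup of $\Aut(H)$, so $\AutId(H)\cap\Stab{v}$ is a subgroup of $\AutId(H)$. A subgroup of a group with no element of order $p$ has no element of order $p$ either, which yields the stated observation directly. No computation, gadget construction, or structural analysis on~$H$ is required.

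I would then add one sentence to explain why the stronger statement hinted at in the preceding paragraph also holds, namely that even in $\Aut(H)$ (not merely $\AutId(H)$) there is no order-$p$ automorphism fixing $v$. Because $H$ is connected bipartite with bipartition $(L_H,R_H)$, any $\varrho\in\Aut(H)$ with $\varrho(v)=v$ preserves the part containing $v$ (its image is forced vertex-by-vertex along paths from $v$), hence preserves the other part too, hence lies in $\AutId(H)$. Combined with the previous paragraph, any such $\varrho$ of order $p$ would contradict $H\in\redbipartites$.

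The main "obstacle" here is really just that the statement looks almost tautological given the way $\redbipartites$ is defined, so the proof plan amounts to pointing at the definition. The only substantive ingredient is the connected-bipartite fact that vertex-stabilizing automorphisms preserve the bipartition, and this is exactly what the sentence preceding the observation records.
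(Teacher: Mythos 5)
Your proof is correct and matches the paper's (implicit) justification: the paper gives no proof environment for this observation, relying exactly on the preceding remark that for connected bipartite $H$ any automorphism fixing a vertex lies in $\AutId(H)$, together with the definition of $\redbipartites$. You correctly identify that the literal statement is immediate from the subgroup fact, and that the substantive content (the stronger claim actually used later, e.g.\ in Lemma~\ref{lem:small_graphs_components_size_left}, which refers to $\Aut(H)\cap\Stab{w}$) comes from the connected-bipartite propagation argument you sketch.
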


We need the following structural result for graphs of radius at most $2$. We recall that, for a graph $H$, a subgraph $K$, and a vertex $v$ of $K$, $\neigh{v}$ denotes the neighbourhood of $v$ in $H$.

\begin{lemma}\label{lem:small_graphs_components_size_left}
	Let $p$ be a prime and $H$ be an order~$p$ bip-reduced and \graphclass{} bip-graph of radius at most $2$ that is not complete bipartite. If $v$ and $u$ are adjacent vertices of $H$ such that $H$ is equal to $\twoneigh{v}$, 
	then, for the component $U^{v,u}$ denoted $U$ and its complete core $K^{v,u}$ denoted $K$,
	\begin{enumerate}
		\item $K^{v,u}$ is \isomorphic[bip] to $K_{a,b}$;
		\item $a \in \sqBrackets{p}$ and, for all vertices $v'$ different from $v$ but in $\partof[K][v]$, the neighbourhood $\neigh{v'}$ is equal to both $\partof[U](u)$ and $\partof[K](u)$;
		\item if $\twoneigh{v}$ consists of the single component $U$, then $a \in \sqBrackets{2; p-1}$, $b \geq 2$, and there exists a neighbour $w$ of $v$ such that $\neigh{u} \not\subseteq \partof[K](v)$; 
		\item if $b \geq$, then, for every tuple $S_p$ of $p$ vertices in $\partof[K](u)$, there exists a vertex $w$ in $S_p$ such that $\neigh{u}$ is not contained in $K$;
		\item for every vertex $w$ in $\partof[K](u)$ with $\neigh{w} \not\subseteq \partof[K](v)$, the complement $\neigh{w} \setminus \partof[K](v)$ is of cardinality in $\sqBrackets{p-1}$ and consists of only leaves;
		\item if $a<p$, then $U$ is order~$p$ bip-reduced and \graphclass{} with radius at most $2$.
	\end{enumerate}
\end{lemma}
\begin{proof}
	By the radius of $H$, every vertex $w$ in $\npartof[H](v)$ has to be a neighbour of $v$ and $H$ has to be connected.
	The first property follows from Lemma~\ref{lem:weirdness}. Regarding the second property, let $v'$ be a vertex in $\partof[K](v)$ different from $v$. Every neighbour $w$ of $v'$ has to be adjacent to $v$ by the radius of $H$. If $w$ is not in $K$, then the graph $K\cupdot\set{w}$ is a larger $2$-connected graph, contradicting the maximality of $K$. We assume toward 
	contradiction that $a > p$ and find in $\partof[K](v)$ a set of $p$ vertices, all distinct from $v$. It follows that $H$ contains a tuple of $p$ twins, a contradiction by Lemma~\ref{lem:bound_tuple_same_neighbourhood}.
	
	Regarding the third property, we observe that every neighbour of $v$ is in $K$ because the split of $H$ at $v$ consists of a single component. Hence, every vertex in $\partof[K](v)$ is a twin of $v$. By Lemma~\ref{lem:bound_tuple_same_neighbourhood} we obtain that $a <p$. If $b=1$, then $H$ is a star, which contradicts $H$ not being a complete bipartite graph. Similarly, if $a=1$ then $K$ is equal to $U$. Since $H$ is equal to $U$, we obtain that $H$ has is the single edge $(v,u)$, which gives $b=1$, and by the previous argument a contradiction. Lastly, assuming that the neighbourhood of every vertex $w \in \partof[K](u)$ satisfies is contained in $K$ contradicts that $H$ is complete bipartite.
	
	The fourth property follows from  Lemma~\ref{lem:bound_tuple_same_neighbourhood} similar to Observation~\ref{obs:nb_outside_core}.
	Regarding the fifth property, let $w \in \partof[K](u)$ be a vertex whose neighbourhood is not contained in $K$, and we denote the complement $\neigh{w} \setminus K$ by $\L_u$. If $\L_u$ contains a vertex $v'$ with $\deg(v')>1$, then this contradicts the maximality of $K$. Hence, $\L_u$ consists of only leaves. In fact, $\L_u$ consists of twins, and by Lemma~\ref{lem:bound_tuple_same_neighbourhood} it follows $\abs{\L_u} < p$.
	
	Regarding the last property, $U$ is an induced sub-bip-graph of $H$ and thus a \graphclass{} bip-graph of radius at most $2$. It suffices to show that $U$ is order~$p$ bip-reduced as well. 
	For any automorphism $\varphi$ in $\AutBip[U]$, we observe that $\varphi(v)$ is in $\partof[K](v)$. By the Orbit-Stabilizer Theorem~\ref{thm:orbit_stabilizer}, $\varphi$ in $\AutBip[U]$ decomposes into an automorphism $\varphi_v$ in $\AutBip[U]$ that does not fix $v$, and an automorphism $\varrho$ in $\Aut[U] \cap \Stab{v}[\Aut]$ that fixes $v$. In particular, since $\varrho$ fixes $v$, it follows that $\varrho$ is in $\AutBip[H]$. We deduce $\varphi= \varphi_v \circ \varrho$. By assumption, $\abs{\lpart[K]}<p$, and thus $\Ord(\varphi_v)$ is not congruent modulo $p$ to $0$. Further, $\varrho$ is in $\AutBip[H]$. It follows that $\Ord(\varrho)$ is not congruent modulo $p$ to $0$. We conclude that also $\Ord(\varphi)$ is not congruent modulo $p$ to $0$.
\end{proof}

One structure we are going to encounter frequently also beyond small graphs is studied in the following lemma.
\begin{lemma}\label{lem:gadget_adjacent_components}
	Let $p$ be a prime and $H$ be an order~$p$ bip-reduced bip-graph. Further, let $H$ contain a vertex $v$ and the split of $\twoneigh{v}$ contain two distinct components $U^{v,u_1}$ and $U^{v,u_2}$, where $u_1$ and $u_2$ are vertices adjacent to $v$. If $\deg(v) \not \equiv 0 \pmod p$ and, for both indices $i \in \sqBrackets{2}$, the complete core $K^{v,u_i}$ satisfies $\abs{\partof[K^{v,u_1}](v)} \not \equiv 1 \pmod p$ and $\abs{\npartof[K^{v,u_1}](v)} \not \equiv 0 \pmod p$, then $H$ admits a $p$-hardness gadget.
\end{lemma}
\begin{proof}
	Let $\GadgetPart{\L}$ and $\GadgetPart{\R}$ be the partially $H$-labelled bip-graphs given by Lemma~\ref{lem:complete_core_gadget} such that the set $\selectSet[\L]$ of vertices selected by $\GadgetPart{\L}$ is $\partof[K^{v,u_1}](v)$ and the set $\selectSet[\R]$ of vertices selected by equal $\GadgetPart{\L}$ is $\partof[K^{v,u_2}](v)$. We denote $o_\L = o_\R = \set{v}$, $i_\L = \selectSet[\L] \setminus \set{v}$, and $i_\R = \selectSet[\R] \setminus \set{v}$. Since by assumption $\partof[K^{v,u_1}](v)$ and $\partof[K^{v,u_2}](v)$ are of cardinality not congruent modulo $p$ to $1$ we deduce that none of the sets $i_\L, i_\R, o_\L, o_\R$ is of cardinality congruent modulo $p$ to $0$. It remains to construct a suitable partially $H$-labelled bip-graph $\GadgetEdge$.
	
	Let $(J_v, y_v)$ be the partially $H$-labelled bip-graph that selects $\neigh{v}$ given by by Lemma~\ref{lem:complete_core_gadget}, and let $P_2$ be the path $(y_\L, z, y_\R)$. We fix the bipartition of $P_2$ such that $\partof[P_2](y_\L)$ and $\partof[G(J_\L)](y_\L)$ agree. Now, we take the dot product $(P_2, z) \odot (J_v, y_v)$, which we denote by $(J_E, z)$. In the second step, we distinguish in $J_E$ the vertices $y_\L$ and $y_\R$, which gives $\GadgetEdge$. We refer to Corollary~\ref{cor:dot_product_bip} for the used properties of the dot product. For every pair $(w_1, w_2)$ in $\selectSet[\L] \times \selectSet[\R]$, the number of homomorphisms $\numHomBip[\GadgetEdge, (H, w_1, w_2)]$ is given by the common neighbours of $v$, $w_1$, and $w_2$. If $w_1$ and $w_2$ are equal to $v$, then the number of common neighbours is $\deg(v) \not \equiv 0 \pmod p$. If $w_1$ is in $i_\L$ and $w_2$ is equal to $v$, then the number of common neighbours is $\abs{\npartof[K^{v,u_1}](v)} \not \equiv 0 \pmod p$. Similarly, if $w_1$ is equal to $v$ and $w_2$ is in $i_\R$ because $\abs{\npartof[K^{v,u_1}](v)} \not \equiv 0 \pmod p$. Lastly, if $(w_1, w_2)$ is in $i_\L \times i_\R$, then there cannot exist a common neighbour $z$ in $\neigh{v} \cap \neigh{w_1} \cap \neigh{w_2}$ because $U^{v,u_1}$ and $U^{v,u_2}$ are distinct.
\end{proof}

With these technical results at hand, we now show that every order~$p$ bip-reduced bip-graph $H$ that is \graphclass and of radius at most $2$ admits a $p$-hardness gadget if it is not complete bipartite. The proof is split into three lemmas that build upon each other in sequence. 

\begin{lemma}\label{lem:small_graphs_single_component}
	Let $p$ be a prime and $H$ be an order~$p$ bip-reduced bip-graph that is \graphclass{}, of radius at most $2$, and not complete bipartite. If there exists a vertex $v$ of $H$ such that $H$ is equal to $\twoneigh{v}$ and the split of $H$ at $v$ consists of a single component, then $H$ admits a $p$-hardness gadget.
\end{lemma}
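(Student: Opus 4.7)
The plan is to apply Lemma~\ref{lem:small_graphs_components_size_left} to extract the local structure of $H$: the block $\twocon{U}$ containing $v$ is isomorphic to $K_{a,b}$ with $v\in\L(\twocon{U})$, $a\in[3,p-1]$, $b\geq 2$, and some vertex $u^*\in\R(\twocon{U})$ carries a set $L^*$ of $c\in[p-1]$ leaves. I first build the natural candidate $p$-hardness gadget: take $\GadgetPart{\L}$ to be the star of Observation~\ref{obs:complete_core_gadget} pinned at $v$, giving $\O{\L}=\neigh{v}=\R(\twocon{U})$; take $\GadgetPart{\R}$ to be the star pinned at $u^*$, giving $\O{\R}=\neigh{u^*}=\L(\twocon{U})\cup L^*$; and take $\GadgetEdge$ to be the single edge $(y_\L,y_\R)$. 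Since every $u\in\R(\twocon{U})$ is adjacent to all of $\L(\twocon{U})$ through $K_{a,b}$ while the leaves of $u^*$ are adjacent only to $u^*$, the non-edges of $H$ inside $\O{\L}\times\O{\R}$ are exactly $(\R(\twocon{U})\setminus\{u^*\})\times L^*$. This forces the partition $i_\L=\R(\twocon{U})\setminus\{u^*\}$, $o_\L=\{u^*\}$, $i_\R=L^*$, $o_\R=\L(\twocon{U})$ of sizes $b-1,1,c,a$; since $1$, $c$ and $a$ all lie in $[1,p-1]$, the argument concludes at once whenever $b\not\equiv 1\pmod p$.

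The remaining case $b\equiv 1\pmod p$ forces $b\geq p+1$, and by Lemma~\ref{lem:small_graphs_components_size_left}(4) the set $U_0$ of leafless vertices in $\R(\twocon{U})$ satisfies $|U_0|\leq p-1$. Assuming $|U_0|\in[1,p-1]$, I would prove hardness via Lemma~\ref{lem:gadget_reduction} by exhibiting a $(B,p)$-gadget, with $B=H[\{u^*\}\cup L^*\cup\L(\twocon{U})\cup U_0]$. Using that leaves of $u^*$ connect only to $u^*$ and that the vertices in $U_0$ carry no leaves in $H$, an edge-by-edge check shows $B$ is exactly the $4$-vertex path $\genpath{c,1,a,|U_0|}$ as an induced subgraph. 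Since all four parameters lie in $[1,p-1]$, Lemma~\ref{lem:gen_path_reduction_BIS} supplies a $p$-hardness gadget for $B$, and Definition~\ref{def:p-gadget} lifts it to one for $H$. The right-side selector of the $(B,p)$-gadget is obtained by composing, via Observation~\ref{obs:product}, the star pinned at $v$ with path-of-length-$3$ pinnings at leaves of the leaf-bearing vertices in $\R(\twocon{U})\setminus(U_0\cup\{u^*\})$; such a pinning has hom-count $a+|L_u|$ at $y\mapsto u$ and $a$ at other $y\in\R(\twocon{U})$, so carefully chosen pinnings annihilate the unwanted leaf-bearers.

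The main obstacle comes in two parts. First, the above right-side selector works cleanly only for leaf-bearers $u$ with $|L_u|\equiv -a\pmod p$; handling the other leaf-counts requires stratifying $\R(\twocon{U})$ by leaf-count and combining multiple pinnings, whose feasibility relies on $H\in\redbipartites$ and Lemma~\ref{obs:bound_tuple_same_neighbourhood} bounding each stratum by $p-1$ vertices. Second and more delicate is the corner sub-case $|U_0|=0$, where every vertex of $\R(\twocon{U})$ is leaf-bearing and no induced $4$-vertex path of the above form exists. Since $b\equiv 1\pmod p$ while each leaf-count stratum has size at most $p-1$, at least two distinct leaf-count classes occur; exploiting this, I would construct a direct $p$-hardness gadget whose $|i_\L|$ is rebalanced out of the residue class $0$ by selecting $\O{\L}$ through a stratum-separating leaf-pinning. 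The arithmetic bookkeeping of these leaf-count strata modulo $p$ is the technically most involved step.
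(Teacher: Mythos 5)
Your case $b\not\equiv 1\pmod p$ is correct and is, modulo packaging, the same argument as the paper's (the paper passes through a $(\twoneigh{u_1},p)$-gadget and Lemma~\ref{lem:gen_path_reduction_BIS}, while you inline it into one direct $p$-hardness gadget with $\O{\L}=\neigh{v}$, $\O{\R}=\neigh{u^*}$, and a single-edge $\GadgetEdge$; both are sound).

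Your case $b\equiv 1\pmod p$, however, has a genuine gap, and you acknowledge as much: you cannot construct the right-side selector in general, and you have no argument at all for the sub-case $|U_0|=0$. The detour through $U_0$ (the leafless vertices of $\R(\twocon{U})$) and leaf-count stratification is a dead end, and the paper avoids it entirely. The key observation you missed is this: once $b\equiv 1\pmod p$ you have $b\geq p+1$, and property~4 of Lemma~\ref{lem:small_graphs_components_size_left} then guarantees \emph{at least two} leaf-bearing vertices $u_1, u_2\in\R(\twocon{U})$ (any $p$-subset must contain one, and $b\geq p+1$ lets you find a second). Take $\GadgetPart{\L}$ selecting $\O{\L}=\neigh{u_1}=\L(\twocon{U})\cupdot\L_1$, $\GadgetPart{\R}$ selecting $\O{\R}=\neigh{u_2}=\L(\twocon{U})\cupdot\L_2$, and let $\GadgetEdge$ be a path of length $2$ (counting length-$2$ walks), with $o_\L=o_\R=\L(\twocon{U})$, $i_\L=\L_1$, $i_\R=\L_2$. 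For $(w_1,w_2)\in o_\L\times o_\R$ the number of length-$2$ walks is $|\R(\twocon{U})|=b\equiv 1\not\equiv 0$; for the mixed pairs the unique common neighbour $u_1$ or $u_2$ gives one walk; for $(w_1,w_2)\in i_\L\times i_\R$ the leaves have disjoint neighbourhoods $\{u_1\}$ and $\{u_2\}$, so there is no walk. This is exactly where $b\equiv 1\pmod p$ is used, and it handles $|U_0|=0$ and $|U_0|\geq 1$ uniformly, with no stratification of leaf-counts at all. Your proposal would need this idea to close the argument.
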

\begin{proof}	
	The neighbourhood of $v$ is equal to $\npartof[H](v)$ as every vertex of $H$ is of distance at most $2$ from $v$.
	For every neighbour $u$ of $v$, the graph $H$ is equal to the component $U^{v,u}$ that we denote by $U$. The complete core $K^{v,u}$ is thus uniquely given, and we denote it by $K$. We recall Lemma~\ref{lem:weirdness}, and let $K$ be \isomorphic[bip] to $K_{a,b}$. We apply properties $3$ and $5$ of Lemma~\ref{lem:small_graphs_components_size_left}, which yield for the first count $a \in \sqBrackets{2;p-1}$, for the other count $b \geq 2$, and there exists a neighbour $u_1$ of $v$ with $\neigh{u_1} \not\subseteq K$. In particular, $\neigh{u_1}$ decomposes into $\partof[K](v)$ and the complement $\neigh{u_1} \setminus \partof[K](v)$ denoted $\L_1$. This complement $\L_1$ consists of only leaves and has cardinality $\abs{\L_1}$ in $\sqBrackets{p-1}$.
	
	We distinguish cases.
	\begin{enumerate}
		\item $b \not\equiv 1 \pmod p$.
		The case is illustrated with the first two figures in Figure~\ref{fig:small_graph_example_1}.
		
		We observe that $\twoneigh{u_1}$ is the thick $4$-vertex path $(v_0^{\abs{\L_1}},v_1^1,v_2^a,v_3^{b-1})$, which contains no entry congruent modulo $p$ to $0$. By Corollary~\ref{cor:gen_path_reduction_BIS} and Lemma~\ref{lem:2-neighbourhood_isolated_vertices_gadgetry}, $H$ admits a $p$-hardness gadget.
		
		\item $b \equiv 1 \pmod p$.
		The case is illustrated in the last figure in Figure~\ref{fig:small_graph_example_1}.
		
		Since $2 \leq b$, it follows that $b$ is at least $p+1$. By property~4 of Lemma~\ref{lem:small_graphs_components_size_left}, there exist a vertex $u_2$ in $\npartof[H](v)$ that is distinct from $u_1$, such that $\neigh{u_2} \not\subset K$. Similar to $u_1$, the neighbourhood of $u_2$ decomposes into $\partof[K](v)$ and the complement $\neigh{u_2} \setminus \partof[K](v)$ denoted $\L_2$, where the latter consists of only leaves and has cardinality $\abs{\L_2}$ in $\sqBrackets{p-1}$. 
		
		We recall Lemma~\ref{lem:complete_core_gadget} and construct the following $p$-hardness gadget:
		\begin{itemize}
			\item $\GadgetPart{\L}$ selects the set $\neigh{u_1}$ denoted by $\selectSet[\L]$;
			\item $\GadgetPart{\R}$ selects the set $\neigh{u_2}$ denoted by $\selectSet[\R]$;
			\item $\GadgetEdge$ selects walks of length $2$ by taking $G(J_E)$ to be the path $(y_\L, z, y_\R)$ and distinguishing $y_\L$ and $y_\R$ before we fix the bipartition of $J_E$ such that $\partof[G(J_E)](y_\L)$ and $\partof[G(J_\L)](y_\L)$ agree.
		\end{itemize}
		We set both $o_\L$ and $o_\R$ to be $\partof[K](v)$, the set $i_\L$ to be $\L_1$, and the set $i_\R$ to be $\L_2$. The cardinality $\abs{\lpart[K]}$ is $a$, which is not congruent modulo $p$ to $0$. Both the complement $\L_1$ and the complement $\L_2$ are of cardinality in $\sqBrackets{p-1}$. Now, every pair in $\L_1 \times \L_2$ is of distance larger than $2$. For every pair in $\selectSet[\L] \times \selectSet[\R]$ that is not in $\L_1 \times \L_2$, the number of common neighbours is $\abs{\npartof[K](v)}$, which is equal to $b$. Since $b\equiv 1\pmod p$, the gadget $\GadgetEdge$ yields that only the pairs in $i_\L \times i_\R$ are not in $\selectSet[E]$. \qedhere
	\end{enumerate}	
\end{proof}

\begin{figure}[t]
	\centering
	\includegraphics[scale=1]{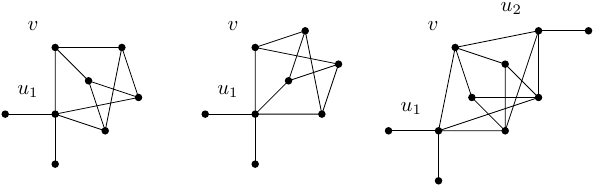}
	\caption{Illustration for Lemma~\ref{lem:small_graphs_single_component} with $p=3$. From left to right: illustration of the first case, rearrangement to $4$-vertex path, illustration of the second case.}
	\label{fig:small_graph_example_1}
\end{figure}

\begin{lemma}\label{lem:small_graphs_multiple_components_not_0_deg}
	Let $p$ be a prime and $H$ be an order~$p$ bip-reduced bip-graph that is \graphclass{} and of radius at most $2$. If there exists a vertex $v$ of $H$ such that the graph $H$ is equal to $\twoneigh{v}$ and $\deg(v)$ is not congruent modulo $p$ to $0$, then $H$ admits a $p$-hardness gadget.
\end{lemma}
\begin{proof}
	Let the split of $H$ at $v$ consist of the set of equivalence classes $\set{\eqclass{U_s}}_{s \in \sigma}$ with non-zero counts $\set{\alpha_s}_{s \in \sigma}$ and representatives $\set{U_s}_{s \in \sigma}$. If the split consists of only one component, then Lemma~\ref{lem:small_graphs_single_component} yields the result. We assume that the split contains at least two components, and we distinguish cases.
	\begin{enumerate}
		\item There exists a component $U^{v,u}$ given by a neighbour $u$ of $v$, such that the complete core satisfies $K^{v,u} \congbip K_{a, b}$ with $b \equiv  0 \pmod p$. The case is illustrated in the first figure in Figure~\ref{fig:small_graph_example_2}. 
		
		By property~4 of Lemma~\ref{lem:small_graphs_components_size_left}, we assume without loss of generality that $\neigh{u} \not \subseteq K$, and we denote the complement $\neigh{u} \setminus \partof[K^{v,u}](v)$ by $\L_u$. This complement consists of leaves and is of cardinality in $\sqBrackets{p-1}$.	
			
		We construct a \gadget{B,p}, where $B$ is an induced sub-bip-graph of $H$ and a thick $4$-vertex path. Let $\GadgetPart{\R}$ be the partially $H$-labelled bip-graph obtained by Lemma~\ref{lem:complete_core_gadget} that selects the set $\partof[K^{v,u}](u)$ denoted by $\selectSet[\R]$. We construct the partially $H$-labelled bip-graph $\GadgetPart{\L}$ in two steps.
		
		First, let $P$ be the bip-graph consisting of the single edge $(y_\L, x)$ and bipartition such that $\partof[P](y_\L)$ and $\partof[H](v)$ agree. We obtain the bip-graph $(P_2,y_\L)$ with one distinguished vertex, and claim that $(P_2, y_\L)$ selects the set $\set{v' \in \partof[H](v) \given \deg(v') \not \equiv 0 \pmod p}$. 		
		For every vertex $v'$ in $\partof[H](v)$, the number of homomorphisms $\numHomBip[(P_2, y_\L), (H,v')]$ is given by the number of possible images of $x$ in $\neigh{v'}$, thus $\numHomBip[(P_2, y_\L),(H,v')] = \deg(v')$. The claim follows. 
		
		Second, let $(J, y_\L)$ be the partially $H$-labelled bip-graph obtained by Lemma~\ref{lem:complete_core_gadget} that selects $\neigh{u}$. We apply the dot product, and we obtain the partially $H$-labelled bip-graph $(P_2, y_\L) \odot (J, y_\L)$ denoted by $\GadgetPart{\L}$. By Corollary~\ref{cor:dot_product_bip}, $\GadgetPart{\L}$ selects the set $\set{w \in \neigh{u} \given \deg(w) \not \equiv 0 \pmod p}$ denoted by $\selectSet[\L]$. By property~2 of Lemma~\ref{lem:small_graphs_components_size_left}, $v$ is the only vertex in $\neigh{u}$ with possible neighbours not in $K^{v,u}$. Due to the assumptions of $\deg(v) \not \equiv 0 \pmod p$ and $\partof[K^{v,u}](u) = b \equiv 0 \pmod p$, we obtain $\selectSet[\L] = \L_u \cupdot \set{v}$.
			
		We deduce that $B$ is the induced subgraph of $H$ given by $\selectSet[\L]$ and $\selectSet[\R]$. In particular, $B$ is the $4$-vertex path $(v_0^{\abs{\L_u}}, v_1^1, v_2^1, v_3^{b-1})$ given by the leaves $\L_u$, the vertex $u$, the vertex $v$, and the complement $\partof[K^{v,u}](u) \setminus \set{u}$. Since $b$ is larger than $1$, none of the entries in $B$ is congruent modulo $p$ to $0$. By Corollary~\ref{cor:gen_path_reduction_BIS}  and Corollary~\ref{cor:gadget_reduction}, $H$ admits a $p$-hardness gadget.
		
		\item There exists a pair of distinct components $U^{v,u_1}$ and $U^{v,u_1}$ given by a pair of distinct neighbours $u_1$ and $u_2$ of $v$, such that the complete cores satisfy $K^{v,u_1} \congbip K_{a_1, b_1}$ and $K^{v,u_1} \congbip K_{a_2, b_2}$ with $b_1, b_2 \not \equiv 0 \pmod p$ as well as $a_1 , a_2 > 1$.	The case is illustrated in the second figure in Figure~\ref{fig:small_graph_example_2}.
		
		By Lemma~\ref{lem:small_graphs_components_size_left} it follows that $a_1, a_2 \not \equiv 1 \pmod p$. Lemma~\ref{lem:gadget_adjacent_components} yields a $p$-hardness gadget.

		\item The first two cases do not apply. The case is illustrated in the third figure in Figure~\ref{fig:small_graph_example_2}.
		
		Since the first case does not apply, for every neighbour $u$ of $v$, the component $U^{v,u}$ has complete core $K^{v,u}$ that satisfies $K^{v,u} \congbip K_{a, b}$ with $b \not\equiv 0 \pmod p$. Since the second case does not apply, we have that there exist at most one component $U^{v,u_1}$, where $u_1$ is a neighbour of $\neigh{v}$, with complete core $K^{v,u_1}$ that satisfies $K^{v,u_1} \congbip K_{a_1,b_2}$ as well as $a_1 > 1$ and $b_1 \not\equiv 0 \pmod p$. In this case, it follows for the count $\alpha_1$ of the equivalence class of $U^{v,u_1}$ that $\alpha_1 = 1$. We briefly argue that every component $U$ not equal to $U^{v,u_1}$ is an edge.
		
		For a neighbour $u_2 $ of $v$ with $U=U^{v,u_2}$, the complete core $K^{v,u_2}$ is \isomorphic[bip] to $K_{a_2,b_2}$, where the counts $a_2$ and $b_2$ satisfy $b_2 \not\equiv 0 \pmod p$ and $a_2 \leq 1$. Hence, $a_2$ is equal to $1$. 
		
		By assumption, $H$ is not a star, and thus there exists one component $U_1$ that is not a star. Let $u_1$ be a neighbour of $v$ such that $U_1 = U^{v,u_1}$ and the complete core $K^{v,u_1}$ is \isomorphic[bip] to $K_{a_1, b_1}$. It follows $a_1 > 1$ and $b_1 \not \equiv 0 \pmod p$, and we denote $K^{v,u_1}$ by $K_1$. There are by assumption at least two components in the split of $H$ at $v$. We deduce that the number $\sigma$ of equivalence classes in the split satisfies $\sigma = 2$, where the second representative $U_2$ is \isomorphic[bip] to $K_{1,1}$. The components in the equivalence class $\eqclass{U_2}$ give a tuple of $\alpha_2$ leaves adjacent to $v$. Hence, the count $\alpha_2$ is in $\sqBrackets{p-1}$ by Lemma~\ref{lem:bound_tuple_same_neighbourhood}.
		
		By Lemma~\ref{lem:complete_core_gadget}, we construct the following \gadget{B,p} for $H$, where $B$ is an induced sub-bip-graph of $H$ and a thick $4$-vertex path.
		\begin{itemize}
			\item $\GadgetPart{\L}$ is the partially $H$-labelled bip-graph that selects the set $\partof[K_1](v)$ denoted $\selectSet[\L]$.
			\item $\GadgetPart{\R}$ is the partially $H$-labelled bip-graph that selects the set $\neigh{v}$ denoted $\selectSet[\R]$. 
		\end{itemize}
		In this way, $B$ is the $4$-vertex path $(v_0^{\alpha_2}, v_1^1, v_2^{b_1}, v_3^{a_1 -1})$ given by the $\alpha_2$ leaves adjacent to $v$, the vertex $v$, the $b_1$ neighbours of $v$ in $K_1$, and the remaining vertices in $\partof[K_1](v) \setminus \set{v}$. This is illustrated in the last figure in Figure~\ref{fig:small_graph_example_2}. By Lemma~\ref{lem:small_graphs_components_size_left}, $a_1 \leq p$, thus none of the counts is congruent to modulo $p$ to $0$. By  Corollary~\ref{cor:gen_path_reduction_BIS} and Corollary~\ref{cor:gadget_reduction}, the case follows.\qedhere
	\end{enumerate}
\end{proof}

\begin{figure}[t]
	\centering
	\includegraphics[scale=1]{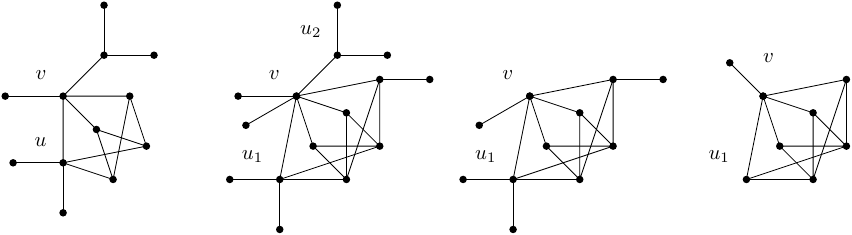}
	\caption{Illustration for Lemma~\ref{lem:small_graphs_multiple_components_not_0_deg} with $p=3$. From left to right: first case, second case, third case, and $4$-vertex path constructed in the third case.}
	\label{fig:small_graph_example_2}
\end{figure}

\begin{lemma}\label{lem:small_graphs_multiple_components_0_deg}
	Let $p$ be a prime and $H$ be an order~$p$ bip-reduced bip-graph that is \graphclass{} and of radius at most $2$. If there exists a vertex $v$ of $H$ such that the graph $H$ is equal to $\twoneigh{v}$ and $\deg(v)$ is congruent modulo $p$ to $0$, then $H$ admits a $p$-hardness gadget.
\end{lemma}
\begin{proof}
	Let the split of $H$ at $v$ consist of the set of equivalence classes $\set{\eqclass{U_s}}_{s \in \sigma}$ with non-zero counts $\set{\alpha_s}_{s \in \sigma}$ and representatives $\set{U_s}_{s \in \sigma}$. If the split consists of only one component, then Lemma~\ref{lem:small_graphs_single_component} yields the result. We assume that the split contains at least two components, and we distinguish cases.
	\begin{enumerate}
		\item There exists a neighbour $u \in \neigh{v}$ with complete core $K^{v,u}$, such that $K^{v,u}\congbip K_{a, b}$ with $a > 1$ and $b \not \equiv 0 \pmod p$. The case is illustrated in the first figure in Figure~\ref{fig:small_graph_example_3}.
		
		We shorten notation and denote $U^{v,u}$ by $U$ and $K^{v,u}$ by $K$. By property~2 of Lemma~\ref{lem:small_graphs_components_size_left}. it follows $a \not\equiv 1 \pmod p$. We recall Lemma~\ref{lem:complete_core_gadget} and construct the following \gadget{B,p} for $H$, where $B$ an induced sub-bip-graph of $H$ and also a $4$-vertex path.
		\begin{itemize}
			\item $\GadgetPart{\L}$ is a partially $H$-labelled bip-graph that selects the set $\partof[K](v)$ denoted $\selectSet[\L]$.
			\item $\GadgetPart{\R}$ is a partially $H$-labelled bip-graph that selects the set $\neigh{v}$ denoted $\selectSet[\R]$.
		\end{itemize}
		In this way, $B$ is the thick $4$-vertex path $(v_0^{\deg(v) - b}, v_1^1, v_2^b, v_3^{a-1})$ given by the neighbours of $v$ not in $K$, the vertex $v$, the neighbours of $v$ in $K$, and the complement $\partof[K](v) \setminus \set{v}$. The bip-graph $B$ is illustrated with the second figure in Figure~\ref{fig:small_graph_example_3}. None of the counts is congruent modulo $p$ to $0$ because the degree satisfies $\deg(v) \equiv 0 \pmod p$ and the counts satisfy $b \not\equiv 0 \pmod p$ and $a\not\equiv 1 \pmod p$. Corollary~\ref{cor:gen_path_reduction_BIS} and Corollary~\ref{cor:gadget_reduction} give the result.
		
		\item The first case does not apply. \\
		By the split of $H$ at $v$, we obtain
		\[
		\deg(v)\equiv\sum_{s\in \sigma} \alpha_s \cdot \deg_{U_s} (v) \pmod p.
		\]
		For every neighbour $u$ of $v$, the complete core $K^{v,u}$ with  $K^{v,u} \congbip K_{a,b}$ satisfies that, if $a > 1$, then $b \equiv 0 \pmod p$. The degree of $v$ in the component $U^{v,u}$ is equal to $b$. Hence, for every such component $U^{v,u}$, it follows that, if $a > 1$, then $\deg_{U^{v,u}} (v) \equiv 0 \pmod p$. 
		
		Every other component is a single edge. The reason is that, if $b \not\equiv 0 \pmod p$, then $a=1$. We deduce that $\deg(v)$ is congruent modulo $p$ to the count of components that are edges. Every such component corresponds to a leaf adjacent to $v$. By assumption, $\deg(v) \equiv 0 \pmod p$, and by Lemma~\ref{lem:bound_tuple_same_neighbourhood} we deduce that, for every neighbour $u$ of $v$, the complete core $K^{v,u}$ with $K^{v,u}\congbip K_{a,b}$ satisfies $b \equiv 0 \pmod p$ and $a \in \sqBrackets{2,p}$, where the latter follows from the definition of a component and Lemma~\ref{lem:small_graphs_components_size_left}. We distinguish the following two cases.
		
		\begin{enumerate}
			\item There exists a component $U^{v,u}$ with complete core $K^{v,u}$ that satisfies $K^{v,u} \congbip K_{a,b}$ with $a \neq p$. The case is illustrated in the third figure in Figure~\ref{fig:small_graph_example_3}. 
			
			We denote $K^{v,u}$ by $K$ and $U^{v,u}$ by $U$. By Lemma~\ref{lem:small_graphs_components_size_left} and $b\equiv 0 \pmod p$, the graph $U$ is not complete bipartite. We assume without loss of generality that the neighbourhood of $u$ satisfies $\neigh{u} \not\subseteq\lpart[K]$.
			Since $a$ is lesser than $p$, the component $U$ is by Lemma~\ref{lem:small_graphs_components_size_left} order~$p$ bip-reduced. Let $v'$ be a vertex in $\partof[U][v]$ different from $v$. We observe that $\twoneigh{v'}$ is equal to $U$ and the split of $U$ at $v'$ consists of the single component $U$. Hence, $U$ satisfies the prerequisites of Lemma~\ref{lem:small_graphs_single_component} and thus $U$ admits a $p$-hardness gadget. By Lemma~\ref{lem:2-neighbourhood_isolated_vertices_gadgetry}, $H$ admits a $p$-hardness gadget.
			
			\item Every component $U^{v,u}$ with complete core $K^{v,u}$ \isomorphic[bip] to $K_{a,b}$ satisfies $a = p$. The case is illustrated in the fourth figure in Figure~\ref{fig:small_graph_example_3}.
			
			Let $u$ be an arbitrary neighbour of $v$. We consider the component $U^{v,u}$ and the complete core $K^{v,u}$. The latter is \isomorphic[bip] to $K_{a,b}$, which by assumption satisfies $a=p$ and $b \equiv 0 \pmod p$. We denote $U^{v,u}$ by $U$ and $K^{v,u}$ by $K$. By Lemma~\ref{lem:small_graphs_components_size_left}, we assume without loss of generality that $\neigh{u} \not\subseteq \lpart[K]$. The complement $\neigh{u} \setminus \partof[K](v)$ is denoted $\L_u$, which has cardinality in $\sqBrackets{p-1}$ and consists of only leaves. This yields $\deg(u) = a + \abs{\L_u} \not \equiv 0 \pmod p$. We construct a partially $H$-labelled bip-graph $\GadgetPart{\R}$ that selects every neighbour of $v$ except those in $\partof[K](u) \setminus \set{u}$. The illustration in Figure~\ref{fig:small_graph_example_3} has vertices not selected by $\GadgetPart{\R}$ in tiny size.
			
			Let $G(J_\R)$ be the path $(y_\R, z, x)$ and $\tau_\R$ be the partial labelling $\tau_\R \colon x \mapsto u$, which induces a fixed a bipartition of $G(J_\R)$. The last figure in Figure~\ref{fig:small_graph_example_3} depicts $\GadgetPart{\R}$. For every vertex $w$ of $H$, every bip-homomorphism in $\HomBip[\GadgetPart{\R}, (H,w)]$ maps $z$ to $\neigh{u}$ and $y_\R$ to $\neigh{v}$. 			
			Hence, $\numHomBip[\GadgetPart{\R}, (H,w)]$ is determined by the number of common neighbours of $w$ and $u$, this number is $0$ if $w$ is not in $\neigh{v}$. For $w \in \neigh{v} \setminus K$, only $v$ is in the common neighbourhood. For $w \in \partof[K](u) \setminus \set{u}$, the common neighbourhood is given by $\partof[K](v)$, which is of cardinality $a$ with $a \equiv 0 \pmod p$. For $w$ equal to $u$, the common neighbourhood is $\neigh{u}$ and of cardinality $\deg(u)$ with $\deg(u) \not \equiv 0 \pmod p$. This yields the claim that $\GadgetPart{\R}$ selects the set $(\neigh{v} \setminus \rpart[K]) \cupdot \set{u}$ denoted $\selectSet[\R]$.
			
			For $\GadgetPart{\L}$ we take the trivial bip-graph consisting of the single vertex $y_\L$ without partial labelling, where we fix the bipartition by assigning $\partof[G(J_\L)](y_\L)$ to agree with $\partof[H](v)$. In conjunction with $\GadgetPart{\R}$ we obtain a \gadget{B,p}, where $B$ is the induced subgraph of $H$ obtained by deleting every vertex in $\rpart[K] \setminus \set{u}$. We claim that $B$ satisfies the prerequisites of Lemma~\ref{lem:small_graphs_multiple_components_not_0_deg}, which establishes the lemma.
			
			The degree of $v$ in $B$ is given by $ \deg(v) - b + 1 \not \equiv 0 \pmod p$. We recall that every component in the split of $H$ at $v$ different from $U$ has a $2$-connected complete core and there is at least one such component. All of these are also components in the split of $B$ at $v$. Hence, the only bip-automorphism of $B$ not in $\AutBip[H]$ is given by the $\deg(u)- 1$ leaves adjacent to $u$. It follows that $B$ is order~$p$ bip-reduced, the split of $B$ at $v$ contains at least two components, and $B$ is not complete bipartite. \qedhere
		\end{enumerate}	
	\end{enumerate}
\end{proof}

\begin{figure}[t]
	\centering
	\includegraphics[scale=1]{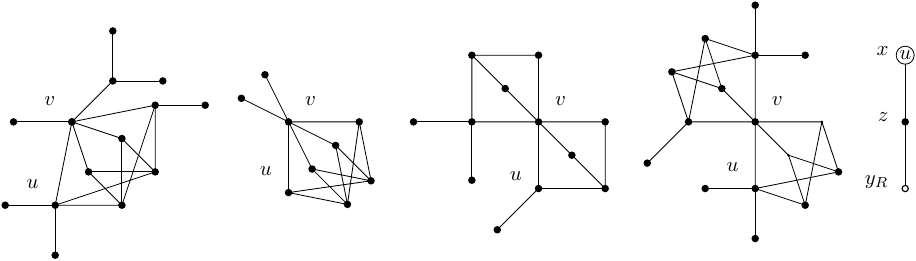}
	\caption{Illustration for Lemma~\ref{lem:small_graphs_multiple_components_0_deg} with $p=3$. From left to right: first case, constructed $4$-vertex path in the first case, second case a, second case b with vertices \enquote{cancelled} by $\GadgetPart{\R}$ in tiny size, and $\GadgetPart{\R}$ constructed in the third case.}
	\label{fig:small_graph_example_3}
\end{figure}

Lemma~\ref{lem:small_graphs_multiple_components_not_0_deg} and Lemma~\ref{lem:small_graphs_multiple_components_0_deg} cover all possible cases, which gives the main result of this subsection.
\begin{theorem}\label{thm:small graphs_weird_hard}
	Let $p$ be a prime and $H$ be an order~$p$ bip-reduced bip-graph that is \graphclass{}. If $H$ is of radius at most $2$ and not complete bipartite, then $H$ admits a $p$-hardness gadget.
\end{theorem} 

\subsection{Hardness for \Graphclass{} Graphs of Large Radius}
In the remainder of this section, we consider only connected graphs. Previously, this assumption was intrinsic when we studied graphs of radius at most $2$. Lemma~\ref{lem:bip_components} allows us to project hardness for connected components into hardness for the whole graph. We note that our structural findings and arguments do not need the assumption of a connected graph but the clarity benefits from it. Considering the sheer length of this section, we opted for this. Nevertheless, we encourage the curious reader to keep an open eye for arguments, which need connectivity.

Based on Theorem~\ref{thm:small graphs_weird_hard}, we use the following definition to identify graphs $H$ such that $\probNumHom{H}[p]$ is $\classNumP[p]$-complete.
\begin{definition}
	Let $p$ be a prime, $H$ be a bip-graph, and $v$ be a vertex of $H$. If there exists a possibly empty set $U$ of vertices in $\twoneigh{v} \setminus \set{v}$ such that, for $B$ the intersection graph $\bigcap_{u \in U} \twoneigh{u} \cap \twoneigh{v}$, the reduced form $\bipreduced{B}$ is not a collection of complete bipartite graphs, then $v$ is called a \emph{hard vertex}.
\end{definition}
By iteratively applying Lemma~\ref{lem:2-neighbourhood_isolated_vertices_gadgetry} and Theorem~\ref{thm:small graphs_weird_hard} we obtain that a hard vertex in $H$ implies a $p$-hardness gadget.
\begin{corollary}\label{cor:hard_vertex_hard}
	Let $p$ be a prime and $H$ be a connected order~$p$ bip-reduced bip-graph that is \graphclass{}. If $H$ contains a hard vertex, then $H$ admits a $p$-hardness gadget.
\end{corollary}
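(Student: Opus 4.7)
The plan is to combine Observation~\ref{obs:hard_vertex}, which iteratively builds $(B,p)$-gadgets whose target is the intersection of 2-neighbourhoods of chosen vertices, with Theorem~\ref{thm:small graphs_weird_hard}, which supplies $p$-hardness gadgets for \graphclass graphs in $\redbipartites$ of radius at most $2$ that are not complete bipartite. The chaining is governed by the recursive Definition~\ref{def:p-gadget}.

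Concretely, let $v$ be a hard vertex of $H$ with witness set $U\subseteq \twoneigh{v}\setminus\{v\}$ and intersection $B=\bigcap_{u\in U}\twoneigh{u}\cap\twoneigh{v}$; by hypothesis $\reduced{B}$ is not a collection of complete bipartite graphs. First I would apply Observation~\ref{obs:hard_vertex} once with the vertex $v$ and once for each $u\in U$, and combine the resulting partially labelled graphs via the product construction of Observation~\ref{obs:product}, so as to obtain a $(B,p)$-gadget for $H$. Next, Observation~\ref{obs:p-wise_reduced_unique} identifies $\numBipHomsTo{B}[p]$ with $\numBipHomsTo{\reduced{B}}[p]$, and Lemma~\ref{lem:bip_components} lets me focus on a single connected component $B'$ of $\reduced{B}$ that is not a complete bipartite graph (guaranteed by the hard vertex assumption). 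Since $B'$ inherits being \graphclass from $H$ as an induced subgraph, and lies in $\redbipartites$ as a connected component of a partition-wise order~$p$ reduced graph, Theorem~\ref{thm:small graphs_weird_hard} applied to $B'$ produces a $p$-hardness gadget for $B'$. Unwinding Definition~\ref{def:p-gadget} then yields a $p$-hardness gadget for $H$.

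The main obstacle lies in verifying the radius-at-most-$2$ hypothesis of Theorem~\ref{thm:small graphs_weird_hard} for $B'$. By construction $V(B)\subseteq V(\twoneigh{v})$, so every vertex of $B'$ is within distance $2$ of $v$ in $H$; the danger is that taking the intersection and then the order~$p$ reduction could delete intermediate vertices $w$ on length-$2$ paths $v{-}w{-}u$, thereby increasing distances in the induced subgraph $B'$. The intended argument is that each $u\in U$ is itself within distance $2$ of $v$ in $H$, so a common neighbour of $v$ and $u$ exists in $\twoneigh{v}\cap\twoneigh{u}$; combined with the structural constraints imposed by the $\forbiddenA$-free and $\forbiddenB$-free hypotheses (cf.\ Lemma~\ref{lem:weirdness}), enough such intermediate vertices survive to ensure that $v$ (or, if $v$ is not in $B'$, a surviving neighbour of $v$) remains a radius-$2$ centre of $B'$. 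A clean fallback is to argue by induction on $|V(H)|$: if $B'$ fails to have radius $2$, then $|V(B')|<|V(H)|$ and $B'$ itself still contains a hard vertex, so the inductive hypothesis applies.
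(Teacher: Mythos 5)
Your approach matches the paper's: the authors give no separate proof of this corollary, stating it as an immediate consequence of Observation~\ref{obs:hard_vertex} (iterated) and Theorem~\ref{thm:small graphs_weird_hard}, exactly the chain you set up via Definition~\ref{def:p-gadget}. Your main contribution is to flag the radius-at-most-$2$ hypothesis of Theorem~\ref{thm:small graphs_weird_hard} as the one non-obvious verification, and that concern is legitimate --- the paper elides it entirely.

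Both resolutions you offer, however, have problems. The first invokes Lemma~\ref{lem:weirdness} and the $\forbiddenA$/$\forbiddenB$-freeness, but that is heavier machinery than is needed; the observation is purely a bipartite-parity one. Assume WLOG $v\in\L_H$. Every vertex of $\twoneigh{v}$ is at $H$-distance at most $2$ from $v$, so every \emph{right}-side vertex of $\twoneigh{v}$ is at odd distance at most $2$, i.e.\ distance exactly $1$: the right side of $\twoneigh{v}$ is precisely $\neigh{v}$. Since $B\subseteq\twoneigh{v}$, $v\in B$, and $B$ is an intersection of induced subgraphs all containing $v$, the vertex $v$ remains adjacent in $B$ to \emph{every} right vertex of $B$. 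Hence any left vertex of $B$ with a neighbour in $B$ is at $B$-distance $2$ from $v$ through that neighbour; the component of $B$ containing $v$ has radius $\leq 2$ centred at $v$, and every other component has no right vertices, is an independent set, and is therefore complete bipartite. This is the clean reason Theorem~\ref{thm:small graphs_weird_hard} is applicable to the relevant component.

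The second resolution (induction on $|V(H)|$) is not sound as stated: you assert that if $B'$ fails to have radius $2$, then ``$B'$ itself still contains a hard vertex,'' but this is exactly what would need proving, and it is false for general connected $\graphclass$ graphs in $\redbipartites$ that are not complete bipartite --- a large part of Section~\ref{sec:hardness_graphclass} is devoted to handling precisely such graphs \emph{without} hard vertices (see, e.g., Lemma~\ref{lem:construct_hard_path} and Lemma~\ref{lem:nice_cycle_or_almost_hard}). The only residual issue even in the parity argument above, which neither you nor the paper spell out, is that the confluent reduction $B\to\reduced{B}$ could in principle delete $v$ if $B$ contains $p$ left vertices all adjacent to every right vertex of $B$; one should either observe that a surviving universal left vertex always remains or note that in every actual application of the corollary the intersection $B$ is exhibited with an explicit centre whose private leaves keep it fixed under the reduction.
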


We recall the Definition~\ref{def:complete_core} of complete core, i.e., for a bipartite graph $H$ that is \graphclass{} and two adjacent vertices $u$ and $v$ of $H$, the component $U^{v,u}$ and the induced sub-bip-graph $K^{v,u}$. The order of the two vertices is important as $U^{v,u}$ and $U^{u,v}$ do not have to be isomorphic. In the definition of $K^{v,u}$, the case $\abs{\neigh{v} \cap U^{v,u}}=1$ was handled separately, which is the only case in which $K^{v,u}$ is not $2$-connected. Only in this case are $K^{u,v}$ and $K^{v,u}$ not necessarily identical. For an illustration we refer to Figure~\ref{fig:complete_core_example}.

The following pair of results are explicitly stated for bip-graphs $H$ that are not necessarily order~$p$ bip-reduced. This allows us to use the recursive gadgetry without rendering intermediate parameter bip-graphs order~$p$ bip-reduced.

\begin{observation}\label{obs:complete_core_same}
	Let $p$ be a prime and $H$ a connected bipartite graph with a pair of adjacent vertices $v$ and $u$. If $w$ is a vertex in $\partof[K^{v,u}](v)$, then $K^{w,u}=K^{v,u}$. If $w$ is a vertex in $\partof[K^{v,u}](u)$, then $K^{v,u}=K^{v,w}$. Further, if $K^{v,u}$ is $2$-connected, then $K^{u,v} \cong K^{v,u}$ by swapping the order of the bipartition.
\end{observation}

We obtain the following conditions for the existence of a hard vertex.
\begin{lemma}\label{lem:condition_new_hard_vertex}
	Let $p$ be a prime, $H$ be a connected bip-graph that is \graphclass{}, and $v$ be a vertex of $H$. Given a neighbour $u$ of $v$ and the complete core $K^{v,u}$ with $K^{v,u} \congbip K_{a,b}$.  If one of the following applies, then $H$ contains a hard vertex:
	\begin{enumerate}
		\item $a \not\equiv 1 \pmod p$, $b \not\equiv 0 \pmod p$, and there exists a vertex $w$ in $\partof[K^{v,u}](v)$ such that $\deg (w) \not \equiv b \pmod p$;
		\item $a \not\equiv 0 \pmod p$, $b \not\equiv 1 \pmod p$, and there exists a vertex $w$ in $\partof[K^{v,u}](u)$ with $\deg (w) \not \equiv a \pmod p$.
		\item $a,b \equiv 0 \pmod p$ and there exist two vertices $v'$ in $\partof[K^{v,u}](v)$ and $u'$ in $\partof[K^{v,u}](u)$ such that  $\deg (v'), \deg (u') \not \equiv 0 \pmod p$.
	\end{enumerate}
\end{lemma}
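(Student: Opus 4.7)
The plan is to exhibit, in each of the three cases, a hard vertex by choosing a witness $B=\twoneigh{w'}\cap\bigcap_{u'\in U}\twoneigh{u'}$ for some $w'\in V(H)$ and $U\subseteq V(\twoneigh{w'})\setminus\{w'\}$, and applying Theorem~\ref{thm:small graphs_weird_hard} through Lemmas~\ref{lem:small_graphs_single_component}--\ref{lem:small_graphs_multiple_components_0_deg} to conclude that $\reduced{B}$ is not a collection of complete bipartite graphs.

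For Case~1, I would take $w'=w$ and $U=\emptyset$, so $B=\twoneigh{w}$. By Observation~\ref{obs:complete_core_same} the complete core $K^{w,u}$ equals $K^{v,u}\cong K_{a,b}$ with $w$ on the $a$-side. The component of the split of $B$ at $w$ containing $K^{v,u}$ therefore contributes $b\not\equiv 0\pmod p$ to $\deg_B(w)$, and since $\deg_H(w)-b\not\equiv 0\pmod p$ the split carries at least one additional component coming from $w$'s external neighbours. The hypothesis $a\not\equiv 1\pmod p$ would be used to argue that after partition-wise order-$p$ reduction the core block still has at least two vertices on $w$'s side, so $\reduced{B}$ is a \graphclass graph of radius at most two that satisfies the hypothesis of Lemma~\ref{lem:small_graphs_multiple_components_not_0_deg} or~\ref{lem:small_graphs_multiple_components_0_deg}. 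Either way, $\reduced{B}$ is not a collection of complete bipartite graphs, so $w$ is a hard vertex.

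Case~2 is symmetric to Case~1 by swapping the two sides of $K^{v,u}$: take $w'=w\in\R(K^{v,u})$ and $U=\emptyset$, and mirror the argument. For Case~3, where $a,b\equiv 0\pmod p$ means the core $K^{v,u}$ alone cancels under reduction, I would take $w'=w_l$ and $U=\{w_r\}$, so $B=\twoneigh{w_l}\cap\twoneigh{w_r}$. The intersection retains $u$ as a common neighbour and preserves both the non-zero external-degree information of $w_l$ (used to ensure multiple components in the split at $w_l$) and the distinguishing external neighbours of $w_r$ (which prevent all right-hand vertices of the core from becoming twins in $B$ and hence being reduced away together). Once both pieces of information are carried through the intersection, the machinery of Lemmas~\ref{lem:small_graphs_multiple_components_not_0_deg}--\ref{lem:small_graphs_multiple_components_0_deg} applies to $\reduced{B}$ to finish the case.

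The main obstacle is Case~3: the naive witness $B=\twoneigh{w_l}$ collapses too much under partition-wise order-$p$ reduction when $b\equiv 0\pmod p$, because the right side of the core becomes a block of twins that gets cancelled. Finding a witness set $U$ whose intersection trims away dangerous twins while preserving enough of the non-complete-bipartite structure -- and in particular verifying that the enforced distinguishing of $w_r$ from the other right-hand core vertices prevents $\AutId(B)$ from introducing new order-$p$ automorphisms inside the intersection -- is the most delicate technical step. A recurring subtlety throughout is that $\reduced{B}$ is not necessarily equal to the partition-wise reduction of subgraphs of $\reduced{H}$, so the argument must track orbits under $\AutId(B)$ rather than under $\AutId(H)$.
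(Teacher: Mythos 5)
Your Case~3 witness, $B=\twoneigh{w_l}\cap\twoneigh{w_r}$, is exactly the paper's, but your Cases~1 and~2 diverge in a way that opens a genuine gap. The paper takes $B=\twoneigh{w}\cap\twoneigh{u}$ (Case~1) and $B=\twoneigh{w}\cap\twoneigh{v}$ (Case~2), whereas you take $B=\twoneigh{w}$ with $U=\emptyset$. The reason the intersection with $\twoneigh{u}$ (resp.\ $\twoneigh{v}$) matters is that it collapses $B$ to an extremely rigid graph: every vertex of $B$ has to lie within distance two of both $w$ and $u$, and since $u$ and $w$ sit on opposite sides of the complete core $K^{v,u}$, the only vertices that survive are the core $K_{a,b}$ itself and the $\deg(w)-\deg_{K^{v,u}}(w)$ external neighbours of $w$, which appear as leaves attached to $w$. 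Computing $\reduced{B}$ for this ``generalized star'' is then a one-line check using $a\not\equiv 1$ and $b\not\equiv 0\pmod p$ (resp.\ $a\not\equiv 0$ and $b\not\equiv 1\pmod p$), and the leaf count $\deg(w)-\deg_{K^{v,u}}(w)\not\equiv 0\pmod p$ guarantees at least one leaf survives. With $B=\twoneigh{w}$ alone you lose this control: the split of $\twoneigh{w}$ at $w$ can contain other components of essentially arbitrary radius-2 structure, and nothing in your argument determines how they behave under the partition-wise order-$p$ reduction of $B$ (which, as you yourself note, is governed by $\AutId(B)$ and not by $\AutId(H)$, and need not even fix $w$). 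In particular your inference ``so $\reduced{B}$ satisfies the hypothesis of Lemma~\ref{lem:small_graphs_multiple_components_not_0_deg} or~\ref{lem:small_graphs_multiple_components_0_deg}, either way $\reduced{B}$ is not a collection of complete bipartite graphs'' is circular --- ``not a complete bipartite graph'' is a \emph{hypothesis} of those lemmas, not a conclusion, so they cannot be used to establish the very property the definition of hard vertex requires.

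The fix is to intersect with the 2-neighbourhood of a second, carefully chosen vertex on the opposite side of the core from $w$, just as you already did for Case~3. Case~3 needs the two-vertex intersection because the core alone vanishes mod $p$, so both $w_l$ and $w_r$ must contribute external structure; Cases~1 and~2 need the one-vertex intersection to prune away $w$'s uncontrolled radius-2 environment and leave only the core plus leaves. Once you make that change, the rest of your plan (verify leaves and a non-degenerate core side survive reduction, conclude $\reduced{B}$ is not complete bipartite) goes through and coincides with the paper's proof.
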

\begin{proof}
	We are going to prove the three statements in order.
	\begin{enumerate}
		\item By Observation~\ref{obs:complete_core_same}, we assume without loss of generality that $w$ is $v$. Let the set $U$ of neighbours of $v$ be $\partof[K^{v,u}](u)$ and $B$ be the induced sub-bip-graph $\twoneigh{v} \cap \, \bigcap_{u \in U} \twoneigh{u}$. We observe that $B$ consists of $K^{v,u}$ and $\deg(v) - \deg_{K^{v,u}} (v) \not\equiv 0 \pmod p$ leaves attached to $v$. By the assumptions $a \not\equiv 1 \pmod p$ and $b \not\equiv 0 \pmod p$ it follows that $\bipreduced{B}$ is not a collection of complete bipartite graphs.
		
		\item Since $b$ has to be larger than $1$ it follows that $K^{v,u}$ is $2$-connected. By Observation~\ref{obs:complete_core_same}, $K^{v,u} \cong K^{u,v}$. The case follows from the first case by swapping the bipartition.
		
		\item By Observation~\ref{obs:complete_core_same}, we assume without loss of generality that $v'$ is $v$ and $u'$ is $u$. Let $B$ be the induced sub-bip-graph $\twoneigh{w_l} \cap \twoneigh{w_r}$. It follows that $B$ consists of $K^{v,u}$ along with $\deg(v) - b$ leaves adjacent to $v$ and $\deg(u) - a$ leaves adjacent to $u$. By the assumptions $\deg (v), \deg (u) \not \equiv 0 \pmod p$, while $b\equiv 0 \pmod p$ and $a\equiv 0 \pmod p$, it follows that $\bipreduced{B}$ is not a collection of complete bipartite graphs. \qedhere
	\end{enumerate}
\end{proof}

With Corollary~\ref{cor:hard_vertex_hard} at hand, we are going to study the case that a graph $H$ does not contain a hard vertex. To this end, we study the case that Lemma~\ref{lem:condition_new_hard_vertex} does not apply. The analysis is branched with respect to the existence of a pair of vertices $v$ and $v'$ of $H$, such that the distance of $v$ and $v'$ is $2$ but the set of common neighbours satisfies $\abs{\neigh{v} \cap \neigh{v'}} \equiv 0 \pmod p$. If no such pair exists in $H$, then we call $H$ \emph{\nice}. For a \graphclass{} bipartite graph $H$, by Lemma~\ref{lem:weirdness} $H$ is \nice{} if and only if $H$ does not contain a pair of adjacent vertices $u$ and $v$ such that any part of the complete core $K^{v,u}$ is of cardinality congruent modulo $p$ to $0$.

\subsubsection{Bip-Graphs That are \Nice{}}\label{subsec:large_nice}

Our findings are given for the case of general graphs $H$.

\subsubsection*{Closed Walks}
Kazeminia and Bulatov~\cite{Kazeminia:19:Count_Homs_Square_Free_Mod_Prime}  gave a type of gadgetry exploiting the existence of cycles of length larger than $4$ in a square-free graph. Our study is however not restricted to square-free graphs and, even though in many cases cycles are sufficient, we are going to encounter cases in which we need to consider closed walks. We generalize the gadgetry and results to allow for walks and general graphs. As usual, the following terminology translates to paths as well.
We note in advance that we provide a more general adaptation to support future studies on $\probNumHom{H}[p]$ compared to what we apply later on.

\begin{definition}
	Let $H$ be a graph and $W$ be a walk in $H$. We call $W$ \emph{square-free} if, for every triple of consecutive vertices $(w, w', w'')$ in $W$, the common neighbourhood $\neigh{w} \cap \neigh{w''}$ contains only $w'$.
\end{definition}

We note that a square-free closed walk $(w_0, w_1, \dots, w_\ell, w_0)$ denoted $W$ also evades leaves if $W$ contains at least $3$ different vertices, i.e. vertices $w$, $w'$, and $w''$ that are pairwise different. The reason is, assuming $w_0$ is equal to $w_{2}$, that by definition $w_{1}$ is the only neighbour of $w_{0}$. We apply this iteratively, and thus $W$ contains at most $2$ different vertices. It follows that a closed walk with $4$ different vertices has to admit a square, and we obtain that a square-free walk in a bipartite graph $H$ contains either $2$ or at least $6$ different vertices.

We generalize the notion of a square-free walk by allowing weights in the same way as for $4$-vertex paths in Section~\ref{sec:gadgets}. Let $W$ be the walk $(w_0, w_1, \dots, w_\ell)$. Given a set of counts $\set{b_i}_{i \in \sqBrackets{0;\ell}}$, we construct a \emph{thick walk} $W'$ from $W$ by, iteratively for every index $i\in \sqBrackets{0;\ell}$, cloning $w_i$ $b_i$-times, where the set of introduced twins $\set{w_i^j}_{j \in \sqBrackets{b_i}}$ is denoted $w_i^{b_i}$. The thick walk $W'$ is denoted by $(w_0^{b_0}, w_1^{b_1}, \dots, w_\ell^{b_\ell})$. We refer to the walk $W$ as the \emph{twin-free form of $W'$} and the vertices of $W$ are \emph{representatives} of the vertices of $W'$. If every count $b_i$ is equal to $1$, then $W$ and $W'$ agree. In this case, we also refer to $W'$ as a thick walk for the sake of simplicity.
\begin{definition}
	Let $p$ be a prime, $H$ be a graph, and $W'$ be a thick walk $(w_0^{b_0}, w_1^{b_1}, \dots, w_\ell^{b_\ell})$ in $H$. If the twin-free form $W$ of $W'$ is square-free, then $W'$ is called \emph{square-free}.
	Further, $W'$ is \emph{$p$-square-free} if additionally, for every index $i \in \sqBrackets{0;\ell}$, the count $b_i$ is not equivalent to $0 \pmod p$.
\end{definition}
The notion of a thick walk that is \nice{} agrees with the notion that the graph given by the thick walk is \nice{}.

Let $H$ be a graph and $W'$ be a thick walk $(w_0^{b_0}, w_1^{b_1}, \dots, w_\ell^{b_\ell})$ in $H$. The vertices in $w_i^{b_i}$ do not have to be twins with respect to the graph $H$. A vertex $w_i^1$ in $W$ might have a neighbour not found in the neighbourhood of $w_i^2$. However, restricted to the walk $W'$ they are indeed twins; the common neighbourhood of $w_i^1$ and $w_i^2$ in $W'$ is equal.

For a prime $p$, a closed thick walk $W$ yields hardness if it is $p$-square-free and contains enough vertices. Figure~\ref{fig:2-hard_walk} illustrates the following definition.
\begin{definition}
	Let $p$ be a prime, $H$ be a graph, and $W$ be a thick closed walk with twin-free form $W$. If $W'$ is \nice{} and $W$ contains more than $3$ different vertices, then $W'$ is called \emph{$p$-hard}.
\end{definition}

\begin{figure}[t]
	\begin{center}
		\includegraphics[]{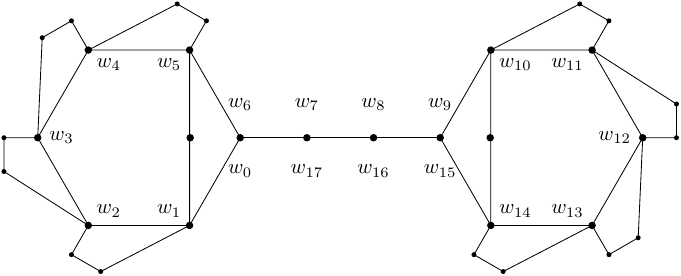}
	\end{center}
	\caption{For $p$ equal to $2$, example of a closed $2$-hard walk.}
	\label{fig:2-hard_walk}
\end{figure}

A thick path $P$ in a graph $H$ is a subgraph of $H$, which allows us to use the standard notation $\neigh{v}[P]$ for the neighbourhood in $P$ of a vertex $v$ of $P$. For instance, let $P$ be a path $(v_0, v_1, v_2, \dots, v_\ell)$ in $H$, which gives that $\neigh{v_1}[P]$ is equal to $\set{v_0, v_2}$. Let $W$ be a thick $(w_0^{b_0}, w_1^{b_1}, \dots, w_\ell^{b_\ell})$ in $H$. The thick walk $W$ is also a subgraph of $H$, but a vertex $w$ in $W$ can be contained in multiple sets $w_i^{b_i}$ and so the neighbourhood of $w$ restricted to $W$ is given by the neighbourhood of all these sets $w_i^{b_i}$ under edges in $W$. By the fixed ordering of $W$, in order to clarify the position of $w$ it suffices to provide the index $i \in \sqBrackets{0;\ell}$, which further restricts the neighbourhood. For any index $i \in \sqBrackets{0;\ell}$, the neighbourhood of a vertex $w_i$ in $w_i^{b_i}$ restricted to $W$ is then given by the vertices in $W$ visited by $W$ immediately before or after $w_i$. In particular, $\neigh{w_i}[W]$ is $w_{i - 1}^{b_{i-1}} \cup w_{i+1}^{b_{i+1}}$, conditional on the existence of $w_{i - 1}^{b_{i - 1}}$ and $w_{i + 1}^{b_{i + 1}}$, which is needed for the border cases in which $i$ is equal to $0$ or $k$. We note that these definitions agree if $W$ is a thick path.

With this notation at hand, it is straightforward to define parallel walks. We consider a thick walk $W$, where $W =(w_0^{b_0}, w_1^{b_1}, \dots, w_\ell^{b_\ell})$, and a walk $W'$ with $W'=(w'_0, \dots, w'_\ell)$. the walk $W'$ is \emph{parallel} to $W$ if, for every index $i \in \sqBrackets{0;\ell}$, the vertex $w'_i $ is in the complement $\neigh{w_i^{b_i}} \setminus \neigh{w_i^{b_i}}[W]$. An example is given in the left part of Figure~\ref{fig:cycle_gadget}.

\begin{lemma}\label{lem:walks_gadget}
	Let $p$ be a prime and $H$ be a connected order~$p$ bip-reduced bip-graph. If $H$ contains a $p$-hard thick walk $W$, where $W=(w_0^{b_0}, w_1^{b_1}, \dots, w_\ell^{b_\ell}, w_0^{b_0})$, then, for every index $i \in \sqBrackets{0;\ell}$, there exists a partially $H$-labelled bip-graph $(J, y)$ that selects the set $\selectSet$ that satisfies
	\[
		\selectSet = w^{b_{i-1}}_{i-1} \cupdot w^{b_{i+1}}_{i+1} \cupdot R_i,
	\]
	where the indices are taken modulo $k+1$ and the set of remainders $R_i$ consists of vertices adjacent to $w^{b_i}_i$ contained in a closed walk $\hat{W}$ parallel to $W$.
\end{lemma}
\begin{proof}
	For an arbitrary vertex $w_i$ in $W$, we construct $(J,y)$ before we show the desired property. Throughout this proof, the indices are taken modulo $k+1$.
	We recall Lemma~\ref{lem:complete_core_gadget}, which gives, for every index $j \in \sqBrackets{0;\ell}$, a partially $H$-labelled bip-graph $(J_j, y_j)$ that selects the set $\selectSet[j]$ equal to the common neighbourhood $\bigcap_{s \in \sqBrackets{b_j}} \neigh{w^s_j}$. Let $C$ be the bip-graph obtained from the cycle $(y_0, y_1, \dots, y_\ell, y_0)$ with bipartition of $C$ such that $\partof[C](y_0)$ agrees with $\npartof[W](w_0)$, where $w_0$ is a vertex in $w_0^{b_0}$. The partially $H$-labelled bip-graph $J$ is constructed from $C$ and the graphs $(J_j, y_j)$ by, iteratively for every $j \in \sqBrackets{0;\ell}$ starting with $j=0$, taking the dot product $(C, y_j) \odot (J_j, y_j)$ denoted $(J, y_j)$ and assigning $C$ equal to $J$. Lastly, we take for the distinguished vertex $y$ the vertex $y_i$. We refer to Corollary~\ref{cor:dot_product_bip} for the properties of the dot product used in the following. An illustration is depicted in Figure~\ref{fig:cycle_gadget}.
	
	\begin{figure}[t]
		\begin{center}
			\includegraphics[]{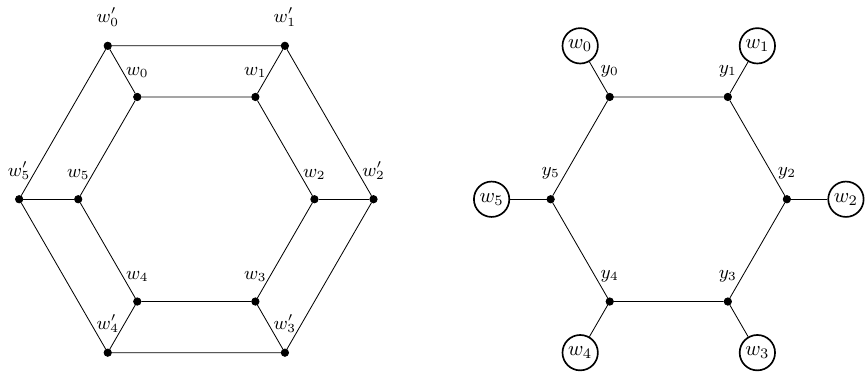}
		\end{center}
		\caption{Illustration of the construction employed to study Lemma~\ref{lem:walks_gadget}. The left figure depicts two parallel cycles $C=(v_0, \dots, v_5, v_0)$ and $C'=(v_0', \dots, v_5', v'_0)$. The right figure depicts the partially $H$-labelled bip-graph employed in the proof of Lemma~\ref{lem:walks_gadget}.}
		\label{fig:cycle_gadget}
	\end{figure}
	
	Towards the desired property we assume that every count $b_j$ is equal to $1$, i.e. we argue first about the twin-free form containing the representatives. This keeps notation tidy, We show that only for vertices $v$ in $\selectSet$ there exists a homomorphism in $\HomBip[(J,y)(H,v)]$. By re-introducing the counts we then argue for these vertices that $\numHomBip[(J,y)(H,v)][p]$ is not $0$.
	
	We note that $W$ contains at least $6$ different vertices because $W$ is \nice{} and closed. Therefore, for every index $j \in \sqBrackets{0;\ell}$, the common neighbourhood $\neigh{w_j} \cap \neigh{w_{j+2}}$ contains only $w_{j+1}$.	
	We assume without loss of generality that $i=0$, and let $f$ be a homomorphism in $\HomBip[(J,y)(H,v)]$, where $v$ is a vertex of $H$. By the construction of $(J,y)$, the vertex $v$ has to be in the neighbourhood $\neigh{w_0}$. If $v$ is $w_{\ell}$, then $f$ maps $y_1$ to the common neighbourhood $\neigh{w_\ell} \cap \neigh{w_1}$, that contains by assumption only the vertex $w_0$. We are going to prove by induction on $\ell$ that, for every index $j \in \sqBrackets{0;\ell}$, the homomorphism $f$ maps $y_j$ to $w_{j-1}$. The base case is already established. Towards the induction step, let $f(y_{j-1})$ be $w_{j-2}$. It follows that $f$ maps $y_j$ to the common neighbourhood $\neigh{w_j} \cap \neigh{w_{j-2}}$, that contains only $w_{\ell-1}$. Since $w_\ell$ and $w_{\ell-1}$ are adjacent, $f$ is indeed a valid homomorphism.
	
	We obtain the walk $W'$ by reversing the order, that is $W^{-1}=(w_0, w_\ell, w_{\ell-1}, \dots, w_1, w_0)$. By the same argumentation as for $W$, we obtain that if $f$ maps $y$ to $w_1$, then $f$ maps $y_\ell$ to $w_0$ and, for every index $j \in \sqBrackets{0; \ell-1}$, the image $f(y_{j})$ is $w_{j+1}$. This maps $y_1$ to $w_2$ and $f$ is indeed a valid homomorphism. We note that for both cases $v = w_\ell$ and $v= w_1$, there is exactly one homomorphism.
	
	Lastly, we assume that $f(y)$ is $v$ with $v$ not in $W$ and not in a closed walk $\hat{W}$ parallel to $W$. For any $l$ in $\sqBrackets{0;\ell}$, let $W_l$ be the subwalk $(w_0, \dots, w_l)$ of $W$. Further, let $l$ be maximal such that, for every index $j \in \sqBrackets{0;l}$, the homomorphism $f$ maps $y_j$ to a vertex $\hat{w}_{j}$ in a walk $\hat{W}_l$ parallel to $W_l$. This gives that $\hat{W}_l$ is $(\hat{w}_0, \hat{w}_1, \dots, \hat{w}_l)$. We note that $l < \ell$ because $f$ is a homomorphism and $(y_0, y_\ell)$ is an edge in $H$. It follows that $f$ maps $y_{l+1}$ to a vertex $v_{l+1}$ in the common neighbourhood $\neigh{w_{l+1}} \cap \neigh{\hat{w}_l}$, but $v_{l+1}$ unequal to $w_l$ and $w_{l+2}$ would give a larger parallel walk. If $v_{l+1}$ is $w_{l+2}$, then $\hat{w_l}$ is in the common neighbourhood $\neigh{w_l} \cap \neigh{w_{l+2}}$ but $\hat{w_l}$ is by assumption not $w_{l+1}$, a contradiction to $W$ being square-free. If $v_{l+1}$ is $w_{l}$, then by the same argumentation as employed for the case that $v$ is $w_\ell$, it follows that $f$ maps $y_\ell$ to $w_{\ell-1}$. Since $H$ is bipartite, the vertices $\hat{w}_0$ and $w_{\ell-1}$ cannot be adjacent, a contradiction to $f$ being a homomorphism. Therefore, if $f(y)$ is neither $w_{\ell}$ nor $w_1$, then $f(y)$ is in a closed walk parallel to $W$.
		
	We turn toward the thick walk $W$ with counts $\set{b_j}_{j \in \sqBrackets{0;\ell}}$, and we are going to show that, for vertices $v \in \selectSet$, the number of homomorphisms $\numHomBip[(J,y)(H,v)][p]$ is not $0$. It suffices to argue about vertices in $W$ as, for a vertex $\hat{w}_0$ in a walk parallel to $W$, the statement is unaffected by $\numHomBip[(J,y)(H,w)][p] = 0$. We recall that for the twin-free form and the cases that either $v= w_\ell$ or $v= w_1$, we proved the mapping of the only homomorphism. Extending this to counts, we obtain for $v$ in $w_\ell^{b_\ell}$, that the homomorphisms are given by, for every index $j \in \sqBrackets{0;\ell}$, the possible images $f(y_j)$ in $w^{b_{j-1}}_{j-1}$. Since $W$ is $p$-hard, no count $b_i$ is congruent modulo $p$ to $0$, and we deduce, for $v$ in $w_\ell^{b_\ell}$, that the number of homomorphisms $\numHomBip[(J,y)(H,v)]$ is equal to $ \prod_{j=0}^{k-1} b_j $, which is not congruent modulo $p$ to $0$. Analogue, for $v$ in $w_1^{b_1}$, the only homomorphisms are given by, for every index $j \in \sqBrackets{0;\ell}$, the possible images $f(y_j)$ in $w^{b_{j+1}}_{j+1}$. Therefore, the number of homomorphisms $\numHomBip[(J,y)(H,v)]$ is equal to $b_0 \cdot \prod_{j=2}^{k} b_j$, which is not congruent modulo $p$ to $0$.
\end{proof}

We discuss the necessity of the prerequisites of a walk $W$ for the gadgetry to work. If any count $b_i$ is congruent modulo $p$ to $0$, then the homomorphisms under study cancel out when counting modulo $p$. If $W$ is not square-free, then either $W$ consists of at most $4$ different vertices or we can recollect vertices into representatives such that we obtain a square-free walk $W'$. For the latter, e.g. if the common neighbourhood $\neigh{w_i} \cap \neigh{w_{i+2}}$ contains more than $\set{w_{i+1}}$, then we collect all vertices in the common neighbourhood $\neigh{w_i} \cap \neigh{w_{i+2}}$ to the set $\set{w_{i+1}^j}_{j \in \sqBrackets{b_{i+1}}}$. Lastly, if $W$ consists of at most $4$ different vertices, then $W$ trails a complete bipartite induced subgraph in $H$. This does not yield hardness.
\begin{lemma}\label{lem:walks_hard}
	Let $p$ be a prime and $H$ be a connected bip-graph. If there exists a $p$-hard thick walk in $H$, then there exists an induced sub-bip-graph $B$ of $H$ such that $H$ has a \gadget{B,p} and $\bipreduced{B}$ is not complete bipartite.
\end{lemma}
\begin{proof}
	We consider a $p$-hard thick walk $W$ in $H$, where $W= (w_0^{b_0}, w_1^{b_1}, \dots, w_\ell^{b_\ell}, w_0^{b_0})$. Since $H$ is bipartite and $W$ is square-free, it follows that $W$ contains $4$ different representative vertices not admitting a square. Without loss of generality, let $w_0$, $w_1$, $w_2$, and $w_3$ be different representatives that admit no square, i.e. the sub-bip-graph of $H$ induced by $\set{w_0, w_1, w_2, w_3}$ is the $4$-vertex path $(w_0, w_1, w_2, w_3)$.
	
	Let $\GadgetPart{\L}$ and $\GadgetPart{\R}$ be the partially $H$-labelled bip-graphs given by Lemma~\ref{lem:walks_gadget} such that $\GadgetPart{\L}$ selects the set $w^{b_{0}}_{0} \cupdot w^{b_{2}}_{2} \cupdot R_1$ denoted by $\selectSet[\L]$ and $\GadgetPart{\R}$ selects the set $w^{b_{1}}_{1} \cupdot w^{b_{3}}_{3} \cupdot R_2$ denoted by $\selectSet[\R]$, where for $i \in \set{1,2}$ the set $R_i$ consists of vertices adjacent to $w_i^{b_i}$ and in a closed walk parallel to $W$. We note that since $H$ is bipartite, no vertex in $R_1$ is adjacent to a vertex in the complement $\selectSet[\R] \setminus R_2$ and no vertex in $R_2$ is adjacent to a vertex in the complement $\selectSet[\L] \setminus R_1$. Further, by $W$ square-free it follows that a vertex in $R_1$ cannot be adjacent to $w^{b_{3}}_{3}$. Analogously, no vertex in $R_2$ is adjacent to $w^{b_{0}}_{0}$. Finally, $R_1$ and $R_2$ is given by walks parallel to $W$ and thus every vertex in $R_1$ is connected to at least one vertex in $R_2$ and vice versa.
	
	We conclude that the sub-bip-graph $B$ induced by $\selectSet[\R]$ and $\selectSet[\L]$ consists of the thick $4$-vertex path $(v_0^{b_0}, v_1^{b_1}, v_2^{b_2}, v_3^{b_3})$, which we denote $P$, and the vertices in $R_1$ and in $R_2$. Further, $B$ contains edges connecting $v_1^{b_1}$ to every vertex in $R_1$, edges connecting $v_2^{b_2}$ to every vertex in $R_2$, and edges connecting pairs in $R_1 \times R_2$ such that no vertex in $R_1$ and $R_2$ is a leaf. We deduce that for $B$ no vertex in $R_1$ and $R_2$ is \isomorphic[bip] to a vertex in $P$. Since $W$ is $p$-hard, no count is equivalent to $0$. Therefore, $\bipreduced{P}$ is still a thick $4$-vertex path. Since $R_1$ and $R_2$ do not introduce additional bip-automorphisms of $B$ involving $\bipreduced{P}$, the reduced form $\bipreduced{B}$ contains $\bipreduced{P}$. The lemma follows. 
\end{proof}

We have established by Theorem~\ref{thm:small graphs_weird_hard} that all order~$p$ bip-reduced bip-graphs $H$ of radius at most $2$ that are \graphclass{} admit a $p$-hardness gadget. This gives the following implication.
\begin{corollary}\label{cor:walks_hard}
	Let $p$ be a prime and $H$ be a connected order~$p$ bip-reduced bip-graph that is \graphclass{}. If $H$ contains a $p$-hard thick walk, then $H$ admits a $p$-hardness gadget.
\end{corollary} 

We are now studying graphs $H$ that do not admit a $p$-hardness gadget. By the previous results, these graphs have no induced cycle of length at least $6$ from which tree-like properties follow. The absence of a $p$-hardness gadget is going to yield the following structure associated with paths in a tree.

\subsubsection*{Hardness Paths}
Again we consider graphs that do not have to be \nice{} such that we can apply the results in the case of such graphs. As can be observed in Lemma~\ref{obs:hard_vertex}, one reason a given bip-graph $H$ does not contain a hard vertex is the existence of induced subgraphs isomorphic to complete bipartite graphs with one part of cardinality congruent modulo $p$ to $1$. In the case of a \nice bip-graph $H$, these induced subgraphs have in the other part a number of vertices that is not congruent modulo $p$ to $0$. We concatenate these subgraphs and obtain a structure formally defined as follows.
\begin{definition}\label{def:hardness-path}
	Let $p$ be a prime, $H$ be a bipartite graph, and $P$ be a thick path of even length $\ell \equiv 0 \pmod 2$ in $H$, where $P =(v_0^{b_0}, v_1^{b_1}, \dots, v_\ell^{b_\ell})$. We call $P$ a \emph{$p$-hardness path} if, for all indices $i \in \sqBrackets{0; \ell}$, the following conditions hold:
	\begin{enumerate}
		\item if $i$ is even, then $b_i$ is $1$;
		\item if $i$ is even and neither $0$ nor $\ell$, then $\deg(v_{i})$ is congruent modulo $p$ to the count $b_{i-1}$;
		\item if $i$ is odd, then $b_{i}$ is not congruent modulo $p$ to $0$;
		\item if $i$ is odd, then for the representative $v_{i}$ of $v_{i}^{b_{i}}$, the complete core $K^{v_{i-1}, v_{i}}$ is \isomorphic[bip] to $K_{a_{i}, b_{i}}$ with $a_{i}$ congruent modulo $p$ to $1$.
	\end{enumerate}
	$P$ is called \emph{symmetric} if additionally, for even indices $i \in \sqBrackets{2; \ell-2}$, the degree $\deg(v_{i})$ is congruent modulo $p$ to the other count $b_{i+1}$. 
\end{definition}
An example of a $p$-hardness path with its twin-free form is depicted in Figure~\ref{fig:generalized_hardness_path_2}. For a $p$-hardness path $P$, where $P=(v_0^{b_0}, v_1^{b_1}, \dots, v_\ell^{b_\ell})$, we denote by $v_i$ the \emph{representative} of $v_i^{b_i}$, where $i$ is an index $i \in \sqBrackets{\ell}$. Since $b_1$ is $1$ for odd indices, for such indices we also use $v_i$ instead of $v_i^{1}$. We note for $i$ neither $0$ nor $\ell$, both vertices $v_{i+1}$ and $v_{i-1}$ are in the complete core $K^{v_{i-1}, v_i}$. If and only if $P$ is symmetric, then the \emph{reverse path} $P^{-1}$ is also a $p$-hardness path, where $P^{-1} = ( v_{\ell}^{b_\ell}, v_{\ell-1}^{b_{\ell-1}}, \dots, v_0^{b_0})$. 
\begin{figure}[t]
	\begin{center}
		\includegraphics[]{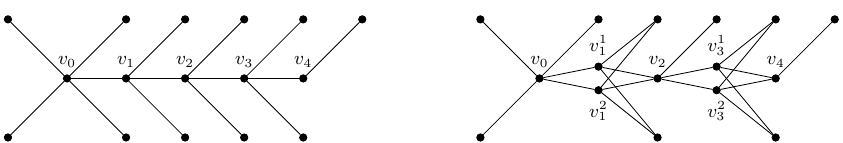}
	\end{center}
	\caption{Illustration of a $p$-hardness path for $p$ equal to $3$. The left figure depicts the $p$-hardness path $P'$ with $P'=(v_0^1,v_1^1,v_2^1,v_3^1,v_4^1)$. The right figure depicts the symmetric $p$-hardness path $P$ with $P=(v_0^1,v_1^2,v_2^1,v_3^2,v_4^1)$ and with twin-free form $P'$.}
	\label{fig:generalized_hardness_path_2}
\end{figure}

In the case of a \nice{} graph $H$, we always obtain the existence of a $p$-hardness path under the assumption that $H$ admits no $p$-hardness gadget.
\begin{lemma}\label{lem:existence_gen_hardness_path}
	Let $p$ be a prime and $H$ be a connected order~$p$ bip-reduced bip-graph that is \graphclass{} and \nice{}. If $H$ admits no $p$-hardness gadget, then there exists a $p$-hardness path in $H$.
\end{lemma}
\begin{proof}
	Due to the absence of a $p$-hardness gadget, we have by Corollary~\ref{cor:hard_vertex_hard} that $H$ contains no hard vertex.
	Furthermore, by Lemma~\ref{lem:bound_tuple_same_neighbourhood} there cannot exist a tuple of $p$ twins in $H$.
	
	We are going to show the existence of a pair of adjacent vertices $v_0, v_1$ such that the complete core $K^{v_0,v_1}$ is \isomorphic[bip] to $K_{a,b}$ with $a \equiv 1 \pmod p$ and $a>1$. This is sufficient for the lemma. The reason is that first, $b$ is not congruent modulo $p$ to $0$ because $H$ is \nice{}. We denote by $v_1^b$ the set of vertices in $\partof[K^{v_0,v_1}](v_1)$. Second, for any vertex $v_2$ in the intersection  $\neigh{v_1} \cap K^{v_0, v_1}$, the thick path $(v_0, v_1^b, v_2)$ is a $p$-hardness path because, for an endvertex $v$ of a $p$-hardness path $P$, there are no restrictions imposed by $P$ on the neighbourhood of $v$ outside of $P$.
	 
	By Theorem~\ref{thm:small graphs_weird_hard}, $H$ has radius larger than $2$. Let $v$ be an arbitrary vertex of $H$ and let $U_v$ be the set of vertices $v^\ast$ in $\twoneigh{v}$ with neighbourhood $\neigh{v^\ast}$ not contained in $\twoneigh{v}$. Since $H$ is of radius larger than $2$, it follows that $U_v$ contains at least one vertex.
	
	Let $v^\ast$ be a vertex in $U_v$ and $u$ be a common neighbour of $v$ and $v^\ast$. If the complete core $K^{v^\ast,u}$ is \isomorphic[bip] to $K_{a,b}$ and the cardinality $a$ is congruent modulo $p$ to $1$, then we are done because $a$ is larger than $1$ by the existence of $v$ and $v^\ast$. Hence, we assume that $a$ is not congruent modulo $p$ to $1$. The cardinality $b$ is not congruent modulo $p$ to $0$ since $H$ is \nice{}. It follows by Lemma~\ref{lem:condition_new_hard_vertex} that $\deg(v^\ast)$ is congruent modulo $p$ to $b$. Since $v^\ast$ is in $U_v$, there exists a neighbour $u^\ast$ of $v^{\ast}$ that is not in $\twoneigh{v}$. By Lemma~\ref{lem:bound_tuple_same_neighbourhood}, we assume without loss of generality that $u^\ast$ is not a leaf. We consider the complete core $K^{v^\ast, u^\ast}$ given by $v^\ast$ and $u^\ast$. Let the complete core $K^{v^\ast, u^\ast}$ be \isomorphic[bip] to $K_{a^\ast, b^\ast}$, for which we know that $a^\ast$ is larger than $1$ because $u^\ast$ is not a leaf. Since $H$ is \nice{}, it follows that $b^\ast$ is not congruent modulo $p$ to $0$. If $a^\ast \equiv 1 \pmod p$ then we are done, otherwise the prerequisites of Lemma~\ref{lem:gadget_adjacent_components} are met by $v^\ast$ and the two adjacent vertices $u$ and $u^\ast$. We obtain a $p$-hardness gadget, a contradiction.
\end{proof}

We call a $p$-hardness path $P$ in $H$ \emph{maximal} if there exists no extension of $P$ in $H$, that is, for $P= (v_0^{b_0}, v_1^{b_1}, \dots, v_\ell^{b_\ell})$, there exists no $p$-hardness path $P'=(v_0^{b_0}, v_1^{b_1}, \dots, v_\ell^{b_\ell}, v_{\ell+1}^{b_{\ell+1}} , v_{\ell+2}^{b_\ell+2})$.

The only vertices in the $p$-hardness path $P$, whose degrees are not determined, are the endvertices $v_0$ and $v_\ell$. Depending on the degree of the endvertices, we are going to obtain hardness by \enquote{connecting} the endvertices via the $p$-hardness path $P$. Maximality is used to further reduce the possible cases. 

We note that by definition a $p$-hardness path is $p$-square-free. A closed $p$-hardness path has to contain more than $4$ representative vertices since the path proceeds along complete cores. Hence, a closed $p$-hardness path is a $p$-hard walk and yields a $p$-hardness gadget by Corollary~\ref{cor:walks_hard}. With this argument, we show the following.

\begin{lemma}\label{lem:maximal_generalized_hardness_path}
	Let $p$ be a prime and $H$ be a connected order~$p$ bip-reduced bip-graph that is \graphclass{} and admits no $p$-hardness gadget. If $H$ contains a maximal $p$-hardness path $P$, where $P=(v_0^{b_0}, v_1^{b_1}, \dots, v_\ell^{b_\ell})$, then one of the following holds:
	\begin{enumerate}
		\item[(i)]   $\deg(v_{\ell}) \equiv b_{\ell-1} \pmod p$ and there exists a set of vertices $U$ in the complement $\neigh{v_{\ell}} \setminus \neigh{v_\ell}[P]$ such that the common neighbourhood satisfies $\abs{\bigcap_{u \in U} \neigh{u}} \not\equiv 1 \pmod p$;
		\item[(ii)] $\deg(v_{\ell}) \not\equiv b_{\ell-1} \pmod p$.
	\end{enumerate}
\end{lemma}
\begin{proof}
	By the definition of a $p$-hardness path, it follows that, for the representatives $v_{\ell-2}$ and $v_{\ell-1}$, the complete core $K^{v_{\ell-2}, v_{\ell-1}}$ is \isomorphic[bip] to $K_{k\cdot p+1, b_{\ell-1}}$, where $k$ is larger than $0$. By Lemma~\ref{lem:bound_tuple_same_neighbourhood}, we assume without loss of generality that $\deg(v_{\ell})$ is larger than the number $b_{\ell-1}$ of neighbours of $v_{\ell}$ in $K^{v_{\ell-2}, v_{\ell-1}}$. We assume toward contradiction that neither $(i)$ nor $(ii)$ holds, i.e. $\deg(v_{\ell})$ is congruent modulo $p$ to $b_{\ell-1}$ and there exists no set of vertices $U$ in the complement $\neigh{v_{\ell}}\setminus P$ such that $\abs{\bigcap_{u \in U} \neigh{u}} \not\equiv 1 \pmod p$.
	
	Since $\deg(v_{\ell})$ is congruent modulo $p$ to $b_{\ell-1}$, we deduce that the set $N$ with $N=\neigh{v_{\ell}} \setminus v_{\ell-1}^{b_{\ell-1}}$ has cardinality congruent modulo $p$ to $0$. Since $\deg(v_{\ell})$ is larger than $b_{\ell-1}$, we have that $N$ is not empty. We obtain from Lemma~\ref{lem:bound_tuple_same_neighbourhood} that $N$ has to contain a vertex $u$ of degree larger than $1$. Now we have by the maximality of $P$ that, if $u$ is not in $P$, then the complete core $K^{v_{\ell}, u}$, where $K^{v_{\ell}, u}$ is \isomorphic[bip]  to $K_{a', b'}$, satisfies that $a'$ is not congruent modulo $p$ to $1$. We suppose that $u$ is not in $P$ and let $U$ be the set of vertices in the part of $K^{v_{\ell}, u}$ that contains $u$. The common neighbourhood of the vertices in $U$ is exactly the other part of $K^{v_{\ell}, u}$, the part that contains $v$. Since $(i)$ does not apply, it follows that the cardinality $a'$ of the vertices in $\partof[K^{v_{\ell}, u}](v)$ is congruent modulo $p$ to $1$. We deduce that every vertex $u$ in $N$ with degree larger than $1$ has to be adjacent to a vertex $v$ in $P$ such that $v$ is in the complete core $K^{v_\ell, u}$. 
	
	Since $u$ is in $N$, it follows that $u$ is not in the neighbourhood $\neigh{v_{\ell-2}}$. The graph $H$ is bipartite, and we obtain that $v$ is equal to the vertex $v_i$ for some even index $i \in \sqBrackets{0;\ell}$. We note that $i$ is neither $\ell - 2$ nor $\ell$ because $u$ is not in the neighbourhood $\neigh{v_{\ell-2}}$. There might be two options for $i$ because $u$ might already be in $P$. In order to avoid squares, we take the maximal even index $i$. We obtain a closed thick path $P'$ given by $P'=(v_i, v_{i+1}^{b_{i+1}}, v_{i+2}^{b_{i+2}}, \dots, v_\ell^{b_\ell}, u^{b'}, v_i)$. By construction, no vertex in the set $u^{b'}$ is in the set $v_{i+1}^{b_{i+1}}$. Since $P$ traverses consecutive complete cores, we derive that $P'$ is square-free with more than $4$ representative vertices. None of the counts is congruent modulo $p$ to $0$. Therefore, the closed thick path $P'$ is $p$-hard. Corollary~\ref{cor:walks_hard} yields that $H$ admits a $p$-hardness gadget, a contradiction.
\end{proof}

\begin{figure}[t]
	\centering
	\includegraphics[]{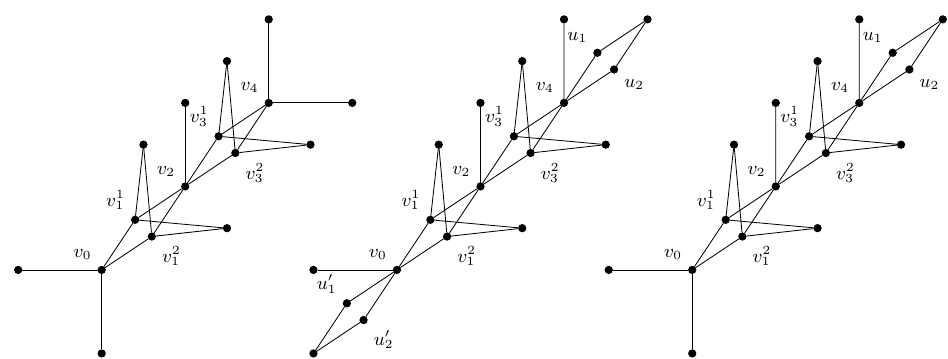}
	\caption{Illustration of the cases for a maximal $p$-hardness path according to Lemma~\ref{lem:maximal_generalized_hardness_path} for $p$ equal to $3$.}
	\label{fig:generalized_hardness_path_cases}
\end{figure}

The above lemma was stated in a way that enables an adaptation to general graphs $H$ without a necessary reformulation for the case that $H$ is not assured to be \graphclass{}.
\begin{remark}
	The application of Corollary~\ref{cor:walks_hard} is the only argument in the proof of Lemma~\ref{lem:maximal_generalized_hardness_path} that needed $H$ to be \graphclass{}. The statement of Lemma~\ref{lem:maximal_generalized_hardness_path} remains correct for general order~$p$ bip-reduced bip-graph $H$ if we add the third case that $H$ contains a $p$-hard walk, which allows an application of Lemma~\ref{lem:walks_hard}. However, for $H$ not \graphclass{}, we did not prove that this yields a $p$-hardness gadget. 
\end{remark}

We briefly argue that a maximal $p$-hardness path $P$, where $P=(v_0^{b_0}, v_1^{b_1}, \dots,  v_{\ell}^{b_\ell})$, contains a symmetric $p$-hardness path, for which we obtain the same result as in Lemma~\ref{lem:maximal_generalized_hardness_path} but for both endvertices. The reason is first, that there are no restrictions on the degree of an endvertex of $P$, and second, that an internal vertex $v_i$ of $P$ that prevents the reverse $P^{-1}$ from being a $p$-hardness path satisfies that $\deg(v_i)$ is not congruent modulo $p$ to the next count $b_{i+1}$. Therefore, in $P^{-1}$ the internal vertex $v_i$ satisfies case $(ii)$ of Lemma~\ref{lem:maximal_generalized_hardness_path}. Taking the maximal such index $i \in \sqBrackets{2;\ell-2}$, the thick subpath $P_1 = (v_{i}^{b_i}, v_{i+1}^{b_{i+1}}, \dots, v_{\ell}^{b_\ell})$ of $P$ is a symmetric $p$-hardness path.
\begin{corollary}\label{cor:maximal_generalized_hardness_path}
	Let $p$ be a prime and $H$ be a connected order~$p$ bip-reduced bip-graph that is \graphclass{} and admits no $p$-hardness gadget. If $H$ contains a maximal $p$-hardness path $P$, then there exists a thick subpath $P_1$ of $P$ that is a symmetric $p$-hardness path and both $P_1$ and its reverse $P_1^{-1}$ satisfy either case $(i)$ or case $(ii)$ of Lemma~\ref{lem:maximal_generalized_hardness_path}.
\end{corollary}

For the case of a \nice{} graph, we have by Lemma~\ref{lem:existence_gen_hardness_path} the existence of a maximal $p$-hardness path $P$. It remains to study the cases $(i)$ and $(ii)$ of Lemma~\ref{lem:maximal_generalized_hardness_path}. By Corollary~\ref{cor:maximal_generalized_hardness_path}, we assume $P$ to be symmetric. There are three different cases illustrated in Figure~\ref{fig:generalized_hardness_path_cases} from left to right: either $P$ and $P^{-1}$ satisfy both $(i)$ or both satisfy $(ii)$ or one satisfies $(i)$ and the other satisfies $(ii)$. We study these cases separately. For this, we analyse walks along $p$-hardness paths to derive gadgets, and use the following notation.
\begin{notation}\label{def:walks}
	Let $p$ be a prime and $H$ be a bip-graph. 
	\begin{itemize}
		\item Let $u$ and $v$ be a pair of vertices of $H$ and $k$ be a non-negative integer. The number of walks in $H$ of length $k$ from $u$ to $v$ is denoted by $\numWalks{u}{v}[k]$. 
		\item Let $P$ be a path in $H$, where $P=(v_0, v_1, \dots, v_\ell)$. For a pair of vertices $v'_0$ in $\neigh{v_0}$ and $v'_\ell$ in $\neigh{v_\ell}$, we denote by $\numWalks{v'_0}{v'_{\ell}}[\ell][P]$ the number of walks of length $\ell$ from $v'_0$ to $v'_\ell$ restricted to $P$, where we define $\numWalks{v'_0}{v'_{\ell}}[\ell][P]$ by
		\[
			\numWalks{v'_0}{v'_{\ell}}[\ell][P] = \abs[\big]{\set{ (v'_0, v'_1, \dots, v'_{\ell}) \given \text{for all } i \in \sqBrackets{\ell-1} ,\, v'_i \in \neigh{v_i}}}.
		\]
		For a vertex $v'_0$ not in $\neigh{v_0}$ or a vertex $v'_\ell$  not in $\neigh{v_\ell}$, we define $\numWalks{v'_0}{v'_{\ell}}[\ell][P]$ to be $0$.
		\item Let $P$ be a $p$-hardness path in $H$, where $P=(v_0^{b_0}, v_1^{b_1}, \dots, v_{\ell}^{b_\ell})$, and $P'$ be its twin-free form, where $P'=(v_0, v_1, \dots, v_\ell)$. For vertices $v'_0$ in $\neigh{v_0}$ and $v'_\ell$ in $\neigh{v_\ell}$, we denote by $\numWalks{v'_0}{v'_{\ell}}[\ell][P]$ the number of walks of length $\ell$ from $v'_0$ to $v'_\ell$ restricted to $P$ by using $P'$ and defining $\numWalks{v'_0}{v'_{\ell}}[\ell][P]$ to be equal to $\numWalks{v'_0}{v'_{\ell}}[\ell][P']$.
	\end{itemize}
\end{notation}
We extend this notation to also allow for walks of length $\ell+1$ and $\ell+2$. Let $P$ be a path or a $p$-hardness path with endvertices $v_0$ and $v_\ell$. For every pair of neighbours $v'_0 \in \neigh{v_0}$ and $v'_\ell \in \neigh{v_\ell}$, every walk contributing to $\numWalks{v'_0}{v'_{\ell}}[\ell][P]$ yields a walk of length $\ell+1$ from a neighbour $v$ of $v'_0$ to $v'_\ell$, a walk of length $\ell+1$ from $v'_0$ to a neighbour $v'$ of $v'_{\ell}$, and a walk of length $\ell +2$ from $v$ to $v'$. Therefore, we denote for vertices $v$ in $\twoneigh{v_0}$ and $v'$ in $\twoneigh{v_\ell}$
\begin{align}
	\label{eq:numwalks_neighbour_to_2-neighbour}
	\numWalks{v'_0}{v'}[\ell+1][P] &= \sum_{v'_\ell \in \neigh{v'} \cap \neigh{v_\ell}} \numWalks{v'_0}{v'_{\ell}}[\ell][P] ;\\
	\label{eq:numwalks_2-neighbour_to_neighbour}
	\numWalks{v}{v'_\ell}[\ell+1][P] &= \sum_{v'_0 \in \neigh{v} \cap \neigh{v_0}} \numWalks{v'_0}{v'_{\ell}}[\ell][P] ;\\
	\label{eq:numwalks_2-neighbour_to_2-neighbour}
	\numWalks{v}{v'}[\ell+2][P] &= \sum_{\substack{v'_0 \in \neigh{v} \cap \neigh{v_0} \\ v'_\ell \in \neigh{v'} \cap \neigh{v_\ell}}} \numWalks{v'_0}{v'_{\ell}}[\ell][P] .
\end{align}

\begin{lemma}\label{lem:path_gadget}
	Let $p$ be a prime and let $H$ be a connected order~$p$ bip-reduced bip-graph. If there exists a path $P$ in $H$, where $P= (v_0, \dots, v_\ell)$, then, for every length $\lambda$ in $\set{\ell, \ell+1, \ell +2}$, there exists a partially $H$-labelled bip-graph $\GadgetEdge$ such that, for every pair of vertices $v$ in $\twoneigh{v_0}$ and $v'$ in $\twoneigh{v_\ell}$,
	\begin{enumerate}
		\item if $\lambda \equiv \ell \pmod 2$, then the number of homomorphisms $\numHomBip[\GadgetEdge, (H, v, v')]$ is congruent modulo $p$ to $\numWalks{v}{v'}[\lambda][P]$;
		\item if $\lambda = \ell + 1$ and $v$ is in $\neigh{v_0}$, then the number of homomorphisms $\numHomBip[\GadgetEdge, (H, v, v')]$ is congruent modulo $p$ to  $\numWalks{v}{v'}[\lambda][P]$;
		\item if $\lambda = \ell + 1$ and $v'$ is in $\neigh{v_\ell}$, then the number of homomorphisms $\numHomBip[\GadgetEdge, (H, v, v')]$ is congruent modulo $p$ to $\numWalks{v}{v'}[\lambda][P]$.
	\end{enumerate}
\end{lemma}
\begin{proof}
	We are going to apply the dot product for the construction of the partially $H$-labelled bip-graphs, where the properties of the dot product are given by Corollary~\ref{cor:dot_product_bip}.
	
	We start with the case that $\lambda$ is $\ell$.	
	By Lemma~\ref{lem:complete_core_gadget}, for every index $i$ in $\sqBrackets{0; \ell}$, there exists a partially $H$-labelled bip-graph $(J_i, y_i)$ that selects $\neigh{v_i}$. 
	The partially $H$-labelled bip-graph $\GadgetEdge$ is constructed from the path $J_E$ of length $\ell$, where $J_E=(y_0, \dots, y_\ell)$, with bipartition such that $\partof[J_E](y_0)$ agrees with $\npartof[H](v_0)$. We take iteratively, for every index $i \in \sqBrackets{0; \ell}$, the dot product $(J_E, y_i) \odot (J_i, y_i)$, which we denote by $(G', y_i)$, and set $J_E$ to be $G'$. Afterwards, we take for the distinguished vertex $y_\L$ the vertex $y_0$ and for the distinguished vertex $y_\R$ the vertex $y_\ell$. In this way, a homomorphism $f$ in $\HomBip[\GadgetEdge, (H, v, v']$ has to map, for every index $i \in \sqBrackets{0; \ell}$, the vertex $y_i$ to a vertex in the neighbourhood of $v_i$. It follows that $\numHomBip[\GadgetEdge, (H, v, v']$ is equal to $\numWalks{v}{v'}[\lambda][P]$.
	
	Second, let $\lambda$ be equal to $\ell + 1$ and $v$ be a vertex in the neighbourhood of $v_0$. Additionally, let $\GadgetEdge$ be as constructed for the case that $\lambda$ is $\ell$. We extend $\GadgetEdge$ to account for walks of length $\ell+1$. Let $J$ be the single edge bip-graph consisting of the edge $(w_\L, w_\R)$ with the bipartition such that $\partof[J](w_\L)$ agrees with $\partof[J_E](y_\R)$. Utilizing the dot product we construct the partially $H$-labelled bip-graph $(J'_E, w_\L)$ by $(J'_E, w_\L) = (J_E, y_\R) \odot (J, w_\L)$. In this way, every homomorphism $f$ in $\HomBip[J'_E, H]$ has to map $w_\L$ to the neighbourhood $\neigh{v_\ell}$, and thus $f$ maps $w_\R$ to $\twoneigh{v_\ell}$. Due to $J_E$, the homomorphism $f$ maps, for every index $i \in \sqBrackets{0; \ell}$, the vertex $y_i$ to a vertex in the neighbourhood of $v_i$. By the definition of $\numWalks{v}{v'}[\lambda][P]$, the partially $H$-labelled bip-graph $(J'_E, y_\L, w_\R)$ has the desired property.
	
	The case that $\lambda$ is equal to $\ell +1$ and $v'$ is in the neighbourhood of $\neigh{v_\ell}$ is analogue to the previous case but with switched roles of vertices with subscript $\L$ and vertices with subscript $\R$. Lastly, let $\lambda$ be equal to $\ell + 2$. We extend the bip-graph $\GadgetEdge$ that we obtained for the case that $\lambda$ is $\ell$ in the same way as before but now in both directions. More precise, with the bip-graphs as in the previous cases we construct first the partially $H$-labelled bip-graph $(J', w'_\L)$ by $(J', w'_\L) =(J_E, y_\R) \odot (J, w_\L)$ and then the partially $H$-labelled bip-graph $(J'', w'_\R)$ by $(J'', w'_\R) =(J', y_\L) \odot (J, w_\R)$. Both individual constructions were used for the cases in which $\lambda$ is equal to $\ell + 1$. We deduce that the partially $H$-labelled bip-graph $(J'', w_\L, w_\R)$ has the desired property.
\end{proof}

We apply the same construction for a $p$-hardness path $P$. Let $P=(v_0^{b_0}, v_1^{b_1}, \dots, v_{\ell}^{b_\ell})$ and let $P'$ be the twin-free form of $P$, where $P'= (v_0, v_1, \dots, v_\ell)$. Any walk $(v'_0, v'_1, \dots, v'_{\ell})$ that contributes to $\numWalks{v'_0}{v'_{\ell}}[\ell][P]$ satisfies that, for every odd index $i \in \sqBrackets{\ell}$, the vertex $v'_i$ is in $\neigh{v_i^{b_i}}$. By Lemma~\ref{lem:complete_core_gadget}, there exists a partially $H$-labelled bip-graph $(J_i, y_i)$ that selects the common neighbourhood $\bigcap_{j \in \sqBrackets{b_i}} \neigh{v_i^j}$. This gives the following.
\begin{corollary}\label{cor:generalized_hardness_path_gadget}
	Let $p$ be a prime and let $H$ be a connected order~$p$ bip-reduced bip-graph. If there exists a $p$-hardness path $P$ in $H$, where $P=(v_0^{b_0}, v_1^{b_1}, \dots, v_{\ell}^{b_\ell})$, then, for every length $\lambda$ in $\set{\ell, \ell+1, \ell +2}$, there exists a partially $H$-labelled bip-graph $\GadgetEdge$ such that, for every pair of vertices $v$ in $\twoneigh{v_0}$ and $v'$ in $\twoneigh{v_\ell}$,
	\begin{enumerate}
		\item if $\lambda \equiv \ell \pmod 2$, then the number of homomorphisms $\numHomBip[\GadgetEdge, (H, v, v')]$ is congruent modulo $p$ to $\numWalks{v}{v'}[\lambda][P]$;
		\item if $\lambda = \ell + 1$ and $v$ is in $\neigh{v_0}$, then the number of homomorphisms $\numHomBip[\GadgetEdge, (H, v, v')]$ is congruent modulo $p$ to $\numWalks{v}{v'}[\lambda][P]$;
		\item if $\lambda = \ell + 1$ and $v'$ is in $\neigh{v_\ell}$, then the number of homomorphisms $\numHomBip[\GadgetEdge, (H, v, v')]$ is congruent modulo $p$ to $\numWalks{v}{v'}[\lambda][P]$.
	\end{enumerate}
\end{corollary}

By the structure of a square-free thick path $P$, a walk $W$ restricted to $P$ has to mainly follow $P$. 
\begin{lemma}\label{lem:path_gadget_follows_path}
	Let $H$ be a connected \graphclass{} bipartite graph and $P$ be a thick square-free path in $H$, where $P=(v_0^{b_0}, v_1^{b_1}, \dots,  v_{\ell}^{b_\ell})$. If $W$ is a walk that contributes to $\numWalks{v'_0}{v'_\ell}[\ell][P]$, where $W=(v'_0, \dots, v'_{\ell})$, then there exists at most one vertex $v'_i$ in $W$ such that $v'_i$ is not in the neighbourhood $\neigh{v_i}[P]$.
\end{lemma}
\begin{proof}
	We first show that $W$ cannot contain two consecutive vertices $v'_i$ and $v'_{i+1}$ not in $\neigh{v_i}[P]$ and $\neigh{v_{i+1}}[P]$, respectively. The reverse thick path of $P$ is also square-free as is every thick subpath of $P$ of even length at least $2$, that is every thick path $(v_i^{b_i}, v_{i+1}^{b_{i+1}}, \dots, v_{l}^{b_l})$, where $i \in \sqBrackets{0;\ell-2}$ and $l$ is an even index with $l \in \sqBrackets{2;\ell}$. Without loss of generality, let $\ell$ be $2$. We assume toward contradiction that the two consecutive vertices $v'_0$ and $v'_1$ are not in $\neigh{v_0}[P]$ and $\neigh{v_{1}}[P]$, respectively. Thus, the vertex $v'_0$ is not $v_1$ and the vertex $v'_1$ is neither $v_0$ nor $v_2$. Since $v'_0$ is in the neighbourhood $\neigh{v_0}$ and $v'_i$ is in the neighbourhood $\neigh{v_1^{b_1}}$, we deduce that $v'_0$ and $v'_1$ yield a square $(v_0, v_1, v'_1, v'_0)$. By the maximality of a complete core, it follows that $v'_0$ and $v'_1$ are in the complete core $K^{v_0, v_1}$, which is a complete bipartite graph by Lemma~\ref{lem:weirdness}. The vertex $v_2$ is also in the complete core $K^{v_0, v_1}$. Therefore, $v'_0$ is in the neighbourhood $\neigh{v_2}$, a contradiction to $P$ being square-free.
	
	Now, we show by induction on $\ell$ that $W$ contains at most one vertex $v'_i$ not in the neighbourhood $\neigh{v_i}[P]$. For the case that $\ell$ is $2$, it suffices to assume that $v'_0$ is not in $v_1^{b_1}$ because of the reverse thick path of $P$. We know from the first part of the proof that $v'_1$ is in the neighbourhood $\neigh{v_1}[P]$. The vertex $v'_1$ cannot be $v_2$ as this would give a square. It follows that $v'_1$ is $v_0$. Hence, $v'_2$ is in the common neighbourhood $\neigh{v_0} \cap \neigh{v_2}$, and we derive that $v'_2$ is in $v_1^{b_1}$.
	For the induction step, we assume the induction hypothesis for all even lengths smaller than $l$, where $l \in \sqBrackets{2;\ell}$. We observe the two thick square-free paths $P_2$ and $P_{-2}$ obtained from $P$ by cropping the first two vertices, that is $P_2 = (v_0, v_1^{b_1}, v_2)$ and $P'=(v_2, v_3^{b_3}, v_4, \dots, v_l)$. If $v'_2$ is in the neighbourhood $\neigh{v_2}[P_2]$, then we derive that $v'_2$ is in $v_1^{b_1}$ and thus $v'_2$ is not in the neighbourhood $\neigh{v_2}[P_{-2}]$. The other direction holds as well. The lemma follows by the induction hypothesis on $P_2$ and $P_{-2}$.
\end{proof}

Given a square-free thick path $P$ with endvertices $v_0$ and $v'_\ell$. Let $v'_0$ be a neighbour of $v_0$ and $v'_\ell$ be a neighbour of $v_\ell$. We apply Lemma~\ref{lem:path_gadget_follows_path} frequently in order to deduce that a walk $W$ that contributes to $\numWalks{v'_0}{v'_{\ell}}[\ell][P]$, where $v'_0$ is not in $P$, consists of $v'_0$ and the remaining vertices are in $P$. Let $W=(v'_0, v'_1, \dots, v'_{\ell})$. It follows for every index $i \in \sqBrackets{\ell}$ that the vertex $v'_i$ is in $v_{i-1}^{b_{i-1}}$. None of the counts $b_i$ is congruent modulo $p$ to $0$. We deduce that that the number of walks $\numWalks{v'_0}{v'_{\ell}}[\ell][P]$ is equal to $\prod_{i \in \sqBrackets{0; \ell}} b_{i} $, which is not congruent modulo $p$. Further, if $v'_\ell$ is not in the neighbourhood $\neigh{v_\ell}[P]$, then the number of walks $\numWalks{v'_0}{v'_{\ell}}[\ell][P]$ is $0$.
We initialize the construction of hardness gadgets with a lemma regarding (sub)walks in generalized paths. 

\begin{lemma}\label{lem:generalized_hardnes_path_gadget_1}
	Let $p$ be a prime and $H$ be a connected order~$p$ bip-reduced bip-graph that is \graphclass{}. If $H$ contains a $p$-hardness path, where $P=(v_0^{b_0}, v_1^{b_1}, \dots, v_{\ell}^{b_\ell})$, then there exists a non-negative integer $k$ such that, for every pair of vertices $v'_0$ in $v_1^{b_1}$ and $v'_{\ell}$ in $v_{\ell-1}^{b_{\ell-1}}$, it holds
	\[
	\numWalks{v'_0}{v'_{\ell}}[\ell][P] = k \not\equiv 0 \pmod p.
	\]
\end{lemma}
\begin{proof}
	Let $W$ be a walk that contributes to $\numWalks{v'_0}{v'_{\ell}}[\ell][P]$, where $W=(v'_0, v'_1, \dots,  v'_{\ell})$. By Lemma~\ref{lem:path_gadget_follows_path}, we deduce that $W$ contains at most one subwalk $(v'_{i-1}, v'_{i}, v'_{i+1})$ such that $v'_i$ is not in the neighbourhood $\neigh{v_i}[P]$ for some vertex $v_i$ in $P$.
	
	Supposing such a subwalk exists, it follows by $P$ being square-free that both $v'_{i-1}$ and $v'_{i+1}$ are in $v_i^{b_i}$. By the length $\ell$ of the walks under study, we deduce that $W$ contains exactly one such subwalk $(v'_{i-1}, v'_{i}, v'_{i+1})$.	We prove the lemma by induction on $\ell$.	
	
	The case that $\ell$ is $2$ follows from the complete core $K^{v_0, v_1}$. We know from Lemma~\ref{lem:weirdness} that $K^{v_0, v_1}$ is \isomorphic[bip] to $K_{a,b_1}$, where $a$ is congruent modulo $p$ to $1$ because $P$ is a $p$-hardness path. 	
	Regarding the induction step, we assume the induction hypothesis for all even lengths smaller than $l$, where $l \in \sqBrackets{2;\ell}$. Let $P_{-2}$ be the thick subpath $(v_0^{b_0}, v_1^{b_1}, \dots, v_{l-2}^{b_{l-2}})$, and let $v'_{l-2}$ be a vertex in $v_{l-3}^{b_{l-3}}$. By the induction hypothesis, there exists a positive integer $k$ such that the number of walks $\numWalks{v'_0}{v'_{l-3}}[l-2][P_2]$ is equal to $k$, and $k$ is not congruent modulo $p$ to $0$. Any walk $W$ that contributes to $\numWalks{v'_0}{v'_{l-2}}[l-2][P_2]$ can only be extended to a walk that contributes to $\numWalks{v'_0}{v'_{l}}[l][P]$ by adjoining the path $(v'_{l-2}, v_{l-2}, v'_{l} )$. There are $b_{l-3}$ choices for $v'_{l-2}$. Hence, the walks that visit a vertex in $v_{l-3}^{b_{l-3}}$ after $l-2$ steps contribute $k \cdot b_{l-3}$ to $\numWalks{v'_0}{v'_{l}}[l][P]$, where $k \cdot b_{l-3} \not \equiv 0 \pmod p$. 
	
	We claim that the remaining number of walks $W$ that contribute to $\numWalks{v'_0}{v'_{l}}[l][P]$ is congruent modulo $p$ to $0$. The walk $W$ does not visit $v_{l-3}^{b_{l-3}}$ after $l-2$ steps, i.e. $v'_{l-2}$ is not in $v_{l-3}^{b_{l-3}}$. By our initial observation, we know that $W$ contains exactly one subwalk $(v'_{i-1}, v'_{i}, v'_{i+1})$ such that $v'_i$ is not in the neighbourhood $\neigh{v_i}[P]$ for some vertex $v_i$ in $P$. We recall that both $v'_{i-1}$ and $v'_{i+1}$ are in $v_i^{b_i}$. The walk $W$ cannot have such a revision of a vertex with index $i \leq l-2$ because otherwise $v'_{l-2}$ has to be in $v_{l-3}^{b_{l-3}}$. The number of revisions of $v_{l-2}$ that do not invoke a revision of the set $v_{l-3}^{b_{l-3}}$ is equal to $\deg(v_{l-2}) - b_{l-3}$, which is congruent modulo $p$ to $0$ by $P$ being a $p$-hardness path. The number of revisions of the set $v_{l-1}^{b_{l-1}}$ via a vertex $v'$ in the common neighbourhood $\bigcap_{j \in \sqBrackets{b_{l-1}}} \neigh{v^j_{l-1}}$ that do not invoke a revision of $v_{l-2}$ is equal to the common degree $\abs{\bigcap_{j \in \sqBrackets{b_{l-1}}} \neigh{v^j_{l-1}}} - 1$, which is also congruent modulo $p$ to $0$.
\end{proof}

By Lemma~\ref{lem:generalized_hardnes_path_gadget_1} we obtain the first condition for a $p$-hardness gadget depending only on the degree of the end vertices. This corresponds to a symmetric $p$-hardness path $P$ such that $P$ and its reverse $P^{-1}$ both satisfy case $(ii)$ of Lemma~\ref{lem:maximal_generalized_hardness_path}.
\begin{lemma}\label{lem:nice_graphs_hardnes_path_hard_1}
	Let $p$ be a prime and $H$ be a connected order~$p$ bip-reduced bip-graph that is \graphclass{}. If $H$ contains a $p$-hardness path $P$, where $P=(v_0^{b_0}, v_1^{b_1}, \dots, v_{\ell}^{b_\ell})$, such that $\deg(v_0) \not \equiv b_1 \pmod p$ and $\deg(v_{\ell}) \not \equiv b_{\ell-1} \pmod p$, then $H$ admits a $p$-hardness gadget.
\end{lemma}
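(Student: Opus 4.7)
The plan is to build a $p$-hardness gadget $\Gadget$ using the generalized hardness path $P$ as a template. I let $\GadgetPart{\L}$ be a single edge with one endpoint pinned to $v_0$ and the other distinguished, so that $\O{\L}=\neigh{v_0}$; symmetrically $\GadgetPart{\R}$ gives $\O{\R}=\neigh{v_\ell}$. The partitioning is $o_\L=\{v_1^j\}_{j\in[b_1]}$, $i_\L=\neigh{v_0}\setminus o_\L$, and analogously for the right side. The two endpoint-degree hypotheses $\deg(v_0)\not\equiv b_1\pmod p$ and $\deg(v_\ell)\not\equiv b_{\ell-1}\pmod p$ guarantee that $|o_\L|$, $|i_\L|$, $|o_\R|$, $|i_\R|$ are all nonzero modulo $p$.

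For $\GadgetEdge$ I take a path $(y_\L,z_1,z_2,\dots,z_{\ell-1},y_\R)$ of length $\ell$ with $y_\L,y_\R$ distinguished, and for every $i\in[\ell-1]$ I attach to $z_i$ a fresh pinned neighbour $w_i$ with $w_i\mapsto v_i$. A homomorphism of $\GadgetEdge$ to $(H,u,v)$ is then precisely a walk $(u,z_1,\dots,z_{\ell-1},v)$ in $H$ of length $\ell$ with $z_i\in\neigh{v_i}$ for every $i$, i.e.\ a walk parallel to $P$ in the sense of Definition~\ref{def:walks}. The bipartite parity of the gadget is consistent with $y_\L,y_\R\in\neigh{v_0}\cap\neigh{v_\ell}$ (same partition), which is what we need.

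Verification of the gadget conditions splits into four cases. For $(u,v)\in o_\L\times o_\R$ the count is exactly $\numWalks{u}{v}[\ell][P]$, nonzero modulo $p$ by Lemma~\ref{lem:generalized_hardnes_path_gadget_1}. The mixed cases $o_\L\times i_\R$ and $i_\L\times o_\R$ will be handled using the structural observation that under the \graphclass hypothesis, whenever $u\in i_\L$ every admissible first step forces $z_1=v_0$, and symmetrically $z_{\ell-1}=v_\ell$ whenever $v\in i_\R$: indeed, if $w\in\neigh{u}\cap\L(K^{v_0,v_1})\setminus\{v_0\}$ existed, then picking any second vertex $w'\in\L(K^{v_0,v_1})\setminus\{v_0,w\}$ (available since $a_1\equiv 1\pmod p$ and $v_0,v_2\in\L(K^{v_0,v_1})$ force $a_1\geq p+1\geq 3$) gives a $K_{3,3}$ on $\{v_0,w,w'\}\times\{v_1^j,v_1^{j'},u\}$ missing at most one edge, yielding an induced $\forbiddenA$ (and in the borderline $b_1=1$ situation, extending by $v_2$ or $v_3$ and using $\forbiddenB$-freeness). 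A truncated version of the induction in the proof of Lemma~\ref{lem:generalized_hardnes_path_gadget_1} then yields a nonzero count in the mixed cases.

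For the critical case $(u,v)\in i_\L\times i_\R$ both $z_1=v_0$ and $z_{\ell-1}=v_\ell$ are forced, so the count equals the number of walks $(v_0,z_2,\dots,z_{\ell-2},v_\ell)$ of length $\ell-2$ with $z_i\in\neigh{v_i}$. Applying the same forbidden-subgraph argument inductively shows that at each step $z_i$ is forced to lie at the $(i-1)$-th level of $P'$, i.e.\ $z_{2k}\in\{v_{2k-1}^?\}$ and $z_{2k+1}=v_{2k}$. After $\ell-2$ steps the walk therefore reaches a copy of $v_{\ell-3}$, which has distance $3$ in $P'$ from $v_\ell$ and hence is non-adjacent to $v_\ell$; no admissible walk exists and the count is $0$. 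The main obstacle is making the inductive forcing of $z_i$ rigorous uniformly in $i$, in particular covering the corner cases $b_i=1$ by systematically invoking the $\forbiddenB$-free hypothesis alongside $\forbiddenA$-freeness and by tracking that each intermediate complete core $K^{v_{i-1},v_i}$ has the structural rigidity furnished by the definition of a $p$-connecting path.
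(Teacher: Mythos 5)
Your gadget choices ($\GadgetPart{\L}$, $\GadgetPart{\R}$ selecting $\neigh{v_0}$ and $\neigh{v_\ell}$, the four-way partition into $o_\L,i_\L,o_\R,i_\R$, and a path-shaped $\GadgetEdge$ of length $\ell$ with pinned pendants) match the paper's construction in broad strokes, but your verification of $\GadgetEdge$ has a real gap and, as you acknowledge, stops short of a proof.

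There are two concrete problems. First, you attach \emph{one} pinned neighbour $w_i\mapsto v_i$ to each internal vertex $z_i$, which only enforces $z_i\in\neigh{v_i}$ for a single representative. The paper instead pins a fresh neighbour $z_i^j\mapsto v_i^j$ for \emph{every} $j\in[b_i]$, so that $\sigma(y_i)\in\bigcap_{j\in[b_i]}\neigh{v_i^j}$. These are not the same set whenever $b_i>1$ (the copies $v_i^j$ need not share their full neighbourhood), and the whole machinery of Lemmas~\ref{lem:path_gadget_follows_path}--\ref{lem:generalized_hardnes_path_gadget_1} is proved for the intersection constraint; the containment-in-the-complete-core argument in the proof of Lemma~\ref{lem:path_gadget_follows_path} explicitly uses $y_{i+1}\in\bigcap_{k}\neigh{v_{j+1}^k}$. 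With your weaker pinning the walk counts can change and the verification no longer follows.

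Second, and more fundamentally, your analysis of the four cases tries to replace the use of Lemma~\ref{lem:path_gadget_follows_path} by an ad hoc ``forcing'' claim ($u\in i_\L$ forces $z_1=v_0$, etc.) justified by a forbidden-subgraph argument. That argument is not airtight: in the case where your $w'$ \emph{is} adjacent to $u$, the six vertices span a full $K_{3,3}$, not a $\forbiddenA$; ruling that out requires appealing to the maximality of the complete core $K^{v_0,v_1}$ (namely that $u$, being adjacent to all of $\L(K^{v_0,v_1})$, would have to lie in $\R(K^{v_0,v_1})=\{v_1^j\}_j$, contradicting $u\in i_\L$), which you don't do. More importantly, you explicitly flag that the inductive extension of this forcing to all levels, including the $b_i=1$ corner cases, is ``the main obstacle'' and leave it open. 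This is exactly what the paper's Lemma~\ref{lem:path_gadget_follows_path} is for: it shows at once that any contributing walk leaves $P$ at most once, after which a simple length/distance count settles all four cases ($o_\L\times o_\R$ by Lemma~\ref{lem:generalized_hardnes_path_gadget_1}, the mixed cases by $\prod b_i\not\equiv 0$, and $i_\L\times i_\R$ because the forced route has length $\ell+2>\ell$). You should invoke that lemma rather than re-derive a partial version of it; as written, your proposal does not establish the $i_\L\times i_\R=\emptyset$ condition needed for a $p$-hardness gadget.
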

\begin{proof}
	Let $\GadgetPart{\L}$ and $\GadgetPart{\R}$ be the partially $H$-labelled bip-graphs given by Lemma~\ref{lem:complete_core_gadget} such that $\GadgetPart{\L}$ selects the set $\selectSet[\L]$ with $\selectSet[\L] = \neigh{v_0}$ and $\GadgetPart{\R}$ selects the set $\selectSet[\R]$ with $\selectSet[\R] = \neigh{v_{\ell}}$. Furthermore, let $\GadgetEdge$ be the partially $H$-labelled bip-graph given by Corollary~\ref{cor:generalized_hardness_path_gadget} such that, for every pair of vertices $v$ and $v'$, where $v$ is in $\neigh{v_0}$ and $v'$ is in $\neigh{v_{\ell}}$, the number of homomorphisms $\numHomBip[\GadgetEdge, (H, v, v')][p]$ is equal to the number of walks $\numWalks{v}{v'}[\lambda][P]$. We decompose the set $\selectSet[\L]$ into the pair of disjunct subsets $o_\L$ and $i_\L$, where we define $o_\L$ to be $v_1^{b_1}$ and $i_\L$ to be the complement $\neigh{v_0} \setminus o_\L$. Similarly, we decompose $\selectSet[\R]$ into the pair of disjunct subsets $o_\R$ and $i_\R$, where we define $o_\R$ to be $v_{\ell-1}^{b_{\ell-1}}$ and $i_\R$ to be the complement $\neigh{v_{\ell}} \setminus o_\R$. By the lemma's assumptions, we have that none of these four subsets has cardinality congruent modulo $p$ to $0$. It remains to show that, for the pair of vertices $v$ and $v'$ as given above, the number of homomorphisms $\numHomBip[\GadgetEdge, (H, v, v')][p]$ is equal to $0$ if and only if the pair $(v, v')$ is in $i_\L \times i_\R$.
	
	Let $W$ be a walk that contributes to $\numWalks{v}{v'}[\ell][P]$. By Lemma~\ref{lem:path_gadget_follows_path}, $W$ contains at most one vertex $v'_i$ that is not in $\neigh{v_i}[P]$. If the vertex $v$ is in $i_\L$, then $v$ gives such a vertex because $v$ is not in $P$. The same holds if the vertex $v'$ is in $i_\R$. It follows for any pair $(v,v')$ in $i_{\L} \times i_{\R}$, that there are no walks that contribute to $\numWalks{v}{v'}[\ell][P]$. Moreover, for any pair $(v,v')$ in $o_{\L} \times i_{\R}$, it follows by the distance restricted to $P$ that the number of walks $\numWalks{v}{v'}[\ell][P]$ is given by a product with factors in $\set{b_0, b_1, \dots, b_{\ell}}$, thus not congruent modulo $p$ to $0$. Similarly, for any pair $(v,v')$ in $i_{\L} \times o_{\R}$, we deduce that the number of walks $\numWalks{v}{v'}[\ell][P]$ is not congruent modulo $p$ to $0$. Finally, for any pair $(v,v')$ in $o_{\L} \times o_{\R}$, it follows by Lemma~\ref{lem:generalized_hardnes_path_gadget_1} that the number of walks $\numWalks{v}{v'}[\ell][P]$ is not congruent modulo $p$ to $0$.
\end{proof}

This is not the extent of the properties, that yield a $p$-hardness gadget. For a $p$-hardness path $P$ with endvertices $v_0$ and $v_\ell$, we study in the following the case that $\deg(v_\ell)$ is not congruent modulo $p$ to the count $b_{\ell-1}$. If $P$ is symmetric, then we show hardness in the case that $P$ satisfies case $(i)$ of Lemma~\ref{lem:maximal_generalized_hardness_path} and its reverse $P^{-1}$ satisfies case $(ii)$ of Lemma~\ref{lem:maximal_generalized_hardness_path}.

\begin{lemma}\label{lem:nice_graphs_hardnes_path_hard_3}
	Let $p$ be a prime and $H$ be a connected order~$p$ bip-reduced bip-graph that is \graphclass{}. Further, let $H$ contain a $p$-hardness path $P$, where $P=(v_0^{b_0}, v_1^{b_1}, \dots, v_{\ell}^{b_\ell})$, such that $\deg(v_0) \not \equiv b_1 \pmod p$ and $\deg(v_{\ell}) \equiv b_{\ell-1} \pmod p$. If there exists a set of vertices $U$ in $\neigh{v_{\ell}} \setminus \neigh{v_{\ell}}[P]$ such that $\abs{U} \not \equiv 0 \pmod p$ and the common neighbourhood satisfies $\abs{\bigcap_{u \in U} \neigh{u}} \not \equiv 1 \pmod p$, then $H$ admits a $p$-hardness gadget. 
\end{lemma}
\begin{proof}
	Let $\GadgetPart{\L}$ and $\GadgetPart{\R}$ be the partially $H$-labelled bip-graphs given by Lemma~\ref{lem:complete_core_gadget} such that $\GadgetPart{\L}$ selects the sets  $\selectSet[\L]$, where $\selectSet[\L] = \neigh{v_0}$, and $\GadgetPart{\R}$ selects the set $\selectSet[\R]$, where $\selectSet[\R] = \bigcap_{u \in U} \neigh{u}$.
	For the partially $H$-labelled bip-graph $\GadgetEdge$ we use walks of length $\ell+1$; let $\GadgetEdge$ be the bip-graph given by Corollary~\ref{cor:generalized_hardness_path_gadget} such that, for every pair of vertices $v$ and $v'$, where $v$ is in $\neigh{v_0}$ and $v'$ is in $\twoneigh{v_\ell}$, the number of homomorphisms $\numHomBip[\GadgetEdge, (H, v, v')][p]$ is equal to the number of walks $\numWalks{v}{v'}[\ell+1][P]$.	
	We decompose the set $\selectSet[\L]$ into the pair of disjunct subsets $o_\L$ and $i_\L$, where we define $o_\L$ to be $v_1^{b_1}$ and $i_\L$ to be the complement $\neigh{v_0} \setminus o_\L$.  Similarly, we decompose $\selectSet[\R]$ into the pair of disjunct subsets $o_\R$ and $i_\R$, where we define $o_\R$ to be $\set{v_{\ell}}$ and $i_\R$ to be the complement $\bigcap_{u \in U} \neigh{u} \setminus o_\R$. By the lemma's assumptions, none of these four subsets has cardinality congruent modulo $p$ to $0$. It remains to show that, for the pair of vertices $v$ and $v'$ as given above, the number of homomorphisms $\numHomBip[\GadgetEdge, (H, v, v')][p]$ is equal to $0$ if and only if $(v,v')$ is in $i_\L \times i_\R$.
	
	Let $W$ be a walk that contributes to $\numWalks{v}{v'}[\ell+1][P]$. First, for any $v$ in $\neigh{v_0}$ and $v'$ equal to $v_\ell$, we have by \eqref{eq:numwalks_neighbour_to_2-neighbour} \begin{align*}%\label{eq:nice_graphs_hardnes_path_hard_3}
		\numWalks{v}{v'}[\ell+1][P] &= \sum_{v'_\ell \in \neigh{v_\ell} \cap P} \numWalks{v}{v'_\ell}[\ell][P] + \sum_{v'_\ell \in \neigh{v_\ell} \setminus P} \numWalks{v}{v'_\ell}[\ell][P].\\
		\intertext{There are $b_{\ell-1}$ choices for a vertex $v_{\ell -1}$ in $\neigh{v_\ell}[P]$ because the latter is equal to $v_{\ell-1}^{b_{\ell-1}}$, and there are $\deg(v_{\ell}) - b_{\ell-1}$ choices for a vertex $\hat{v}_{\ell-1}$ in the complement $\neigh{v_\ell} \setminus  \neigh{v_\ell}[P]$. This gives}
		&= b_{\ell -1} \cdot \numWalks{v}{v_{\ell -1}}[\ell][P] + (\deg(v_{\ell}) - b_{\ell-1}) \cdot \numWalks{v}{\hat{v}_{\ell-1}}[\ell][P] \equiv b_{\ell -1} \cdot \numWalks{v}{v_{\ell -1}}[\ell][P] \pmod p,
	\end{align*}	
	where the equivalence follows from the degree of $v_{\ell}$. For $v$ not in $v_1^{b_1}$, by Lemma~\ref{lem:path_gadget_follows_path} and $P$ being a $p$-hardness path the number of walks $\numWalks{v}{v_{\ell -1}}[\ell][P]$ is not congruent modulo $p$ to $0$. By Lemma~\ref{lem:generalized_hardnes_path_gadget_1}, the same holds for $v$ in $v_1^{b_1}$.	
	Second, for $v$ in $\neigh{v_0}$ and $v'$ in $\bigcap_{u \in U} \neigh{u} \setminus \set{v_\ell}$, there are $\abs{U}$ choices for a vertex $u$ in $U$, and we obtain by \eqref{eq:numwalks_neighbour_to_2-neighbour}
	\begin{align*}%\label{eq:nice_graphs_hardnes_path_hard_3}
		\numWalks{v}{v'}[\ell+1][P] = \abs{U} \cdot \numWalks{v}{u}[\ell][P].
	\end{align*}
	By Lemma~\ref{lem:path_gadget_follows_path}, this is $0$ if $v$ is not in $v_1^{b_1}$, otherwise this is not congruent modulo $p$ to $0$.
\end{proof}

In the same spirit, for a $p$-hardness path $P$ of length $\ell$ with endvertices $v_0$ and $v_\ell$, we study the case that both $\deg(v_0)$ and $\deg(v_\ell)$ are congruent modulo $p$ to their respective counts $b_1$ and $b_{\ell-1}$ as given by case $(i)$ of Lemma~\ref{lem:maximal_generalized_hardness_path}. If $P$ is symmetric, then we show hardness by exploiting walks of length $\ell+2$.

\begin{lemma}\label{lem:nice_graphs_hardnes_path_hard_2}
	Let $p$ be a prime and $H$ be a connected order~$p$ bip-reduced bip-graph that is \graphclass{}. Further, let $H$ contain a $p$-hardness path $P$, where $P=(v_0^{b_0}, v_1^{b_1}, \dots, v_{\ell}^{b_{\ell}})$, such that  $\deg(v_0) \equiv b_1 \pmod p$ and $\deg(v_\ell) \equiv b_{\ell-1} \pmod p$. If there exists a pair of sets of vertices $U$ in $\neigh{v_0} \setminus \neigh{v_0}[P]$ and $U'$ in $\neigh{v_\ell} \setminus \neigh{v_\ell}[P]$ such that
	\begin{itemize}
		\item the cardinalities satisfy $\abs{U} \not \equiv 0 \pmod p$ and $\abs{U'} \not \equiv 0 \pmod p$;
		\item the common neighbourhoods satisfy $\abs{\bigcap_{u \in U} \neigh{u}} \not \equiv 1 \pmod p$ and $\abs{\bigcap_{u \in U'} \neigh{u}} \not \equiv 1 \pmod p$,
	\end{itemize}
	then $H$ admits a $p$-hardness gadget.
\end{lemma}
\begin{proof} 
	Let $\GadgetPart{\L}$ and $\GadgetPart{\R}$ be the partially $H$-labelled bip-graphs given by Lemma~\ref{lem:complete_core_gadget} such that $\GadgetPart{\L}$ selects $\selectSet[\L]$, where $\selectSet[\L] = \bigcap_{u \in U} \neigh{u}$, and $\GadgetPart{\R}$ selects $\selectSet[\R]$, where $\selectSet[\R] = \bigcap_{u \in U'} \neigh{u}$.	
	For the partially $H$-labelled bip-graph $\GadgetEdge$ we use walks of length $\ell+2$ as follows. Let $\GadgetEdge$ be the bip-graph given by Corollary~\ref{cor:generalized_hardness_path_gadget} such that, for every pair of vertices $v$ and $v'$, where $v$ is in $\twoneigh{v_0}$ and $v'$ is in $\twoneigh{v_\ell}$, the number of homomorphisms $\numHomBip[\GadgetEdge, (H, v, v')][p]$ is equal to the number of walks $\numWalks{v}{v'}[\ell+2][P]$.	
	We decompose the set $\selectSet[\L]$ into the pair of disjunct subsets $o_\L$ and $i_\L$, where we define $o_\L$ to be $\set{v_0}$ and $i_\L$ to be the complement $\bigcap_{u \in U} \neigh{u} \setminus o_\L$.  Similarly, we decompose $\selectSet[\R]$ into the pair of disjunct subsets $o_\R$ and $i_\R$, where we define $o_\R$ to be $\set{v_{\ell}}$ and $i_\R$ to be the complement $\bigcap_{u \in U'} \neigh{u} \setminus o_\R$. By the lemma's assumptions, none of these four subsets has cardinality congruent modulo $p$ to $0$. It remains to show that, for the pair of vertices $v$ and $v'$ as given above, the number of homomorphisms $\numHomBip[\GadgetEdge, (H, v, v')][p]$ is equal to $0$ if and only if $(v,v')$ is in $i_\L \times i_\R$.
	
	Let $W$ be a walk that contributes to $\numWalks{v}{v'}[\ell+2][P]$, where $W=(v, v'_0, \dots, v'_{\ell}, v')$.
	First, for $v$ in $\bigcap_{u \in U} \neigh{u} \setminus \set{v_0}$ and $v'$ equal to $v_\ell$, we observe that $v'_0$ has to be in $U$. We obtain by \eqref{eq:numwalks_2-neighbour_to_2-neighbour}, for any $u$ in $U$,
	\begin{align*}%\label{eq:nice_graphs_hardnes_path_hard_3}
		\numWalks{v}{v'}[\ell+2][P] &= \abs{U} \cdot \parenthesis[\Big]{ \sum_{v'_\ell \in \neigh{v_\ell}[P]} \numWalks{u}{v'_\ell}[\ell][P] + \sum_{v'_\ell \not\in \neigh{v_\ell}[P]} \numWalks{u}{v'_\ell}[\ell][P] }\\
		\intertext{There are $b_{\ell-1}$ options to choose a vertex $v'_{\ell}$ in $v_{\ell-1}^{b_{\ell-1}}$, which is equal to $\neigh{v_\ell}[P]$. By Lemma~\ref{lem:path_gadget_follows_path}, the second summand is equal to $0$ since $u$ is not in $v_1^{b_1}$. The same reasoning gives also the following second equality}
		&= \abs{U} \cdot b_{\ell -1} \cdot \numWalks{u}{v_{\ell -1}}[\ell][P] = \abs{U} \cdot \prod_{i \in \sqBrackets{0;\ell -1}} b_{i}.
	\end{align*}	
	By $P$ being a $p$-hardness path, it follows that $\numWalks{v}{v_{\ell }}[\ell+2][P]$ is not congruent modulo $p$ to $0$ for every choice of $v$ in $i_\L$.
	The analogue argument shows that, for $v$ equal to $v_0$ and $v'$ in $\bigcap_{u \in U} \neigh{u} \setminus \set{v_\ell}$, the number of walks $\numWalks{v}{v'}[\ell+1][P]$ is not congruent modulo $p$ to $0$.
	
	Second, for $v$ equal to $v_0$ and $v'$ equal to $v_\ell$, we employ again \eqref{eq:numwalks_2-neighbour_to_2-neighbour}. For this, we distinguish whether $v_0'$ is in $v_1^{b_1}$ and $v_\ell'$ is in $v_{\ell-1}^{b_{\ell-1}}$ or not. This gives, for the representative $v_1$ of $v_1^{b_1}$, any vertex $\hat{v}_1$ in the complement $\neigh{v_0}\setminus \neigh{v_0}[P]$, the representative $v_{\ell-1}$ of $v_{\ell-1}^{b_{\ell-1}}$, and any vertex $\hat{v}_{\ell-1}$ in the complement $\neigh{v_\ell}\setminus \neigh{v_\ell}[P]$,
	\begin{align*}%\label{eq:nice_graphs_hardnes_path_hard_3}
		\numWalks{v}{v'}[\ell+2][P] &=
		b_1 \cdot \parenthesis[\big]{ (\deg(v_{\ell}) - b_{\ell-1}) \cdot \numWalks{v_1}{\hat{v}_{\ell-1}}[\ell][P] + b_{\ell-1} \cdot \numWalks{v_1}{v_{\ell -1}}[\ell][P] } \\
		&\phantom{=}+ \parenthesis[\big]{\deg(v_{0}) - b_{1}} \cdot \parenthesis[\big]{ (\deg(v_{\ell}) - b_{\ell-1}) \cdot \numWalks{\hat{v}_1}{\hat{v}_{\ell -1}}[\ell][P] + b_{\ell-1} \cdot \numWalks{\hat{v}_1}{v_{\ell -1}}[\ell][P] }.\\
		\shortintertext{By the degree of $v_0$ and $v_\ell$,}
		\numWalks{v}{v'}[\ell+2][P] &\equiv b_1 \cdot b_{\ell-1} \cdot \numWalks{v_1}{v_{\ell -1}}[\ell][P] \pmod p.
	\end{align*}
	Lemma~\ref{lem:generalized_hardnes_path_gadget_1} yields now that $\numWalks{v}{v'}[\ell+1][P]$ is not congruent modulo $p$ to $0$.
	
	Finally, for $v$ in $\bigcap_{u \in U} \neigh{u} \setminus \set{v_0}$ and $v'$ in $\bigcap_{u' \in U'} \neigh{u'} \setminus \set{v_\ell}$, we observe that $W$ has to visit a vertex $u$ in $U$ and a vertex $u'$ in $U'$. By Lemma~\ref{lem:path_gadget_follows_path}, there are no such walks.
\end{proof}

With Lemma~\ref{lem:nice_graphs_hardnes_path_hard_1}, Lemma~\ref{lem:nice_graphs_hardnes_path_hard_3}, and Lemma~\ref{lem:nice_graphs_hardnes_path_hard_2} at hand, we covered all cases for a maximal $p$-hardness path according to Lemma~\ref{lem:maximal_generalized_hardness_path}. This concludes our study of $p$-hardness paths. 

\subsubsection*{Hardness for \Nice{} Large Bip-Graphs}
\begin{theorem}\label{thm:nice_graphs_hard}
	Let $p$ be a prime and $H$ be a connected order~$p$ bip-reduced bip-graph that is \graphclass{}. If $H$ is \nice{}, then $H$ admits a $p$-hardness gadget.
\end{theorem}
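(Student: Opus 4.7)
The plan is to do a case analysis on the global structure of $H$, reducing always to a previously established hardness result in this subsection.

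If $H$ has radius at most~$2$ the result follows from Theorem~\ref{thm:small graphs_weird_hard}, and if $H$ contains a cycle of length at least~$6$ it follows from Corollary~\ref{cor:nice_graphs_cycles}. In the remaining situation $H$ has radius at least~$3$ and no cycle of length $\geq 6$, so Lemma~\ref{lem:dead_end2} yields a suitable dead end $v\in V(H)$. Applying Lemma~\ref{lem:nice_graphs_no_cycles_path_start} when $|U_v|=1$ or Lemma~\ref{lem:dead_end_suitable_larger_start} when $|U_v|>1$, we either obtain a $p$-hardness gadget directly, or we extract a complete core $K_{a,b}$ with $a \equiv 1 \pmod p$ and $b \not\equiv 0 \pmod p$, which by definition constitutes an initial generalized hardness path of length~$2$ rooted at the dead end or at $v_0$.

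I will extend this path to a generalized hardness path $P=(v_0, v_1^{b_1}, \dots, v_{\ell-1}^{b_{\ell-1}}, v_\ell)$ of maximal length and apply Lemma~\ref{lem:maximal_generalized_hardness_path} at $v_\ell$. In case~(iii), a cycle of length $\geq 6$ arises inside $P$ and is handled by Corollary~\ref{cor:gen_path_cycle_hard}. In cases~(i) and~(ii) we obtain precise degree information at $v_\ell$, together with (in case~(i)) a neighbourhood set $U' \subseteq \neigh{v_\ell}$ satisfying the intersection condition required by Lemma~\ref{lem:nice_graphs_hardnes_path_hard_3} or Lemma~\ref{lem:nice_graphs_hardnes_path_hard_2}. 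Combining this with the structure at $v_0$ inherited from the dead-end analysis, I will then invoke Lemma~\ref{lem:nice_graphs_hardnes_path_hard_1}, Lemma~\ref{lem:nice_graphs_hardnes_path_hard_3}, or Lemma~\ref{lem:nice_graphs_hardnes_path_hard_2} according to which of~(i) or~(ii) holds at each of the two endpoints.

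The hard part will be the analysis at $v_0$: the reverse of a generalized hardness path is not in general a generalized hardness path, since the condition $a_i \equiv 1 \pmod p$ is asymmetric between the two sides of a complete core, so Lemma~\ref{lem:maximal_generalized_hardness_path} does not apply directly at $v_0$. Instead, the structural data produced by Lemma~\ref{lem:nice_graphs_no_cycles_path_start} or Lemma~\ref{lem:dead_end_suitable_larger_start}---for instance the constraints $\deg(v_0) \equiv \beta_1 \pmod p$ or $\deg(v_0)\not\equiv b_0 \pmod p$---must be translated into the degree and neighbourhood-set hypotheses demanded by the three hardness lemmas. Where the straightforward match fails, for example when $\deg(v_0)\equiv 0 \pmod p$ with no useful $U \subseteq \neigh{v_0}$, I plan to argue that the resulting local structure forces either a strictly longer generalized hardness path (contradicting maximality of $P$) or a hard vertex in $H$, in which case Corollary~\ref{cor:hard_vertex_hard} closes the proof.
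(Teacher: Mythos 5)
Your decomposition matches the paper's exactly: dispose of radius~$\leq 2$ via Theorem~\ref{thm:small graphs_weird_hard} and cycles of length~$\geq 6$ via Corollary~\ref{cor:nice_graphs_cycles}; obtain a suitable dead end via Lemma~\ref{lem:dead_end2}; extract an initial generalized hardness path via Lemma~\ref{lem:nice_graphs_no_cycles_path_start} or Lemma~\ref{lem:dead_end_suitable_larger_start}; extend to a maximal one; apply Lemma~\ref{lem:maximal_generalized_hardness_path}; and close with Lemmas~\ref{lem:nice_graphs_hardnes_path_hard_1}--\ref{lem:nice_graphs_hardnes_path_hard_3}. Where your plan remains a sketch is the treatment of $v_0$, and there are two issues to flag.

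First, your diagnosis of the reverse-path problem is slightly off. Reversing $(v_0,\dots,v_\ell)$ still places the core condition at an even-indexed vertex (now $v_{i+1}$ instead of $v_{i-1}$), so both directions look at the $\L$ side of a complete core; the roles of $a_i$ and $b_i$ are not swapped. The real question is whether $K^{v_{i+1},v_i}$ coincides with $K^{v_{i-1},v_i}$, and Observation~\ref{obs:complete_core_same} gives this because $v_{i-1}$, $v_{i+1}$ and the $b_i$ copies of $v_i$ lie in the same block of $\twoneigh{v_{i-1}}$. So the paper's use of the reverse path is justified and your planned detour is unnecessary work.

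Second, and more substantively, your fallback ("forces a strictly longer path or a hard vertex") does not supply the concrete quantitative fact the paper needs at $v_0$: the strict inequality $\deg(v_0) > \beta_1 = b_1$. In the $|U_{v^\ast}|=1$ branch this comes from $\alpha_1 = p+1 > p$, Lemma~\ref{obs:bound_tuple_same_neighbourhood}, and the uniqueness of $v^\ast_0$: two vertices of $\L(U^{v^\ast,v^\ast_0})$ must have neighbours outside, and they can only be $v^\ast$ and $v^\ast_0$, forcing $\deg(v^\ast) > \beta_1$. Without this, when $\deg(v_0) \equiv b_1 \pmod p$ the set $U \subset \neigh{v_0}$ with $|\bigcap_{u\in U}\neigh{u}| \not\equiv 1 \pmod p$ demanded by Lemmas~\ref{lem:nice_graphs_hardnes_path_hard_2}/\ref{lem:nice_graphs_hardnes_path_hard_3} is not guaranteed, and neither your "longer path" nor "hard vertex" alternative obviously fills the gap. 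You should also note that in the $|U_v|>1$ branch, Lemma~\ref{lem:dead_end_suitable_larger_start} hands you $\deg(v_0) \not\equiv b_1 \pmod p$ directly, so only your case~(a) ever arises there and the $v_0$-side argument collapses.
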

\begin{proof}
	We assume toward contradiction that $H$ admits no $p$-hardness gadget. By Lemma~\ref{lem:existence_gen_hardness_path}, $H$ contains a $p$-hardness path. In particular, due to the finiteness of $H$ we obtain a maximal $p$-hardness path $P$, where $P=(v_0^{b_0}, v_1^{b_1}, \dots, v_{\ell}^{b_\ell})$. By Corollary~\ref{cor:maximal_generalized_hardness_path}, we assume without loss of generality that $P$ is symmetric and both $P$ and its reverse $P^{-1}$ satisfy case $(i)$ or case $(ii)$ of Lemma~\ref{lem:maximal_generalized_hardness_path}.
	If both paths satisfy case $(ii)$, then by Lemma~\ref{lem:nice_graphs_hardnes_path_hard_1} we arrive at a contradiction. If $P$ satisfies case $(i)$, then we obtain a set $U$ of vertices in $\neigh{v_\ell}\setminus \neigh{v_\ell}[P]$ such that the cardinality $\abs{\bigcap_{u \in U} \neigh{u}}$ is not congruent modulo $p$ to $1$. In particular, this cardinality has to be at least $2$ due to the existence of $v_\ell$. By $H$ being \nice{}, we deduce that the cardinality of $U$ is not congruent modulo $p$ to $0$. The analogue holds if the reverse $P^{-1}$ satisfies $(i)$. Lemma~\ref{lem:nice_graphs_hardnes_path_hard_2} and Lemma~\ref{lem:nice_graphs_hardnes_path_hard_3} yield that $H$ admits a $p$-hardness gadget in any of the remaining three cases, a contradiction.
\end{proof}

\subsubsection{Bip-Graphs That are not \Nice{}}
\label{subsec:large_not_nice}
With Theorem~\ref{thm:nice_graphs_hard} at hand we restrict our attention to  bip-graphs $H$ that are not \nice{}, which we know to contain a pair of adjacent vertices $v$ and $u$ such that $K^{v,u}$ is \isomorphic[bip] to $K_{a,k\cdot p}$, where $a$ and $k$ are positive integers. From the definition of complete core and $p$ being at least $2$, it follows that $K^{v,u}$ is $2$-connected and $a$ is at least $2$. We note that the results concerning $p$-hardness paths and $p$-hard walks remain true. These results will also play a central role in the following analysis.

\subsubsection*{\texorpdfstring{\boldmath{$p$}-Mosaic Paths}{p-Mosaic Paths}}\label{sec:mosaic}
We are naturally tasked with studying chains of consecutive components, whose complete cores contain in at least one part a multiple of $p$ vertices. This motivates the following definition.

\begin{definition}
	Let $p$ be a prime and $H$ be a bip-graph that contains a path $Q$, where $Q=(w_0, \dots, w_\ell)$ and $\ell$ is at least $1$. We call $Q$ a \emph{$p$-mosaic path} if the following conditions are satisfied:
	\begin{enumerate}
		\item for all indices $i \in \sqBrackets{\ell}$, the complete core $K^{w_{i-1}, w_{i}}$ is \isomorphic[bip] to $K_{a_i, k_i \cdot p}$, where $a_i$ and $k_i$ are positive integers;
		\item for all indices $i \in \sqBrackets{\ell-1}$, the common neighbourhood $\neigh{w_{i-1}} \cap \neigh{w_{i+1}}$ is equal to  $\set{w_i}$.
	\end{enumerate}
	$Q$ is called \emph{partially hard} if $\deg(w_0)$ is not congruent modulo $p$ to $0$ and, for all indices $i \in \sqBrackets{\ell}$, the component $U^{w_{i-1}, w_{i}}$ does not contain a hard vertex.
\end{definition}

For a bip-graph $H$ and any vertex $v$ of $H$, by definition every pair of distinct components $U$ and $U'$ in the split of $\twoneigh{v}$ at $v$ intersects only in $v$. Hence, the second condition is satisfied by any path that switches components at every vertex. An example of a partially hard $p$-mosaic path is illustrated in the left part of Figure~\ref{fig:p-mosaic_path}.
\begin{figure}[t]
	\begin{center}
		\includegraphics[]{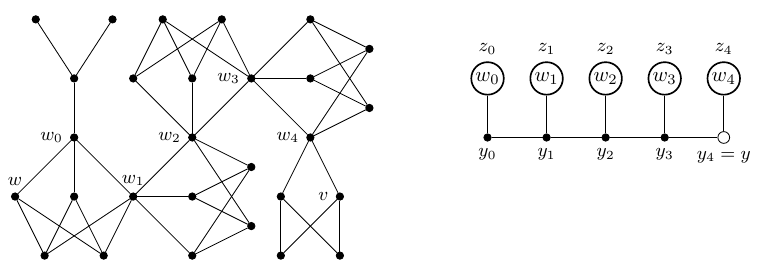}
	\end{center}
	\caption{The left picture depicts an example of a partially hard $p$-mosaic path for $p$ equal to $3$. The right picture depicts the gadget $(J,y)$ used in the proof of Lemma~\ref{lem:p-mosaic_gadget}.}
	\label{fig:p-mosaic_path}
\end{figure}

Similar to the criteria for a $p$-hardness path, a $p$-mosaic path yields hardness depending on the endvertices. By the restricted structure of a $p$-mosaic path, we construct gadgets that cancel out all vertices in the previous component visited except for the vertex previously visited by the path.
\begin{lemma}\label{lem:p-mosaic_gadget}
	Let $p$ be a prime and $H$ be a connected order~$p$ bip-reduced bip-graph. If $H$ contains a partially hard $p$-mosaic path $Q$, where $Q=(w_0, \dots, w_\ell)$, then, for every index $i \in \sqBrackets{\ell}$, there exists a partially $H$-labelled bip-graph $(J,y)$ that selects the set $\selectSet[\L]$ with $\selectSet[\L] = (\neigh{w_i} \setminus K^{w_{i-1}, w_i}) \cupdot \set{w_{i-1}}$.
\end{lemma}
\begin{proof}
	If an internal vertex $w_i$, where $i \in \sqBrackets{\ell-1}$, has $\deg(w_i)$ not congruent modulo $p$ to $0$, then the subpath $(w_i, \dots, w_\ell)$ of $Q$ starting at $w_i$ is also a partially hard $p$-mosaic path. The subpath $(w_i, \dots, w_\ell)$ does not contain another partially hard $p$-mosaic path if, for every index $j \in \sqBrackets{i+1; \ell}$, the vertex $w_j$ has $\deg(w_j)$ congruent modulo $p$ to $0$.
	Hence, the lemma follows from separating $Q$ into partially hard $p$-mosaic subpaths and proving the statement for these individually.
	
	For every internal vertex $w_i$ with $i \in \sqBrackets{\ell-1}$, we assume that $\deg(w_i)$ is congruent modulo $p$ to $0$. We note that this yields by Lemma~\ref{lem:condition_new_hard_vertex} and the absence of hard vertices, for all $i \in \sqBrackets{\ell-1}$, that the cardinality $\abs{\partof[K^{w_{i-1}, w_{i}}](w_{i-1})}$ is congruent modulo $p$ to $0$.
	
	We show the lemma by induction on the length $l$ of a subpath of $Q$ starting at $w_0$, where $l \in \sqBrackets{0;\ell}$, and initialize with the case that $l$ is $1$. For the path $(w_0, w_1)$, let $(J, y_\L, y_\R)$ be the partially $H$-labelled bip-graph obtained by Lemma~\ref{lem:path_gadget} that selects the set of vertices $\selectSet[1]$, where $\selectSet[1] = \set{v'_1 \in \neigh{w_1} \given \numWalks{w_0}{v'_1}[2][Q] \neq 0}$. By the degree of $w_0$, we have that $\numWalks{w_0}{w_0}[2][Q]$ is not congruent modulo $p$ to $0$. If $v'_1$ is of distance $2$ from $w_0$ and not in the complete core $K^{w_0, w_1}$, then there is only one common neighbour of $v'_1$ and $w_0$, that is $w_1$. It follows that $\numWalks{w_0}{v'_1}[2][Q]$ is $1$. Lastly, for $v'_1$ is of distance $2$ from $w_0$ and in the complete core $K^{w_0, w_1}$, then there are $\abs{\partof[K^{w_0, w_1}](w_1)}$ common neighbours of $w_0$ and $v'_1$. By definition, $\abs{\partof[K^{w_0, w_1}](w_1)}$ is a multiple of $p$, and it follows that $\numWalks{w_0}{v'_1}[2][Q]$ is congruent modulo $p$ to $0$. This concludes the case that $l$ is $1$.
	
	For the induction step with length $l \in \sqBrackets{\ell}$, let $Q_{l}$ be the subpath $(w_0, \dots,w_{l-1}, w_l)$ of $Q$. For the subpath $Q_{l-1}$, where $Q_{l-1}=(w_0, \dots,w_{l-1})$, let $(J', y')$ be a partially $H$-labelled bip-graph that selects the set $\selectSet[l-1]$, where $\selectSet[l-1]= (\neigh{w_{l-1}} \setminus K^{w_{l-2},w_{l-1}}) \cupdot \set{w_{l-2}}$. We extend $(J', y')$ similar to the proof of Lemma~\ref{lem:path_gadget}. For the path $Q'$ with $Q'=(w_{l-1}, w_l)$, let $\GadgetEdge$ be the partially $H$-labelled bip-graph obtained by Lemma~\ref{lem:path_gadget} such that, for pairs of vertices $(v'_{l-1}, v'_l)$ in $\neigh{w_{l-1}} \times \neigh{w_l}$, the number of homomorphisms $\numHomBip[\GadgetEdge, (H, w'_{l-1}, w'_l)]$ is equal to $\numWalks{v'_{l-1}}{v'_l}[1][Q']$, i.e. the number of homomorphisms is given by adjacent pairs in $\neigh{w_{l-1}} \times \neigh{w_l}$. By Lemma~\ref{lem:path_gadget}, there is no homomorphism in $\numHomBip[\GadgetEdge, (H, w'_{l-1}, w'_l)]$ if the pair $(v'_{l-1}, v'_l)$ is not in $\neigh{w_{l-1}} \times \neigh{w_l}$. We construct the partially $H$-labelled bip-graph $(J, y)$ by the dot product $(J', y') \odot (J_E, y_\L)$, which we define to be $(J, y')$, and distinguishing the vertex $y_\R$, which we define to be $y$. For an illustration, we refer to the right side of Figure~\ref{fig:p-mosaic_path}. It remains to show that the set $\selectSet[l]$ selected by $(J, y)$ satisfies that $\selectSet[l]$ is equal to $(\neigh{w_l} \setminus K^{w_l,w_{l-1}}) \cupdot \set{w_{l-1}}$.
	
	By the dot product used for the construction of $J$ and by Corollary~\ref{cor:dot_product_bip}, every homomorphism in $ \HomBip[J, H]$ maps $y_\L$ to $\selectSet[l-1]$ and $y$ to $\neigh{w_l}$. Additionally, by the properties of $(J', y')$, for every $w'_l$ in $\neigh{w_l}$, the number of homomorphisms $\numHomBip[(J, y), (H, w'_l)]$ is given by the number of vertices $w'_{l-1}$ in the set $\selectSet[l-1]$ selected by $(J', y')$ such that $w'_{l-1}$ is adjacent to $w'_l$. If $w'_l$ is $w_{l-1}$, then every vertex in $\selectSet[l-1]$ is adjacent to $w'_l$. The cardinality $\abs{\selectSet[l-1]}$ is equal to $\deg(w_{l-1}) - \abs{\partof[K^{w_{l-2},w_{l-1}}](w_{l-2})} + 1$. By the assumption on $\deg(w_{l-1})$ and $\abs{\partof[K^{w_{l-2},w_{l-1}}](w_{l-2})}$, it follows that $\abs{\selectSet[\ell-1]}$ is congruent modulo $p$ to $1$. If $w'_l$ is not in the complete core $K^{w_{l},w_{l-1}}$, then by the same argumentation as employed for the case that $l$ is $1$ we obtain that the only neighbour in $\selectSet[l-1]$ is $w_{l}$. Lastly, if $w'_l$ is in the complete core $K^{w_{l-1}, w_l} \setminus \set{w_{l-1}}$, then by the same argumentation as employed for the case that $l$ is $1$ the set of neighbours of $w'_l$ in $\selectSet[l-1]$ is equal to  $\partof[K^{w_{l-1}, w_l}](w_l)$. The latter is of cardinality congruent modulo $p$ to $0$, which concludes the proof.
\end{proof}

The following lemma allows us to derive the existence of a partially hard $p$-mosaic path.
\begin{lemma}\label{lem:construct_hard_path}
	Let $p$ be a prime and $H$ be a connected order~$p$ bip-reduced bip-graph that is \graphclass{}. Further, let $H$ contain a pair of adjacent vertices $v$ and $u$ such that the complete core $K^{v,u}$ is \isomorphic[bip] to $K_{a,k\cdot p}$, where $a$ and $k$ are positive integers. If $H$ admits no $p$-hardness gadget, then
	there exists a vertex $w$ in $\partof[K^{v,u}](u)$ with $\deg (w) > a$ such that one of the following cases holds:
	\begin{itemize}
		\item[(a)] $\deg(w) \equiv a \pmod p$ and there exists a maximal $p$-hardness path $P$ starting at $w$;
		\item[(b)] $\deg(w) \equiv a \pmod p$ and $\neigh{w}$ contains a vertex $v'$ that is not in the complete core $K^{v,u}$ such that the complete core $K^{w,v'}$ is \isomorphic[bip] to $K_{a',k'\cdot p}$, where $a'$ and $k'$ are positive integers;
		\item[(c)] $a \equiv 0 \pmod p$ and $\deg(w) \not \equiv 0 \pmod p$.
	\end{itemize}
\end{lemma}
\begin{proof}
	By Corollary~\ref{cor:hard_vertex_hard}, $H$ does not contain a hard vertex. First, we are going to show the existence of a suitable vertex $w$ in $\partof[K^{v,u}](u)$, and then we argue about the cases. Since the cardinality $\abs{\rpart[K^{v,u}]}$ is congruent modulo $p$ to $0$, it follows from Observation~\ref{obs:nb_outside_core} that there exists a vertex $w$ in $\partof[K^{v,u}](u)$ with $\deg (w)$ larger than $a$. We note that by Observation~\ref{obs:complete_core_same} the complete cores $K^{v,u}$ and $K^{v,w}$ are equal.
	
	If $a$ is not congruent modulo $p$ to $0$, then it follows by the absence of hard vertices in $H$ and Lemma~\ref{lem:condition_new_hard_vertex} that $\deg(w)$ is congruent modulo $p$ to $a$. By Lemma~\ref{lem:bound_tuple_same_neighbourhood}, there exists a vertex $v'$ in $\neigh{w} \setminus K^{v,w}$ such that $v'$ is not a leaf. Let the complete core $K^{w,v'}$ be \isomorphic[bip] to $K_{a',b'}$ for which we know that $a'$ is larger than $1$. If $b'$ is congruent modulo $p$ to $0$, then case (b) is met. If $b'$ is not congruent modulo $p$ to $0$ and $a'$ is congruent modulo $p$ to $1$, then case (a) is met. Else, we observe that because $K^{v,w}$ is $2$-connected we have by Observation~\ref{obs:complete_core_same} that $K^{v,w}$ is isomorphic to $K^{w,v}$ by swapping the bipartition, i.e. $K^{w,v}$ is \isomorphic[bip] to $K_{k\cdot p, a}$. Therefore, $w$ together with $v$ and $v'$ satisfy the prerequisites of Lemma~\ref{lem:gadget_adjacent_components}, which yields a $p$-hardness gadget, a contradiction.
	
	We assume now that $a$ is congruent modulo $p$ to $0$. If $\deg(w)$ is not congruent modulo $p$ to $0$, then we are in case (c). Otherwise, by the analogue argument as employed previously we deduce the existence of a neighbour $v'$ of $w$ such that $v'$ is not in the complete core $K^{w,v}$. Further, $v'$ is not a leaf and the complete core $K^{w,v'}$ is \isomorphic[bip] to $K_{a',b'}$, for which we know that $a'$ is larger than $1$. Now, by the absence of hard vertices and  Lemma~\ref{lem:condition_new_hard_vertex} it follows that $a'$ is congruent modulo $p$ to $1$ or $b'$ is congruent modulo $p$ to $0$.
\end{proof}

Towards the existence of a partially hard $p$-mosaic path we study a pair of adjacent vertices $v$ and $u$ as in Lemma~\ref{lem:construct_hard_path} such that there exists a maximal $p$-hardness path $P$ starting at $u$, where $P=(v_0^{b_1}, v_1^{b_1}, \dots, v_\ell^{b_\ell})$ and $u$ is equal to $v_0$. If $\deg(u) \equiv 0 \pmod p$, then it follows by the definition of a $p$-hardness path that $\deg(u) \not \equiv b_1 \pmod p$. We apply Lemma~\ref{lem:maximal_generalized_hardness_path} on $P$. If $\deg(v_\ell) \not \equiv b_{\ell-1} \pmod p$, then we obtain by Lemma~\ref{lem:nice_graphs_hardnes_path_hard_1} a $p$-hardness gadget. Otherwise, $\deg(v_\ell) \equiv b_{\ell-1} \pmod p$ and there exists a set $U'$ of neighbours $u'$ in $\neigh{v_\ell}$ such that the cardinality $\abs{\bigcap_{u' \in U'} \neigh{u'}}$ is not congruent modulo $p$ to $1$. We note that, for any $u'$ in $U'$, the complete core $K^{v_\ell, u'}$ is \isomorphic[bip] to $K_{a', b'}$, where $a'$ is $\abs{\bigcap_{u' \in U'} \neigh{u'}}$ and $b'$ is $\abs{U'}$. If $b' \not \equiv 0 \pmod p$, then we obtain by Lemma~\ref{lem:nice_graphs_hardnes_path_hard_3} a $p$-hardness gadget. If $b' \equiv 0 \pmod p$, then we obtain by $\deg(v_\ell) \not \equiv 0 \pmod p$ a partially hard $p$-mosaic path starting at $v_\ell$.

Similarly, if $\deg(u) \not \equiv 0 \pmod p$, then we obtain
$\abs{\lpart[K^{v,u}]} \not \equiv 0 \pmod p$ and the set of vertices $U$ given by $U=\lpart[K^{v,u}]$ satisfies $\abs{U} \not \equiv 0 \pmod p$. Further, the set of common neighbours $\bigcap_{u \in U} \neigh{u}$ is equal to $\partof[K^{v,u}](u)$. Hence, $\abs{\bigcap_{u \in U} \neigh{u}} \equiv 0 \pmod p$. Analogue to the deduction for $\deg(u) \equiv 0 \pmod p$, we obtain by Lemma~\ref{lem:nice_graphs_hardnes_path_hard_2} either a $p$-hardness gadget, or a partially hard $p$-mosaic path starting at $v_\ell$, or $\deg(v_\ell) \not \equiv b_{\ell-1} \pmod p$. In the latter case, $P$ has to be symmetric since otherwise, following the same argument as employed for Corollary~\ref{cor:maximal_generalized_hardness_path}, the reverse path $P^{-1}$ contains a $p$-hardness path starting at $v_{\ell}$ that satisfies the prerequisites of Lemma~\ref{lem:nice_graphs_hardnes_path_hard_1}. However, if $P$ is symmetric, then we obtain by $P^{-1}$ and Lemma~\ref{lem:nice_graphs_hardnes_path_hard_3} a $p$-hardness gadget. 

We conclude that the $p$-hardness path $P$ either yields a $p$-hardness gadget or a partially hard $p$-mosaic path starting at $v_\ell$.
\begin{corollary}\label{cor:construct_partially_hard_mosaic}
	Let $p$ be a prime and $H$ be a connected order~$p$ bip-reduced bip-graph that is \graphclass{} graph. Let $H$ admit no $p$-hardness gadget and contain a pair of adjacent vertices $v$ and $u$ such that the complete core $K^{v,u}$ is \isomorphic[bip] to $K_{a, k\cdot p}$, where $a$ and $k$ are positive integers. If there exists a $p$-hardness path starting at $u$ with endvertex $v_\ell$, then there exists a partially hard $p$-mosaic path starting at $v_\ell$ in $H$.
\end{corollary}

Now, we show the existence of a partially hard $p$-mosaic path provided the absence of a $p$-hardness gadget.
\begin{lemma}\label{lem:existence_partially_hard_p-mosaic}
	Let $p$ be a prime and $H$ be a connected order~$p$ bip-reduced bip-graph that is \graphclass{}. If $H$ admits no $p$-hardness gadget, then $H$ contains a partially hard $p$-mosaic path.
\end{lemma}
\begin{proof}
	By Corollary~\ref{cor:hard_vertex_hard}, $H$ contains no hard vertices, and by Theorem~\ref{thm:nice_graphs_hard}, $H$ is not \nice{}. Hence, there exists a pair of adjacent vertices $v$ and $u$ in $\vertexset[H]$ such that the complete core $K^{v,u}$ is \isomorphic[bip] to $K_{a, k\cdot p}$, where $a$ and $k$ are positive integers. We apply Lemma~\ref{lem:construct_hard_path} and claim that it suffices to argue on the case that $a$ is congruent modulo $p$ to $0$.
	
	Let $w$ be the vertex in $\partof[K^{v,u}](u)$ given by Lemma~\ref{lem:construct_hard_path}, we distinguish the cases according to said lemma. If case (c) holds, then the path $(w,v)$ is a partially hard $p$-mosaic path because by Observation~\ref{obs:complete_core_same} the complete core $K^{w,v}$ is \isomorphic[bip] to $K_{k\cdot p, a}$. If case (a) holds, then $H$ contains a maximal $p$-hardness path starting at $w$, which yields a partially hard $p$-mosaic path due to Corollary~\ref{cor:construct_partially_hard_mosaic}. If case (b) holds, then let $v'$ be the neighbour of $w$ given by the case. We have that $\deg(w)$ is congruent modulo $p$ to $a$ and the complete core $K^{w,v'}$ has cardinality $\abs{\partof[K^{w,v'}](v')}$ congruent modulo $p$ to $0$. The path $(w,v')$ is a partially hard $p$-mosaic path if $a$ is not congruent modulo $p$ to $0$.
	
	With the claim established, we assume without loss of generality that every complete core $K$ in $H$ with a multiple of $p$ vertices in one part has also a multiple of $p$ vertices in the other part. Further, as we have seen, in both parts of $K$ exists a vertex that is incident to another complete core $K'$ with a multiple of $p$ vertices in both parts. This incidence is given by an intersection at exactly one vertex. In this way, we are going to show the existence of a closed $p$-hard thick walk, a contradiction by Corollary~\ref{cor:walks_hard}.
	
	Let $w_0$ and $w_1$ be a fixed pair of adjacent vertices such that the complete core $K^{w_0, w_1}$ has a multiple of $p$ vertices in both parts. It follows that there exists a neighbour $w_2$ of $w_1$ such that the complete core $K^{w_1, w_2}$ has also a multiple of $p$ vertices in both parts. Iteratively repeating this construction, we obtain by the finiteness of $H$ a closed walk $W$. Without loss of generality, let $W$ be $W = (w_0, w_1, \dots, w_\ell, w_0)$. Since $W$ traverses distinct complete cores, $W$ has to contain more than $4$ different vertices. By the same reason, for every index $i \in \sqBrackets{0;\ell}$, the common neighbourhood $\neigh{w_i} \cap \neigh{w_{i+2}}$ is equal to $\set{w_{i+1}}$, where the indices are taken modulo $\ell+1$. Hence, $W$ is square-free. None of the vertices $w_i$ comes with a count $b_i$ larger than $1$. Therefore, $W$ is $p$-hard. An example was already given in Figure~\ref{fig:2-hard_walk} and further examples are given in Figure~\ref{fig:2-mosaic_cycle}. 
\end{proof}

\begin{figure}[t]
	\begin{center}
		\includegraphics[]{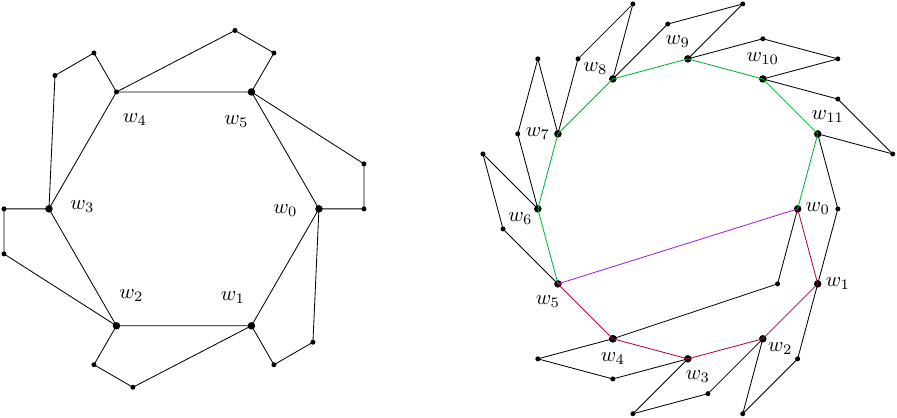}
	\end{center}
	\caption{For $p$ equal to $2$, illustration of closed $p$-mosaic paths, which yield $p$-hard walks. The closed walk given by the red path and the purple edge is not $p$-hard whereas the closed walk given by the green path and the purple edge is.}
	\label{fig:2-mosaic_cycle}
\end{figure}

With Lemma~\ref{lem:existence_partially_hard_p-mosaic} at hand, we have the existence of a partially hard $p$-mosaic path $Q$ in an order~$p$ bip-reduced bip-graph $H$ provided that it is \graphclass and does not admit a $p$-hardness gadget. Assuming $Q$ to be maximal we have by Lemma~\ref{lem:construct_hard_path} three cases: either both endvertices of $Q$ have degree not congruent modulo $p$ to $0$, or $Q$ yields a closed walk, or there exists a $p$-hardness path starting at the last vertex of $Q$. We initialize with the first case, in which we obtain a $p$-hardness gadget by Lemma~\ref{lem:p-mosaic_gadget}.

\begin{lemma}\label{lem:hard_p-mosaic}
	Let $p$ be a prime and $H$ be a connected order~$p$ bip-reduced bip-graph that is \graphclass{}. If $H$ contains a partially hard $p$-mosaic path $Q$, where $Q=(w_0, \dots, w_\ell)$, such that  $\abs{\partof[K^{w_{\ell-1}, w_\ell}](w_{\ell - 1})} \equiv 0 \pmod p$ but $\deg(w_\ell) \not \equiv 0 \pmod p$, then $H$ admits a $p$-hardness gadget.
\end{lemma}
\begin{proof}
	The lemma follows from the existence of a hard vertex by Lemma~\ref{lem:condition_new_hard_vertex} if $\ell$ is $1$. Otherwise, let $i \in \sqBrackets{2;\ell}$ be the minimal index such that the subpath $(w_0, \dots, w_i)$ satisfies the prerequisites of the lemma, that is modulo $p$ the cardinality $\abs{\partof[K^{w_{i-1}, w_i}](w_{i - 1})}$ is congruent to $0$ but $\deg(w_i)$ is not. It follows that the complete core $K^{w_{i-1}, w_i}$ has a multiple of $p$ vertices in both parts. We apply Lemma~\ref{lem:p-mosaic_gadget} to construct a \gadget{B,p} such that the induced sub-bip-graph $B$ of $H$ admits a $p$-hardness gadget.
	
	Let $\GadgetPart{\L}$ be a partially $H$ labelled bip-graph that selects the set $\selectSet[\L]$, where $\selectSet[\L] = (\neigh{w_{i-1}} \setminus K^{w_{i-2}, w_{i-1}}) \cupdot \set{w_{i-2}}$. By the assumption on the minimality of $i$, it follows that $\selectSet[\L]$ contains the vertices in $\partof[K^{w_{i-1}, w_i}](w_{i })$ and $\selectSet[\L]$ has cardinality not congruent modulo $p$ to $0$. Let $\GadgetPart{\R}$ be a partially $H$ labelled bip-graph given by Lemma~\ref{lem:complete_core_gadget} that selects the set $\selectSet[\R]$, where $\selectSet[\R] = \neigh{w_{i}}$.
	
	We conclude that the sub-bip-graph $B$ of $H$ induced by $\selectSet[\L]$ and $\selectSet[\R]$ consists of the complete core $K^{w_{i-1}, w_i}$ together with $\abs{\selectSet[\L]} - \abs{\partof[K^{w_{i-1}, w_i}](w_{i })}$ leaves adjacent to $w_{i-1}$ and $\deg(w_i) - \abs{\partof[K^{w_{i-1}, w_i}](w_{i-1 })}$ leaves adjacent to $w_i$. These numbers of leaves are not congruent modulo $p$ to $0$. It follows that the order~$p$ bip-reduced form $\bipreduced{B}$ is a bip-graph of radius at most $2$ that is not complete bipartite. By Theorem~\ref{thm:small graphs_weird_hard}, the lemma follows. %\GL{Alternative proof: Closed $p$-mosaic path gives ways to reach one complete core from both sides using the gadgetry of Lemma~\ref{lem:p-mosaic_gadget} this gives hardness. Careful though, because the reverse might not be a $p$-mosaic path.}
\end{proof}	

Second, we show that also the case of a closed partially hard $p$-mosaic path yields a $p$-hardness gadget. It follows then that only the third case remains.

\begin{lemma}\label{lem:mosaic_gives_path}
	Let $p$ be a prime and $H$ be a connected order~$p$ bip-reduced bip-graph that is \graphclass{}. Further, let $H$ contain a vertex $w_0$ such that there exists a partially hard $p$-mosaic path starting at $w_0$. If $H$ admits no $p$-hardness gadget, then there exists a maximal $p$-mosaic path $Q$ starting at $w_0$ and a $p$-hardness path $P$ starting at a vertex $w$ in $Q$ with $w$ not equal to $w_0$.
\end{lemma}
\begin{proof}
	We assume toward contradiction that, for every partially hard $p$-mosaic path $Q$ starting at $w_0$, there exists no $p$-hardness path starting at a vertex $w$ in $Q$ such that $w$ is not $w_0$. 
	Let $Q$ be a maximal $p$-mosaic path starting at $w_0$, where $Q=(w_0, \dots, w_\ell)$. Further, let $w$ be the vertex in $\partof[K^{w_{\ell-1}, w_\ell}(w_\ell)]$ given by Lemma~\ref{lem:construct_hard_path}; we assume without loss of generality that $w$ is $w_\ell$. We obtain three cases for $w_\ell$ by Lemma~\ref{lem:construct_hard_path}, of which the first cannot apply due to the absence of a $p$-hardness path starting at $w_\ell$ and the third cannot apply because this would give a $p$-hardness gadget by Lemma~\ref{lem:hard_p-mosaic}. Hence, the second case applies, and we obtain a vertex $v'$ in $\neigh{w_\ell}$ that is not in the complete core $K^{w_{\ell-1}, w_\ell}$ such that the complete core $K^{w_\ell, v'}$ has a multiple of $p$ vertices in $\partof[K^{w_{\ell}, v'}](v')$. By the maximality of $Q$, it follows that the walk $(w_0,\dots, w_\ell, v')$ contains a closed path, that is $v'$ is equal to a vertex $w_i$ in $Q$. We claim that this yields a closed $p$-hard walk, a contradiction by Corollary~\ref{cor:walks_hard}.
	
	Since $v'$ and $w_{\ell-1}$ are in different components in the split of $\twoneigh{w_\ell}$ at $w_\ell$, it follows that $(w_0,\dots, w_\ell, v')$ contains more than $3$ different vertices. It suffices to show that this walk is \nice{}. Without loss of generality, let $v'$ be $w_0$. There are two cases. First, $w_\ell$ is not in the complete core $K^{w_0, w_1}$. The vertices $w_{\ell-1}$ and $w_{1}$ cannot be adjacent due to the maximality of the complete cores and therefore $(w_0,\dots, w_\ell, w_0)$ is \nice{}. For an illustration, we refer to the left part of Figure~\ref{fig:2-mosaic_cycle}.
	
	Second, $w_\ell$ is in the complete core $K^{w_0, w_1}$. Let $b_0$ be the cardinality of $\partof[K^{w_0, w_1}](w_0)$. If $b_0$ is not congruent modulo $p$ to $0$, then we obtain the thick closed path $(w_0^{b_0}, w_1, \dots, w_\ell, w_0^{b_0})$, which is \nice{}. Otherwise, we are allowed to apply Lemma~\ref{lem:construct_hard_path} on the complete core $K^{w_\ell, w_0}$. In particular, this yields a second maximal $p$-mosaic path $Q'$ starting at $w_0$ disjoint from $Q$, that is $Q' = (w'_0, w'_1, \dots, w'_r)$ and $w'_0$ is $w_0$. By the degree of $w_0$, it follows that also $Q'$ is partially $p$-hard. For an illustration, we refer to Figure~\ref{fig:2-hard_walk}.
	
	We briefly argue on the case that $w'_r$ is adjacent to a vertex in $Q$ as illustrated in the right part of Figure~\ref{fig:2-mosaic_cycle}. Let $w_j$ be the vertex in $Q$ adjacent to $w'_r$. It follows that $w'_r$ cannot be in both complete cores $K^{w_j, w_{j+1}}$ and $K^{w_{j}, w_{j-1}}$. In the first case, the closed walk obtained from concatenating $(w_j, \dots, w_\ell)$ with $ (w'_0, w'_1, \dots, w'_r, w_j)$ is \nice{}. In the second case, the closed walk obtained from concatenating $(w_j, \dots, w_0)$ with $ (w'_0, w'_1, \dots, w'_r, w_j)$ is \nice{} because the reverse of a $p$-mosaic path is \nice{}.
	
	Hence, $w'_r$ is not adjacent to a vertex in $Q$. Analogue to our argumentation on $Q$, we deduce that there exists a vertex $w'_j$ in $Q'$ such that $(w'_0, w'_1, \dots, w'_r, w'_j)$ contains a closed path. We join $Q$ and $Q'$ and obtain the closed walk $Q \cup Q' \cup (w'_j, \dots, w'_0)$, which we denote by $W$. This walk $W$ decomposes into the cycle $(w_0,\dots, w_\ell, w_0)$ followed by the path $(w'_0, \dots, w'_j)$ followed by the cycle $(w'_j, \dots, w'_\ell, w'_j)$ followed by the reverse path $(w'_j, \dots, w'_0)$. At any of the two concatenation points $w_0$ and $w'_i$, two distinct complete cores are concatenated. For instance, $w_\ell$ and $w'_0$ are not in the same component in the split of $\twoneigh{w_0}$ at $w_0$. Since $Q$ and $Q'$ along with their respective reverse paths are by construction \nice{}, it follows that $W$ is \nice{}.
\end{proof}

We conclude that if $H$ contains a partially hard $p$-mosaic path starting at a vertex $w_0$, then we find a $p$-mosaic path starting at $w_0$ such that starting at the last vertex there is a $p$-hardness path, and for every such $p$-hardness path, starting at its last vertex there is a partially hard $p$-mosaic path. By the finiteness of $H$, we obtain a closed walk and are going to show the existence of a $p$-hardness gadget by Corollary~\ref{cor:walks_hard}.

\subsubsection*{Hardness for Bip-Graphs That are not \Nice{}}
\begin{theorem}\label{thm:K_a,p_yields_hard}
	Let $p$ be a prime and $H$ be a connected order~$p$ bip-reduced bip-graph that is \graphclass{}. If $H$ is not \nice{}, then $H$ admits a $p$-hardness gadget.
\end{theorem}
\begin{proof}
	We assume toward contradiction that $H$ admits no $p$-hardness gadget. By Lemma~\ref{lem:existence_partially_hard_p-mosaic}, there exists a partially hard $p$-mosaic path $Q^1$ in $H$, where $Q^1=(w_0,w_1, \dots, w_\ell)$. By Lemma~\ref{lem:mosaic_gives_path}, we assume without loss of generality that there exists a $p$-hardness path $P^1$ starting at $w_\ell$ with endvertex $v_{\ell'}$. Let $P^1$ be $(v_0^{b_1}, v^{b_1}_1, \dots, v_{\ell'}^{b_{\ell'}})$, where we note that $v_0$ is equal to $w_\ell$. Corollary~\ref{cor:construct_partially_hard_mosaic} gives a partially hard $p$-mosaic path $Q^2$ starting at $v_{\ell'}$. Iteratively applying this argumentation, we obtain a walk $W$, where $W$ is the concatenation $Q^1 \, P^1 \, Q^2 \, P^2 \dots $ of partially hard $p$-mosaic paths and $p$-hardness paths. By the finiteness of $H$, this concatenation has to give a closed walk. Let $W$ be the minimal such closed walk of the form $W=( w_0, w_1, \dots ,w_{\ell-1}, w_\ell = v_0^{b_0}, v^{b_1}_1, \dots, v_{\ell'}^{b_{\ell'}} = w_0)$. We note that every $p$-mosaic path and every $p$-hardness path is a square-free thick walk. With this, we are going to show that $W$ is a $p$-hard thick walk, and Corollary~\ref{cor:walks_hard} yields a contradiction.
	
	We assume without loss of generality that $W= Q^1 \, P^1 \, \dots \, P^\ell	$, i.e. $W=( w_0, w_1, \dots ,w_{\ell-1}, w_\ell = v_0^{b_0}, v^{b_1}_1, \dots, v_{\ell'}^{b_{\ell'}} = w_0)$. The walk $W$ has to contain at least three vertices given by the minimal length of a $p$-hardness path. Since $W$ yields a cycle and $H$ is bipartite, it follows that the twin-free form of $W$ has to contain at least $4$ different vertices. Further, the paths concatenate such that $W$ continues to be square-free. More precise, for every concatenation point $(w_{\ell-1}, w_\ell = v_0^{b_0}, v^{b_1}_1)$ the vertex $w_{\ell-1}$ is in a different component of the split of $\twoneigh{v_0}$ at $v_0$ than the vertices in $v^{b_1}_1$. The reason is that, for the representative $v_1$ of $v_1^{b_1}$, the complete core $K^{v_0,v_1}$ has in neither parts a multiple of $p$ vertices whereas the complete core $K^{w_{\ell-1}, w_\ell}$ has a part with a multiple of $p$ vertices. None of the counts $b_i$ of the $p$-hardness paths is congruent modulo $p$ to $0$. Therefore, the walk $W$ is $p$-hard. 
\end{proof}

\subsection{Hardness for \Graphclass{} Graphs}
With these results at hand, we finally show the main theorem of this section. Here, we explicitly allow $H$ to be disconnected.
\HardnessGraphclass*
	\begin{proof}
		By Lemma~\ref{lem:bip_components} it suffices to consider only connected graphs $H$.
		By Corollary~\ref{cor:hardness gadget} it suffices to show that $H$ admits a $p$-hardness gadget. If $H$ is \nice{}, then by Theorem~\ref{thm:nice_graphs_hard} $H$ admits a $p$-hardness gadget. Otherwise, we obtain by Theorem~\ref{thm:K_a,p_yields_hard} that $H$ admits a $p$-hardness gadget. This establishes the result.
\end{proof}

\section{(Partially) Surjective Homomorphisms}
\label{sec:surjective_homomorphisms}
In the last section, we turn toward an application of our insights on quantum graphs in order to highlight the importance of a dichotomy for $\probNumHom{H}[p]$.
Let $H$ be a graph with set of distinguished vertices $\distVertices$ in $\vertexset[H]$ and set of distinguished edges $\distEdges$ in $\edgeset[H]$. For a graph $G$, a homomorphism $f$ in $\Hom[G, H]$ is called \emph{partially surjective} with respect to $\distVertices$ and $\distEdges$ if, 
\begin{itemize}
	\item for all vertices $v$ in $\distVertices$, there exists a vertex $x$ in $\vertexset[G]$ such that $f(x) = v$;
	\item for all edges $(u,v)$ in $\distEdges$, there exists an edge $(x,y)$ in $\edgeset[G]$ such that $f(x) = u$ and $f(y) =v$.
\end{itemize}

We implicitly assume the endvertices of distinguished edges to be distinguished because homomorphisms are defined as maps on vertices.
The set of partially surjective homomorphisms from $G$ to $H$ with respect to $\distVertices$ and $\distEdges$ is denoted by $\PartSurj[G, (H, \distVertices, \distEdges)]$, the number of such homomorphisms is denoted by $\numPartSurj[G, (H, \distVertices, \distEdges)]$, and for a modulus $p$ the number of such homomorphisms modulo $p$ is denoted by $\numPartSurj[G, (H, \distVertices, \distEdges)][p]$.

We study the following problem
\prob%
{$\probNumPartSurjHom{H, \distVertices, \distEdges}[p]$.}
{Prime $p$, Graph~$H$, and set of distinguished elements $\distVertices$ and $\distEdges$, where $\distVertices \subseteq \vertexset[H]$, $\distEdges \subseteq \edgeset[H]$.}
{Graph $G$.}
{$\numPartSurj[G, (H, \distVertices, \distEdges)][p]$.}

Similarly to Theorem~\ref{thm:reduction_cancel_order_p_auto}, we aim to reduce the problem to instances with an order~$p$ reduced target graph $H$. However, the additional property of a vertex being distinguished prevents us from applying the reduction from $\normreduced{H}$. Given a graph with distinguished vertices and edges $(H, \distVertices, \distEdges)$, in order to arrive at $\normreduced{H}$ we studied the fixed vertices under an arbitrary automorphism in $\Aut[H]$. However, this disregards distinguished vertices and edges. For our purposes, we are interested in the subgroup $\AutDist[H]$ of $\Aut[H]$ that consists of \emph{dist-automorphisms}, automorphisms that map distinguished elements bijectively to distinguished elements. This is formally defined by
\begin{align*}
	\AutDist[H] = \{\varrho \in \Aut[H] \SetSymbol &\text{for all }u \in \vertexset[H], \, \varrho(v) \in \distVertices \text{ iff } v \in \distVertices \text{ and } \\ &\text{for all } (u,v) \in \edgeset[H],\, (\varrho(u), \varrho(v)) \in \distEdges \text{ iff } (u,v)\in \distEdges \} .
\end{align*}
The composition of two automorphisms in $\AutDist[H]$ is again an element in $\AutDist[H]$, and thus $\AutDist[H]$ is a subgroup of $\Aut[H]$. Similarly, two graphs with distinguished elements $(H, \distVertices, \distEdges)$ and $(H_1, V_1^{\dist}, E_1^{\dist})$ are called \emph{\isomorphic[dist]} if there exists an isomorphism $\phi \colon H \to H_1$ that maps bijectively  $\distVertices$ to $V_1^{\dist}$ and $\distEdges$ to $E_1^{\dist}$. Such an isomorphism $\phi$ is called a \emph{$\dist$-isomorphism}. If $(H, \distVertices, \distEdges)$ and $(H_1, V_1^{\dist}, E_1^{\dist})$ are \isomorphic[dist], then we denote this by $(H, \distVertices, \distEdges) \congdist (H_1, V_1^{\dist}, E_1^{\dist})$.

When we apply the notion $\AutDist[H]$ we keep notation short and assume that the graph $H$ has sets of distinguished vertices and edges. For instance, if both sets are empty or identical to the whole sets of vertices and edges, then $\AutDist[H]$ is identical to $\Aut[H]$. For every vertex $v$ of $H$,
the subgroup $\AutDist[H]$ induces an orbit $\OrbDist[v]$ under automorphisms in $\AutDist[H]$ by $\OrbDist[v] = \set{\varrho(v) \in \vertexset[H] \given \varrho \in \AutDist[H]}$.

\begin{lemma}\label{lem:distinguishing_easy_elements_is_easy}
	Let $p$ be a prime and $H$ be a graph with set of distinguished vertices $\distVertices$ in $\vertexset[H]$ and set of distinguished edges $\distEdges$ in $\edgeset[H]$. Moreover, let $\varrho$ be an automorphism in $\AutDist[H]$ of order $p$ and $H^{\varrho}$ be the subgraph induced by the fixed points of $\varrho$. If there exists a vertex $v$ in $\distVertices$ such that $v$ is not in $\vertexset[H^{\varrho}]$, 
	then $\numPartSurj[G, (H, \distVertices, \distEdges)][p]$ is equal to $0$.
\end{lemma}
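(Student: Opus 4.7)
The plan is to exhibit a free action of the cyclic group $\langle \varrho \rangle$ on the set $\PartSurj[G][(H, \distVertices, \distEdges)]$; since $|\langle \varrho \rangle|=p$, every orbit then has size~$p$, so the cardinality of the set is divisible by~$p$.

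First, I would define the action by post-composition: for $\sigma \in \PartSurj[G][(H, \distVertices, \distEdges)]$ and $i \in [p]$, set $\varrho^i \cdot \sigma := \varrho^i \circ \sigma$. Since $\varrho^i$ is an automorphism of~$H$, the composition $\varrho^i \circ \sigma$ is certainly a homomorphism from~$G$ to~$H$. To check that partial surjectivity is preserved, observe that $\varrho^i \in \AutDist(H)$, so $\varrho^i$ restricts to a bijection $\distVertices \to \distVertices$ and to a bijection $\distEdges \to \distEdges$. Hence the image of $\varrho^i \circ \sigma$ still covers every distinguished vertex and every distinguished edge, i.e.\ $\varrho^i \circ \sigma \in \PartSurj[G][(H, \distVertices, \distEdges)]$.

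Next I would argue that the action is free. Suppose towards a contradiction that some non-identity power $\varrho^i$ (with $1\le i\le p-1$) satisfies $\varrho^i \circ \sigma = \sigma$ for some $\sigma \in \PartSurj[G][(H, \distVertices, \distEdges)]$. Since $\sigma$ is partially surjective, there exists $u \in V(G)$ with $\sigma(u)=v$, where $v \in \distVertices$ is the given vertex not lying in $V(H^\varrho)$. The equality $\varrho^i \circ \sigma = \sigma$ applied at $u$ yields $\varrho^i(v)=v$, so $v$ is fixed by $\varrho^i$. But $p$ is prime, so the cyclic subgroup $\langle \varrho \rangle$ has prime order; by the orbit-stabilizer theorem, the $\langle \varrho \rangle$-orbit of $v$ has size dividing $p$, and since $\varrho(v)\neq v$ (as $v \notin V(H^\varrho)$), the orbit has size exactly~$p$. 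Consequently the stabilizer of~$v$ in $\langle \varrho \rangle$ is trivial, so $\varrho^i(v)=v$ forces $i\equiv 0\pmod p$, a contradiction.

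Thus every orbit of $\langle \varrho \rangle$ on $\PartSurj[G][(H, \distVertices, \distEdges)]$ has size exactly~$p$, and the lemma follows. The main (and only) subtlety is verifying that the action is free; the key is that $\varrho \in \AutDist(H)$ guarantees the moved distinguished vertex~$v$ must be hit by any partially surjective $\sigma$, which forces $\sigma$ to distinguish between $\sigma$ and $\varrho^i\circ\sigma$ for every $i\not\equiv 0\pmod p$.
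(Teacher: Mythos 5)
Your proposal is correct and follows essentially the same argument as the paper: both post-compose with powers of $\varrho$, use $\varrho\in\AutDist(H)$ to show the composition stays partially surjective, and use that $v\in\distVertices\setminus V(H^\varrho)$ together with the orbit-stabilizer theorem (orbit of $v$ under $\langle\varrho\rangle$ has size exactly $p$) to conclude divisibility by $p$. Your write-up phrases this as a free group action and verifies triviality of the stabilizer explicitly, which is a slightly cleaner packaging of what the paper does when it asserts that each $\phi$ yields $p$ homomorphisms $\phi_i$; one small slip in your closing remark is that membership in $\AutDist(H)$ is what guarantees well-definedness of the action (preservation of partial surjectivity), not freeness — freeness comes purely from $v\in\distVertices$ and $v\notin V(H^\varrho)$.
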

\begin{proof}
	If $\numPartSurj[G, (H, \distVertices, \distEdges)]$ is equal to $0$, then the claim holds trivially. Otherwise, let $f$  be a homomorphism in $\PartSurj[G, (H, \distVertices, \distEdges)]$ and let $\Aut_\varrho$ be the cyclic subgroup of $\AutDist[H]$ generated by $\varrho$.
	
	Let $v$ be a distinguished vertex in $\distVertices $ such that $v$ is not in $\vertexset[H^{\varrho}]$, hence $\varrho(v) \neq v$. We recall from Section~\ref{sec:prelims} that $\Orb[v][\varrho]$ denotes the set $\set{\varrho^i(v)}_{i\in \sqBrackets{p}}$. In fact, $\Orb[v][\varrho]$ is a subset of $\OrbDist[v]$. By the Orbit-Stabilizer Theorem~\ref{thm:orbit_stabilizer}, it follows that the cardinality $\abs{\Orb[v][\varrho]}$ divides the order $\abs{\Aut_\varrho}$, which is equal to $p$. Since $v$ is not fixed and $p$ is a prime, we deduce that $\abs{\Orb[v][\varrho]} =p$. For every homomorphism $f$ in $\PartSurj[G, (H, \distVertices, \distEdges)]$, we construct a set of $p$ homomorphisms $\set{f_i}_{i \in \sqBrackets{p}}$ in $\PartSurj[G, (H, \distVertices, \distEdges)]$ by
	\[
		f_i (x) = (\varrho^i \circ f)(x).
	\]
	In fact, for every index $i\in \sqBrackets{p}$ and every vertex $x$ in $\vertexset[G]$ such that $f(x)$ is in $H^{\varrho}$, it holds $\varrho^i(f(x)) =f(x)$, and hence $f(x) = f_i(x)$. It follows that $f$ and $f_i$ might only differ on the set of vertices not mapped to $H^\varrho$. We briefly argue that $f_i$  is indeed a homomorphism in $\PartSurj[G, (H, \distVertices, \distEdges)]$.
	
	By construction, $f_i$ is a homomorphism in $\Hom[G,H]$. Let $v$ be a distinguished vertex in $\distVertices$. Since $\varrho$ is an automorphism bijective on distinguished vertices, we have that $\varrho^{-i}(v)$ is in $\distVertices$ and there exists a vertex $x$ in $\vertexset[G]$ with $f(x) = \varrho^{-i}(v)$ because $f$ is in $\PartSurj[G, (H, \distVertices, \distEdges)]$. By the definition of $f_i$ it follows that $f_i(x) = \varrho^{i}(\varrho^{-i}(v))=v$. Since $\varrho$ is also bijective on the set of distinguished edges, the same argumentation shows that, for every distinguished edge $(u,v)$ in $\distEdges$, there exists an edge $(x,y)$ in $\edgeset[G]$ with $(f_i(x), f_i(y)) = (u,v)$.
	
	There exists by assumption a distinguished vertex $v$ in $\distVertices$ that is not in $H^{\varrho}$. We obtain that every homomorphism $f$ in $\PartSurj[G, (H, \distVertices, \distEdges)]$ yields a set of $p$ distinct homomorphisms $\set{f_i}_{i \in \sqBrackets{p}}$ in $\PartSurj[G, (H, \distVertices, \distEdges)]$. For two different homomorphisms $f$ and $f'$ in $\PartSurj[G, (H, \distVertices, \distEdges)]$, we have, for any pair of indices $i$ and $j \in \sqBrackets{p}$, that $\varrho^i \circ f$ is unequal to $\varrho^j \circ f'$. Therefore, the number of homomorphisms $\PartSurj[G, (H, \distVertices, \distEdges)]$ is congruent modulo $p$ to $0$.
\end{proof}

In the proof of Lemma~\ref{lem:distinguishing_easy_elements_is_easy}, it was crucial for the automorphism $\varrho$ to map bijectively distinguished elements to distinguished elements. Otherwise, the composition with a valid homomorphism $f$ in $\PartSurj[G, (H, \distVertices, \distEdges)]$ does not necessarily yield another valid homomorphism in $\PartSurj[G, (H, \distVertices, \distEdges)]$. We note that for an automorphism $\varrho$ in $\AutDist[H]$ of order $p$, the set of distinguished edges $\distEdges$ also has to be contained in ${H}^\varrho$. If a distinguished edge $e$ in $\distEdges$ is not contained in $\edgeset[{H}^\varrho]$, then we conclude by the absence of multiple edges that $\distVertices$ is not contained in $\vertexset[{H}^\varrho]$. Therefore, the classification of Lemma~\ref{lem:distinguishing_easy_elements_is_easy} concerning distinguished vertices is sufficient for our needs. An extension to distinguished edges would be useful for the setting of homomorphisms allowing multiple edges, which is beyond the scope of this paper.

For any prime $p$, it suffices to study graphs $H$ with distinguished vertices $\distVertices$ and edges $\distEdges$ such that no automorphism $\varrho$ in $\AutDist[H]$ of order~$p$ \enquote{cancels} a distinguished vertex due to Lemma~\ref{lem:distinguishing_easy_elements_is_easy}. If every automorphism of $H$ is also a dist-automorphism, then we adjust the binary relation $\!\relArrow[][p]\!$ and find by Theorem~\ref{thm:reduction_cancel_order_p_auto} and Theorem~\ref{thm:reduced_form_unique} that the \enquote{order~$p$ dist-reduced form} of $(H, \distVertices, \distEdges)$ is $(\normreduced{H}, \distVertices, \distEdges)$, which is unique up to dist-isomorphism. The reason for the quotation marks will become apparent after the following corollary. We provide some technical details.

Let $H_1$ be a graph with distinguished vertices $V_1^{\dist}$ and distinguished edges $E_1^{\dist}$. We denote $(H, \distVertices, \distEdges) \relArrow[\dist][p] (H_1, V_1^{\dist}, E_1^{\dist})$ if there exists an automorphism $\varrho$ in $\AutDist[H]$ of order $p$ such that $H_1 \cong H^{\varrho}$. By the assumptions on $\AutDist[H]$, it follows that $(H_1, V_1^{\dist}, E_1^{\dist})$ is dist-isomorphic to $(H^\varrho, \distVertices, \distEdges)$. For an arbitrary enumeration of $\AutDist[H]$, the terminal object of the relation $\!\relArrow[\dist][p]\!$ is by  Theorem~\ref{thm:reduced_form_unique} dist-isomorphic to $(\normreduced{H}, \distVertices, \distEdges)$.
\begin{corollary}\label{cor:mod-p_reduction_part_surj_homs}
	Let $p$ be a prime and $H$ be a graph with distinguished vertices $\distVertices$ in $\vertexset[H]$ and distinguished edges $\distEdges$ in $\edgeset[H]$. Further, let the groups $\AutDist[H]$ and $\Aut[H]$ be equal.
	\begin{itemize}
		\item If there exists an automorphism $\varrho$ in $\AutDist[H]$ of order $p$ such that $\distVertices$ is not contained in $\vertexset[H^\varrho]$, then $\probNumPartSurjHom{H, \distVertices, \distEdges}[p]$ is solvable in constant time;
		\item otherwise, for every graph $G$, the number of partially surjective homomorphisms $\numPartSurj[G, (H, \allowbreak \distVertices, \distEdges)]$ is congruent modulo $p$ to $\numPartSurj[G, (\normreduced{H}, \distVertices, \distEdges)]$.
	\end{itemize}
\end{corollary}

We deliberately stated this reduction for the special case in which $\Aut[H]$ and $\AutDist[H]$ are equal. In general, we can still apply the binary relation $\!\relArrow[\dist][p]\!$, but the uniqueness of the terminal object does not follow straightforwardly. In a nutshell, the chain of relations $\!\relArrow[\dist][p]\!$ terminates earlier compared to the chain of relations $\!\relArrow[p]\!$ provided that $\Aut[H]$ and $\AutDist[H]$ are unequal. On a technical level, the problem $\probNumPartSurjHom{H, \distVertices, \distEdges}$ relates to an asymmetric homomorphism-type problem in which the domain is more restricted than the range; any graph $G$ is equivalently a graph with an empty set of distinguished elements. Such an asymmetry hinders us from applying the same technique we employed in Section~\ref{sec:quantum_graphs}, for example, to obtain Observation~\ref{obs:p-wise_reduced_unique}. We see no reason to doubt that also in the general case the terminal object of $\!\relArrow[\dist][p]\!$ is unique up to dist-isomorphism -- leading to a proper definition of \enquote{the order~$p$ dist-reduced form}. Alas, for the sake of this paper, this result is not needed since we now translate the problem $\probNumPartSurjHom{H, \distVertices, \distEdges}[p]$ to a quantum-homomorphism problem.

For a target graph with distinguished vertices and edges $(H, \distVertices, \distEdges)$, we define the family $\family{D}(H)$ to consist of subgraphs $H'$ of $H$ such that $H'$ is obtainable from $H$ by deleting a set of distinguished elements, i.e. there exists a set of vertices $V'$ in $\distVertices$ and a set of edges $E'$ in $\distEdges$ such that $H'$, given by $(\vertexset[H'], \edgeset[H'])$, is equal to $(\vertexset[H] \setminus V',\,  \edgeset[H] \setminus E' )$.

Let the number of distinguished vertices be $n_d$ and the number of distinguished edges be $m_d$. The family $\family{D}(H)$ decomposes into the families $\family{D}_{i,j}(H)$, where the index $i$ is in $\sqBrackets{0; n_d}$, the index $j$ is in $\sqBrackets{0; m_d}$, and every element in $\family{D}_{i,j}(H)$ is obtainable from $H$ by deleting first $j$ distinguished edges and then $i$ distinguished vertices from $H$ such that no deleted vertex is incident to a remaining distinguished edge. Here, the deletion of an edge does not affect the set of vertices, but the deletion of a vertex affects the set of edges. Thus, we needed to fix the order of the deletion in order to obtain a decomposition of $\family{D}(H)$. This yields,
\begin{equation}\label{eq:decomposition_removing_dist_elements}
	\family{D}(H) = \bigcup_{i\in \sqBrackets{0;n_d}} \bigcup_{j\in \sqBrackets{0; m_d}} \family{D}_{i,j}(H) .
\end{equation}

\begin{example}
	Let $p$ be a prime and $H$ be an order~$p$ reduced graph containing an edge $e$ and a vertex $v$. We apply the inclusion-exclusion principle. Here, we denote set deletion by \enquote{$-$} for better readability. For any graph $G$, the set of partially surjective homomorphisms $\PartSurj[G, (H, \set{v}, \set{e})]$ is equal to
	\[
	\PartSurj[G, (H, \set{v})] - \parenthesis[\big]{ \PartSurj[G, (H, \set{v})] \cap \Hom[G, H - \set{e}] }.
	\]
	Since $\PartSurj[G, (H, \set{v})]$ is equal to $\Hom[G, H] - \Hom[G, H - \set{v}]$, we obtain
	\begin{align*}
	\PartSurj[G, (H, \set{v}, \set{e})] &= 
	\parenthesis[\big]{ \Hom[G, H] - \Hom[G, H - \set{v}]} \\
	&\phantom{=} - \parenthesis[\big]{ \Hom[G, H- \set{e}] - \Hom[G, H - \set{v, e}] } .
	\end{align*}
\end{example}

For a graph $H$ with set of $n_d$ distinguished vertices $\distVertices$ and set of $m_d$ distinguished edges $\distEdges$, the family $\family{D}(H)$ and the families $\family{D}_{i,j}(H)$, where $i \in \sqBrackets{0; n_d}$ and $j \in \sqBrackets{0; m_d}$, are further decomposed by isomorphism classes.
We denote by $\family{D}^\ast (H)$ the family of isomorphism classes in $\family{D}(H)$, and let $\family{D}^{\ast}_{i,j}(H)$ be the family of isomorphism classes in $\family{D}_{i,j}(H)$.

Chen et al. \cite{Chen:19:The_Exponential-Time_Complexity} gave a lemma similar to the following.
\begin{lemma}\label{lem:part_surj_quantum_graph_2}
	Let $H$ be a graph with $n_d$ distinguished vertices $\distVertices$ in $\vertexset[H]$ and $m_d$ distinguished edges $\distEdges$ in $\edgeset[H]$. Let the quantum graph $\quantum{F}$ consist of the family of constituents $\family{D}^\ast(H)$ and set of affiliated coefficients $\set{\alpha_F}_{F \in \family{D}^\ast(H)}$, where, for $F$ in $\family{D}^\ast_{i,j}(H)$ with $i \in \sqBrackets{0; n_d}$ and $j \in \sqBrackets{0; m_d}$, the coefficient $\alpha_F$ is $(-1)^{i+j} \cdot \abs{\set{F' \in \family{D}_{i,j}(H) \given F \cong F'}}$. For all graphs $G$,
	\[
	\numPartSurj[G, (H, \distVertices, \distEdges)] = \numHom[G, \quantum{F}] .
	\]
\end{lemma}
\begin{proof}
	The lemma is a result of the inclusion-exclusion principle.	We enumerate the set of distinguished vertices and edges. In this way, for $i \in \sqBrackets{0; n_d}$ and $j \in \sqBrackets{0; m_d}$, let $v_i$ be the $i$-th distinguished vertex and $e_j$ the $j$-th distinguished edge. 
	For every $i \in \sqBrackets{0; n_d}$ and $j \in \sqBrackets{0; m_d}$, let $\overline{A}_i$ be the set $\PartSurj[G, (H, \set{v_i})]$ of homomorphisms surjective on $v_i$ and $\overline{B}_j$ be the set $\PartSurj[G, (H , \set{e_j})]$ of homomorphisms surjective on $e_j$. Any homomorphism $f$ in $\PartSurj[G, (H, \distVertices, \distEdges)]$ has to be in the intersection of the sets $\overline{A}_i$ and $\overline{B}_j$. Therefore,
	\[
		\numPartSurj[G, (H, \distVertices, \distEdges)] = \abs[\Big]{ \parenthesis[\Big]{ \bigcap_{i \in \sqBrackets{n_d}} \overline{A}_i} \cap \parenthesis[\Big]{ \bigcap_{j \in \sqBrackets{m_d}} \overline{B}_j}}.
	\]
	The complement of $\overline{A}_i$ is the set $\Hom[G, H\setminus \set{v_i}]$ denoted by $A_i$. Similarly, the complement of $\overline{B}_j$ is the set $\Hom[G, H\setminus \set{e_j}]$ denoted by $B_j$. By \emph{De Morgan's law}, we obtain
	\begin{align*}
		\numPartSurj[G, (H, \distVertices, \distEdges)]	=
		& \sum_{i\in \sqBrackets{0;n_d}} \sum_{j\in \sqBrackets{0;m_d}} (-1)^{i + j} \cdot \sum_{{\substack{V_i \subseteq \distVertices \\ \abs{V_i}=i}}} \sum_{{\substack{E_j \subseteq \distEdges \\ \abs{E_j}=j}}} \numHom[G, (H\setminus (V_i \cup E_j))].
	\end{align*}
	
	Any term $\numHom[G, (H\setminus (V_i \cup E_j))]$ is equal to $\numHom[G, F]$, where $F$ is the induced subgraph in $H$ obtained by deleting $(V_i \cup E_j)$, thus $F$ is in $\family{D}_{i,j}(H)$. Collecting terms for pairwise isomorphic graphs, we obtain that every graph in $\set{F' \in \family{D}_{i,j}(H) \given F \cong F'}$
	contributes the term $\numHom[G, F]$. Let the constant $\alpha'_F$ be $\abs{\set{F' \in \family{D}_{i,j}(H) \given F \cong F'}}$. Therefore, 
	\begin{align*}
		\numPartSurj[G, (H, \distVertices, \distEdges)]	=\sum_{i \in \sqBrackets{0;n_d}} \sum_{j \in \sqBrackets{0;m_d}} (-1)^{i+j} \cdot \sum_{F \in \family{D}^\ast_{i,j}(H)}  \alpha'_F \cdot \numHom[G, F] ,
	\end{align*}
	which establishes the lemma.
\end{proof}

By Lemma~\ref{lem:part_surj_quantum_graph_2}, any problem $\probNumPartSurjHom{H, \distVertices, \distEdges}$ is equivalent to a quantum graph problem $\probNumHom{\quantum{F}}$, where $\probNumPartSurjHom{H, \distVertices, \distEdges}$ denotes the non-modular version of $\probNumPartSurjHom{H, \distVertices, \distEdges}[p]$. Moreover, for every constituent $F$ of $\quantum{F}$ the constant $\alpha_F$ is an integer. Therefore, for any prime $p$, the value $\alpha_F \pmod p$ is well-defined.
We deduce that also any problem $\probNumPartSurjHom{H, \distVertices, \distEdges}[p]$ is equivalent to a quantum graph problem $\probNumHom{\quantum{F}}[p]$.

We recall that constituents $F$ of $\quantum{F}$ do not have to be order~$p$ reduced and $\probNumHom{\quantum{F}}[p]$ is equivalent to $\probNumHom{\normreduced{\quantum{F}}}[p]$, where $\normreduced{\quantum{F}}$ is the order~$p$ reduced quantum graph obtained from $\quantum{F}$. Also, we recall that the constituents of $\normreduced{\quantum{F}}$ have constant $\alpha_F$ not congruent modulo $p$ to $0$. This quantum graph allows us to apply Theorem~\ref{thm:homs_to_quantum_graph_mod_p}. A dichotomy for $\probNumHom{H}[p]$ then implies a dichotomy for $\probNumPartSurjHom{H, \distVertices, \distEdges}[p]$ as follows.

Let us recall the conjecture on $\probNumHom{H}[p]$ we stated in the introduction.
\FabenJerrump*

We also recall that Bulatov and Kazeminia~\cite{Bulatov:22:Complexity_Classification_of_Homs_mod} proved this conjecture. This allows us to obtain a dichotomy for $\probNumPartSurjHom{H, \distVertices, \distEdges}[p]$. As noted in the Introduction, in the extended abstract of this paper~\cite{Lagodzinski:21:On_Counting_Quantum-Graph_Homomorphisms} the following results were stated conditional on the conjecture, which we adjusted now that the conjecture is proven.

\begin{corollary}\label{cor:partsurjhoms_equiv_homs}
	Let $p$ be a prime and $H$ a graph with set of distinguished vertices $\distVertices$ in $\vertexset[H]$ and set of distinguished edges $\distEdges$ in $\edgeset[H]$. Let $\quantum{F}$ be the quantum graph defined in Lemma~\ref{lem:part_surj_quantum_graph_2} with order~$p$ reduced form $\normreduced{\quantum{F}}$. If every constituent $F$ of $\normreduced{\quantum{F}}$ is a collection of complete bipartite graphs and reflexive complete graphs, then $\probNumPartSurjHom{H, \distVertices, \distEdges}[p]$ is solvable in polynomial time.
	Otherwise, $\probNumPartSurjHom{H, \distVertices, \distEdges}[p]$ is $\classNumP[p]$-hard.
\end{corollary}

Compared to \cite[Theorem~4]{Chen:19:The_Exponential-Time_Complexity} by Chen et al., who followed a similar line of argumentation for the non-modular problem $\probNumPartSurjHom{H, \distVertices, \distEdges}$, Corollary~\ref{cor:partsurjhoms_equiv_homs} does not state structural properties of $(H, \distVertices, \distEdges)$ testable for hardness. The reason is that Conjecture~\ref{conj:faben-jerrum_p} is only applicable to order~$p$ reduced graphs, i.e. to the order~$p$ reduced form $\normreduced{\quantum{F}}$ and not to the quantum graph $\quantum{F}$ itself. The reduction from $\normreduced{\quantum{F}}$ to $\quantum{F}$ might erase significant structure, even sources for hardness in the non-modular case. We discuss this further by the graph illustrated in Figure~\ref{fig:partsurj_reduction_example}.

\begin{figure}[t]
	\centering
	\includegraphics[]{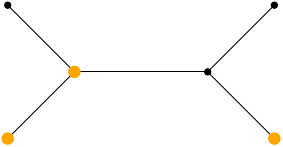}
	\caption{Example of a graph $H$ with distinguished vertices $\distVertices$ marked large and orange such that $(H, \distVertices)$ is order~$2$ reduced and $\probNumPartSurjHom{H, \distVertices, \distEdges}[2]$ is solvable in polynomial time.}
	\label{fig:partsurj_reduction_example}
\end{figure}
Indeed, the graph $(H, \distVertices)$ has no dist-automorphism of order~$2$ and is bipartite but not complete bipartite. By Chen et al.~\cite[Theorem~4]{Chen:19:The_Exponential-Time_Complexity}, the non-modular problem $\probNumPartSurjHom{H, \distVertices}$ is $\classNumP$-hard. However, regarding the modular version, the graph $H$ itself admits involutions. The constituents of the quantum graph $\quantum{F}$ such that $\probNumPartSurjHom{H, \distVertices, \distEdges}[2]$ is equivalent to $\probNumHom{\quantum{F}}[2]$ are given by the subgraphs $H'$ of $H$ obtained by deleting a (possibly empty) set of distinguished vertices. For every such subgraph $H'$, we observe that the order~$2$ reduced form $\normreduced{H'}[2]$ contains at most one vertex, and thus $\probNumHom{\quantum{F}}[2]$ is solvable in polynomial time.

The core of the difference between the non-modular version \cite[Theorem~4]{Chen:19:The_Exponential-Time_Complexity} and the modular version Corollary~\ref{cor:partsurjhoms_equiv_homs} is the possible difference between the sets $\AutDist$ and $\Aut$. For a graph with distinguished vertices and edges $(H, \distVertices, \distEdges)$, we recall that $\AutDist[H]$ and $\Aut[H]$ can be identical, for instance in the extreme cases that $\distVertices$ is empty or equal to $\vertexset[H]$, and $\distEdges$ is empty or equal to $\edgeset[H]$. If the sets $\AutDist[H]$ and $\Aut[H]$ are identical, then we are allowed to apply Corollary~\ref{cor:mod-p_reduction_part_surj_homs} and able to give precise structural statements in the modular case. These depend only on the set $\family{D}^\ast (H)$ compared to Corollary~\ref{cor:partsurjhoms_equiv_homs}, which requires a computation of the order~$p$ reduced quantum graph $\normreduced{\quantum{F}}$. We recall that the latter consists of constituents in $\family{D}^\ast (H)$ but not necessarily the whole set.

\begin{theorem}\label{thm:dichotomy_part_surj_homs}
Let $p$ be a prime and $H$ be a graph with order~$p$ reduced form $\normreduced{H}$, $n_d$ distinguished vertices $\distVertices$ in $\vertexset[H]$, and $m_d$ distinguished edges $\distEdges$ in $\edgeset[H]$. If the groups $\AutDist[H]$ and $\Aut[H]$ are equal, then
$\probNumPartSurjHom{H, \distVertices, \distEdges}[p]$ is solvable in polynomial time if
	\begin{enumerate}
		\item the set of distinguished vertices $\distVertices$ is not contained in $\vertexset[\normreduced{H}]$,
		\item or every graph $F$ in $\family{D}^\ast (\normreduced{H})$ is a collection of complete bipartite graphs and reflexive complete graphs. 
	\end{enumerate}
Otherwise, $\probNumPartSurjHom{H, \distVertices, \distEdges}[p]$ is $\classNumP[p]$-hard.
\end{theorem}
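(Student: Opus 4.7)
The plan is to combine the order-$p$ reduction for partially surjective homomorphisms (Corollary~\ref{cor:mod-p_reduction_part_surj_homs}), the inclusion--exclusion expansion into a quantum graph (Lemma~\ref{lem:part_surj_quantum_graph_2}), and the quantum-graph dichotomy (Theorem~\ref{thm:homs_to_quantum_graph_mod_p}), with the hypothesis $\AutDist(H)=\Aut(H)$ providing the link between the reductions $\rightarrow^{\mathrm{dist}}$ and $\rightarrow_p$. Since $\AutDist(H) = \Aut(H)$, both reductions use the same automorphisms, so the underlying graph of $\distreduced{H}$ coincides with $\normreduced{H}$. If case (1) holds, then along the reduction chain some intermediate order-$p$ automorphism must move a distinguished vertex off its fixed-point subgraph, and Lemma~\ref{lem:distinguishing_easy_elements_is_easy} forces the count to be identically $0$, yielding a polynomial-time algorithm. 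Otherwise $\distVertices \subseteq V(\normreduced{H})$ and, since endpoints of distinguished edges are themselves distinguished, also $\distEdges \subseteq E(\normreduced{H})$; by Corollary~\ref{cor:mod-p_reduction_part_surj_homs} the problem equals $\ProbNumPartSurjHomsTo{(\normreduced{H}, \distVertices, \distEdges)}[p]$.

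Applying Lemma~\ref{lem:part_surj_quantum_graph_2}, I would rewrite this as $\numHom[G][\bar{F}][p]$, where $\bar{F}$ is the integer-coefficient quantum graph with constituents $\mathcal{D}^\ast(\normreduced{H})$. If every $F \in \mathcal{D}^\ast(\normreduced{H})$ is a disjoint union of complete bipartite and reflexive complete graphs, then Dyer and Greenhill's theorem (Theorem~\ref{thm:dyer-greenhil}) yields a polynomial-time algorithm for each $\numHomsTo{F}$, and hence for the linear combination $\numHom[G][\bar{F}][p]$, establishing the easy direction of case (2).

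For the hard direction I would apply Theorem~\ref{thm:homs_to_quantum_graph_mod_p} to the order-$p$ reduced form $\normreduced{\bar{F}}$: it suffices to exhibit one $p$-constituent $F^\star$ of $\normreduced{\bar{F}}$ whose components are not all complete bipartite or reflexive complete, for then Conjecture~\ref{conj:faben-jerrum_p} delivers $\ClassNumP[p]$-hardness of $\numHomsTo{F^\star}[p]$, which lifts to hardness of $\numHomsTo{\normreduced{\bar{F}}}[p]$ and thus of the original problem. A first observation is that $\normreduced{H}$ itself is the unique element of $\mathcal{D}^\ast(\normreduced{H})$ attaining both the full vertex and full edge count of $\normreduced{H}$, it is already order-$p$ reduced, and no strictly smaller constituent can reduce to it; hence $\normreduced{H}$ appears in $\normreduced{\bar{F}}$ with coefficient exactly $1$. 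Therefore, if $\normreduced{H}$ itself is not nice, the proof is finished.

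The remaining subcase---$\normreduced{H}$ is nice but some $F \in \mathcal{D}^\ast(\normreduced{H})$, necessarily obtained by deleting at least one distinguished edge, is not nice---is where the hypothesis $\AutDist(H) = \Aut(H)$ does the real work and where the main obstacle lies. The plan is to pick $F^\star$ a non-nice element of $\mathcal{D}^\ast(\normreduced{H})$ maximal with respect to lexicographic order on $(|V(F)|, |E(F)|)$, and to verify that (a) every automorphism of $F^\star$ extends to an automorphism of $\normreduced{H}$---exploiting that, under $\AutDist(\normreduced{H}) = \Aut(\normreduced{H})$, $\distVertices$ and $\distEdges$ are unions of $\Aut(\normreduced{H})$-orbits---so that $\Aut(F^\star)$ embeds into $\Aut(\normreduced{H})$ and inherits the absence of order-$p$ elements, making $F^\star$ itself order-$p$ reduced and $\normreduced{F^\star} = F^\star$ non-nice; and (b) by orbit-counting the coefficient $\alpha_{F^\star}$ of $F^\star$ in $\bar{F}$ equals $\pm |\Aut(\normreduced{H})|/|\mathrm{Stab}(F^\star)|$, which is coprime to $p$ because $p\nmid|\Aut(\normreduced{H})|$ (by the contrapositive of Cauchy's theorem~\ref{thm:Cauchy} applied to the order-$p$ reduced $\normreduced{H}$), and by maximality of $F^\star$ no strictly larger constituent reduces to $F^\star$, so $\alpha_{F^\star}$ survives in $\normreduced{\bar{F}}$ without cancellation. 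The technically involved step, which I expect to be the main obstacle, is (a): extending an abstract automorphism of $F^\star$ to $\normreduced{H}$ requires a careful case analysis of how adjacencies between $V(F^\star)$ and the deleted set $V'\cup E'$ are encoded, and this is precisely where $\AutDist(\normreduced{H}) = \Aut(\normreduced{H})$---guaranteeing that the deleted structure is orbit-symmetric---supplies the needed rigidity.
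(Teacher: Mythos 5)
Your plan has the same skeleton as the paper's proof: reduce to order-$p$ reduced $\normreduced{H}$ via Corollary~\ref{cor:mod-p_reduction_part_surj_homs} and $\AutDist(H)=\Aut(H)$, expand via Lemma~\ref{lem:part_surj_quantum_graph_2}, apply Theorem~\ref{thm:homs_to_quantum_graph_mod_p}, observe that $\normreduced{H}$ itself appears in $\normreduced{\bar F}$ with coefficient $1$, and if $\normreduced{H}$ is tractable but some $F\in\mathcal D^\ast(\normreduced{H})$ is not, hunt for a non-tractable $p$-constituent. Your lex-maximal choice of $F^\star$ over $(|V(F)|,|E(F)|)$ lands on a single edge deletion $F^\star=\normreduced{H}\setminus\{e\}$, which is exactly the $F$ the paper uses, and your no-cancellation argument for $\alpha_{F^\star}$ via vertex count is a clean way to phrase what the paper does more implicitly.

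Where I would push back is on your attribution of what $\AutDist(H)=\Aut(H)$ buys you. You place the orbit-symmetry of $\distVertices,\distEdges$ at the heart of step (a), claiming it ``supplies the needed rigidity'' for extending automorphisms of $F^\star$ to $\normreduced{H}$. That is not where the hypothesis does its work. The reason $\Aut(F^\star)$ embeds into $\Aut(\normreduced{H})$ is purely structural: the unique non-nice component $C^\ast=C\setminus\{e\}$ of $F^\star$ is distinguishable from every other (nice) component by isomorphism type, and \emph{within} $C^\ast$ the endpoints of the deleted edge are identified by the degree sequence (in $K_{a,b}\setminus e$ with $a,b\geq 2$, or $K^\circ_q\setminus e$ with $q\geq 3$, or $K^\circ_2$ minus a loop, the degree deficit pins down the missing edge), so any automorphism of $C^\ast$ fixes that edge setwise and extends to $\Aut(C)$. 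None of this uses $\AutDist=\Aut$. The hypothesis $\AutDist=\Aut$ is instead what makes step (b) go through: it guarantees $\distEdges$ is a union of $\Aut(\normreduced{H})$-orbits, so the count $|\{F'\in\mathcal D_{0,1}(\normreduced{H}) \mid F'\cong F^\star\}|$ equals a full orbit size $|\Aut(\normreduced{H})|/|\Stab{e}|$, hence coprime to $p$. Without orbit-closure of $\distEdges$ that count could be an arbitrary proper subset of the orbit and need not be nonzero modulo $p$.

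There is also a small unstated step in the lex-maximality argument: you conclude $F^\star=\normreduced{H}\setminus\{e\}$, but that requires the existence of a non-nice $F$ obtained by deleting only edges. The paper supplies this through its case analysis on component types (if some $F$ is non-nice, then some component $C$ of $\normreduced{H}$ is $K_{a,b}$ with $a,b\geq 2$, or $K^\circ_q$ with $q\geq 3$, or $K^\circ_2$ with a distinguished loop, and deleting a single suitable distinguished edge in $C$ already yields a non-nice graph). Your proof sketch assumes rather than proves that the maximizer exists at the top vertex level; that implication needs the same case analysis. So the overall plan is right, but you have swapped where the $\AutDist=\Aut$ hypothesis is load-bearing, and you are missing the component-type case analysis that justifies the reduction to a single edge deletion.
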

\begin{proof}
	We apply Corollary~\ref{cor:mod-p_reduction_part_surj_homs}, whose first case establishes the first case of this theorem. Regarding the second case, we obtain a parsimonious reduction from $\probNumPartSurjHom{\normreduced{H}, \distVertices, \distEdges}[p]$ to $\probNumPartSurjHom{H, \distVertices, \distEdges}[p]$. We assume in the following that $H$ is order~$p$ reduced.	
	
	We apply Corollary~\ref{cor:partsurjhoms_equiv_homs}, whose second case directs our attention to the constituents $F$ of the order~$p$ reduced quantum graph $\normreduced{\quantum{F}}$. In particular, $\quantum{F}$ is given by $\sum_{F \in \family{D}^\ast(H)} \alpha_F \cdot F$, where for a constituent $F$ in $\family{D}^\ast_{i,j}(H)$ with indices $i$ and $j$, the coefficient $\alpha_F$ is given by $\alpha_F = (-1)^{i+j} \cdot \abs{\set{F' \in \family{D}_{i,j}(H) \given F \cong F'}}$. Contrary to the more general Corollary~\ref{cor:partsurjhoms_equiv_homs}, we have more information about the constituents of $\normreduced{\quantum{F}}$ because $H$ itself is such a constituent with coefficient $\alpha_H$ equal to $1$. This is sufficient to obtain the second case.
	
	If $H$ itself is such that $\probNumHom{H}[p]$ is $\classNumP[p]$-hard, then $F$ equal to $H$ establishes hardness. 
	Otherwise, $H$ is such that $\probNumHom{H}[p]$ is solvable in polynomial time. By Conjecture~\ref{conj:faben-jerrum_p}, we deduce that $H$ is the disjoint union of connected components $C$ such that $C \cong K_{a,b}$ or $C \cong K^\circ_q$. Since $H$ is order~$p$ reduced, we obtain that $a$, $b$, and $q$ are in $\sqBrackets{0;p-1}$.	
	We show that only one graph $F$ in $\family{D}^\ast(H)$, such that $F$ is not a collection of complete bipartite graphs and reflexive complete graphs, is sufficient to imply that $\probNumPartSurjHom{H, \distVertices, \distEdges}[p]$ is $\classNumP[p]$-hard. To this end, we show that any such graph $F$ has an associated coefficient $\alpha_F$ that is not congruent modulo $p$ to $0$. Consequently, Corollary~\ref{cor:partsurjhoms_equiv_homs} yields the theorem
	
	The assumption that $\Aut[H]$ and $\AutDist[H]$ are equal gives that any connected component $C$, where $C \cong K_{a,b}$ or $C \cong K^\circ_q$ as above, either has no distinguished vertices or all vertices distinguished. For distinguished edges, the classification also distinguishes between loops and non-loops: either $C$ has no (non-)loops distinguished or all (non-)loops distinguished. Additionally, if a connected component $C'$ of $H$ is isomorphic to $C$, then $C'$ has the same number of distinguished vertices, distinguished loops, and distinguished non-loops as $C$ does. We distinguish cases.
	
	Let there exist a connected component $C$ that contains a distinguished edge $e$ in $\distEdges$ such that either $C \cong K_{a,b}$, where $a$ and $b$ are at least $2$, or $C\cong K^\circ_q$, where $q$ is at least $3$, or $C \cong K^\circ_2$ and $e$ is a loop. The subgraph $C^\ast$, where $C^\ast=C \setminus \set{e}$, is an order~$p$ reduced connected graph that is neither a complete bipartite graph nor a reflexive complete graph. The subgraph $F$ given by $F= H\setminus \set{e}$ contains $C^\ast$ as connected component. Further, $F$ is order~$p$ reduced because $H$ contains only connected components that are reflexive complete or complete bipartite. Hence, by Conjecture~\ref{conj:faben-jerrum_p} we obtain that $\probNumHom{F}[p]$ is $\classNumP[p]$-hard. By the construction of $F$, we assume without loss of generality that $F$ is in $\family{D}^{\ast}(H)$, i.e. $F$ is the representative of the respective isomorphism class in $\family{D}(H)$. It remains to show that the coefficient $\alpha_{F}$ is not congruent modulo $p$ to $0$, where we recall $\alpha_F =  (-1)^{1} \cdot \abs{\set{F' \in \family{D}_{0,1}(H) \given F\cong F'}}$. Since $a$, $b$, and $q$ are at most $p-1$, the connected component $C$ cannot contribute a multiple of $p$ to $\abs{\set{F' \in \family{D}_{0,1}(H) \given F \cong F'}}$. Similarly, for any other connected component $C'$ such that $C' \cong C$, the component $C'$ contributes the same number as $C$. Since $H$ is order~$p$ reduced, there are at most $p-1$ connected components $C'$ with $C' \cong C$, and thus these contribute an amount not congruent modulo $p$ to $0$. Finally, $F$ can only be constructed from $C$ by deleting one edge, and thus any component $C'$ with $C' \not\cong C$ cannot contribute to $\abs{\set{F' \in \family{D}_{0,1}(H) \given F \cong F'}}$. We obtain that $\alpha_{F}$ is not congruent modulo $p$ to $0$.
	
	Otherwise, every connected component $C$ that contains a distinguished edge $e$ in $\distEdges$ satisfies that either $C\cong K_{2}^\circ$ and $e$ is not a loop or $C \cong K_{a,b}$, where $a = 1$ and $b \in [p-1]$, i.e. $C$ is a star with at most $p-1$ leaves. Hence, for any pair of subsets $V'$ of distinguished vertices $\distVertices$ and $E'$ of distinguished edges $\distEdges$, the subgraph $C \setminus (V' \cup E')$ obtained from $C$ by deleting $(V' \cup E')$ is either a complete bipartite graph or a reflexive complete graph. The same holds for every connected component $C$ that contains no distinguished edge, that is, for any pair of subsets $V'$ of distinguished vertices $\distVertices$ and $E'$ of distinguished edges $\distEdges$, the subgraph $C \setminus (V' \cup E')$ is either a complete bipartite graph or a reflexive complete graph. Therefore, in any of the remaining cases, every subgraph $F$ in $\family{D}^\ast(H)$ is a collection of complete bipartite graphs and reflexive complete graphs. This concludes the proof.
\end{proof}

We conclude by Corollary~\ref{cor:partsurjhoms_equiv_homs} that the dichotomy for $\probNumHom{H}[p]$ yields dichotomies for the whole class of partially surjective homomorphisms. Additionally, under assumptions on the automorphism groups $\Aut[H]$ and $\AutDist[H]$, Theorem~\ref{thm:dichotomy_part_surj_homs} gives a dichotomy with a precise structural statement. Let us highlight this connection with a dichotomy for the computational problems of counting the number of vertex surjective homomorphisms and the number of compactions. We recall that vertex surjective homomorphisms to a graph $H$ are characterized by the set of distinguished vertices $\distVertices$ equal to $\vertexset[H]$. Therefore, the computational problem $\probNumVertSurjHom{H}[p]$ is equivalent to $\probNumPartSurjHom{H, \distVertices, \distEdges=\emptyset}[p]$. Compactions are characterized by the set of distinguished vertices $\distVertices$ equal to $\vertexset[H]$ and the set of distinguished edges $\distEdges$ equal to $\set{e \in \edgeset[H] \given e \text{ is not a loop}}$. In this way, the computational problem $\probNumComp{H}[p]$ is equivalent to $\probNumPartSurjHom{H, \distVertices, \distEdges}[p]$.
\begin{lemma}\label{lem:vertex_surjective_homs_and_compactions_are_okay}
	Let $p$ be a prime and $H$ be a graph with distinguished vertices $\distVertices$ and distinguished edges $\distEdges$. If $\distVertices$ is equal to $\vertexset[H]$ and $\distEdges$ is either empty or the set of non-loop edges in $\edgeset[H]$, then the groups $\AutDist[H]$ $\Aut[H]$ are equal.
\end{lemma}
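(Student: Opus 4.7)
The plan is to directly verify the two defining conditions of $\AutDist(H)$ for an arbitrary $\varrho \in \Aut(H)$; since $\AutDist(H) \subseteq \Aut(H)$ holds by definition, this suffices to establish equality.

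First I would dispense with the vertex condition. Because $\distVertices = V(H)$, every vertex of $H$ is distinguished, so for any $v \in V(H)$ both ``$v \in \distVertices$'' and ``$\varrho(v) \in \distVertices$'' hold trivially. Thus the biconditional in the definition of $\AutDist(H)$ concerning distinguished vertices is automatically satisfied by every $\varrho \in \Aut(H)$.

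Next I would handle the edge condition by splitting on the two permitted cases for $\distEdges$. If $\distEdges = \emptyset$, then for any edge $(u,v) \in E(H)$, neither $(u,v)$ nor $(\varrho(u),\varrho(v))$ lies in $\distEdges$, so the biconditional holds vacuously. If instead $\distEdges$ is the set of non-loop edges of $H$, then I use that $\varrho$ is a bijection on $V(H)$: we have $\varrho(u) = \varrho(v)$ if and only if $u = v$. Hence $(u,v)$ is a loop in $E(H)$ if and only if $(\varrho(u),\varrho(v))$ is a loop in $E(H)$, which means $(u,v) \in \distEdges$ if and only if $(\varrho(u),\varrho(v)) \in \distEdges$. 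The biconditional is therefore satisfied.

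Combining both conditions, every $\varrho \in \Aut(H)$ lies in $\AutDist(H)$, and together with the trivial reverse inclusion this yields $\AutDist(H) = \Aut(H)$. There is no real obstacle here; the lemma is essentially a bookkeeping check that the two preset choices of $(\distVertices,\distEdges)$ are invariant under every graph automorphism.
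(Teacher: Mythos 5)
Your argument is correct and follows the same route as the paper's own proof: the vertex condition is trivial because $\distVertices = V(H)$, the empty-edge case is vacuous, and for the non-loop case you observe that automorphisms preserve loops versus non-loops. The paper's proof is simply a terser version of the same bookkeeping.
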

\begin{proof}
	We briefly argue both cases, where we have $\distVertices = \vertexset[H]$. If no edge is distinguished, then any automorphism in $\Aut[H]$ maps bijectively distinguished elements to distinguished elements. The second case is established by the observation that any automorphism has to map loops to loops and non-loops to non-loops.
\end{proof}

It follows immediately from Lemma~\ref{lem:vertex_surjective_homs_and_compactions_are_okay} that the computational problems $\probNumVertSurjHom{H}[p]$ and $\probNumComp{H}[p]$ allow an application of Theorem~\ref{thm:dichotomy_part_surj_homs}. For these, we obtain the criteria analogous to the criteria in the non-modular setting given by Focke et al.~\cite{Focke:19:The_Complexity_of_Counting_Surjective_Homomorphisms_and_Compactions}. For the sake of comparability, we employ for this statement the notion of an \emph{irreflexive} graph, which denotes a graph without any loops.

\SurjCompDichotomyModp*

\section*{Acknowledgements}
The authors would like to thank Holger Dell for bringing~\cite{Chen:19:The_Exponential-Time_Complexity} to their attention. Our gratitude also goes to Jacob Focke and Marc Roth for their valuable insights on partially surjective homomorphisms and for pointing out a mistake in a previous version.

Andreas Göbel was funded by the project PAGES (project No. 467516565) of the German Research Foundation (DFG).

\bibliographystyle{plainurl}
\bibliography{bibliography}

\end{document}